\def\dOi{11(4:12)2015}
\keywords{typestate, session types, object-oriented calculus,
  Non-uniform method availability}
\newlength\digitwidth
\let\LSTINLINE\lstinline
\def\lstinline{\LSTINLINE[basicstyle=\sffamily\small\upshape]}
\renewcommand*\thelstnumber{\ifnum\value{lstnumber}<10\relax\kern\digitwidth\fi\arabic{lstnumber}}
\renewenvironment{figure}
  {\@float{figure}\footnotesize}
  {\end@float}
\let\subset\subseteq
\let\phi\varphi
\let\variables V
\def\fcmp{\ \textbf,\ }
\newcommand*{\children}[1]{\mkterm{children}_{#1}}
\newcommand*{\desc}[1]{\mkterm{desc}_{#1}}
\newcommand\roots{\mkterm{roots}}
\newcommand\chans{\mkterm{chans}}
\newcommand\objs{\mkterm{objs}}
\newcommand{\ea}{et al.\xspace}
\newcommand{\ie}{i.e.\xspace}
\newcommand{\Int}{\mkterm{Int}}
\newcommand\Bool{\mkterm{Bool}}
\newcommand*{\hadd}[2]{#1,\{#2\}}
\newcommand{\bnf}{\;::=\;}
\newcommand{\alt}{\;\;|\;\;}
\newcommand\classterm{\mkterm{class}}
\newcommand\reqterm{\mkterm{req}}
\newcommand\ensterm{\mkterm{ens}}
\newcommand\forterm{\mkterm{for}}
\newcommand*{\class}[4]{\mkterm{class}~{#1}~\{#2;#3;#4\}}
\newcommand*{\method}[4]{{#2}(#3)~\{#4\}}
\newcommand*{\annotmethod}[6]{\reqterm~{#1}~\ensterm~{#2}~\forterm~{#3}~{#4}(#5)~\{#6\}}
\newcommand*{\mkterm}[1]{\mathsf{#1}}
\newcommand\nullterm{\mkterm{null}}
\newcommand\this{\mkterm{this}}
\mathchardef\mhyphen="2D
\newcommand\linkthis{{\mkterm{variant\mhyphen{}tag}}}
\newcommand\sessterm{\mkterm{session}}
\newcommand{\returnterm}{\mkterm{return}}
\newcommand\fieldsterm{\mkterm{fields}}
\newcommand{\nulltype}{\mkterm{Null}}
\newcommand{\End}{\mkterm{end}}
\newcommand*{\rectype}[2]{\mu{#1}.{#2}}
\newcommand*{\objecttype}[2]{{#1}[#2]}
\newcommand*{\methcal}[3]{{#1}.{#2}({#3})}
\newcommand*{\methsign}[3]{{#3}\;{#1}({#2})}
\newcommand*{\selfcal}[2]{{#1}({#2})}
\newcommand*{\seq}[2]{{#1};{#2}}
\newcommand*{\swap}[2]{{#1}\leftrightarrow{#2}}
\newcommand*{\new}[1]{\mkterm{new}~{#1}()}
\newcommand{\reduces}{\longrightarrow}
\newcommand{\vreduces}{\reduces}
\newcommand*{\subs}[2]{\{^{#1}\!{/}_{#2}\}}
\newcommand*{\state}[3]{({#1}*{#2}\fcmp{#3})}
\newcommand*{\branch}[3]{\{#1:#2\}_{#3}}
\newcommand*\choice[3]{\langle #1:#2\rangle_{#3}}
\newcommand*{\vfield}[3]{\langle #1:#2\rangle_{#3}}
\newcommand*\iset[2]{\{#1\}_{#2}}
\newcommand{\switchterm}{\mkterm{switch}}
\newcommand*{\switch}[4]{\switchterm~({#1})~\{#2:#3\}_{#4}}
\newcommand{\whileterm}{\mkterm{while}}
\newcommand*{\while}[2]{\whileterm~({#1})~\{#2\}}
\newcommand\true{\mkterm{True}}
\newcommand\false{\mkterm{False}}
\newcommand*{\hentry}[2]{{#1}[#2]}
\newcommand\linkterm{\mkterm{link}}
\newcommand*{\linktype}[2]{\linkterm~{#2}}
\newcommand*\return[2]{\mkterm{return}~#1}%\ \mkterm{from}~#2}
\newcommand*{\typedheap}[2]{\ensuremath{\vdash #2 : #1}}
\newcommand*{\typedheapchan}[3]{\ensuremath{#1\vdash #3 : #2}}
\newcommand*{\typedexp}[4]{{#1}\vartriangleright{#2} : {#3}\vartriangleleft{#4}}
\newcommand*{\typedexpc}[4]{{#1}\vartriangleright_{C}{#2} : {#3}\vartriangleleft{#4}}
\def\judgment#1>#2:#3<#4/{\ensuremath{\typedexp{#1}{#2}{#3}{#4}}}
\let\judgement\judgment
\def\subjudgment#1>#2:#3<#4/{\ensuremath{{#1}\triangleright{#2}\subt{#3}\triangleleft{#4}}}
\def\judgmentc#1>#2:#3<#4/{\ensuremath{\typedexpc{#1}{#2}{#3}{#4}}}
\newcommand*\typedsess[3]{\ensuremath{#1\vdash #2 : #3}}
\newcommand{\typedmeth}[2]{\vdash_{#1}#2}
\newcommand{\typingRuleSkip}{2ex}
\newcommand*\changetype[3]{{#1}\{#2\mapsto #3\}}
\newcommand*\changeval[3]{{#1}\{#2\mapsto #3\}}
\newcommand{\subt}{\mathrel{\mathsf{<:}}}
\newcommand\supt{\mathrel{\mathsf{:>}}}
\newcommand*\rulename[1]{\LeftLabel{\scshape (#1)}}
\newcommand*\rrulename[1]{\RightLabel{\scshape (#1)}}
\newcommand*\axiomname[1]{{\scshape (#1)~}}
\newcommand*\wideexclam{{\mskip 2mu !\mskip 1.5mu}}
\newcommand*\widequestion{{?}}
\newcommand*\send[1]{\mathop{\wideexclam}\left[#1\right]}
\newcommand*\rcv[1]{\mathop{\widequestion}\left[#1\right]}
\newcommand*\chanbranch[1]{\mathop{\&}\left\{#1\right\}}
\newcommand*\chanchoice[1]{\mathop{\oplus}\left\{#1\right\}}
\newcommand*\chanseq{\mathbin{.}}
\newcommand*{\dl}{\mathit{dual}}
\newcommand*{\access}[1]{\langle{#1}\rangle}
\newcommand{\requestterm}{\mkterm{request}}
\newcommand{\acceptterm}{\mkterm{accept}}
\newcommand*\request[1]{\methcal{#1}{\requestterm}{}}
\newcommand*\accept[1]{\methcal{#1}{\acceptterm}{}}
\newcommand*\sendterm{\mkterm{send}}
\newcommand*\rcvterm{\mkterm{receive}}
\newcommand{\spawnterm}{\mkterm{spawn}}
\newcommand*{\spawn}[1]{\spawnterm~#1}
\newcommand*\thread[3]{\state{#1}{#2}{#3}}
\newcommand*\parcomp{\mathbin{||}}
\newcommand*\typedconf[2]{\ensuremath{#1\vdash#2}}
\newcommand{\sem}[1]{\llbracket#1\rrbracket}
\newcommand{\tr}{\mathit{tr}}
\newcommand{\dom}{\mathit{dom}}
\newcommand{\labtrans}[3]{#1\stackrel{#2}{\longrightarrow}#3}
\newcommand{\labtransstar}[3]{#1\stackrel{#2}{\longrightarrow^*}#3}
\newcommand*{\envref}[2]{{#1}*{#2}}
\newcommand*{\triplet}[3]{({#1},\, {#2},\, {#3})}
\newcommand*{\alga}[4]{\mathcal{A}_{#1} ({#2},\, {#3},\, {#4})}
\newcommand*{\algb}[3]{\mathcal{B}_C ({#1},\, {#2},\, {#3})}
\newcommand{\algskip}{0.5ex}
\newcommand*{\comment}[1]{}
\newcommand*{\mkRrule}[1]{\textsc{R-#1}}
\newcommand*{\mkTrule}[1]{\textsc{T-#1}}
\newcommand*{\mkErule}[1]{\textsc{E-#1}}
\newcommand{\Eassoc}{\mkErule{Assoc}}
\newcommand{\Ecomm}{\mkErule{Comm}}
\newcommand{\Escope}{\mkErule{Scope}}
\newcommand{\Rswap}{\mkRrule{Swap}}
\newcommand{\Rcall}{\mkRrule{Call}}
\newcommand{\RcomBase}{\mkRrule{ComBase}}
\newcommand{\RcomObj}{\mkRrule{ComObj}}
\newcommand{\Rcontext}{\mkRrule{Context}}
\newcommand{\Rinit}{\mkRrule{Init}}
\newcommand{\Rnew}{\mkRrule{New}}
\newcommand{\Rseq}{\mkRrule{Seq}}
\newcommand{\Rspawn}{\mkRrule{Spawn}}
\newcommand{\Rswitch}{\mkRrule{Switch}}
\newcommand{\Rreturn}{\mkRrule{Return}}
\newcommand{\Tcall}{\mkTrule{Call}}
\newcommand{\Tchan}{\mkTrule{Chan}}
\newcommand{\Tname}{\mkTrule{Name}}
\newcommand{\Tclass}{\mkTrule{Class}}
\newcommand{\Tspawn}{\mkTrule{Spawn}}
\newcommand{\TlinVar}{\mkTrule{LinVar}}
\newcommand{\Tnew}{\mkTrule{New}}
\newcommand{\TnewChan}{\mkTrule{NewChan}}
\newcommand{\Tpar}{\mkTrule{Par}}
\newcommand{\Treturn}{\mkTrule{Return}}
\newcommand{\TselfCall}{\mkTrule{SelfCall}}
\newcommand{\Tseq}{\mkTrule{Seq}}
\newcommand{\Tstate}{\mkTrule{State}}
\newcommand{\Tsub}{\mkTrule{Sub}}
\newcommand{\TsubEnv}{\mkTrule{SubEnv}}
\newcommand{\Tswitch}{\mkTrule{Switch}}
\newcommand{\TswitchLink}{\mkTrule{SwitchLink}}
\newcommand{\Tthread}{\mkTrule{Thread}}
\newcommand{\Tvar}{\mkTrule{Var}}
\newcommand{\Tnull}{\mkTrule{Null}}
\newcommand{\Tlabel}{\mkTrule{Label}}
\newcommand{\Tswap}{\mkTrule{Swap}}
\newcommand{\TvarF}{\mkTrule{VarF}}
\newcommand{\TvarS}{\mkTrule{VarS}}
\newcommand{\Tref}{\mkTrule{Ref}}
\newcommand{\TannotMeth}{\mkTrule{AnnotMeth}}
\newcommand{\THempty}{\mkTrule{HEmpty}}
\newcommand{\THadd}{\mkTrule{HAdd}}
\newcommand{\THide}{\mkTrule{Hide}}
\newcommand\ignore[1]{}
\newenvironment{proofnoqed}{\proof}{\par}
\newcommand{\citeN}[1]{\cite{#1}}
\newcommand{\citeyear}[1]{\cite{#1}}
\newenvironment{longitem}{\begin{itemize}}{\end{itemize}}
\newtheorem{theorem}{Theorem}[section]
\newtheorem{corollary}[theorem]{Corollary}
\newtheorem{proposition}[theorem]{Proposition}
\newtheorem{lemma}[theorem]{Lemma}
\newtheorem{definition}[theorem]{Definition}
\renewcommand*\paragraph[1]
\noindent\textit{#1}~~\null}
\begin{document}
\title[Modular Session Types for Objects]{Modular Session Types for Objects}
\author[S.~J.~Gay]{Simon J.~Gay\rsuper a}
\address{{\lsuper a}University of Glasgow}
\email{Simon.Gay@glasgow.ac.uk}
\author[N.~Gesbert]{Nils Gesbert\rsuper b}
\address{{\lsuper b}Grenoble INP -- Ensimag}
\email{Nils.Gesbert@grenoble-inp.fr}
\author[A.~Ravara]{Ant\'{o}nio Ravara\rsuper c}
\address{{\lsuper c}Universidade Nova de Lisboa}
\email{aravara@fct.unl.pt}
\author[V.~T.~Vasconcelos]{Vasco T.~Vasconcelos\rsuper d}
\address{{\lsuper d}Universidade de Lisboa}
\email{vmvasconcelos@ciencias.ulisboa.pt}

% !TEX root = main.tex

\begin{abstract}
  Session types allow communication protocols to be specified
  type-theoretically so that protocol implementations can be verified
  by static type checking. We extend previous work on session types
  for distributed object-oriented languages in three ways. (1) We
  attach a session type to a class definition, to specify the possible
  sequences of method calls. (2) We allow a session type (protocol)
  implementation to be \emph{modularized}, i.e.\ partitioned into
  separately-callable methods. (3) We treat session-typed
  communication channels as objects, integrating their session types
  with the session types of classes. The result is an elegant
  unification of communication channels and their session types,
  distributed object-oriented programming, and a form of typestate
  supporting non-uniform objects, i.e.\ objects that dynamically
  change the set of available methods. We define syntax, operational
  semantics, a sound type system, and a sound and complete type
  checking algorithm for a small distributed class-based
  object-oriented language with structural subtyping. Static typing
  guarantees that both sequences of messages on channels, and
  sequences of method calls on objects, conform to type-theoretic
  specifications, thus ensuring type-safety. The language includes
  expected features of session types, such as delegation, and expected
  features of object-oriented programming, such as encapsulation of
  local state.  
\end{abstract}

%%% Local Variables: 
%%% mode: latex
%%% TeX-master: "main"
%%% End: 

\maketitle

% !TEX root = main.tex

\section{Introduction}
\label{sec:intro}
\comment{La derni\`{e}re chose qu'on trouve en faisant un
  ouvrage, est de savoir celle qu'il faut mettre la
  premi\`{e}re. Blaise Pascal} 
\comment{SG: updated 11.5.2011}
Computing infrastructure has become inherently concurrent and
distributed, from the internals of machines, with the generalisation of
many- and multi-core architectures, to data storage and sharing
solutions, with ``the cloud''. Both hardware and software systems are
now not only distributed but also collaborative and
communication-centred. Therefore, the precise specification of the
protocols governing the interactions, as well as the rigorous
verification of their correctness, are critical factors to ensure the
reliability of such infrastructures.

Software developers need to work with technologies that may provide
correctness guarantees and are well integrated with the tools usually
used. Since Java is one of the most widely used programming languages,
the incorporation %on it
of support to specify and implement correct software components and
their interaction protocols would be a step towards more reliable systems.

Behavioural types represent abstractly and concisely the interactive
conduct of software components. These kind of types are simple yet
expressive languages, that characterise the permitted interaction
within a distributed system. The key idea is that some aspects of
dynamic behaviour can be verified statically, at compile-time rather
than at run-time. In particular, when working with a programming
language equipped with a behavioural type system, one can statically
ensure that an implementation of a distributed protocol, specified
with types, is not only safe in the standard sense (``will not go
wrong''), but also that the sequences of interactions foreseen by the
protocol are realizable by its distributed implementation. Thorough
descriptions of the state-of-the-art of research on the topic have
been prepared by COST Action IC1201 (Behavioural Types for Reliable
Large-Scale Software Systems, ``BETTY'') \cite{betty:wg3,betty:wg1}.

The problem we address herein is the following: can one specify the
full interactive behaviour of a given protocol as a collection of
types and check that a Java implementation of that protocol realises
safely such behaviour?

The solution we present works like this: first, specify as a session
type (a particular idiom of behavioural types) the behaviour of each
party involved in the protocol; second, using each such term to type a
class, implement the protocol as a (distributed, using channel-based
communication where channels are objects) Java program; third, simply
compile the code and if the type checker accepts it, then the code is
safe and realises the protocol. Details follow.

Session types \cite{HondaK:lanptd,HondaK:intblt} allow communication
protocols to be specified type-theore\-ti\-cally, so that protocol
implementations can be verified by static type checking. The
underlying assumption is that we have a concurrent or distributed
system with bi-directional point-to-point communication
channels. These are implemented in Java by TCP/IP socket
connections. A session type describes the protocol that should be
followed on a particular channel; that is to say, it defines the
permitted sequences, types and directions of messages. For example,
the session type $S = \send{\Int}\chanseq\rcv{\Bool}\chanseq\End$
specifies that an integer must be sent and then a boolean must be
received, and there is no further communication. More generally,
branching and repetition can also be specified. A session type can be
regarded as a finite-state automaton whose transitions are annotated
with types and directions, and whose language defines the protocol.

Session types were originally formulated for languages closely based
on process calculus. Since then, the idea has been applied to
functional languages
\cite{GaySJ:sestip,GaySJ:lintta,NeubauerM:impst,PucellaR:hassta,VasconcelosVT:typmfl},
the process-oriented programming language Erlang
\cite{MostrousVasconcelos:featherweightErlang},
component-based object systems \cite{VallecilloA:typbsc},
object-oriented languages
\cite{CapecchiS:amasmo,Dezani-CiancagliniM:bousto,Dezani-CiancagliniMetal:ost,Dezani-CiancagliniM:sestoo,Dezani-CiancagliniM:disool,HuR:sesbdp},
operating system services \cite{FahndrichM:lansfr},
and more general servi\-ce-oriented systems \cite{CarboneM:strgpc,cruz-filipe.lanese.etal:stream-based-service-centered-calculus}.
Session types have also been generalised from two-party to multi-party
systems \cite{BonelliE:mulstd,HondaK:mulast}, although in the present
paper we will only consider the two-party case.

% We are interested in combining session-typed channels and 
% object-oriented programming. The references above can be divided into
% two groups. The approach of Dezani-Ciancaglini et al.\ is for a
% class to define sessions \emph{instead of} methods. Invoking a session
% on an object creates a channel which is used for communication between
% two blocks of code: the body of the session, and a \emph{co-body}
% defined by the invoker of the session. A session is therefore a
% generalization of a method, in which there can be an extended dialogue
% between caller and callee instead of a single exchange of parameters
% and result. The structure of this dialogue is defined by a session
% type.

% The SJ (Session Java) language developed by Hu is a less radical
% extension of the object-oriented paradigm. Channels, described by
% session types, are essentially the same as those in the original
% work based on process calculus. Program code is located in methods, as
% usual, and can create channels, communicate on them, and pass them as
% messages. SJ has a well-developed implementation and has been applied
% to a range of situations, and more recently has been
% extended to support event-driven programming. However, SJ has one
% notable restriction: a channel cannot be stored in a field. This means
% that a channel, once created, must be either completely used, or else
% delegated, within the same method. It is not possible for a session to
% be split into methods that can be called separately.

In previous work \cite{GaySJ:modstd} we proposed a new approach to
combining session-typed communication channels and distributed
object-oriented programming. Our approach extends earlier work and
allows increased programming flexibility. We adopted the principle
that it should be possible to store a channel in a field of an object
and allow the object's methods to use the field like any other; we
then followed the consequences of this idea. For example, consider a
field containing a channel of type $S$ above, and suppose that method
\lstinline|m| sends the integer and method \lstinline|n| 
receives the boolean. Because the session type of the channel requires
the send to occur first, it follows that \lstinline|m| must be called
before~\lstinline|n|. We therefore need to work with \emph{non-uniform
  objects}, in which the availability of methods depends on the state
of the object: method \lstinline|n| is not available until after
method \lstinline|m| has been called. In order to develop a static
type system for object-oriented programming with session-typed
channels, we use a form of typestate \cite{typestates}
%(a type-safe state abstraction, according
%to~\cite{DamianiF:typssa,FahndrichM:typo}) 
that we have previously presented under the name of \emph{dynamic
  interfaces}~\cite{GaySJ:dyni}.  In this type system, the
availability of a class's methods (\ie, the possible sequences of
method calls) is specified in a style that itself resembles a form of
session type, giving a pleasing commonality of notation at both the
channel and class levels.

The result of this combination of ideas is a language that allows a
natural integration of programming with session-based channels
and with non-uniform objects. In particular, the implementation of a
session can be \emph{modularized} by dividing it into separate methods
that can be called in turn. This is not possible in SJ
\cite{HuR:sesbdp}, the most closely related approach to combining
sessions and objects (we discuss related work thoroughly in
Section~\ref{sec:related}). We believe that we have achieved a smooth
and elegant combination of three important high-level abstractions:
the object-oriented abstraction for structuring computation and data,
the typestate abstraction for structuring state-dependent method
availability, and the session abstraction for structuring
communication.

%  In contrast, previous work on object-oriented
% session types, although allowing a session to be delegated to another
% method, does not allow separation into separately-callable blocks of
% code.  Thus, our approach leads to a more flexible programming style
% than the other approaches mentioned above. Our formal language
% provides channels as disciplined streams, because session types are a
% high-level abstraction for structuring communication, and integrates
% this communication-based construct, without further restrictions, with
% the high-level object-oriented abstractions for structuring
% computation.

% \todo{Simon asked: Can we emphasize more that this is a much better
%   integration of session and objects than previous work? There is a
%   good phrase in Related Work: ``session types are a high-level
%   abstraction for structuring communication, which we have integrated
%   with the high-level object-oriented abstractions for structuring
%   computation''. What do you think of the paragraph above?}

\paragraph{Contributions}
In the present paper we formalize a core \emph{distributed class-based
  object-oriented} language with a static type system that combines
session-typed channels and a form of typestate. The language is
intended to model programming with TCP/IP sockets in Java. The formal
language differs from that introduced in our previous work
\cite{GaySJ:modstd} by using structural rather than nominal
types. This allows several simplifications of the type system. We have
also simplified the semantics, and revised and extended the
presentation.  We prove that static typing guarantees two runtime
safety properties: first, that the sequence of method calls on every
non-uniform object follows the specification of its class's session
type; second, as a consequence (because channel operations are
implemented as method calls), that the sequence of messages on every
channel follows the specification of its session type. This paper
includes full statements and proofs of type safety, in contrast to the
abbreviated presentation in our conference paper. We also formalize a
type checking algorithm and prove its correctness, again with a
revised and expanded presentation in comparison with the conference
paper.

There is a substantial literature of related work, which we discuss in
detail in Section~\ref{sec:related}. Very briefly, the contributions
of our paper are the following.
% Also mention link types
\begin{longitem}
\item In contrast to other work on session types for object-oriented
  languages, we do not require a channel to be created and completely
  used (or delegated) within a single method. Several methods can
  operate on the same channel, thus allowing effective
  encapsulation of channels in objects, while retaining the usual
  object-oriented development practice. This is made possible by our
  integration of channels and non-uniform objects. This contribution
  was the main motivation for our work.
\item In contrast to other typestate systems, we use a global
  specification of method availability, inspired by session types,
  as part of a class definition. This replaces pre- and post-condition
  annotations on method definitions, except in the particular case of
  recursive methods.  
% While typestates are intensional,
%   directly related to the object's state, we define these states with
%   types, making use of standard type-theoretic tools to ensure
%   client-conformance.
\item When an object's typestate depends on the \emph{result} (in an
  enumerated type) of a method call, meaning that the result must be
  case-analyzed before using the object further, we do not force the
  case-analysis to be done immediately by using a combined
  ``switch-call'' primitive. Instead, the method result can
  be stored in a field and the case-analysis can happen at any
  subsequent point. Although this feature significantly increases the
  complexity of the formal system and could be omitted for simplicity,
  it supports a natural programming style and gives more options to
  future programming language designers.
\item Our structural definition of subtyping provides a flexible treatment of
  relationships between typestates, which can also support inheritance; this is discussed further in Section~\ref{sec:nominal}.
\end{longitem}

\noindent The remainder of the paper is structured as follows. In
Section~\ref{sec:seq-example} we illustrate the concept of dynamic
interfaces by means of a sequential example. In
Section~\ref{sec:core-lang} we formalize a core sequential language
and in Section~\ref{sec:seq-extensions} we describe some extensions.
In Section~\ref{sec:distributed-example} we extend the sequential example to a
distributed setting and in Section~\ref{sec:distributed} we extend the
formal language to a core distributed language.
In Section~\ref{sec:results} we state and prove the key properties of
the type system. In Section~\ref{sec:algorithm} we present a
type checking algorithm and prove its
soundness and completeness, and describe a prototype implementation of
a programming language based on the ideas of the paper. 
% Section~\ref{sec:implementation} describes the
% relationship between the formal language and the prototype
% implementation that we developed in connection with our previous work
% \cite{GaySJ:modstd}.
Section~\ref{sec:related} contains a more
extensive discussion of related work; Section~\ref{sec:conclusion}
outlines future work and concludes.

%%% Local Variables: 
%%% mode: latex
%%% TeX-master: "main"
%%% End: 

% !TEX root = main.tex

\section{A Sequential Example}
\label{sec:seq-example}

\comment{SG: updated 31.3.11}

\begin{figure}
\begin{lstlisting}
class File {
  session Init 
  where Init  = { {OK, ERROR} open(String): <OK: Open, ERROR: Init>}
        Open  = { {TRUE, FALSE} hasNext(): <TRUE: Read, FALSE: Close>, 
                  Null close(): Init}
        Read  = {String read(): Open, Null close(): Init}
        Close = {Null close(): Init}
  open(filename) {...}
  hasNext() {...}
  read() {...}
  close() {...}
}
\end{lstlisting}
\caption{A class describing a file in some API}
\label{fig:file}
\end{figure}

% // Style 1: method signatures with their definitions
% class File {
%   session Init 
%   where Init  = {open: <OK: Open, ERROR: Init>}
%         Open  = {hasNext: <TRUE: Read, FALSE: Close>, close: Init}
%         Read  = {read: Open, close: Init}
%         Close = {close: Init}
%   {OK, ERROR} open(String filename) {...}
%   {TRUE, FALSE} hasNext() {...}
%   String read() {...}
%   Null close() {...}
% }

% // Style 2: method signatures in the session type

%%% Local Variables: 
%%% mode: latex
%%% TeX-master: "main"
%%% End: 

A file is a natural example of an object for which the availability of
its operations depends on its state. The file must first be opened,
then it can be read repeatedly, and finally it must be closed. Before
reading from the file, a test must be carried out in order to
determine whether or not any data is present. The file can be closed
at any time.

There is a variety of terminology for objects of this kind.
Ravara and Vasconcelos \citeN{ravara.vasconcelos:typco} refer to them as
\emph{non-uniform}. We have previously used the term \emph{dynamic
  interface} \cite{GaySJ:dyni} to indicate that the interface, i.e.\
the set of available operations, changes with time. The term
\emph{typestate} \cite{typestates} is also well established.

Figure~\ref{fig:file} defines the class \lstinline|File|, which we
imagine to be part of an API for using a file system. The definition
does not include method bodies, as these would typically be
implemented natively by the file system. What it does contain is
method signatures and, crucially, a session type definition which
specifies the availability of methods. We will refer to a skeleton
class definition of this kind as an \emph{interface}, using the term
informally to mean that method definitions are omitted.
The figure shows the method signatures in the session type and not as
part of the method definitions as is normal in many programming
languages. This style is closer to the formal language that we define
in Section~\ref{sec:core-lang}.

% The figure shows the same definition written in two styles. Style 1
% writes method signatures as part of the method definitions, as is
% normal in many programming languages, and writes only method names in
% the session type. Style 2 writes the method signatures in the session
% type and not as part of the method definitions. Style 2 is closer to
% the formal language that we will define in
% Section~\ref{sec:core-lang}, while Style 1 is sometimes less
% verbose. The detailed explanation of this example, below, refers to
% Style 1.

Line 3 declares the initial session type \lstinline|Init| for the
class. This and other session types are defined on lines 4--7. We will
explain them in detail; they are types of objects, indicating which
methods are available at a given point and which is the type after
calling a method.  In a session type, the constructor
\lstinline|{...}|\hspace{-1ex}, % go figure, vv
which we call \emph{branch}, indicates that certain methods are
available. In this example, \lstinline|Init| declares the availability
of one method (\lstinline|open|), states \lstinline|Open| and
\lstinline|Read| allow for two methods each, and state
\lstinline|Close| for a single method (\lstinline|close|). For
technical convenience, the presence of data is tested by calling the
method \lstinline|hasNext|, in the style of a Java iterator, rather
than by calling an \lstinline|endOfFile| method. If desired, method
\lstinline|hasNext| could also be included in state \lstinline|Read|.

The constructor \lstinline|<...>|, which we call \emph{variant},
indicates that a method returns a value from an enumeration, and that
the subsequent type depends on the result. For example, from state
\lstinline|Init| the only available method is \lstinline|open|, and it
returns a value from an enumeration comprising the constants (or
labels) \lstinline|OK| and \lstinline|ERROR|. If the result is
\lstinline|OK| then the next state is \lstinline|Open|; if the result
is \lstinline|ERROR| then the state remains \lstinline|Init|.  It is
also possible for a session type to be the empty set of methods,
meaning that no methods are available; this feature is not used in the
present example, but would indicate the end of an object's useful
life.

\newcommand{\scl}{0.80} % scale
\newcommand{\sessionDiagram}{
\begin{tikzpicture}
[level distance=3.5em, sibling distance=3.5em, color=black, thick, scale=\scl]
\node[circle,shade,ball color=gray!30] (init node) {I} % {Init}
  child {node[circle,shade,ball color=gray!30] (plus node) {}
   child {node[circle,shade,ball color=gray!30] (open node) {O} % {Open}
      child {node[circle,shade,ball color=gray!30] {}
       child {node[circle,shade,ball color=gray!30] (read node) {R} % {Read}
         %  child {node[circle,shade,ball color=blue!30] {\phantom{\textsf{\tiny Init}}}
%             edge from parent node[left] {\textsf{close}}
%           }
        edge from parent[->] node[left] {\textsf{\tiny TRUE}}
        }
        child {node[circle,shade,ball color=gray!30] (close node) {C} % {Close}
          edge from parent[->] node[right] {\textsf{\tiny FALSE}}
        }
        edge from parent[->] node[right] {\textsf{hasNext()}}
      }
      edge from parent[->] node[left] {\textsf{\tiny OK}}
    }
%     child {node[circle,shade,ball color=blue!30] {\phantom{\textsf{\tiny Init}}}
%       edge from parent node[right] {\textsf{\tiny ERROR}}
%     }
  edge from parent[->] node[left] {\textsf{open()}}
  };
\draw[->] (open node.east) .. controls +(1:7em) and +(-15:15em)
.. node[left] {\textsf{close()}} (init node.north); 
\draw[->] (read node.west) .. controls +(1:-3em) and +(10:-3em)
.. node[left] {\textsf{read()}} (open node.west); 
\draw[->] (read node.south) .. controls +(50:-10em) and +(10:-20em)
.. node[left] {\textsf{close()}} (init node.west); 
\draw[->] (plus node.east) .. controls +(1:2em) and +(10:2em)
.. node[right] {\textsf{\tiny ERROR}} (init node.east); 
\draw[->] (close node.east) .. controls +(1:7em) and +(30:20em)
.. node[right] {\textsf{close()}} (init node.north); 
\end{tikzpicture} 
}

\begin{figure}
\vspace{-6em}
\begin{center}
      \sessionDiagram
\end{center}
\vspace{-5em}
\caption{Diagrammatic representation of the session type of
    class \lstinline|File| in Figure~\ref{fig:file}}
  \label{fig:session-diagram}
\end{figure}

%%% Local Variables: 
%%% mode: latex
%%% TeX-master: "main"
%%% End: 

The session type can be regarded as a finite state automaton whose
transitions correspond to method calls and results. This is
illustrated in Figure~\ref{fig:session-diagram}. Notice the two types
of nodes (\lstinline|{...}|\hspace{-1ex} and \lstinline|<...>|) and
the two types of labels in arcs (method names issuing from
\lstinline|{...}|\hspace{-1ex} nodes and enumeration constants issuing
from \lstinline|<...>| nodes).

% Lines 8--11 define the signatures of four methods. We have already
% explained that their definitions are omitted; however, we will soon see
% an example of a full class definition including method definitions.

Our language does not include constructor methods as a special
category, but the method \lstinline|open| must be called first and can
therefore be regarded as doing initialisation that might be included
in a constructor. Notice that \lstinline|open| has the filename as a
parameter. Unlike a typical file system API, creating an object of class
\lstinline|File| does not associate it with a particular file; instead this
happens when \lstinline|open| is called.

The reader might expect a declaration
\lstinline|void close()| rather than \lstinline|Null close()|; for
simplicity, we do not address procedures in this paper, instead working
with the type \lstinline|Null| inhabited by a single value,
\lstinline|null|.
Methods \lstinline|open| and \lstinline|hasNext| return a constant
from an enumeration: \lstinline|OK| or \lstinline|ERROR| for method
\lstinline|open|, and \lstinline|TRUE| or \lstinline|FALSE| for
method \lstinline|hasNext|. Enumerations are simply sets of labels,
and do not need to be declared with names.

%%% File Reader

\begin{figure}
\begin{lstlisting}
class FileReader {
  session Init
  where Init = {Null init (): {Null read (String): Final}}
        Final = {String toString(): Final}

  file; text;

  init() {
    file = new File(); 
    text = ""; // Evaluates to null
  }
  read(filename) {
    switch (file.open(filename)) {
      case ERROR:
        null;
      case OK:
        while (file.hasNext())
          text = text +++ file.read();
        file.close(); // Returns null
    } 
  }
  toString() { text; }
}
\end{lstlisting}
\caption{A client
  that reads from a \lstinline|File|}
\label{fig:filereader}
\end{figure}

%%% Local Variables: 
%%% mode: latex
%%% TeX-master: "main"
%%% End: 

Figure~\ref{fig:filereader} defines the class \lstinline|FileReader|,
which uses an object of class \lstinline|File|.
\lstinline|FileReader| has a session type of its own, defined on lines
2--3. It specifies that methods must be called in the sequence
\lstinline|init|, \lstinline|read|, \lstinline|toString|,
\lstinline|toString|, \dots.
Line 5 defines the fields of \lstinline|FileReader|. The formal
language does not require a type declaration for fields, since fields
always start with type \lstinline|Null|, and are initialised to value
\lstinline|null|.  Fields are always \emph{private} to a class, even
if we do not use a corresponding keyword. Lines 7--10 define the
method \lstinline|init|, which has initialisation behaviour typical of
a constructor.
Lines 12--19 illustrate the \lstinline|switch| construct. In this
particular case the \lstinline|switch| is on the result of a method
call. One of the distinctive features of our language is that it is
possible, instead, to store the result of the method call in a field
and later \lstinline|switch| on the field; we will explain this in
detail later. This contrasts with, for example, 
\emph{Sing\#}~\cite{FahndrichM:lansfr}, in which the
call/\lstinline|switch| idiom is the only possibility. The
\lstinline|while| loop (lines 16--17) is similar in the sense that the
result of \lstinline|file.hasNext| must be tested in order to find
out whether the loop can continue, calling \lstinline|file.read|, or
must terminate.  Line 21 defines the method \lstinline|toString| which
simply accesses a field.

Typechecking the class \lstinline{FileReader} according to our type
system detects many common mistakes. Each of the following code
fragments contains a type error.
\begin{itemize}
\item
\begin{lstlisting}[numbers=none]
file.open(filename); 
file.read() ...;     
\end{lstlisting}
The \lstinline|open| method returns either \lstinline|OK| or
\lstinline|ERROR|, and the type of \lstinline|file| is the variant type 
\lstinline|<OK: Open, ERROR: Init>|. The tag for this variant type
is the result of \lstinline|open|. Because the type of
\lstinline|file| is a variant, a method cannot be called on it; first
we must use a \lstinline|switch| statement to analyse the result of
\lstinline|open| and discover which part of the variant we are in.

\item
\begin{lstlisting}[numbers=none]
switch(file.open(filename))    
  case OK: text = file.read(); 
\end{lstlisting}
Here, a \lstinline|switch| is correctly used to find out whether or
not the file was successfully opened. However, if \lstinline|file| is
in state \lstinline|Open|, the \lstinline|read| method cannot be
called immediately. First, \lstinline|hasNext| must be called, with a
corresponding \lstinline|switch|.

\item
\begin{lstlisting}[numbers=none]
result = file.open(filename);
...
switch(result)
  case ERROR: file.close();
\end{lstlisting}
In state \lstinline|ERROR|, method \lstinline|close| is not available
(because the file was not opened successfully). The only available
method is \lstinline|open|.

\item
\begin{lstlisting}[numbers=none]
file.close();
if (file.hasNext()) ... 
\end{lstlisting}
After calling \lstinline|close|, the file is in state
\lstinline|Init|, so the method \lstinline|hasNext| is not
available. Only \lstinline|open| is available.
\end{itemize}
Clearly, correctness of the code in Figure~\ref{fig:filereader}
requires that the sequence of method calls on field \lstinline|file|
within class \lstinline|FileReader| matches the available methods in
the session type of class \lstinline|File|, and that the appropriate
\lstinline|switch| or \lstinline|while| loops are performed when
prescribed by session types of the form \lstinline|<...>| in class
\lstinline|File|.
Our static type system, defined in Section~\ref{sec:core-lang}, enables
this consistency to be checked at compile-time.
A distinctive feature of our type system is that methods are checked
in a precise order: that prescribed by the session type
(\lstinline{init, read, toString} in class \lstinline{FileReader},
Figure~\ref{fig:filereader}). As such the type of the private
reference \lstinline{file} always has the right type (and no further
annotations---pre/post conditions---are required when in presence of
non-recursive methods).
Also, in order to check statically that an object with a dynamic
interface such as \lstinline|file| is used correctly, our type system
treats the reference linearly so that aliases to it cannot be created.
This restriction is not a problem for a simple example such as this
one, but there is a considerable literature devoted to more flexible
approaches to unique ownership. We discuss this issue further in
Sections~\ref{subsec:shared-types}, \ref{sec:related}
and~\ref{sec:conclusion}.

\begin{figure}
\begin{lstlisting}[numbers=none]
class FileReadToEnd {
  session Init 
  where Init  = {{OK,ERROR} open(String): <OK: Open, ERROR: {Init}>}
        Open  = {{TRUE,FALSE} hasNext(): <TRUE: Read, FALSE: Close>}
        Read  = {String read(): Open}
        Close = {Null close(): {Init}}
}
\end{lstlisting}
\caption{Interface for class \lstinline|FileReadToEnd|}
% \caption{Interfaces for classes \lstinline|File|,
%   \lstinline|FileReader| and \lstinline|FileReadToEnd|}
\label{fig:filereader-api}
\end{figure}

% class File {
%   session Init 
%   where Init  = {{OK,ERROR} open(String f): <OK: Open, ERROR: {Init}>}
%         Open  = {{TRUE,FALSE} hasNext(): <TRUE: Read, FALSE: Close>,
%                  Null close(): {}}
%         Read  = {String read(): Open, Null close(): {}}
%         Close = {Null close(): {Init}}
% }

% class FileReader {
%   session {Null init(): {Null read(String filename): Final}}
%   where Final = {String toString(): Final}
% }

%%% Local Variables: 
%%% mode: latex
%%% TeX-master: "main"
%%% End: 

In order to support \emph{separate compilation} we require only the
interface of a class, including the class name and the session type
(which in turn includes the signature of each method).
%
% in Figure~\ref{fig:filereader-api}; here it is very convenient to use
% Style 2, with the method signatures in the session types, and drop the
% method definitions completely. Our formal language uses session types
% of this form, meaning that an interface is simply a session type.
%
For example, in order to typecheck classes that are clients of
\lstinline|FileReader|, we only need its interface. Similarly, to
typecheck class \lstinline|FileReader|, which is a client of
\lstinline|File|, it suffices to use the interface for class
\lstinline|File|, thus effectively supporting typing clients of
classes containing \emph{native methods}.

Figure~\ref{fig:filereader-api} defines the interface for a class
\lstinline|FileReadToEnd|. This class has the same method definitions
as \lstinline|File|, but the \lstinline|close| method is not available
until all of the data has been read. According to the subtyping
relation defined in Section~\ref{sec:subtyping}, type
\lstinline|Init| of \lstinline|File| is a subtype of type
\lstinline|Init| of \lstinline|FileReadToEnd|, which we express as
\lstinline|File.Init| $\subt$
\lstinline|FileReadToEnd.Init|. Subtyping guarantees safe
substitution: an object of type \lstinline|File.Init| can be used
whenever an object of type \lstinline|FileReadToEnd.Init| is expected,
by forgetting that \lstinline|close| is available in more states. As
it happens, \lstinline|FileReader| reads all of the data from its
\lstinline|File| object and could use a \lstinline|FileReadToEnd|
instead.

%%% Local Variables: 
%%% mode: latex
%%% TeX-master: "main"
%%% End: 

\section{A Core Sequential Language}
\label{sec:core-lang}

\comment{NG: updated 4.9.2014}

We now present the formal syntax, operational semantics, and type
system of a core sequential language. As usual, the formal language
makes a number of simplifications with respect to the more practical
syntax used in the examples in Section~\ref{sec:seq-example}. We
summarise below the main differences with what was discussed in the
previous section; in Section~\ref{sec:seq-extensions}, we will discuss
in more detail how some usual programming idioms, which would be expected in
a full programming language, can be encoded into this formal core.
\begin{longitem}
% \item All objects are non-uniform, with session types, and are treated
%   strictly linearly by the type system, i.e., aliasing is never
%   allowed. A practical language would
%   also need standard objects, treated non-linearly. Adding them to the
%   type system, orthogonally to linear objects, is
%   straightforward but complicates and obscures the typing rules.
\item Every method has exactly one parameter. This does not affect
  expressivity, as multiple parameters can be passed within an object,
  and a dummy parameter can be added if necessary: we consider a
  method call of the form $\methcal{f}{m}{}$ as an abbreviation for
  $\methcal{f}{m}{\nullterm}$.
  % It is easy to generalize the
  % definitions to allow arbitrary numbers of parameters, but again at
  % the expense of complicating the typing rules.
% \item Methods do not have local variables. Local variables can be
%   encoded in fields of objects passed as parameters.
\item Field access and assignment are defined in terms of a swap
  operation $\swap f e$ which puts the value of $e$ into the field $f$
  and evaluates to the former content of $f$. This operation is
  formally convenient because our type system forbids aliasing. In
  Java, the expression $f = e$ computes the result of $e$ and then
  both puts it into $f$ and evaluates to it, allowing
  expressions such as $g = (f = e)$ which create aliases; reading a
  field without removing its content also allows
  creation of aliases. The swap operation is a combined read-write
  which does not permit aliasing.

  The normal assignment operation $f = e$ is an
  abbreviation for $\seq{\swap f e}{\nullterm}$ (where the sequence
  operator explicitly discards the former content of $f$) and field
  read as the standalone expression $f$ is an abbreviation for
  $\swap{f}{\nullterm}$. They differ from usual semantics by the fact
  that field read is destructive and that the assigment expression
  evaluates to $\nullterm$. 
%% \item The formal system only uses Style 2, with method signatures
%%   appearing in session types and only there.
%%   This simplifies the treatment of subtyping and, when
%%   we consider a distributed language, of channels. It also supports a
%%   form of overloading, allowing a method to have different signatures
%%   at different points in the same session type.
\item In the examples, all method signatures
  appearing in a branch session type
  indicate both a return type and a subsequent session type. In
  general, those types are two separate things. However,
  when the subsequent behaviour of the object depends on the returned
  value, like the case of \textsf{Open} in type
  \textsf{Init} on line 3 of Fig.~\ref{fig:file}, the return type is
  an enumerated set of labels and the subsequent session type is a
  variant which must provide cases for exactly these labels. To
  simplify definitions, we avoid this redundant specification in the
  formal language, and when the subsequent session type is a variant,
  the return type of the method is always the special type
  $\linkthis$, which indicates that the method will return a label
  from the variant.
\end{longitem}
% Our prototype implementation, described briefly at the end of
% Section~\ref{sec:algorithm}, does not make these simplifications and
% its syntax is closer to that of Section~\ref{sec:seq-example}.

% !TEX root = main.tex

\begin{figure}
  \begin{align*}
    % CLASS DECLARATIONS
    \textrm{Class dec}&& D \bnf\ & \class{C}{S}{\vec f}{\vec M}
%    \alt \enum{E}{\vec l}{}
    \\
    % SESSION TYPES
    \textrm{Class session types}&& S \bnf\ & 
    \branch{\methsign{m_i}{T_i}{T'_i}}{S_i}{i\in I} \alt \choice{l}{S_l}{l\in E}
    \alt X \alt \mu X.S
    \\
    % METHOD DECLARATIONS
    \textrm{Method dec}&& M \bnf\ &
    \method{}{m}{x}{e}
    \\
    % VALUE TYPES
    \textrm{Types}&& T \bnf\ & \nulltype\alt
    S \alt E \alt \linkthis
    \\
    \textrm{Label sets}&& E \bnf\ &  \{l,\ldots,l\} %\{l_i\}_{i\in I}
    \\
    % VALUES
    \textrm{Values}&& v \bnf\ & \nullterm \alt l
    \\
    % EXPRESSIONS
    \textrm{Expressions}&& e \bnf\ & v \alt %\selfcal{r}{m}{} \alt
    \methcal{f}{m}{e} \alt x \alt \seq e e \alt \switch{e}{l}{e_l}{l\in E}
    %\alt r.f
    \alt \swap{f}{e} \alt \new C
  \end{align*}
  %
%   We also define the following abbreviations:\begin{itemize}
% \item $\methcal{f}{m}{}$ means $\methcal{f}{m}{\nullterm}$
% \item $f = e$ means $\seq{\swap{f}{e}}{\nullterm}$
% \item $f$ means $\swap{f}{\nullterm}$
% \item For $\objecttype C S$, see Figure \ref{fig:interfaces}.
% \end{itemize}
%   Identifiers $C$, $m$, $f$ and $l$ are taken from disjoint countable
%   sets representing names of classes, methods, fields and labels
%   respectively.  The vector arrow indicates a sequence of zero or more
%   elements of the syntactic class it is above. Similarly, constructs
%   indexed by a set $I$ denote a finite sequence. We use $E$ to specifically denote finite sets of labels.
  \caption{Top level syntax}
  \label{fig:syntax}
\end{figure}

%----------------------------------------------------------------------

\begin{figure}
  \begin{align*}
    \textrm{Types}&& T \bnf\ & \dots \alt \linktype E f \alt \objecttype C F
    \\
    % FIELDS TYPES
    \textrm{Field types}&& F \bnf\ & \iset{T_i\,f_i}{i\in I}
    \alt \vfield{l}{F_l}{l\in E} %\alt \bot
    \\
    % VALUES
    \textrm{Values}&& v \bnf\ & \dots \alt o
    \\
    \textrm{Paths}&& r \bnf\ & o \alt r.f
    \\
    \textrm{Expressions}&& e \bnf\ &\dots \alt \return{e}{r}
    \\
    % FIELD VALUES
    \textrm{Object records}&& R \bnf\ & \hentry{C}{\iset{f_i=v_i}{i\in I}}
    \\
    % HEAP
    \textrm{Heaps}&& h \bnf\ &\varepsilon \alt
    \hadd{h}{o=R}
    \\
    % STATE
    \textrm{States}&& s \bnf\ &\state{h}{r}{e}
    \\
    % CONTEXTS
    \textrm{Contexts}&& \mathcal{E} \bnf\ & [\_] \alt
    \seq{\mathcal E}{e} \alt \switch{\mathcal E}{l}{e_l}{l\in E}
    \alt \return{\mathcal E}{r} \alt \swap{f}{\mathcal E}
    \alt \methcal{f}{m}{\mathcal E}
  \end{align*}
  The productions for types, values and expressions extend
  those in Figure~\ref{fig:syntax}. Session
  types may never contain types of the form $\linktype{}f$, even in
  the extended syntax.
  \caption{Extended syntax, used only in the type system and semantics}
  \label{fig:syntax-red}  
\end{figure}

%%% $Id: fig-syntax.tex,v 1.11 2006/12/13 17:04:24 vv Exp $
%%% Local Variables: 
%%% mode: latex
%%% TeX-master: "minimal"
%%% End: 

\subsection{Syntax}
\label{sec:syntax}

We separate the syntax into the top-level language
(Figure~\ref{fig:syntax}) and the extensions required by the type
system and operational semantics (Figure~\ref{fig:syntax-red}).
Identifiers $C$, $m$, $f$ and $l$ are taken from disjoint countable
sets representing names of classes, methods, fields and labels
respectively. The vector arrow indicates a sequence of zero or more
elements of the syntactic class it is above. Similarly, constructs
indexed by a set denote a finite sequence. We use $E$ to
specifically denote finite sets of labels $l$, whereas $I$ is any
finite indexing set.

Field names always refer to fields of the current object; there is no
qualified field specification $o.f$.  In other words, all fields
are private.  Method call is only available on a field,
not an arbitrary expression. This is because calling a method changes
the session type of the object on which the method is called, and in
order for the type system to record this change, the object must be in
a specified location (field).

Conversely, there is no unqualified method call in the core language.
Calling a method on the current object $\this$, which we call a
\emph{self-call}, behaves differently from external calls with respect
to typing and will be discussed as an extension in
Section~\ref{subsec:self-calls},

A program consists of a sequence of class declarations $D$. In the core
language, types in a top-level program only occur in the
\emph{session} part of a class declaration: no type is declared for
fields because they can vary at run-time and are always initially
$\nullterm$, and method declarations are also typeless, as explained earlier.

A session type $S$ corresponds to a view of an object from outside. It
shows which methods can be called, and their signatures, but the
fields are not visible. We refer to
$\branch{\methsign{m_i}{T_i}{T'_i}}{S_i}{i\in I}$ as a \emph{branch}
type and to $\choice{l}{S_l}{l\in E}$ as a \emph{variant}
type. Session type $\End$ abbreviates the empty branch type
$\{\}$. The core language does not include named session types, or the
\lstinline|session| and \lstinline|where| clauses from the examples;
we just work with recursive session type expressions of the form
$\rectype{X}{S}$, which are required to be \emph{contractive},
i.e.\ containing no subexpression of the form
$\rectype{X_1}{\cdots\rectype{X_n}{X_1}}$. We require contractivity so
that every session type defines some behaviour. The $\mu$ operator is
a binder, giving rise, in the standard way, to notions of bound and
free variables and alpha-equivalence. A type is \emph{closed} if it
includes no free variables. We denote by $T\subs{U}{X}$ the
capture-avoiding substitution of $U$ for $X$ in~$T$.

Value types which can occur either as parameter or return type for a
method are: $\nulltype$ which has the single value $\nullterm$, a
session type $S$ which is the type of an object, or an enumerated type
$E$ which is an arbitrary finite set of labels $l$. Additionally, the
specific return type $\linkthis$ is used for method occurrences after
which the resulting session type is a variant, and means that the
method result will be the tag of the variant.  The set of possible labels
appears in the variant construct of the session type, so it is not
necessary to specify it in the return type of the
method. However, in the example code, the set of labels is written
instead of $\linkthis$, so that the method signature shows the return
type in the usual way.

The type system, which we will describe later, enforces the following
restrictions on session types: the immediate components of a variant
type are always branch types, and the session type in a class
declaration is always a branch. This is because a variant type is used
only to represent the effect of method calls and the dependency
between a method result and the subsequent session type, so it only
makes sense immediately within a branch type. 

% In contrast to variant types
% in functional languages, values are not tagged; instead the tag is
% stored in a value of type $\linktype{}{f}$, where $f$ refers to the
% field containing the unique reference to the variantly typed object.
% This type cannot appear in declarations because such values cannot be
% returned or passed as parameters

Figure~\ref{fig:syntax-red} defines additional syntax that is needed
for the formal system but is not used in top-level programs. This
includes:\begin{longitem}
\item some extra forms of types, which are used internally to type
some subexpressions but cannot be the argument type or return type of
a method, and thus are never written in a program;
\item intermediate expressions that cannot appear in a program but
  arise from the operational semantics;
\item syntax for the heap.
\end{longitem}
\paragraph{Internal types.} The first internal type we add is the type
$\linktype{}{f}$, where $f$ is the name of a field. This type is
related to variant session types and the $\linkthis$ type, in the way
illustrated by the following example: suppose that, in some context,
field $f$ of the current object contains an object whose type is
$\branch{\methsign{m}{\nulltype}{\linkthis}}{\choice{l}{S_l}{l\in
    E}}{}$.
This means that the expression $\methcal{f}{m}{\nullterm}$ is allowed
in this context and will both: change the abstract state of the object
in $f$ to one of the $S_l$, and return the label $l$ corresponding to
that particular state. Thus, there is a link between the value of the
expression and the type of field $f$ after evaluating the expression,
and the type system needs to keep track of this link; to this end, the
expression $\methcal{f}{m}{\nullterm}$ is given, internally, the type
$\linktype{}{f}$, rather than $\linkthis$ which does not contain
enough information. The use of $\linktype{}{f}$ is also illustrated in
Figure~\ref{fig:reduction-ex}.

The second internal type is an alternative form of object type,
$\objecttype C F$, which has a field typing instead of a session type.
Recall that in our language, all object fields are private; therefore,
normally the type of an object is a session type which only refers to
methods. However, an object has access to its own fields, so for
typechecking a method definition, the type environment needs to provide
types for the fields of the current object ($\this$). Since the types
of the fields change throughout the life of the object, the class
definition is not enough to know their types at a particular point. We
thus use a field typing $F$, which is usually a record type associating one
type to each field of the object. For example, $\objecttype C
{\{\nulltype\,f_1; S\,f_2\}}$ represents an object of class $C$ with
exactly two fields, $f_1$, which currently contains a value of type
$\nulltype$, and $f_2$ which currently contains an object in state
$S$. Note that in $\objecttype C F$, $F$ must provide types for
exactly all the fields of class $C$.

The other form of field typing is a variant field typing, which
regroups several possible sets of field types indexed by labels. For
example, $\objecttype C {\vfield{l_1}{\{\nulltype\,f_1; S\,f_2\}; l_2 :
  \{E\,f_1; S'\,f_2\}}{}}$ represents an object of class $C$, whose two
fields are $f_1$ and $f_2$, and where either $f_1$ has type
$\nulltype$ and $f_2$ has type $S$, or $f_1$ has type $E$ and $f_2$
has type $S'$, depending on the value of the label.

These field typings cannot be the type of expressions (which cannot
evaluate to $\this$); they only represent the type of the current
object in type environments. The relation between field typings and
session types will be discussed in Section~\ref{sec:consistency}.

\paragraph{Internal expressions, heap, and states.}
These other additions are used to define the operational semantics.  A
heap $h$ maps object identifiers $o$, taken from yet another countable
set of names, to object records $R$.  We write $\dom(h)$ for the
set of object identifiers in $h$. The identifiers are values,
which may occur in expressions. The operation $\hadd{h}{o=R}$
represents adding a record for identifier $o$ to the heap $h$ and we
consider it to be associative and commutative, that is, $h$ is
essentially an unordered set of bindings. It is only defined if
$o\not\in\dom(h)$. \emph{Paths} $r$ 
represent locations in the heap.  A path consists of a top-level object
identifier followed by an arbitrary number of field specifications.
% In the top-level syntax, the only possible path is $\this$, the
% current object.
We use the following notation to interpret paths relative to a given heap.
\begin{definition}[Heap locations]\hfil
\label{def:heaplocations}
  \begin{itemize}
  \item If $R = \hentry{C}{\iset{f_i=v_i}{i\in I}}$, we define $R.f_i
    = v_i$ (for all $i$) and $R.\classterm = C$. For any value $v$ and
    any $j \in I$, we also define $\changeval{R}{f_j}{v} =
    \hentry{C}{\iset{f_i=v'_i}{i\in I}}$ where $v'_i = v_i$ for $i\neq
    j$ and $v'_j = v$.
  \item If $h = (\hadd{h'}{o = R})$, we define $h(o) = R$,
    and for any field $f$ of $R$, $\changeval{h}{o.f}{v} =
    (\hadd{h'}{o = \changeval{R}{f}{v}})$.
  \item If $r = r'.f$ and $h(r').f = o$, then we also define $h(r) =
    h(o)$ and $\changeval{h}{r.f'}{v} = \changeval{h}{o.f'}{v}$.
  \item In any other case, these operations are not defined. Note in
    particular that $h(r)$ is not defined if $r$ is a path that exists
    in $h$ but does not point to an object identifier.
  \end{itemize}
\end{definition}

\noindent There is a new form of expression, $\return e {}$, which is
used to represent an ongoing method call.

Finally, a state consists of a heap and an expression, and the
operational semantics will be defined as a reduction relation on states;
$\mathcal{E}$ are evaluation contexts in the style of Wright and
Felleisen \cite{WrightAK:synats}, used in the definition of reduction.

The semantic and typing rules we will present next are implicitly
parameterized by the set of declarations $D$ which constitute the
program.  It is assumed that the whole set is available at any
point
%\footnote{This is assumed for simplicity, but it is easy to check
%  that when typing a toplevel program, the body of a method or class is never used
%  outside that particular method or class --- hence respecting
%  encapsulation and allowing separate checking and compilation. The
%  only information needed is a session declaration for every class
%  being instantiated. Conversely, the semantic rules do not need any
%  type information and, in particular, completely ignore session declarations.}
and that any class is declared only once.  We do not require the sets
of method or field names to be disjoint from one class to another.  We
will use the following notation: if $\class{C}{S}{\vec f}{\vec M}$ is
one of the declarations, $C.\sessterm$ means $S$ and $C.\fieldsterm$
means $\vec f$, and if $\method{}{m}{x}{e}\in\vec M$ then $C.m$ is
$e$.

%%% Local Variables: 
%%% mode: latex
%%% TeX-master: "main"
%%% End: 

\begin{figure}
\centering\setlength\lineskip{\typingRuleSkip}
\AxiomC{$\method{}{m}{x}{e}\in h(r.f).\classterm$}
\rulename{R-Call}
\UnaryInfC{$\state{h}{r}{\methcal{f}{m}{v}} \reduces
  \state{h}{r.f}{\return{e\subs{v}{x}}{}}$}
\DisplayProof\hfil 
\axiomname{R-Return}
$\state{h}{r.f}{\return v {}} \reduces \state{h}{r}{v}$
\hfil
\\~\\
%
% \AxiomC{$\classterm(h(r.f)).m = e$}
% \rulename{R-Call}
% \UnaryInfC{$\state{h}{\methcal{r.f}{m}{}} \reduces
%   \state{h}{\return{e\subs{r.f}{\this}}{r.f}}$}
% \DisplayProof\hfil
%
% \AxiomC{$\classterm(h(r)).m = e$}
% \rulename{R-SelfCall}
% \UnaryInfC{$\state{h}{\selfcal{r}{m}{}} \reduces
%   \state{h}{e\subs{r}{\this}}$}
% \DisplayProof\hfil
%
\AxiomC{$o\not\in\dom(h)$}
\AxiomC{$C.\mkterm{fields} = \vec f$}
\rulename{R-New}
\BinaryInfC{$\state{h}{r}{\new C}\reduces
\state{\hadd{h}{o=\hentry{C}{\vec f = \overrightarrow\nullterm}}}{r}{o}$}
\DisplayProof\hfil
\AxiomC{$h(r).f = v$}
\rulename{R-Swap}
\UnaryInfC{$\state{h}{r}{\swap{f}{v'}}\reduces\state{\changeval{h}{r.f}{v'}}{r}{v}$}
\DisplayProof\hfil
\AxiomC{$l_0\in E$}
\rulename{R-Switch}
\UnaryInfC{$\state{h}{r}{\switch{l_0}{l}{e_l}{l\in E}} \reduces
  \state{h}{r}{e_{l_0}}$}
\DisplayProof\hfil
\axiomname{R-Seq}
$\state{h}{r}{\seq ve} \reduces \state{h}{r}{e}$
\hfil
\\~\\
%
% \axiomname{R-Assign}
% $\state{h}{\assign{r.f}{v}}\reduces\state{\changeval{h}{r.f}{v}}{\nullterm}$
% \hfil
%
\AxiomC{$\state h r e \reduces\state{h'}{r'}{e'}$}
\rulename{R-Context}
\UnaryInfC{$\state{h}{r}{\mathcal E[e]}\reduces\state{h'}{r'}{\mathcal E[e']}$}
\DisplayProof
\caption{Reduction rules for states}
\label{fig:reduction}
\end{figure}

%%% $Id$

%%% Local Variables: 
%%% mode: latex
%%% TeX-master: "minimal"
%%% End: 

\subsection{Operational Semantics}
\label{sec:reduction}

Figure~\ref{fig:reduction} defines an operational semantics on states
$\state h r e$ consisting of a heap $h$, a path $r$ in the heap
indicating the \emph{current object}, and an expression $e$. In
general, $e$ is an expression obtained by a series of reduction steps
from a method body, where the method was called on the object
identified by the path $r$. All rules have the implicit premise that
the expressions appearing in them must be defined. For example,
$\swap{f}{v}$ only reduces if $h(r)$ is an object record containing a
field named $f$. An example of reduction, together with typing, is
presented in Figure~\ref{fig:reduction-ex} and discussed at the end of
the present section.

The current object path $r$ is used to resolve field references
appearing in the expression~$e$. It behaves like a call
stack: as shown in \textsc{R-Call}, when a method call on a field $f$
(relative to the current object located at $r$) is entered, the object
in $r.f$ becomes the current object; this is indicated by changing the
path to $r.f$.  Additionally, the method body, with the actual
parameter substituted for the formal parameter, is wrapped in a
$\returnterm$ expression and replaces the method call. When the body
has reduced to a value, this value is unwrapped by \textsc{R-Return}
which also pops the field specification $f$ from the path, recovering the
previous current object $r$. This is illustrated in
Figure~\ref{fig:reduction-ex}, which also shows the typing of
expressions in a series of reductions.
\Rnew\ creates a new object in the heap, with $\nullterm$ fields. 
\textsc{R-Swap} updates the value of a field and reduces to its former value.

\Rswitch\ is standard.
\Rseq\ discards the result of the first part of a sequential
composition. \Rcontext\ is the usual rule for reduction in contexts.

To complete the definition of the semantics we need to define the
initial state. The idea is to designate a particular method $m$ of a
particular class $C$ as the \emph{main} method, which is called in
order to begin execution. The most convenient way to express this is
to have an initial heap that contains an object of class $C$, which is
also chosen as the current object, and an initial expression $e$ which is
the body of $m$. The initial state is therefore
\[
\state{\mkterm{top}=\hentry{C}{C.\fieldsterm=\vec\nullterm}}{\mkterm{top}}{e}
\]
where $\mkterm{top}$ is the identifier of the top-level object of
class $C$, which is the only object in the heap, and the current object
path is also $\mkterm{top}$.
Strictly speaking, method $m$ must have a parameter $x$; we take
$x$ to be of type $\nulltype$ and assume that it does not occur in $e$.

\subsection{Example of reduction}
\label{sec:reduction-ex}
\begin{figure*}[t]
\[
\begin{array}{c}
\state{\mkterm{top}=\hentry{C}{f=\nullterm, g=\nullterm}}{\mkterm{top}}{\seq{f = \new{C'}}{\seq{g = f.m_j()}{\switchterm~(g)~\{l\colon
    e_l\}_{l\in E}}}}
\\
\downarrow \scriptsize{(\mbox{Expand})}
\\
\state{\mkterm{top}=\hentry{C}{f=\nullterm, g=\nullterm}}{\mkterm{top}}{\seq{\seq{\swap{f}{\new{C'}}}{\nullterm}}{\seq{g = f.m_j()}{\switchterm~(g)~\{l\colon
    e_l\}_{l\in E}}}}
\\
\downarrow \scriptsize{(\Rnew)}
\\
\state{\mkterm{top}=\hentry{C}{f=\nullterm, g=\nullterm}, o=\hentry{C'}{}}{\mkterm{top}}{\seq{\seq{\swap{f}{o}}{\nullterm}}{\seq{g = f.m_j()}{\switchterm~(g)~\{l\colon
    e_l\}_{l\in E}}}}
\\
\downarrow \scriptsize{(\Rswap)}
\\
\state{\mkterm{top}=\hentry{C}{f=o, g=\nullterm}, o=\hentry{C'}{}}{\mkterm{top}}{\seq{\seq{\nullterm}{\nullterm}}{\seq{g = f.m_j()}{\switchterm~(g)~\{l\colon
    e_l\}_{l\in E}}}}
\\
\downarrow \scriptsize{(\Rseq,\Rseq)}
\\
\state{\mkterm{top}=\hentry{C}{f=o, g=\nullterm}, o=\hentry{C'}{}}{\mkterm{top}}{\seq{g = f.m_j()}{\switchterm~(g)~\{l\colon
    e_l\}_{l\in E}}}
\\
\downarrow \scriptsize{(\mbox{Expand})}
\\
\state{\mkterm{top}=\hentry{C}{f=o, g=\nullterm}, o=\hentry{C'}{}}{\mkterm{top}}{\seq{\seq{\swap{g}{f.m_j()}}{\nullterm}}{\switchterm~(\swap{g}{\nullterm})~\{l\colon e_l\}_{l\in E}}}
\\
\downarrow \scriptsize{(\mbox{Simplify})}
\\
\state{\mkterm{top}=\hentry{C}{f=o, g=\nullterm},
  o=\hentry{C'}{}}{\mkterm{top}}{\seq{\swap{g}{f.m_j()}}{\switchterm~(\swap{g}{\nullterm})~\{l\colon
    e_l\}_{l\in E}}}
\\
\downarrow \scriptsize{(\Rcall)}
\\
\state{\mkterm{top}=\hentry{C}{f=o, g=\nullterm},
  o=\hentry{C'}{}}{\mkterm{top}.f}{\seq{\swap{g}{\return{e}{}}}{\switchterm~(\swap{g}{\nullterm})~\{l\colon e_l\}_{l\in E}}}
\\
\downarrow \scriptsize{(\mbox{Evaluate $e$})}
\\
\state{\mkterm{top}=\hentry{C}{f=o, g=\nullterm},
  o=\hentry{C'}{}}{\mkterm{top}.f}{\seq{\swap{g}{\return{l_0}{}}}{\switchterm~(\swap{g}{\nullterm})~\{l\colon e_l\}_{l\in E}}}
\\
\downarrow \scriptsize{(\Rreturn)}
\\
\state{\mkterm{top}=\hentry{C}{f=o, g=\nullterm},
  o=\hentry{C'}{}}{\mkterm{top}}{\seq{\swap{g}{l_0}}{\switchterm~(\swap{g}{\nullterm})~\{l\colon e_l\}_{l\in E}}}
\\
\downarrow \scriptsize{(\Rswap,\Rseq)}
\\
\state{\mkterm{top}=\hentry{C}{f=o, g=l_0},
  o=\hentry{C'}{}}{\mkterm{top}}{\switchterm~(\swap{g}{\nullterm})~\{l\colon e_l\}_{l\in E}}
\\
\downarrow \scriptsize{(\Rswap)}
\\
\state{\mkterm{top}=\hentry{C}{f=o, g=\nullterm},
  o=\hentry{C'}{}}{\mkterm{top}}{\switchterm~(l_0)~\{l\colon e_l\}_{l\in E}}
\\
\downarrow \scriptsize{(\Rswitch)}
\\
\state{\mkterm{top}=\hentry{C}{f=o, g=\nullterm},
  o=\hentry{C'}{}}{\mkterm{top}}{e_{l_0}}
\end{array}
\]
\caption{A series of reduction steps.}
\label{fig:reduction-ex-no-types}
\end{figure*}

%%% Local Variables: 
%%% mode: latex
%%% TeX-master: "main"
%%% End: 

Assume that the top-level class is $C$, containing fields $f$ and $g$. Assume
also that there is another class $C'$ which defines the set of methods
$\{ m_i \mid i \in I \}$. Finally, assume that the body of the main
method of $C$ 
is
\begin{equation*}
\seq{f = \new{C'}}{\seq{g = f.m_j()}{\switchterm~(g)~\{l\colon
    e_l\}_{l\in E}}}
\end{equation*}
where $j$ is some element of $I$.

The initial state is
\begin{equation*}
\state{\mkterm{top}=\hentry{C}{f=\nullterm, g=\nullterm}}{\mkterm{top}}{\seq{f = \new{C'}}{\seq{g = f.m_j()}{\switchterm~(g)~\{l\colon
    e_l\}_{l\in E}}}}
\end{equation*}
where for simplicity we have ignored the parameter of $m_j$. 
Figure~\ref{fig:reduction-ex-no-types} shows the sequence of reduction
steps until one of the cases of the $\switchterm$ is reached.

The first step is expansion of the syntactic sugar for assignment,
translating it into a swap followed by $\nullterm$. Another similar
translation step occurs later. The first real reduction step is
$\Rnew$, creating an object $o$, followed by $\Rswap$ to complete the
assignment into field $f$ and then $\Rseq$ to tidy up.  Next, the step
labelled ``simplify'' informally removes $\nullterm$ in order to avoid
carrying it through to an uninteresting $\Rseq$ reduction later.

Now assume that the body of method $m_j$ is $e$. Reduction by $\Rcall$
changes the current object path to $\mkterm{top}.f$ because the
current object is now $o$. Several reduction steps convert $e$ to a
particular element of the enumerated type $E$, which we call
$l_0$. After that, $\Rreturn$ changes the current object path back to
$\mkterm{top}$, and then some $\Rswap$ and $\Rseq$ steps bring $l_0$
into the guard of the $\switchterm$, finally allowing the appropriate
case $e_{l_0}$ to be selected.

%% , and the state becomes
%% \begin{equation*}
%% \state{\mkterm{top}=\hentry{C}{f=o, g=\nullterm},
%%   o=\hentry{C'}{}}{\mkterm{top}.f}{\seq{\swap{g}{\return{e}{}}}{\switchterm~(\swap{g}{\nullterm})~\{l\colon e_l\}_{l\in E}}}
%% \end{equation*}
%% Now assume that $e$ reduces to $l_0$, a particular element of the
%% enumerated type $E$. After several reduction steps we have
%% \begin{equation*}
%% \state{\mkterm{top}=\hentry{C}{f=o, g=\nullterm},
%%   o=\hentry{C'}{}}{\mkterm{top}.f}{\seq{\swap{g}{\return{l_0}{}}}{\switchterm~(\swap{g}{\nullterm})~\{l\colon e_l\}_{l\in E}}}
%% \end{equation*}
%% and then $\Rreturn$ changes the current object path back to
%% $\mkterm{top}$:
%% \begin{equation*}
%% \state{\mkterm{top}=\hentry{C}{f=o, g=\nullterm},
%%   o=\hentry{C'}{}}{\mkterm{top}}{\seq{\swap{g}{l_0}}{\switchterm~(\swap{g}{\nullterm})~\{l\colon e_l\}_{l\in E}}}.
%% \end{equation*}
%% By $\Rswap$ and then $\Rseq$ we obtain
%% \begin{equation*}
%% \state{\mkterm{top}=\hentry{C}{f=o, g=l_0},
%%   o=\hentry{C'}{}}{\mkterm{top}}{\switchterm~(\swap{g}{\nullterm})~\{l\colon e_l\}_{l\in E}}.
%% \end{equation*}
%% and then $\Rswap$ brings $l_0$ out of the heap so that $\switchterm$
%% can analyse it:
%% \begin{equation*}
%% \state{\mkterm{top}=\hentry{C}{f=o, g=\nullterm},
%%   o=\hentry{C'}{}}{\mkterm{top}}{\switchterm~(l_0)~\{l\colon e_l\}_{l\in E}}.
%% \end{equation*}
%% Finally, by $\Rswitch$ the expression reduces to the appropriate case
%% for $l_0$:
%% \begin{equation*}
%% \state{\mkterm{top}=\hentry{C}{f=o, g=\nullterm},
%%   o=\hentry{C'}{}}{\mkterm{top}}{e_{l_0}}.
%% \end{equation*}
We will return to this example in Section~\ref{sec:example-red}, to
show how each state is typed.

%%% Local Variables: 
%%% mode: latex
%%% TeX-master: "main"
%%% End: 

\subsection{Subtyping}
\label{sec:subtyping}
\begin{figure}
{\centering\setlength\lineskip{\typingRuleSkip}
% \AxiomC{$E.\mkterm{labels} \subset E'.\mkterm{labels}$}
% \rulename{S-Enum}
% \UnaryInfC{$E\subt E'$}
% \DisplayProof\hfil
%
%\AxiomC{$E\subt E'$}
%\rulename{S-Link}
%\UnaryInfC{$\linktype{E}{r}\subt\linktype{E'}{r}$}
%\DisplayProof\hfil
%
\AxiomC{$\forall i\in I, T_i\subt T'_i$}
\rulename{S-Record}
\UnaryInfC{$\iset{T_i\,f_i}{i\in I}\subt \iset{T'_i\,f_i}{i\in I}$} 
\DisplayProof\hfil
%
% \axiomname{S-Bot}
% $\bot \subt F$
% \hfil
%
\AxiomC{$E\subset E'$}
\AxiomC{$\forall l\in E, F_l\subt F'_l$}
\rulename{S-Variant}
\BinaryInfC{$\vfield{l}{F_l}{l\in E} \subt \vfield{l}{F'_l}{l\in E'}$}
\DisplayProof\hfil
\AxiomC{$F\subt F'$}
\rulename{S-Field}
\UnaryInfC{$\objecttype C F\subt\objecttype{C}{F'}$}
\DisplayProof\hfil
%
% \AxiomC{$E\subset E'$}
% \rulename{S-Enum}
% \UnaryInfC{$E\subt E'$}
% \DisplayProof
%
% \AxiomC{$S\subt S'$}
% \rulename{S-Sess}
% \UnaryInfC{$\objecttype C S\subt\objecttype{C}{S'}$}
% \DisplayProof
\par}
% \smallskip We write $\Gamma\subt\Gamma'$ if for any $w$
% in $\dom(\Gamma')$ we have
% $w\in\dom(\Gamma)$ and $\Gamma(w)\subt\Gamma'(w)$.
\caption{Subtyping rules for fields}
\label{fig:subtyping}
\end{figure}

Subtyping is an essential ingredient of the theory of session types.
Originally proposed by Gay and Hole~\cite{GaySJ:substp}, it has been
widely used in other session-based systems, with subject-reduction and
type-safety holding. The guiding principle is the ``safe
substitutability principle'' of Liskov and
Wing~\cite{liskov.wing:behavioural-subtyping}, which states that, if
$S$ is a subtype of $T$, then objects of type $T$ in a program may be
safely replaced with objects of type $S$.

Two kinds of types in the top-level core language are subject to
subtyping: enumerated types and session types. The internal language
also has field typings; subtyping on them is derived from subtyping on
top-level types by the rules in Figure~\ref{fig:subtyping}.

Subtyping for enumerated types is defined as simple set inclusion:
$E\subt E'$ if and only if $E\subset E'$. We refer to subtyping for
session types as the \emph{sub-session} relation. Because session
types can be recursive, the sub-session relation is defined
coinductively, by defining necessary conditions it must satisfy and
taking the largest relation satisfying them. The definition involves
checking compatibility between different method signatures, which
itself is dependent on the whole subtyping relation. We proceed as
follows: given a candidate sub-session relation $\mathcal{R}$, we
define an $\mathcal{R}$-compatibility relation between types and
between method signatures which uses $\mathcal{R}$ as a sub-session
relation. We then use $\mathcal{R}$-compatibility in the structural
conditions that $\mathcal{R}$ must satisfy in order to effectively be
a sub-session relation.

Let $\mathcal{S}$ denote the set of contractive, closed, class session
types.  We deal with recursive types using the following
$\mkterm{unfold}$ operator:
\begin{definition}[Unfolding]
  The operator $\mkterm{unfold}$ is defined inductively on
  $\mathcal{S}$ by $\mkterm{unfold}(\rectype{X}{S}) =
  \mkterm{unfold}(S\subs{(\rectype{X}{S})}{X})$ and $\mkterm{unfold}(S) = S$ if $S$ is
  not of the form $\rectype{X}{S}$.  Since the types in $\mathcal{S}$
  are contractive, this definition is well-founded.
\end{definition}

We now define the two compatibility relations we need.
\begin{definition}[$\mathcal{R}$-Compatibility (Types)]
\label{def:Rcompatibility-types}
Let $\mathcal{R}$ be a binary relation on $\mathcal{S}$. We say that
type $T$ is $\mathcal{R}$-compatible with type $T'$ if one of the
following conditions is true.
\begin{enumerate}
\item $T=T'$
\item $T$ and $T'$ are enumerated types and $T\subset T'$ 
\item $T, T' \in \mathcal{S}$ and $(T, T') \in \mathcal{R}$.
\end{enumerate}
\end{definition}

\begin{definition}[$\mathcal{R}$-Compatibility (Signatures)]
\label{def:Rcompatibility-signatures}
Let $\mathcal{R}$ be a binary relation on $\mathcal{S}$.  Let $\sigma
= \methsign{m}{T}{U} : S$ and $\sigma' = \methsign{m}{T'}{U'} : S'$ be
components of branch types, both for the same method name $m$, i.e.\
method signatures with subsequent session types.  We say that $\sigma$
is $\mathcal{R}$-compatible with $\sigma'$ if $T'$ is
$\mathcal{R}$-compatible with $T$ and either:
\begin{enumerate}
\item $U$ is $\mathcal{R}$-compatible with $U'$ and $(S, S')\in
  \mathcal{R}$, or
\item $U$ is an enumerated type $E$, $U' = \linkthis$ and
  $(\choice{l}{S}{l\in E}, S')\in \mathcal{R}$.
\end{enumerate}
\end{definition}

The compatibility relation on method signatures is, as expected,
covariant in the return type and the subsequent session type and
contravariant in the parameter type, but with one addition: if a
method has an enumerated return type $E$ and subsequent session type
$S$, then it can always be used as if it had a return type of
$\linkthis$ and were followed by the uniform variant session type
$\choice{l}{S}{l\in E}$.  Indeed, both signatures mean that the method
can return any label in $E$ and will always leave the object in state
$S$.

We can now state the necessary conditions for a sub-session relation.
\begin{definition}[Sub-session]
\label{def:subsession}
Let $\mathcal{R}$ be a binary relation on $\mathcal{S}$. We say that
$\mathcal{R}$ is a sub-session relation if $(S, S')\in \mathcal{R}$
implies:
\begin{enumerate}
\item If $\mkterm{unfold}(S) =
  \branch{\methsign{m_i}{T_i}{U_i}}{S_i}{i\in I}$ then
  $\mkterm{unfold}(S')$ is of the form
  $\branch{\methsign{m_j}{T'_j}{U'_j}}{S'_j}{j\in J}$ with $J\subset
  I$, and for all $j\in J$, $\methsign{m_j}{T_j}{U_j}:S_j$ is
  $\mathcal{R}$-compatible with $\methsign{m_j}{T'_j}{U'_j}:S'_j$.
\item If $\mkterm{unfold}(S) = \choice{l}{S_l}{l\in E}$ then
  $\mkterm{unfold}(S')$ is of the form $\choice{l}{S'_l}{l\in E'}$ with
  $E\subset E'$ and for all $l\in E$, $(S_l, S'_l)\in\mathcal{R}$.
\end{enumerate}
For the sake of simplicity we will now, when we refer to this
definition later on, make the unfolding step implicit by assuming,
without loss of generality, that neither $S$ nor $S'$ is of the form
$\rectype{X}{S''}$.
\end{definition}

\begin{lemma}\label{lem:union} 
  The union of several sub-session relations is a sub-session
  relation.
\end{lemma}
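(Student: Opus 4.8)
The plan is to reduce the whole statement to a single observation, namely that the two $\mathcal{R}$-compatibility relations of Definitions~\ref{def:Rcompatibility-types} and~\ref{def:Rcompatibility-signatures} are \emph{monotone} in their relation parameter: if $\mathcal{R}\subset\mathcal{R}'$, then $\mathcal{R}$-compatibility implies $\mathcal{R}'$-compatibility, both for types and for signatures. This is immediate by inspection. In Definition~\ref{def:Rcompatibility-types}, of the three alternatives only clause~(3), $(T,T')\in\mathcal{R}$, mentions the relation, and it is plainly upward-closed. In Definition~\ref{def:Rcompatibility-signatures} the compatibility of $\sigma$ with $\sigma'$ is built from type-compatibility (already monotone) together with the membership conditions $(S,S')\in\mathcal{R}$ in clause~(1) and $(\choice{l}{S}{l\in E},S')\in\mathcal{R}$ in clause~(2), each of which is monotone; so signature compatibility is monotone as well.

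With this in hand the main argument is direct. Fix a family $\{\mathcal{R}_k\}_{k\in K}$ of sub-session relations and put $\mathcal{R}=\bigcup_{k\in K}\mathcal{R}_k$, which is again a binary relation on $\mathcal{S}$, so that Definition~\ref{def:subsession} applies to it. I would take an arbitrary pair $(S,S')\in\mathcal{R}$ and choose some $k\in K$ witnessing $(S,S')\in\mathcal{R}_k$. Since $\mathcal{R}_k$ is a sub-session relation, the required structural conditions hold for this pair with $\mathcal{R}_k$ in place of $\mathcal{R}$. Crucially, the unfolding step and the resulting shape of the pair (branch versus variant, the index inclusion $J\subset I$, the label inclusion $E\subset E'$) do not depend on the relation at all, so they transfer verbatim to $\mathcal{R}$. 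In the branch case, the $\mathcal{R}_k$-compatibility of each signature $\methsign{m_j}{T_j}{U_j}:S_j$ with $\methsign{m_j}{T'_j}{U'_j}:S'_j$ lifts to $\mathcal{R}$-compatibility by monotonicity, using $\mathcal{R}_k\subset\mathcal{R}$. In the variant case, each premise $(S_l,S'_l)\in\mathcal{R}_k$ gives $(S_l,S'_l)\in\mathcal{R}$ simply because $\mathcal{R}_k\subset\mathcal{R}$. Hence both conditions of Definition~\ref{def:subsession} hold for $(S,S')$ with respect to $\mathcal{R}$, and since the pair was arbitrary, $\mathcal{R}$ is a sub-session relation.

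There is no real obstacle here; the only point demanding care is purely bookkeeping: the witnessing index $k$ may differ from pair to pair, but because Definition~\ref{def:subsession} imposes its conditions pairwise, this causes no difficulty. The single genuine ingredient is the monotonicity of compatibility, and I would state and dispatch that first so that the verification of the two clauses becomes a one-line appeal to $\mathcal{R}_k\subset\mathcal{R}$ in each case.
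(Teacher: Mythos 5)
Your proof is correct and follows essentially the same route as the paper's: pick a witnessing sub-session relation for each pair and observe that the conditions of Definition~\ref{def:subsession}, including $\mathcal{R}$-compatibility, are monotone in the relation parameter. You merely make explicit the monotonicity observation that the paper states in one sentence.
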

\begin{proof}
Let $\mathcal{R} = \bigcup_{i\in I}\mathcal{R}_i$, where
the $\mathcal{R}_i$ are sub-session relations. Let $(S, S')\in
\mathcal{R}$. Then there is $j$ in
$I$ such that $(S, S')\in \mathcal{R}_j$. This implies that $(S, S')$
satisfies the conditions in Definition~\ref{def:subsession} with
respect to $\mathcal{R}_j$. Just notice that, because
$\mathcal{R}_j\subset\mathcal{R}$, the conditions are satisfied with
respect to $\mathcal{R}$ as well --- in particular,
$\mathcal{R}_j$-compatibility implies $\mathcal{R}$-compatibility.
Indeed, the conditions for $\mathcal{R}$ only differ from those for
$\mathcal{R}_j$ by requiring particular pairs of
session types to be in $\mathcal{R}$ rather than in $\mathcal{R}_j$, so
they are looser.
\end{proof}

We now define the subtyping relation $\subt$ on session types to be
the largest sub-session relation, i.e.\ the union of all sub-session
relations. The subtyping relation on general top-level types is just
$\subt$-compatibility.

Subtyping on session types means that either both are branches or both
are variants. In the former case, the supertype must allow fewer
methods and their signatures must be compatible; in the latter case,
the supertype must allow more labels and the common cases must be in
the subtyping relation.  Like the definition of subtyping for channel
session types \cite{GaySJ:substp}, the type that allows a choice to be
made (the branch type here, the $\oplus$ type for channels)
has contravariant subtyping in the set of choices.

The following lemma shows that the necessary conditions of Definition
\ref{def:subsession} are also sufficient in the case of $\subt$.
\begin{lemma}\label{lem:subsession}\hfill
  \begin{enumerate}
  \item Let $S = \branch{\methsign{m_i}{T_i}{U_i}}{S_i}{i\in I}$
    and $S' = \branch{\methsign{m_j}{T'_j}{U'_j}}{S'_j}{j\in J}$ with $J\subset
    I$. If for all $j\in J$, $\methsign{m_j}{T_j}{U_j}:S_j$ is
    $\subt$-compatible with $\methsign{m_j}{T'_j}{U'_j}:S'_j$,
    then $S\subt S'$.
  \item Let $S = \choice{l}{S_l}{l\in E}$ and
    $S' = \choice{l}{S'_l}{l\in E'}$ with
    $E\subset E'$. If for all $l$ in $E$ we have $S_l\subt S'_l$, then
    $S\subt S'$.
  \end{enumerate}
\end{lemma}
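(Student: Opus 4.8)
The plan is to use the fact that $\subt$ is \emph{defined} as the largest sub-session relation, namely the union of all of them. Hence, to prove $S\subt S'$ it is enough to exhibit \emph{some} relation $\mathcal{R}$ that satisfies the closure conditions of Definition~\ref{def:subsession} and contains the pair $(S,S')$: by maximality any such $\mathcal{R}$ is contained in $\subt$, so $(S,S')\in\subt$. This is the usual coinductive argument, and it handles both parts of the lemma in the same way. Since $S$ and $S'$ are already given in branch form (Part~1) or variant form (Part~2), the operator $\mkterm{unfold}$ acts trivially on them and the implicit-unfolding convention adopted after Definition~\ref{def:subsession} applies without further comment.

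For the candidate I would take simply
\[
  \mathcal{R} \;=\; {\subt}\;\cup\;\{(S,S')\}.
\]
The reason a single extra pair suffices is that the hypotheses are phrased relative to $\subt$ itself: in Part~1 they assert $\subt$-compatibility of each $\methsign{m_j}{T_j}{U_j}:S_j$ with $\methsign{m_j}{T'_j}{U'_j}:S'_j$, and in Part~2 they assert $S_l\subt S'_l$ for every $l\in E$. Thus every auxiliary pair that could be demanded while checking the conditions for $(S,S')$ is already present in $\subt$, and no genuine coinductive closure is needed.

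It then remains to verify that $\mathcal{R}$ meets the two conditions of Definition~\ref{def:subsession}. For any pair already in $\subt$, the conditions hold with respect to $\subt$ and therefore with respect to the larger relation $\mathcal{R}$, because the conditions only become looser as the relation grows; concretely, $\subt$-compatibility implies $\mathcal{R}$-compatibility and membership in $\subt$ implies membership in $\mathcal{R}$, which is exactly the monotonicity already recorded in the proof of Lemma~\ref{lem:union}. For the new pair $(S,S')$ the shape requirements, $J\subset I$ in Part~1 and $E\subset E'$ in Part~2, are hypotheses; the remaining requirement, $\mathcal{R}$-compatibility of the signatures (Part~1) or $(S_l,S'_l)\in\mathcal{R}$ (Part~2), follows from the stated hypotheses together with $\subt\subseteq\mathcal{R}$ and the same monotonicity. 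Hence $\mathcal{R}$ is a sub-session relation, so $\mathcal{R}\subseteq\subt$ and in particular $S\subt S'$.

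The step I would scrutinise most is the claim that adjoining $(S,S')$ forces no \emph{further} pairs into the relation. This is precisely what makes the argument light: because the hypotheses quantify over $\subt$-compatibility and over $S_l\subt S'_l$ rather than over membership in the as-yet-unknown relation, unfolding the conditions for $(S,S')$ only ever appeals to pairs already in $\subt$. If instead the hypotheses had been stated relative to some weaker relation, one would have to take the closure of ${\subt}\cup\{(S,S')\}$ and argue by coinduction that it remains a sub-session relation; here that complication does not arise.
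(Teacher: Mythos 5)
Your proposal is correct and matches the paper's proof exactly: the paper's entire argument is the one-line observation that $\subt\cup\{(S,S')\}$ is a sub-session relation, which is precisely the candidate relation you construct and verify. Your additional discussion of why no further pairs need to be adjoined is a sound elaboration of the same idea.
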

\begin{proof}
  The relation $\subt\cup\,\{(S, S')\}$ is a sub-session relation.
\end{proof}

Finally, we prove that this subtyping relation provides a preorder on types.
\begin{proposition}
  The subtyping relation is reflexive and transitive.
\end{proposition}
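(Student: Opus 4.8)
The plan is to exploit the characterization of $\subt$ as the \emph{largest} sub-session relation. By Lemma~\ref{lem:union}, $\subt$ is itself a sub-session relation, so it already satisfies the structural conditions of Definition~\ref{def:subsession}; what remains, for each of the two properties, is to exhibit a sub-session relation witnessing it and then invoke maximality. Throughout I adopt the convention of Definition~\ref{def:subsession} that $S$ and $S'$ are assumed not to be of the form $\rectype{X}{S''}$, so that $\mkterm{unfold}$ has already been applied and the shapes of the types are visible.

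For reflexivity I would show that the diagonal $\mathcal{I} = \{(S,S) \mid S \in \mathcal{S}\}$ is a sub-session relation. Given $(S,S) \in \mathcal{I}$, both components unfold identically, so condition~1 of Definition~\ref{def:subsession} holds with $J = I$, and each signature $\methsign{m_i}{T_i}{U_i}:S_i$ must be $\mathcal{I}$-compatible with itself; this follows from case~1 of Definitions~\ref{def:Rcompatibility-types} and~\ref{def:Rcompatibility-signatures}, since equal types are $\mathcal{R}$-compatible for any $\mathcal{R}$ and $(S_i,S_i)\in\mathcal{I}$ by construction. Condition~2 is analogous, using $E\subset E$ and $(S_l,S_l)\in\mathcal{I}$. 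As $\subt$ is the union of all sub-session relations, $\mathcal{I}\subseteq\subt$, giving $S\subt S$ for all $S$.

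For transitivity I would show that the relational composition $\mathcal{T} = \subt\circ\subt$ is a sub-session relation; maximality then gives $\subt\circ\subt\subseteq\subt$, which is exactly transitivity. Fix $(S_1,S_3)\in\mathcal{T}$ with witness $S_2$, so $(S_1,S_2),(S_2,S_3)\in\subt$. Because $\subt$ is a sub-session relation, applying Definition~\ref{def:subsession} to each pair (using the same $\mkterm{unfold}(S_2)$ for both) forces $S_1,S_2,S_3$ to have matching shapes, with nested index sets $K\subset J\subset I$ in the branch case and nested label sets $E_1\subset E_2\subset E_3$ in the variant case, so the required inclusions compose. The variant case is routine: the per-label pairs $(S^1_l,S^2_l),(S^2_l,S^3_l)\in\subt$ compose directly into $\mathcal{T}$. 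The branch case reduces to checking that $\subt$-compatibility of signatures composes into $\mathcal{T}$-compatibility, and I would first record the easy sublemma that value-type compatibility composes --- equalities, enumerated inclusions $E''\subset E'\subset E$, and session pairs in $\subt\circ\subt=\mathcal{T}$ all combine, the shared middle type fixing which clause of Definition~\ref{def:Rcompatibility-types} applies on both sides --- from which the contravariant parameter component follows at once.

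The main obstacle is the return/subsequent-session component of signature compatibility, because of the asymmetric $\linkthis$ clause (case~2 of Definition~\ref{def:Rcompatibility-signatures}); here I would do a four-way case analysis on which clause each given compatibility uses. The case~1/case~1 combination composes the return types via the value sublemma and the subsequent sessions via $\mathcal{T}$. The mixed cases are the delicate ones: when the first compatibility uses case~1 and the second uses case~2, the middle return type is an enumerated set $E_2$, forcing the first return type to be an enumerated $E\subset E_2$ with $(S_k,S^2_k)\in\subt$; I would then invoke Lemma~\ref{lem:subsession}(2) to obtain $\choice{l}{S_k}{l\in E}\subt\choice{l}{S^2_k}{l\in E_2}$ and compose with $(\choice{l}{S^2_k}{l\in E_2},S^3_k)\in\subt$ to place the required pair in $\mathcal{T}$, yielding case~2 for the composite. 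The case~2/case~1 combination similarly forces the final return type to be $\linkthis$ by equality and composes the two variant-to-session pairs. Finally, the case~2/case~2 combination is vacuous, since case~2 would make the middle return type simultaneously $\linkthis$ and an enumerated set, which is impossible. Establishing these composition facts shows $\mathcal{T}$ is a sub-session relation, and maximality completes transitivity.
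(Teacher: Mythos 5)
Your proposal is correct and follows essentially the same route as the paper: the diagonal relation for reflexivity, and showing that the composition $\subt\circ\subt$ is a sub-session relation for transitivity, with the same case analysis on the $\linkthis$ clause of signature compatibility and the same appeal to Lemma~\ref{lem:subsession}(2) in the mixed branch case (your explicit dismissal of the case~2/case~2 combination is implicit in the paper's three-way split). The only omission, a trivial one, is that the paper also records that enumerated types (set inclusion) and field typings (derived pointwise from the rules in Figure~\ref{fig:subtyping}) inherit the preorder properties, whereas you treat only session types.
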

\begin{proof}
  First note that session types can only be related by subtyping to
  other session types; the same applies to enumerated types and, in
  the internal system, field typings. Since the relation for
  enumerated types is just set inclusion, we already know the result
  for it. We now prove the properties for session types; the fact that
  they hold for field typings is then a straightforward consequence.

  For reflexivity, just notice that the diagonal relation $\{(S,
  S)\mid S\in \mathcal{S}\}$ is a sub-session relation, hence included
  in $\subt$.

  For transitivity, what we need to prove is that the relation
  $\mathcal{R} = \{(S, S')\mid \exists S'', S\subt S'' \wedge S''\subt
  S'\}$ is a sub-session relation.  Let $(S, S')\in\mathcal{R}$ and
  let $S''$ be as given by the definition of $\mathcal{R}$.

  In case (1) where we have $S =
  \branch{\methsign{m_i}{T_i}{U_i}}{S_i}{i\in I}$, we know that:
\begin{itemize}
\item $S'' = \branch{\methsign{m_j}{T''_j}{U''_j}}{S''_j}{j\in J}$
  with $J\subset I$, and for all $j\in J$, $\sigma_j =
  \methsign{m_i}{T_i}{U_i}:S_j$ is $\subt$-compatible with $\sigma''_j
  = \methsign{m_j}{T''_j}{U''_j}:S''_j$.
\item Therefore, $S'$ is of the form
  $\branch{\methsign{m_k}{T'_k}{U'_k}}{S'_k}{k\in K}$ with $K\subset
  J$, and for all $k\in K$, $\sigma''_k$ is $\subt$-compatible with
  $\sigma'_k =\methsign{m_k}{T'_k}{U'_k}:S'_k$.
\end{itemize}
Straightforwardly $K\subset I$. For every $k$ in $K$, we have to prove
that $\sigma_k$ is $\mathcal{R}$-compatible with $\sigma'_k$. We
deduce it from the two $\subt$-compatibilities we know by looking into
the definition of compatibility point by point:
\begin{itemize}
\item We have $T''_k\subt T_k$ and $T'_k\subt T''_k$. Either these
  types are all session types, and then $(T'_k, T_k)\in\mathcal{R}$ by
  definition of $\mathcal{R}$, or none of them is and we have
  $T'_k\subt T_k$ by transitivity of subtyping on base types.  In both
  cases, $T'_k$ is $\mathcal{R}$-compatible with $T_k$.
\item We also have either:
  \begin{itemize}
  \item $U_k\subt U''_k$, $U''_k\subt U'_k$, $S_k\subt S''_k$ and
    $S''_k\subt S'_k$. In this case, the former two conditions imply,
    similarly to the above, that $U_k$ is $\mathcal{R}$-compatible
    with $U'_k$. The latter two imply $(S_k, S'_k)\in\mathcal{R}$.
  \item Or $U_k$ is an enumerated type $E$, $U''_k = U'_k =
    \linkthis$, $\choice{l}{S_k}{l\in E}\subt S''_k$ and $S''_k\subt
    S'_k$. Then we have $(\choice{l}{S_k}{l\in E},
    S'_k)\in\mathcal{R}$, which is all we need.
  \item Or, finally, $U_k$ is an enumerated type $E$, $U''_k$ is an
    enumerated type $E''$ such that $E\subset E''$, $S_k\subt S''_k$
    and $\choice{l}{S''_k}{l\in E''}\subt S'_k$. Then from $S_k\subt
    S''_k$ and $E\subset E''$ we deduce, using case (2) of
    Lemma~\ref{lem:subsession}, $\choice{l}{S_k}{l\in
      E}\subt\choice{l}{S''_k}{l\in E''}$. We thus have, again,
    $(\choice{l}{S_k}{l\in E}, S'_k)\in\mathcal{R}$ which is the
    required condition.
  \end{itemize}
\end{itemize}

\noindent In case (2) where we have $S = \choice{l}{S_l}{l\in E}$, we obtain
$S'' = \choice{l}{S''_l}{l\in E''}$ and $S' = \choice{l}{S'_l}{l\in
  E'}$, with $E\subset E''\subset E'$ and for any $l$ in $E$,
$S_l\subt S''_l$ and $S''_l\subt S'_l$, which imply by definition of
$\mathcal{R}$ that $(S_l, S'_l)$ is in $\mathcal{R}$.\end{proof}

\begin{definition}[Type equivalence]
\label{def:typeequivalence}
We define \emph{equivalence} of session types $S$ and $S'$ as $S \subt
S'$ and $S' \subt S$. This corresponds precisely to $S$ and $S'$
having the same infinite unfoldings (up to the ordering of cases in
branches and variants).
Henceforth types are understood up to type equivalence, so that, for
example, in any mathematical context, types $\rectype{X}{S}$ and
$S\subs{(\rectype{X}{S})}{X}$ can be used interchangeably,
effectively adopting the equi-recursive approach~\cite[Chapter
21]{PierceBC:typpl}.
\end{definition}

% The definition of subtyping between session types is in two parts:
% subtyping ($\mathcal{R}$-compatibility) on method signatures and
% structural subtyping on the sessions themselves.

%The direction of subtyping is opposite
%to that defined for channel session types in \cite{GaySJ:substp},
%because we make a choice by selecting a method from a session of
%$\brt$ type instead of by sending a label on a channel of $\oplus$
%type. However, in both cases, the type allowing a choice to be made
%has contravariant subtyping in the set of choices. This reversal of
%the subtyping relation for session types also occurs in
%\cite{CarboneM:strgpc}.  

% Further details, including the proof that
% subtyping is reflexive and transitive and an algorithm for checking
% subtyping, can be adapted from~\cite{GaySJ:substp}.

% Equivalence of sessions $S$ and $S'$ can be characterized as
% $\sub{S}{S'}$ and $\sub{S'}{S}$.

% In addition to these two subtyping relations for enumerated types and
% session types, one is needed for field typings
% Figure defines subtyping between field typings.
% %

%%% Local Variables: 
%%% mode: latex
%%% TeX-master: "main"
%%% End: 

\subsection{Type System}
\label{sec:typing}

We introduce a static type system whose purpose is to ensure that
typable programs satisfy a number of safety properties. As usual, we
make use of a type preservation theorem, which states that reduction
of a typable expression produces another typable expression. Therefore
the type system is formulated not only for top-level expressions but for
the states (i.e.\ (heap, expression) pairs) on which the reduction
relation is defined. 

An important feature of the type system is that the method definitions
within a particular class are not checked independently, but are
analyzed in the order specified by the session type of the class.
This is expressed by rule \Tclass, the last rule in
Figure~\ref{fig:typingexpr}, which uses a consistency relation
$\typedsess{\overrightarrow\nulltype\,\vec f}{C}{S}$ between field
typings and session types, defined in
Section~\ref{sec:consistency}. Checking this relation requires
checking the definitions of the methods occurring in $S$, in
order. Checking method definitions uses the typing judgement for
expressions, which is defined by the other rules in
Figure~\ref{fig:typingexpr}.

In the following sections we describe the type
system in several stages.

\begin{figure}
\centering\setlength\lineskip{\typingRuleSkip}\setlength\baselineskip{0pt}
\axiomname{T-Null}
\judgment \envref\Gamma r > \nullterm : \nulltype < \envref\Gamma r /
\hfil
\axiomname{T-Label}
\judgment \envref\Gamma r > l : \{l\} < \envref\Gamma r /
\hfil
\axiomname{T-New}
\judgment \envref\Gamma r > \new C : C.\sessterm < \envref\Gamma r /
\hfil
\axiomname{T-LinVar}
\judgment \envref{\Gamma, x : S}{r} > x : S < \envref{\Gamma}{r} / 
%\judgment F; \variables, x : S > x : S < F; \variables /
\hfil
\AxiomC{$T$ is not an object type}
\rulename{T-Var}
\UnaryInfC{\judgment \envref{\Gamma, x : T}{r} > x : T <
  \envref{\Gamma, x : T}{r} /}
\DisplayProof\hfil
\AxiomC{\judgment \envref\Gamma r > e : T < \envref{\Gamma'}{r'} /}
\AxiomC{$\Gamma'(r'.f) = T'$}
\AxiomC{$T\not=\linkthis$\quad$T'$ is not a variant}
\rulename{T-Swap}
\TrinaryInfC{\judgment \envref\Gamma r > \swap{f}{e} : T' <
  \envref{\changetype{\Gamma'}{r'.f}{T}}{r'} /}
\DisplayProof\hfil
\let\oldvskip\extraVskip
\def\extraVskip{0.5pt} % removes space between the two lines of premises
\AxiomC{$\judgment \envref\Gamma r > e : T'_j < \envref{\Gamma'}{r'} /$}
\AxiomC{$\Gamma'(r'.f) = \branch{\methsign{m_i}{T'_i}{T_i}}{S_i}{i\in I}$}
\noLine
\BinaryInfC{$\quad j\in I\qquad T = \linktype{}{f}$ if $T_j = \linkthis$,
  $T = T_j$ otherwise $\quad$}
\let\extraVskip\oldvskip
\rulename{T-Call}
\UnaryInfC{\judgment \envref\Gamma r > \methcal{f}{m_j}{e} :
  T < \envref{\changetype{\Gamma'}{r'.f}{S_j}}{r'} /}
\DisplayProof\hfil  
\AxiomC{\judgment \envref\Gamma r > e : T < \envref{\Gamma'}{r'} /}
\AxiomC{\judgment \envref{\Gamma'}{r'} > e' : T' < \envref{\Gamma''}{r'} /}
\AxiomC{$T \neq \linktype{}{\_}$ or $\linkthis$}
\rulename{T-Seq}
\TrinaryInfC{\judgment \envref\Gamma r > \seq{e}{e'} : T' <
  \envref{\Gamma''}{r'} /}
\DisplayProof\hfil
\AxiomC{\judgment \envref\Gamma r > e : E' < \envref{\Gamma'}{r'} /}
\AxiomC{$E'\subset E$}
\AxiomC{$\forall l\in E', \judgment \envref{\Gamma'}{r'} > e_l : T <
  \envref{\Gamma''}{r'} /$}
\rulename{T-Switch}
\TrinaryInfC{\judgment \envref\Gamma r > \switch{e}{l}{e_l}{l\in E} : T
< \envref{\Gamma''}{r'} /}
\DisplayProof\hfil
\let\oldvskip\extraVskip
\def\extraVskip{0.5pt} % removes space between the two lines of premises
\AxiomC{\judgment \envref\Gamma r > e : \linktype{}{f} <
  \envref{\Gamma'}{r'} /}
\AxiomC{$\Gamma'(r'.f) = \choice{l}{S_l}{l\in E'}$}
\noLine
\BinaryInfC{$E'\subset E\qquad \forall l \in E',
  \judgment\envref{\changetype{\Gamma'}{r'.f}{S_l}}{r'}
  > e_l : T < \envref{\Gamma''}{r'} / \qquad$}
\let\extraVskip\oldvskip
\rulename{T-SwitchLink}
\UnaryInfC{\judgment \envref\Gamma r > \switch{e}{l}{e_l}{l\in E} : T < \envref{\Gamma''}{r'} /}
\DisplayProof\hfil
\AxiomC{\judgment \envref\Gamma r > e : E < \envref{\Gamma'}{r'} /}
\AxiomC{$\Gamma'(r') = \objecttype C{F'}$}
\AxiomC{$F'$ is a record}
\rulename{T-VarF}
\TrinaryInfC{\judgment \envref\Gamma r > e : \linkthis <
  \envref{\changetype{\Gamma'}{r'}{\objecttype C{\choice{l}{F'}{l\in E}}}}{r'} /}
\DisplayProof%\textit{\small (Y stands for either F or S)}
\hfil
\AxiomC{\judgment \envref\Gamma r > e : T < \envref{\Gamma'}{r'} /}
\AxiomC{$T\subt T'$}
\rulename{T-Sub}
\BinaryInfC{\judgment \envref\Gamma r > e : T' < \envref{\Gamma'}{r'} /}
\DisplayProof\hfil
\AxiomC{\judgment\envref\Gamma r > e : T < \envref{\Gamma'}{r'} /}
\AxiomC{$\Gamma'\subt\Gamma''$}
\rulename{T-SubEnv}
\BinaryInfC{\judgment \envref\Gamma r > e : T < \envref{\Gamma''}{r'} /}
\DisplayProof\hfil
\AxiomC{$\typedsess{\overrightarrow\nulltype\,\vec f}{C}{S}$}
\rulename{T-Class}
\UnaryInfC{$\vdash\class{C}{S}
  {\vec f}
  {\vec M}$}
\DisplayProof
\caption{Typing rules for the top level language}
\label{fig:typingexpr}
\end{figure}
       
%%% $Id$

%%% Local Variables: 
%%% mode: latex
%%% TeX-master: "minimal"
%%% End: 

\subsubsection{Typing expressions}
\label{sec:typingexpressions}
\begin{definition}[Type environments]
We use type environments of the form $\Gamma =
\alpha_1:T_1,\ldots,\alpha_n:T_n$ where each $\alpha$ is either a
method parameter $x$ or an object identifier $o$.  
\end{definition}
The typing judgement for expressions is $\judgment \envref{\Gamma}{r}
> e:T < \envref{\Gamma'}{r'} /$. In such a judgement, $\Gamma$ and
$\Gamma'$ are type environments and $r$ and $r'$ are paths. The paths
parameters are in fact only necessary for typing runtime expressions;
they are needed for our type preservation theorem, but not
for type-checking a program, where they always have the value $\this$.
They will be discussed in Section~\ref{sec:typingfull}.

The expression $e$ and its type $T$ appear in the central part of the
judgement. The $\Gamma'$ on the right hand side shows the
change, if any, that $e$ causes in the type environment. There are
several reasons for $\Gamma'$ to differ from 
$\Gamma$; the most important is that if $e$ contains a method call on
an object, then the session type of that object is different in
$\Gamma'$ than it was in $\Gamma$. Another one is linearity: if a
linear parameter
$x$ is used in $e$, then $x$ does not appear in $\Gamma'$ because it
has been consumed.

When type-checking a program (as opposed to typing a runtime
expression, which needs not be implemented), the judgements for
expressions always have the particular form $\judgment
\envref{\this:\objecttype{C}{F}, V}{\this} > e:T <
\envref{\this:\objecttype{C}{F'}, V'}{\this} /$, with only one object
identifier in the environment, $\this$, representing the object to
which fields referred to in $e$ belong. The rest of the environment,
$V$, is either empty or has the form $x:U$ where $U$ is the type of
the parameter $x$ of the method currently being type-checked, and is
thus a top-level type. The initial type of $\this$ is the internal
type $\objecttype{C}{F}$, where $F$ is a field typing; the final type
is $\objecttype{C}{F'}$, as $e$ may change the types of the fields
(for example, by calling methods on them). The final parameter typing
$V'$ is either the same as $V$ or empty, depending whether the
parameter was consumed by $e$.

% The path to the current object is $\this$ on both the left hand
% side and the right hand side of the judgement; that is the only
% possibility, because $\this$ is the only object reference in the
% environment; also, because $e$ is a top-level expression, it does not
% contain $\returnterm$, which is the only expression that can change
% the current object path.

% then the difference between $\Gamma$ and $\Gamma'$ shows
% that the session type of $o$ changes. Any difference between $r$ and
% $r'$ means that $e$ contains $\returnterm$; in that case, $r$ and $r'$
% represent the call stack during and after a method call.

% The
% expression $e$ and its type $T$ appear in the central part of the
% judgement. The $\Gamma'$ and $r'$ on the right hand side show the
% change, if any, that $e$ causes in the type environment and current
% object path. There are several reasons for $\Gamma'$ to differ from
% $\Gamma$; the most important is that if $e$ contains a method call on
% object $o$ then the difference between $\Gamma$ and $\Gamma'$ shows
% that the session type of $o$ changes. 

% We define notation for interpreting paths relative to type
% environments,

We extend subtyping to a relation on type environments, as follows.
\begin{definition}[Environment subtyping]
$\Gamma \subt \Gamma'$ if for every $\alpha\in\dom(\Gamma')$, where
$\alpha$ is either a parameter or an object identifier, we have
$\alpha\in\dom(\Gamma)$ and $\Gamma(\alpha)\subt\Gamma'(\alpha)$.
\end{definition}
Essentially $\Gamma\subt\Gamma'$ if $\Gamma$ is more precise (contains
more information) than $\Gamma'$: it contains types for everything in
$\Gamma'$ (and possibly more) and those types are more specific.

\subsubsection{Consistency between field typings and session types}
\label{sec:consistency}
There are two possible forms for the type of an object. One is a
session type $S$, which describes the view of the object from outside,
i.e.\ from the perspective of code in other classes. The session type
specifies which methods may be called, but does not reveal information
about the fields. The other form, $\objecttype{C}{F}$, contains a
field typing $F$, and describes the internal view of the object,
i.e.\ from the perspective of code in its own methods. Consider a
sequence of method calls in a particular class. There are two senses
in which it may be considered correct or incorrect. (1) In the sense
that it is allowed, or not allowed, by the session type of the
class. (2) In the sense that each call in the sequence leaves the fields of
the object in a state which ensures the next call does not produce a
type error. For example, if we consider the class
\lstinline|FileReader| of Figure~\ref{fig:filereader}, we see that the
session type allows calling \lstinline|read()| just after
\lstinline|init()|, making the sequence \lstinline|init(); read()|
correct in sense (1). It is correct in sense (2) if and only if the body
of \lstinline|read()| typechecks under the
precondition that the fields \lstinline|file| and \lstinline|text|
have the types produced by the evaluation of \lstinline|init()|.

In order to type a class definition, these two senses of correctness
must be consistent according to the following coinductive definition.

\begin{definition}
\label{def:sessionfield}
Let $C$ be a class and let $\mathcal{R}$ be a relation between field
typings $F$ and session types $S$. We say that $\mathcal{R}$ is a
$C$-consistency relation if $(F,S)\in\mathcal{R}$ implies:
\begin{enumerate}
\item If $S = \branch{\methsign{m_i}{T'_i}{T_i}}{S_i}{i\in I}$, then
  $F$ is not a variant and for all $i$ in $I$, there is a definition
  $\method{}{m_i}{x_i}{e_i}$ in the declaration of class $C$, and a field typing $F_i$, such that
$$\judgment \envref{\this: \objecttype C F, x_i : T'_i}{\this}
> e_i : T_i < \envref{\this: \objecttype C {F_i}}{\this} /$$ and
 $(F_i,S_i)\in\mathcal{R}$.
\item If $S = \choice{l}{S_l}{l\in E}$, then $F = \vfield{l}{F_l}{l\in
    E'}$ with $E'\subset E$ and for all $l$ in $E'$ we have
  $(F_l,S_l)\in\mathcal{R}$.
\end{enumerate}
\end{definition}

\noindent In clause (1), $S$ is the session type (external view) before calling
one of the methods $m_i$, and $F$ is the field typing (internal
view). If a particular $m_i$ is called then the subsequent session
type is $S_i$, and the subsequent field typing, arising from the
typing judgement for the method body, is $F_i$. These types must be
related. Clause (2) requires variant session types and field typings,
arising from a method call that returns an enumeration label, to
match. The inclusion $E'\subset E$ allows the method to return labels
from a smaller set than the one defined by the session type.
\begin{lemma}
  The union of several $C$-consistency relations is a $C$-consistency
  relation.
\end{lemma}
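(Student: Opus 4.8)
The plan is to mirror the proof of Lemma~\ref{lem:union} almost verbatim, since Definition~\ref{def:sessionfield} has the same shape as Definition~\ref{def:subsession}: the relation being defined occurs only \emph{positively} in the clauses, as membership side-conditions $(F_i,S_i)\in\mathcal{R}$ and $(F_l,S_l)\in\mathcal{R}$, so enlarging $\mathcal{R}$ can only make those conditions easier to meet. First I would set $\mathcal{R}=\bigcup_{k\in K}\mathcal{R}_k$, where each $\mathcal{R}_k$ is a $C$-consistency relation, and take an arbitrary pair $(F,S)\in\mathcal{R}$. By definition of union there is some $j\in K$ with $(F,S)\in\mathcal{R}_j$, and since $\mathcal{R}_j$ is $C$-consistent, $(F,S)$ satisfies clause (1) or clause (2) of Definition~\ref{def:sessionfield} relative to $\mathcal{R}_j$. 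The goal is then to transfer each clause to $\mathcal{R}$.

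Next I would discharge the two clauses. In case $S=\branch{\methsign{m_i}{T'_i}{T_i}}{S_i}{i\in I}$, the $\mathcal{R}_j$-consistency of $(F,S)$ gives that $F$ is not a variant and that, for each $i\in I$, there is a definition $\method{}{m_i}{x_i}{e_i}$ in class $C$ and a field typing $F_i$ with \judgment \envref{\this: \objecttype C F, x_i : T'_i}{\this} > e_i : T_i < \envref{\this: \objecttype C {F_i}}{\this} / and $(F_i,S_i)\in\mathcal{R}_j$. The crucial observation is that the typing judgement is fixed data---it depends on the class declarations and on the already-defined relation $\subt$, but \emph{not} on the consistency relation---so the very same definition and the same witness $F_i$ establish the condition relative to $\mathcal{R}$; and since $\mathcal{R}_j\subset\mathcal{R}$, we get $(F_i,S_i)\in\mathcal{R}$ for free. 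The variant case $S=\choice{l}{S_l}{l\in E}$ is even lighter: $\mathcal{R}_j$-consistency yields $F=\vfield{l}{F_l}{l\in E'}$ with $E'\subset E$ and $(F_l,S_l)\in\mathcal{R}_j\subset\mathcal{R}$ for each $l\in E'$. In both cases $(F,S)$ satisfies the clause with respect to $\mathcal{R}$, so $\mathcal{R}$ is a $C$-consistency relation.

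I do not expect any real obstacle; the single point to verify carefully is that $\mathcal{R}$ appears nowhere inside the typing judgement or the structural requirements ($F$ not a variant, $F$ a variant field typing, $E'\subset E$), but only in the membership side-conditions, which are monotone in the relation. This is exactly the property that made the argument for Lemma~\ref{lem:union} go through, and the same one-line remark---that the conditions for $\mathcal{R}$ are looser than those for $\mathcal{R}_j$ because $\mathcal{R}_j\subset\mathcal{R}$---suffices here as well.
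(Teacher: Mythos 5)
Your proof is correct and takes essentially the same route as the paper, which simply notes that the argument is analogous to (and simpler than) the union lemma for sub-session relations. Your key observation---that the expression typing judgement and the structural side-conditions are independent of the consistency relation, which occurs only positively as membership conditions and is therefore monotone under enlargement---is exactly the point that makes the one-line argument go through.
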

\begin{proof}
  Similar to (but simpler than) the proof of Lemma \ref{lem:union}.
\end{proof}
For any class $C$, we define the relation $\typedsess F C S$ between
field typings $F$ and session types $S$ to be the largest
$C$-consistency relation, i.e.\ the union of all $C$-consistency
relations.

The relation $\typedsess F C S$ represents the fact that an object of
class $C$ with internal (private) field typing $F$ can be safely
viewed from outside as having type $S$.  Clause (2) in
Definition~\ref{def:sessionfield} accounts for correspondence between
variant types. The main clause is clause (1): if the object's fields
have type $F$ and its session type allows a certain method to be
called, then it means that the method body is typable with an initial
field typing of $F$ and the declared type for the parameter.
Furthermore, the type of the expression must match the declared return
type and the final type of the fields must be compatible with the
subsequent session type. The parameter may or may not be consumed by
the method, but \textsc{T-SubEnv} at the end of
Figure~\ref{fig:typingexpr} (see next section) allows discarding it
silently in any case, hence its absence from the final environment.

The definition implies that a method with return type $\linkthis$ must
be followed by a variant session type, for the following
reason. Suppose that in clause (1), some $T_i$ is $\linkthis$. The
only way for $e_i$ to have type $\linkthis$ is by using rule \TvarF{}
(discussed in the next section), which implies that $F_i$ must be a
variant field typing. The condition $(F_i,S_i)\in\mathcal{R}$ implies,
by clause (1), that $S_i$ is a variant.

The rule $\Tclass$, last rule in Figure~\ref{fig:typingexpr}, checks that the
initial session type of a class is consistent with the initial
$\nulltype$ field typing. It refers to the above definition of
consistency, which itself refers to typing judgements built using the
other rules in the figure.

\subsubsection{Typing rules for top-level expressions}
The typing rules for top-level expressions (the syntax in
Figure~\ref{fig:syntax}) are in Figure~\ref{fig:typingexpr}. They use
the following notation for interpreting paths relative to type
environments, analogously to
Definition~\ref{def:heaplocations} for heaps.
\begin{definition}[Locations in environments]\hfil
\label{def:envlocations}
  \begin{itemize}
  \item If $\Gamma = \Gamma', o : T$ then we define $\Gamma(o) = T$
    and $\changetype{\Gamma}{o}{T'} = \Gamma', o : T'$
  \item Inductively, if $r = r'.f$, and if $\Gamma(r') =
    \objecttype{C}{F}$ where $F$ is a record field typing containing
    $f$, then $\Gamma(r)$ is defined as $F(f)$ and
    $\changetype{\Gamma}{r}{T'}$ as 
    $\changetype{\Gamma}{r'}{\objecttype C {\changetype{F}{f}{T'}}}$.
  \item In any other case, these operations are not defined. In
    particular, if $\Gamma(r')$ is defined but is a session type, then
    $\Gamma(r'.f)$ is not defined for any $f$.
  \end{itemize}
\end{definition}

\noindent A pair $\envref\Gamma r$ only makes sense if $\Gamma(r)$ is defined
and is of the form $\objecttype C F$.

We now comment on these rules: \Tnull\
and \Tlabel\ type constants. A label is given a singleton enumerated
type, which is the smallest type it can have, but subsumption can be
used to increase its type.  \Tnew\ types a new object, giving it the
initial session type from the class declaration.  \TlinVar\ and \Tvar\
are used to access a method's parameter, removing it from the
environment if it has an object type (which is linear). For
simplicity, this is the only way to use a parameter. In particular, we
do not allow calling methods directly on parameters: to call a method
on a parameter, it must first be assigned to a field. This is just a
simplification for this formal presentation and does not limit
expressivity; this will be discussed in Section~\ref{sec:seq-extensions}.

\Tswap\ types the combined read-write field access operation,
exchanging the types of the field and expression. There are two
restrictions on its use. $T$ is not allowed to be $\linkthis$, because
this particular type only makes sense as the return type of a method.
This condition effectively forbids the use of rule \TvarF\ in the
typing derivation for $e$, because $e$ is not what the method returns.
It also has the consequence that $\Gamma'(r')$ is not a variant type,
because the only rule that could produce one is \TvarF; hence
$\Gamma'(r'.f)$ is defined. The other condition is that $T'$ is not
allowed to be a variant type, because it is not allowed in our system
to extract from a field a variantly-typed value without having
$\switchterm$ed on the associated tag first. Indeed, $\linkterm$ types
refer to fields by name, so moving variantly-typed values around would
lose the connection between the value and its tag.

\Tcall\ checks that field $f$ has a session type that allows method
$m_j$ to be called. The type of the parameter is checked as usual, and
the final type environment is updated to contain the new session type
of the object in $f$. If the return type of the method is $\linkthis$
(method $\mathsf{open}()$ in Figure~\ref{fig:session-diagram} is one
such example), it means that the value returned is a label describing
the state of this object; since the object is in $f$, it is changed
into $\linktype{}{f}$. Because the return type appears in the session
type and is therefore expressed in the top-level syntax, it cannot
already be of the form $\linktype{}{f}$. Observe that although types
of the form $\linktype{}{f}$ are not written by the programmer, they
can nevertheless occur in a typechecking derivation as the types of
top-level expressions (as in the example in Figure~\ref{fig:reduction-ex}).

\Tseq\ accounts for the effects of the first expression on the
environment and checks that a label is not discarded, which would
leave the associated variant unusable.

\Tswitch\ types a $\switchterm$ whose expression $e$ does not have a
$\linkterm$ type. All relevant branches are required to have the same
type and final environment, and the whole $\switchterm$ expression
inherits them. A typical example is if the branches just contain
different labels: in that case they are given singleton types by
\Tlabel\ and then \Tsub\ is used to give all of them the same
enumerated type. If the type $E'$ of the parameter expression is
strictly smaller than the set $E$ of case labels in the $\switchterm$
expression, branches corresponding to the extra cases are ignored.

\TswitchLink\ is the only rule for deconstructing variants. It types a
$\switchterm$, similarly to the previous one, but the type of $e$ must
be a $\linkterm$ to a field $f$ with a variant session type. The
relevant branches are then typed with initial environments containing
the different case types for $f$ according to the value of the
label. As before, they must all have the same type and final
environment, and if the $\switchterm$ expression defines extra
branches for labels which do not appear in the variant type of $f$,
they are ignored.

% All branches must have the same final environment $\variables'$, so that
% it is a consistent final environment for the $\switchterm$ expression.
% An interesting particular case is if $T$ is of the form $\linktype{E'}{\this}$:
% then the different expressions may return different labels and modify
% the fields' types in different ways, and
% \textsc{T-InjF} allows those cases to be unified into a single variant type.

\TvarF\ constructs a variant field typing for the current object. Here
$E$ is typically, but not necessarily, a singleton type, and $e$ is
typically a literal label. The field typing before applying the rule
must be a record as nested variants are not permitted, and the rule
transforms it into a variant with identical cases for all labels in
$E$. It can then be extended to a variant with arbitrary other cases
using rule \TsubEnv. This rule is used for methods leading to variant
session types, which, as Definition~\ref{def:sessionfield} implies,
must finish with a variant field typing.  As a simple example,
consider the following expression, which could end a method body in
some class $D$:
\[
\switchterm~(e)~\{\mkterm{TRUE} : \seq{\swap{f}{\new
    C}}{\mkterm{OK}}\quad\mid\quad \mkterm{FALSE} :
\seq{\swap{f}\nullterm}{\mkterm{ERROR}}\}
\] 
If $S$ is the declared session type of class $C$, we have, using rules
\Tnew, \Tswap, \Tlabel\ and \Tseq, the following judgements ($T$ is just the
initial type of $f$):
\[
\judgment \envref{\this:\objecttype D {T~f}}{\this} >
\seq{\swap{f}{\new C}}{\mkterm{OK}} : \{\mkterm{OK}\} <
\envref{\this:\objecttype D {S~f}}{\this} /
\] and
\[
\judgment \envref{\this:\objecttype D {T~f}}{\this} >
\seq{\swap{f}{\nullterm}}{\mkterm{ERROR}} : \{\mkterm{ERROR}\} <
\envref{\this:\objecttype D {\nulltype~f}}{\this} /.
\]
Then \TvarF\ can be applied to both these judgements, giving both
expressions the same type $\linkthis$, and giving at the same time to
$\this$, in the final environment, the variant types
$\objecttype{D}{\vfield{\mkterm{OK}}{\{S~f\}}{}}$ and
$\objecttype{D}{\vfield{\mkterm{ERROR}}{\{\nulltype~f\}}{}}$
respectively. These two types are both subtypes of the combined
variant $\objecttype{D}{\langle\mkterm{OK} : \{S~f\}, \mkterm{ERROR} :
  \{\nulltype~f\}\rangle}$ and \textsc{T-SubEnv} can thus be applied
to both judgements to increase the final type of $\this$ to this
common supertype. It is then possible to use \textsc{T-Switch} to type
the whole expression.  Note that the final type of the expression is
always $\linkthis$: as \textsc{T-VarF} is the only rule for
constructing variants, this is the only possible return type for a
method leading to a variant.

\Tsub\ is a standard subsumption rule, and \TsubEnv\ allows
subsumption in the final environment. The main use of the latter rule,
as illustrated above, is to enable the branches of a $\switchterm$ to
be given the same final environments.

\subsubsection{Typing rules for internal expressions, heaps and
  states}\label{sec:typingfull}
\begin{figure}
\centering\setlength\lineskip{\typingRuleSkip}\setlength\baselineskip{0pt}
\AxiomC{$r\neq o.\vec{f}$}
\rulename{T-Ref}
\UnaryInfC{\judgment \envref{\Gamma, o : T}{r} > o : T < \envref\Gamma r /}
\DisplayProof\hfil
\AxiomC{\judgment\envref\Gamma r > e : E < \envref{\Gamma'}{r'} /}
\AxiomC{$\Gamma'(r'.f) = S$}
\AxiomC{$S$ is a branch}
\rulename{T-VarS}
\TrinaryInfC{\judgment \envref\Gamma r> e : \linktype{}{f} <
  \envref{\changetype{\Gamma'}{r'.f}{\choice{l}{S}{l\in E}}}{r'} /}
\DisplayProof%\textit{\small (Y stands for either F or S)}
\hfil
\let\oldvskip\extraVskip
\def\extraVskip{0.5pt} % removes space between the two lines of premises
\AxiomC{\judgment \envref\Gamma r > e : T < \envref{\Gamma'}{r'.f} /}
\AxiomC{$\Gamma'(r'.f) = \objecttype{C}{F}$}
\AxiomC{$\typedsess F C S$}
\noLine
\TrinaryInfC{$\quad
T\neq\linktype{}{\_}\qquad T' = \linktype{}{f}$ if $T = \linkthis$,
$T' = T$ otherwise$\quad$}
\let\extraVskip\oldvskip
\rulename{T-Return}
\UnaryInfC{\judgment \envref\Gamma r > \return e {} : T' <
  \envref{\changetype{\Gamma'}{r'.f}{S}}{r'} /}
\DisplayProof
\caption{Typing rules for expressions in the internal language}
\label{fig:typingfull}
\end{figure}
       
%%% $Id$

%%% Local Variables: 
%%% mode: latex
%%% TeX-master: "minimal"
%%% End: 

\begin{figure}
\centering\setlength\lineskip{\typingRuleSkip}
% \textit{runtime expressions}\\
% \AxiomC{$r\neq o\!\ldots$}
% \rulename{T-Ref}
% \UnaryInfC{\judgment\envref{\Gamma, o : T}{r} >
%   o : T < \envref\Gamma r /}
% \DisplayProof\hfil
% %
% \AxiomC{\judgment \envref\Gamma r > e : T < \envref{\Gamma'}{r'.f} /}
% \AxiomC{$T\neq\linktype{}{f'}\quad\Gamma'(r'.f) = \objecttype{C}{F}$}
% \AxiomC{$\typedsess F C S$}
% \rulename{T-Return}
% \TrinaryInfC{\judgment \envref\Gamma r > \return e {} : T\subs{f}{\this} <
%   \envref{\changetype{\Gamma'}{r'.f}{S}}{r'} /}
% \DisplayProof\hfil
% %
% \AxiomC{\judgment\envref\Gamma r > e : E < \envref{\Gamma'}{r'} /}
% \AxiomC{$\Gamma'(r'.f) = S$}
% \AxiomC{$S$ is a branch}
% \rulename{T-InjS}
% \TrinaryInfC{\judgment \envref\Gamma r> e : \linktype{}{f} <
%   \envref{\changetype{\Gamma'}{r'.f}{\choice{l}{S}{l\in E}}}{r'} /}
% \DisplayProof%\textit{\small (Y stands for either F or S)}
% \\
% \textit{heaps}\\
\axiomname{T-Hempty}
\typedheap{\emptyset}{\varepsilon}
\hfil
\AxiomC{\typedheap{\Gamma}{h}}
\AxiomC{\judgment \envref{\Gamma, o : \objecttype{C}{\iset{\nulltype\,f_i}{1\leqslant i\leqslant n}}}{o} >
  \seq{\seq{\swap{f_1}{v_1}}{\ldots}}{\swap{f_n}{v_n}} : \nulltype
  < \envref{\Gamma'}{o} /}
% \(
%   \forall i\in\{1\dots n\}, \begin{cases}
%     \judgment\Gamma_{i-1} > v_i : T_i < \Gamma_i /
%     & \text{if $T_i$ is simple, or} \\
%     \Gamma_{i-1} = \Gamma_i, v_i : T_i & \text{if it is not}
%   \end{cases}
%   \)
% }
%\AxiomC{$\mkterm{fields}(C) = \{f_i|1\leqslant i\leqslant n\}$}
\rulename{T-Hadd}
\BinaryInfC{\typedheap{\Gamma'}
  {\hadd{h}{o = \hentry{C}{\iset{f_i = v_i}{1\leqslant i\leqslant n}}}}
}
\DisplayProof\hfil
%
% \AxiomC{\typedheap{\Gamma, o : T}{h}}
% \rulename{T-Hweak}
% \UnaryInfC{\typedheap{\Gamma}{h}}
% \DisplayProof\hfil
%
\AxiomC{\typedheap{\Gamma, o : \objecttype C F}{h}}
\AxiomC{\typedsess{F}{C}{S}}
\rulename{T-Hide}
\BinaryInfC{\typedheap{\Gamma, o : S}{h}}
\DisplayProof\hfil
%\textit{states}\\
%
\AxiomC{\typedheap\Gamma h}
\AxiomC{\judgment\envref\Gamma r > e : T < \envref{\Gamma'}{r'} /}
\rulename{T-State}
\BinaryInfC{\judgment\Gamma > \state h r e : T < \envref{\Gamma'}{r'} /}
\DisplayProof
\caption{Typing rules for states}
\label{fig:typingstates}
\end{figure}

%%% Local Variables: 
%%% mode: latex
%%% TeX-master: "figsonly"
%%% End: 

The type system described so far is all we need to type check class
declarations and hence programs, which are sequences of class
declarations. In order to describe the runtime consequences of
well-typedness, we now introduce an extended set of typing rules for
expressions that occur only at runtime (Figure~\ref{fig:typingfull})
and for program states including heaps
(Figure~\ref{fig:typingstates}). When typing an expression $e$ as part
of a runtime state, the path $r$, which was always $\this$ when typing
programs, varies and indicates the currently active object (the one
the method at the top of the call stack belongs to). Any difference
between $r$ and $r'$ means that $e$ contains $\returnterm$; in that
case, $r$ and $r'$ represent the call stack during and after a method
call. Recall that $\returnterm$ expressions are what a method call
reduces to and are introduced by \Rcall\ and suppressed by
\Rreturn. Therefore, these expressions can be nested and can appear,
at runtime, in any part of the expression in which reduction can
happen. For example they can appear in the argument of a $\switchterm$
or of a function call, but not in the second term of a sequence (which
does not reduce until the sequence itself has reduced). In the rules
of Figure~\ref{fig:typingexpr}, the difference between $r$ and $r'$ in
some, but not all, of the premises, accounts for that fact.

Runtime expressions may contain object identifiers, typed by \Tref. In
this rule, the current object path $r$ must not be within $o$, meaning
that the current object or any object containing it cannot be used
within an expression. This is part of the linear control of objects:
somewhere there must be a reference to the object at $r$, in order for
a method to have been called on that object, which is what gives rise
to the evaluation of an expression whose current object path is
$r$. So obtaining another reference to the object at $r$, within the
active expression, would violate linearity.

Another new rule is \TvarS, which constructs a variant session type
for a field of the current object. At the top level, the only
expression capable of constructing a variant session type is a method
call, but once the method call has reduced into something else this
rule is necessary for type preservation.

The last additional rule for expressions is \Treturn, which types a
$\returnterm$ expression representing an ongoing method call. The
subexpression $e$ represents an intermediate state in a method of
object $r'.f$. If $e$ itself does not contain $\returnterm$, we have
$r = r'.f$; otherwise they are different. For example, if we have an
expression of the form $f.m()$, the body of $m$ is $f'.m'()$, and the
body of $m'$ is $e'$ (omitting parameters for simplicity), then
$\state{h}{r'}{f.m()}$ reduces to $\state{h}{r'.f.f'}{\return{(\return
  {e'}{})}{}}$ in two steps of \Rcall. The typing derivation for this
last expression would look like:

\smallskip
\noindent
\begin{small}
\AxiomC{$\judgment \envref{\Gamma}{r'.f.f'} > e' : T <
  \envref{\Gamma'}{r'.f.f'} /$}
\AxiomC{$\typedsess {F'} {C'} {S'}$}
\noLine
\UnaryInfC{$\Gamma'(r'.f.f') = \objecttype{C'}{F'}$}
%\rulename{T-Return}
\BinaryInfC{\judgment \envref{\Gamma}{r'.f.f'} > \return{e'}{} : T <
  \envref{\changetype{\Gamma'}{r'.f.f'}{S'}}{r'.f} /}
\AxiomC{$\typedsess F C S$}
\noLine
\UnaryInfC{$(\changetype{\Gamma'}{r'.f.f'}{S'})(r'.f) = \objecttype{C}{F}$}
%\rulename{T-Return}
\BinaryInfC{\judgment \envref{\Gamma}{r'.f.f'} > \return{(\return
    {e'}{})}{} : T < \envref{\changetype{\Gamma'}{r'.f}{S}}{r'} /}
\DisplayProof
\end{small}

\smallskip
As $e$ is an intermediate state in a method of $r'.f$, it is typed
with final current object $r'.f$ and a final environment where the
type of $r'.f$ is of the form $\objecttype C F$, representing an
inside view of the object, where the fields are visible. This
\Treturn\ rule
then steps out of the object, hides its fields and changes its type
into the outside view of a session type, which must be consistent with
the internal type ($\typedsess F C S$). The particular case where $T$
is $\linkthis$ is the same as in \Tcall. $T$ is not allowed to already
be of the form $\linktype{}{f'}$ since it would break encapsulation
($f'$ would refer to a field of $r'.f$ which is not known outside of
the object).

An important point is that the only expression that changes the
current object is $\returnterm$. Several rules besides \Treturn\ can
inherit in the conclusion a change of current object from a
subexpression in a premise, but they do not add further changes. Thus
the final current object path is always a prefix of the initial one,
and the number of field specifications removed is equal to the number
of $\returnterm$s contained in the expression.  Also note that the
second part of a sequence and the branches of a $\switchterm$ are not
reduction contexts; therefore, they should not contain $\returnterm$
and are not allowed by the rules to change the current object.

As we saw in Section \ref{sec:reduction}, a runtime state consists of
a heap, a current object path, and a runtime expression.
Figure~\ref{fig:typingstates} describes how these parts are related by
typing: by rule \Tstate, a typing judgement for the expression gives
one for the state provided the current object is the same and the
initial environment reflects the content of the heap; this last
constraint is represented by the judgement \typedheap \Gamma h. Such a
judgement is constructed starting from the axiom \THempty\ which types
an empty heap and adding objects into the heap one by one with rule
\THadd, converting their types into sessions using \THide\ as needed.
As \THadd\ is the only rule that adds to $\Gamma$, we have the
property that \typedheap \Gamma h implies that every identifier in
$\Gamma$ also appears in $h$.

\THadd\ essentially says that adding a new object with given field
values to the heap affects the environment in the same way as an
expression that starts from an empty object and puts the values into
the fields one by one.  The most important feature of this rule is
that whenever a $v_i$ is an object identifier, the typing derivation
for the expression has to use \Tref, which implies both that the
initial environment contains $v_i$ and that the final one, which
represents the type of the extended heap, does not. This means that a
type environment corresponding to a heap never contains entries for
object identifiers that appear in fields of other objects, and it also
implies that a heap with multiple references to the same object is not
typable.  The numbering of the fields in the rightmost premise is
arbitrary, meaning it must not be interpreted as requiring the
sequence of swaps to be done in any particular order; all possible
orders are valid instances of the premise. This is important if the
type of the object being added is to contain links and variants:
suppose that field $f$ contains an object $o$ and field $g$ a label
$l$; it must be possible to attribute a variant type to $f$ and the
type $\linktype{}f$ to $g$, but this can only be done as a result of
typing the sequence of swaps if $\swap f o$ occurs before $\swap g l$.

%%% Local Variables: 
%%% mode: latex
%%% TeX-master: "main"
%%% End: 

\subsection{Example of reduction and typing}\label{sec:example-red}

\begin{figure*}[t]
\[
\begin{array}{rll}
\envref{\mkterm{top}:\objecttype{C}{{\branch{m_i}{S_i}{i\in I}}~f,\nulltype~g}}{\mkterm{top}} & \triangleright & \seq{\swap{g}{f.m_j()}}{\switchterm~(\swap{g}{\nullterm})~\{l\colon e_l\}_{l\in E}}
\\
&\downarrow& \scriptsize{(\Rcall)}
\\
\envref{\mkterm{top}:\objecttype{C}{\objecttype{C'}{F}~f,\nulltype~g}}{\mkterm{top}.f} & \triangleright & \seq{\swap{g}{\return{e}{}}}{\switchterm~(\swap{g}{\nullterm})~\{l\colon e_l\}_{l\in E}}
\\
&\downarrow*
\\
\envref{\mkterm{top}:\objecttype{C}{\objecttype{C'}{F_{l_0}}~f,\nulltype~g}}{\mkterm{top}.f} & \triangleright & \seq{\swap{g}{\return{l_0}{}}}{\switchterm~(\swap{g}{\nullterm})~\{l\colon e_l\}_{l\in E}}
\\
&\downarrow& \scriptsize{(\Rreturn)}
\\
\envref{\mkterm{top}:\objecttype{C}{{S_{l_0}}~f,\nulltype~g}}{\mkterm{top}} & \triangleright & \seq{\swap{g}{l_0}}{\switchterm~(\swap{g}{\nullterm})~\{l\colon e_l\}_{l\in E}}
\\
&\downarrow& \scriptsize{(\Rswap,\Rseq)}
\\
\envref{\mkterm{top}:\objecttype{C}{{\choice{l}{S_l}{l\in E}}~f,(\linktype{}{f})~g}}{\mkterm{top}} & \triangleright & \switchterm~(\swap{g}{\nullterm})~\{l\colon e_l\}_{l\in E}
\\
&\downarrow& \scriptsize{(\Rswap)}
\\
\envref{\mkterm{top}:\objecttype{C}{{S_{l_0}}~f,\nulltype~g}}{\mkterm{top}} & \triangleright & \switchterm~(l_0)~\{l\colon e_l\}_{l\in E}
\\
&\downarrow& \scriptsize{(\Rswitch)}
\\
\envref{\mkterm{top}:\objecttype{C}{{S_{l_0}}~f,\nulltype~g}}{\mkterm{top}} & \triangleright & e_{l_0}
\end{array}
\]
\caption{Example of the interplay between method call, $\switchterm$
  and $\linkterm$ types. The heap and the rightmost typing environment
  are omitted.}
\label{fig:reduction-ex}
\end{figure*}

%%% Local Variables: 
%%% mode: latex
%%% TeX-master: "main"
%%% End: 

We now return to the example from Section~\ref{sec:reduction-ex}, to
illustrate the way in which the environment used to type an expression
changes as the expression reduces (see
Theorem~\ref{thm:subjectreductionSeq} on
page~\pageref{thm:subjectreductionSeq}).
To shorten the series of steps in which the current object path does
not change, Figure~\ref{fig:reduction-ex} starts from the point at
which the initial expression has reduced to
\begin{equation*}
\seq{g = f.m_j()}{\switchterm~(g)~\{l\colon e_l\}_{l\in E}}.
\end{equation*}
Recall that this
expression is an abbreviation for
\begin{equation*}
\seq{\seq{\swap{g}{f.m_j()}}{\nullterm}}{\switchterm~(\swap{g}{\nullterm})~\{l\colon e_l\}_{l\in E}}
\end{equation*}
which we simplify to
\begin{equation*}
\seq{\swap{g}{f.m_j()}}{\switchterm~(\swap{g}{\nullterm})~\{l\colon e_l\}_{l\in E}}
\end{equation*}
The initial typing environment is
\begin{equation*}
\mkterm{top}:\objecttype{C}{{\branch{m_i}{S_i}{i\in I}}~f,\nulltype~g}
\end{equation*}
with $\mkterm{top}$ as the current object,
where $S_j = \vfield{l}{S_l}{l\in E}$. The body of method $m_j$ is
$e$ with the typing
\begin{equation*}
\judgment \envref{\this : \objecttype{C'}{F}}{\this} > e : \linkthis <
    \envref{\this : \objecttype{C'}{\vfield{l}{F_l}{l\in E}}}{\this} /
\end{equation*}
and we assume that $m_j$ returns $l_0\in E$. According to
Definition~\ref{def:sessionfield} and the typing of the declaration of
class $C'$ we have
$\typedsess{F_{l_0}}{C'}{S_{l_0}}$ and
$\typedsess{F}{C'}{\branch{m_i}{S_i}{i\in I}}$.

%\todo{Update the text below.}
The figure shows the environment in which each expression is typed;
the environment changes as reduction proceeds, for several reasons
explained below. The typing of an expression is $\judgment \Gamma > e
: T < \Gamma'/$ but we only show $\Gamma$ because $\Gamma'$ does not
change and $T$ is not the interesting part of this example. We also
omit the heap, showing the typing of expressions instead of
states. However, an important point to keep in mind is that $\Gamma$
corresponds to the typing environment obtained \emph{after} typing the
heap: $\typedheap h \Gamma$ is obtained after a number of
\textsc{T-Hadd} steps and corresponds to the final typing environment
for the heap.

Calling $f.m_j()$ changes the type of field $f$ to
$\objecttype{C'}{F}$ because we are now inside the object; the current
object path changes from $\mkterm{top}$ to $\mkterm{top}.f$. As $e$ reduces to $l_0$ the
type of $f$ may change, finally becoming $\objecttype{C'}{F_{l_0}}$ so
that it has the component of the variant field typing
$\objecttype{C'}{\vfield{l}{F_l}{l\in E}}$
corresponding to $l_0$. The reduction by
$\Rreturn$ changes the type of $f$ to ${S_{l_0}}$
because we are now outside the object again, but the type is still the
component of a variant typing corresponding to $l_0$. At this point
$f$ is popped from the current object path. Here the type of $l_0$ is
$\linktype{E}{f}$ (which was the expected result type for $f.m_j()$), and
this type is obtained by applying \textsc{T-VarS}, so in the
intermediate typing environment \emph{after} typing $l_0$, $f$ has the
variant type ${\choice{l}{S_l}{l\in E}}$.

The next step, swap, moves $l_0$ from the expression to the heap.
Therefore the application of \textsc{T-VarS} needed to type it now
occurs in the derivation for typing the heap, of which $\Gamma$ is the
result. This is why in $\Gamma$ the type of $f$ is now
${\choice{l}{S_l}{l\in E}}$, which is
${S_j}$, the type we were expecting after the method
call. At this point the information about which component of the
variant typing we have is stored in $\mkterm{top}.g$, the field the label was
swapped into: the type of the expression $f.m_j()$ is
$\linktype{E}{f}$, which appears as the type of $\mkterm{top}.g$ after the swap
is executed. When extracting the value of $g$ in order to
$\switchterm$ on it, the type $\linktype{E}{f}$ disappears from the
environment and becomes the type of the subexpression
$\swap{g}{\nullterm}$, at the same time resolving the variant type of
$f$ according to the particular enumerated value $l_0$.

\subsection{Typing the initial state}\label{sec:initial-typing}

Recall the discussion of the initial state for execution of a program,
from the end of Section~\ref{sec:reduction}. The initial state is
$\state{\mkterm{top}=\hentry{C}{C.\fieldsterm=\vec\nullterm}}{\mkterm{top}}{e}$
where class $C$ has a designated \emph{main} method $m$ with body
$e$. In order to type this initial state, we require that $m$ is
immediately available in $C.\sessterm$, and assume that the program is
typable, \ie that rule \Tclass\ is applicable to every class
definition. If $C.\fieldsterm = \vec{f}$ then the hypothesis of
\Tclass\ is
$\typedsess{\overrightarrow\nulltype\,\vec{f}}{C}{C.\sessterm}$; this
is what the type checking algorithm defined in
Section~\ref{sec:algorithm} checks. The definition of
$\typedsess{F}{C}{S}$ gives \judgment \envref{\this:
  \objecttype{C}{\overrightarrow\nulltype\,\vec{f}}, x :
  \nulltype}{\this} > e : T < \envref{\this: \objecttype{C}{F}}{\this}
/ for some field typing $F$. The type $T$ is
irrelevant. Lemma~\ref{lem:substitution} (Substitution), to be proved
later, gives \judgment \envref{\mkterm{top}:
  \objecttype{C}{\overrightarrow\nulltype\,\vec{f}}}{\mkterm{top}} > e
: T < \envref{\mkterm{top}: \objecttype{C}{F}}{\mkterm{top}} /, as we
assumed that $e\subs{\nullterm}{x} = e$. Straightforward use of
\THempty\ and \THadd\ gives
$\typedheap{\mkterm{top}:\objecttype{C}{\overrightarrow\nulltype\,\vec{f}}}{\mkterm{top}=\hentry{C}{C.\fieldsterm=\vec\nullterm}}$
and then \Tstate\ gives a typing for the initial state.

\subsection{Properties of the type system}
\label{sec:properties}

The main results in this sequential setting are standard: type
preservation under reduction (also known as Subject Reduction) and
absence of stuck states for well-typed programs. Furthermore, the
system also enjoys of a conformance property: all executions of
well-typed programs follow what is specified by the classes' session
types.

\subsubsection{Soundness of subtyping}

In this section we prove that the subtyping relation is sound with
respect to the type system, in the sense that it preserves not only
typing judgements but also consistency between field typings and
session typings, reflecting the safe substitution property.

\begin{lemma}\label{lem:moreweak}
  If \judgment \envref \Gamma r > e : T < \envref{\Gamma'}{r'} / and
  $\Gamma''\subt\Gamma$, then \judgment \envref{\Gamma''}{r} > e : T <
  \envref{\Gamma'}{r'} /.
\end{lemma}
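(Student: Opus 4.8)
The plan is to argue by induction on the derivation of \judgment \envref\Gamma r > e : T < \envref{\Gamma'}{r'} /, writing $\Delta$ for the strengthened environment so that the hypothesis reads $\Delta \subt \Gamma$. First I would check that the pair $\envref\Delta r$ is well formed: since $\Gamma(r)$ must be an object type $\objecttype C F$ and $\Delta(r) \subt \Gamma(r)$, the only subtyping rule concluding a relation below $\objecttype C F$ is \mkSrule{Field}, which forces $\Delta(r) = \objecttype C {F'}$ for the same class $C$. The cases then split into the leaf rules that inspect the \emph{initial} environment and the remaining compositional and subsumption rules.

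The compositional rules (\Tswap, \Tcall, \Tseq, \Tswitch, \TswitchLink, \TvarF, \TvarS, \Treturn) and the subsumption rules (\Tsub, \TsubEnv) all share one structural feature: exactly one premise — the leftmost one, typing the subexpression evaluated first — carries the initial environment $\Gamma$, while every other premise and every side condition reads only the \emph{final} environment of that premise, or an environment built from it. I would apply the induction hypothesis to that single premise, obtaining a derivation from $\Delta$ with the \emph{same} final environment $\Gamma'$. Because the lemma deliberately holds the final environment fixed rather than making it more precise, all side conditions (such as $\Gamma'(r'.f) = \ldots$, ``$T'$ is not a variant'', or $E' \subset E$) and all remaining premises survive verbatim, so I simply reapply the rule. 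For instance, in \Tseq\ the second premise starts from the first premise's final environment, which is untouched; in \Tswap\ and \Tcall\ the conditions on $\Gamma'(r'.f)$ are unaffected.

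The genuinely interesting cases are the leaf axioms that read the initial environment, namely \TlinVar, \Tvar\ and \Tref\ (the value axioms \Tnull, \Tlabel, \Tnew\ are easier, as their initial and final environments coincide and only \TsubEnv\ is needed). Consider \TlinVar, whose conclusion runs from $\Gamma_0, x{:}S$ to $\Gamma_0$. From $\Delta \subt (\Gamma_0, x{:}S)$ and the definition of environment subtyping I get $\Delta = \Delta_0, x{:}S''$ with $S'' \subt S$ and $\Delta_0 \subt \Gamma_0$ (any extra entries of $\Delta_0$ cause no harm, since environment subtyping only constrains the smaller domain). As subtyping relates session types only to session types, $S''$ is again a session type, so \TlinVar\ applies to $\Delta$, giving type $S''$ and final environment $\Delta_0$; I then patch the result with \Tsub\ (via $S'' \subt S$) to recover the type $S$ and with \TsubEnv\ (via $\Delta_0 \subt \Gamma_0$) to recover the final environment $\Gamma_0$. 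The cases \Tvar\ and \Tref\ have the same shape, using that subtyping preserves ``not an object type'' and preserves object types of a fixed class.

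The main obstacle — in fact the only point needing care — is verifying that the structural side conditions of these leaf rules are stable when a type is replaced by a subtype: that a subtype of a session type is a session type, a subtype of a non-object type is a non-object type, and a subtype of $\objecttype C F$ is $\objecttype C {F'}$ for the same $C$. Each follows at once from the shape of the subtyping rules (session types relate only to session types, enumerated types only to enumerated types, and \mkSrule{Field} for object types). Once these facts are in hand, the entire argument collapses to the uniform pattern ``apply the rule from $\Delta$, then patch with \Tsub\ and \TsubEnv'', which goes through precisely because $\Gamma'$ is held fixed throughout.
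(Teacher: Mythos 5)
Your proof is correct and follows exactly the route the paper takes — the paper simply states ``straightforward induction on the derivation'' and omits the details you supply. Your case analysis (one initial-environment premise per compositional rule, patching the leaf axioms \TlinVar, \Tvar, \Tref\ with \Tsub\ and \TsubEnv, and checking that subtyping preserves the shape conditions) fills in that induction faithfully.
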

\begin{proof}
  Straightforward induction on the derivation.
\end{proof}

\begin{proposition}[Soundness of subtyping for fields]\label{prop:subfield}
  If \typedsess F C S and $F'\subt F$ then \typedsess {F'} C S.
\end{proposition}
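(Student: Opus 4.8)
The plan is to argue coinductively, by exhibiting a single $C$-consistency relation that contains $(F',S)$. Since $\typedsess{\cdot}{C}{\cdot}$ is by definition the union of all $C$-consistency relations, it will be enough to build a relation $\mathcal{R}$ with $(F',S)\in\mathcal{R}$ and to verify that $\mathcal{R}$ satisfies the two clauses of Definition~\ref{def:sessionfield}; containment in the largest such relation then yields $\typedsess{F'}{C}{S}$. I would take
\[
\mathcal{R} = \{(G',S') \mid \text{there is } G \text{ with } G'\subt G \text{ and } \typedsess{G}{C}{S'}\}.
\]
The two hypotheses $\typedsess{F}{C}{S}$ and $F'\subt F$ already put $(F',S)$ into $\mathcal{R}$, with $F$ as the witness, so the remaining work lies entirely in checking the two clauses. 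Throughout I would fix an arbitrary $(G',S')\in\mathcal{R}$ with witness $G$ (so $G'\subt G$ and $\typedsess{G}{C}{S'}$), and use at the outset that the subtyping rules of Figure~\ref{fig:subtyping} relate records only to records and variants only to variants, so that $G'$ and $G$ always have the same shape.

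The branch case is where the real content lies. Assuming $S'=\branch{\methsign{m_i}{T'_i}{T_i}}{S_i}{i\in I}$, clause~(1) applied to $\typedsess{G}{C}{S'}$ (valid because the largest $C$-consistency relation is itself one) tells me that $G$ is a record and that, for each $i\in I$, class $C$ declares a method $\method{}{m_i}{x_i}{e_i}$ whose body is typable from initial typing $\this:\objecttype{C}{G}$ and parameter $x_i:T'_i$, with result type $T_i$ and some final typing $\this:\objecttype{C}{G_i}$ satisfying $\typedsess{G_i}{C}{S_i}$. Since $G'$ has the same shape as $G$, it too is a record, which discharges the ``$G'$ is not a variant'' requirement. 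The key step is then to re-type each $e_i$ against the sharper $\objecttype{C}{G'}$: from $G'\subt G$, rule \mkSrule{Field} gives $\objecttype{C}{G'}\subt\objecttype{C}{G}$, so that $(\this:\objecttype{C}{G'},x_i:T'_i)\subt(\this:\objecttype{C}{G},x_i:T'_i)$ as environments (the parameter components matching by reflexivity), and Lemma~\ref{lem:moreweak} reproduces the very same judgement with this sharper initial environment and an \emph{unchanged} final typing $\this:\objecttype{C}{G_i}$. Keeping $G_i$ as the final field typing, reflexivity gives $G_i\subt G_i$ and I already have $\typedsess{G_i}{C}{S_i}$, so $(G_i,S_i)\in\mathcal{R}$, which is precisely what clause~(1) demands.

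For the variant case, with $S'=\choice{l}{S_l}{l\in E}$, clause~(2) applied to $\typedsess{G}{C}{S'}$ gives $G=\vfield{l}{G_l}{l\in E''}$ with $E''\subset E$ and $\typedsess{G_l}{C}{S_l}$ for each $l\in E''$; since $G$ is a variant and $G'\subt G$, rule \mkSrule{Variant} forces $G'=\vfield{l}{G'_l}{l\in E'''}$ with $E'''\subset E''$ and $G'_l\subt G_l$ for all $l\in E'''$. Then $E'''\subset E''\subset E$, and for each $l\in E'''$ the pair $(G'_l,S_l)$ lies in $\mathcal{R}$ because $G'_l\subt G_l$ and $\typedsess{G_l}{C}{S_l}$; this is clause~(2). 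Having verified both clauses, $\mathcal{R}$ is a $C$-consistency relation, hence contained in $\typedsess{\cdot}{C}{\cdot}$, and $(F',S)\in\mathcal{R}$ yields the proposition. I expect the only delicate point to be the branch case: what makes the coinduction close without any extra hypothesis relating \emph{final} field typings by subtyping is that re-typing the bodies against the more precise $\objecttype{C}{G'}$ leaves each final $G_i$ exactly as it was, and this is delivered directly by Lemma~\ref{lem:moreweak}.
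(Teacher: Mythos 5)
Your proof is correct and follows the same route the paper intends: the paper's proof is just ``Straightforward using Lemma~\ref{lem:moreweak}'', and your argument is precisely the expansion of that hint --- exhibiting the candidate relation $\{(G',S')\mid \exists G.\ G'\subt G \wedge \typedsess{G}{C}{S'}\}$, closing the branch case via Lemma~\ref{lem:moreweak} with the final field typings left unchanged, and the variant case via inversion of \mkSrule{Variant}. Nothing is missing.
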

\begin{proof}
Straightforward using Lemma \ref{lem:moreweak}.
\end{proof}

Before proving the case where subtyping is on the right,
we first remark that, similarly to sub-session, the necessary
conditions in the definition of $C$-consistency (Definition
\ref{def:sessionfield}) become sufficient
once we consider the largest relation:
\begin{lemma}\label{lem:sessionfield}
  Let $C$ be a class and $F$ a field typing for that class.
  \begin{enumerate}
  \item Suppose $F$ is not a variant, and suppose there is a set of
    method definitions\linebreak $\iset{\method{}{m_i}{x_i}{e_i}}{i\in I}$ in
    the declaration of class $C$ such that, for all $i$, we have:
  $$\judgment \envref{\this: \objecttype C F, x_i : T'_i}{\this}  > e_i
  : T_i < \envref{\this: \objecttype C {F_i}}{\this} /$$
  with \typedsess{F_i}{C}{S_i}. Then
  \typedsess{F}{C}{\branch{\methsign{m_i}{T'_i}{T_i}}{S_i}{i\in I}}
  holds.
\item Suppose $F = \vfield{l}{F_l}{l\in E'}$ and let $(S_l)_{l\in E}$
  be a family of session types such that $E'\subset E$ and
  $\typedsess {F_l}{C}{S_l}$ for all $l\in E'$. Then \typedsess F C
  {\choice{l}{S_l}{l\in E}} holds.
  \end{enumerate}
\end{lemma}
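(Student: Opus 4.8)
The plan is to mirror exactly the one-line proof of Lemma~\ref{lem:subsession}, exploiting the fact that $\typedsess{\cdot}{C}{\cdot}$ is \emph{by definition} the largest $C$-consistency relation. Write $\mathcal{R}_C$ for this relation, so that $(F,S)\in\mathcal{R}_C$ is synonymous with $\typedsess{F}{C}{S}$. To prove that a given pair $(F,S)$ lies in $\mathcal{R}_C$ it then suffices to exhibit \emph{any} $C$-consistency relation containing it, and the natural candidate is $\mathcal{R}_C$ itself with that single pair adjoined. Concretely, for each part I would set $\mathcal{R}' = \mathcal{R}_C \cup \{(F,S)\}$ (with $S$ the relevant branch or variant session type), show that $\mathcal{R}'$ satisfies the clauses of Definition~\ref{def:sessionfield}, and conclude from maximality of $\mathcal{R}_C$ that $\mathcal{R}' \subseteq \mathcal{R}_C$, whence $(F,S)\in\mathcal{R}_C$, i.e.\ $\typedsess{F}{C}{S}$.

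Verifying that $\mathcal{R}'$ is a $C$-consistency relation splits into the old pairs and the one new pair. For every pair already in $\mathcal{R}_C$ the two clauses hold with respect to $\mathcal{R}_C$, and since the clauses only \emph{require} certain residual pairs to lie in the relation, enlarging $\mathcal{R}_C$ to $\mathcal{R}'\supseteq\mathcal{R}_C$ can only make them easier to satisfy; this is precisely the monotonicity argument already recorded in the proof that the union of $C$-consistency relations is a $C$-consistency relation (the analogue of Lemma~\ref{lem:union}), which I would simply invoke. The only genuine work is the single new pair. In part~(1) it is $(F,\branch{\methsign{m_i}{T'_i}{T_i}}{S_i}{i\in I})$ with $F$ not a variant, so clause~(2) is vacuous and clause~(1) must be checked: this is immediate, since by hypothesis each method definition $\method{}{m_i}{x_i}{e_i}$ exists in $C$ with \judgment \envref{\this: \objecttype C F, x_i : T'_i}{\this} > e_i : T_i < \envref{\this: \objecttype C {F_i}}{\this} /, and $\typedsess{F_i}{C}{S_i}$ is exactly the residual demand $(F_i,S_i)\in\mathcal{R}_C\subseteq\mathcal{R}'$. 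In part~(2) the new pair is $(\vfield{l}{F_l}{l\in E'},\choice{l}{S_l}{l\in E})$, so clause~(1) is vacuous and clause~(2) is checked directly from the given inclusion $E'\subset E$ together with $\typedsess{F_l}{C}{S_l}$, i.e.\ $(F_l,S_l)\in\mathcal{R}_C\subseteq\mathcal{R}'$, for each $l\in E'$.

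I anticipate no real obstacle: this is the standard ``the necessary conditions become sufficient at the greatest fixed point'' companion to the coinductive Definition~\ref{def:sessionfield}, structurally identical to Lemma~\ref{lem:subsession}. The single point deserving a moment's care is the monotonicity step for the pre-existing pairs, but it is exactly the union-lemma reasoning and need not be reproved. As a sanity check I would also note that neither clause for the new pair refers to the new pair itself — clause~(1) relates $F$ only to the strictly smaller components $F_i$, and clause~(2) relates it only to the $F_l$, all of which are already handled by $\mathcal{R}_C$ — so no circularity arises and adjoining just the one pair genuinely suffices.
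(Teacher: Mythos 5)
Your proof is correct and is exactly the paper's argument: the paper's entire proof is to observe that $(\typedsess{\bullet}{C}{\bullet})\cup\{(F,S)\}$ is a $C$-consistency relation, which you have simply spelled out (monotonicity for the old pairs, direct verification of the clauses for the new one, maximality to conclude).
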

\begin{proof}
  Let $S$ be either $\branch{\methsign{m_i}{T'_i}{T_i}}{S_i}{i\in I}$
  or $\choice{l}{S_l}{l\in E}$ depending on the case. Just notice that 
  $(\typedsess{\bullet}{C}{\bullet})\cup\{(F, S)\}$ is
  a $C$-consistency relation.
\end{proof}

\begin{proposition}[soundness of subtyping for sessions]\label{prop:subsess}
  If \typedsess F C S and $S\subt S'$ then \typedsess F C {S'}.
\end{proposition}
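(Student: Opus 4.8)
The plan is to prove the statement coinductively, by exhibiting a single $C$-consistency relation that contains the pair $(F,S')$. I define
\[
\mathcal{R} = \{(F,S') \mid \exists S,\ \typedsess{F}{C}{S} \text{ and } S \subt S'\}.
\]
By the two hypotheses, $(F,S')\in\mathcal{R}$, so it suffices to show that $\mathcal{R}$ is a $C$-consistency relation: since $\typedsess{\bullet}{C}{\bullet}$ is the \emph{largest} such relation, this yields $\mathcal{R}\subset(\typedsess{\bullet}{C}{\bullet})$ and hence $\typedsess{F}{C}{S'}$, mirroring the one-line style of Lemma~\ref{lem:sessionfield}. To verify Definition~\ref{def:sessionfield} for an arbitrary $(F,S')\in\mathcal{R}$ with witness $S$, I case-split on the form of $S$; since $S\subt S'$, Definition~\ref{def:subsession} simultaneously dictates the form of $S'$. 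By Definition~\ref{def:typeequivalence} I may assume neither $S$ nor $S'$ is a top-level $\rectype{X}{S''}$, so the implicit unfolding causes no trouble.

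The variant case is the easy one. If $S=\choice{l}{S_l}{l\in E}$, then Definition~\ref{def:subsession}(2) forces $S'=\choice{l}{S'_l}{l\in E'}$ with $E\subset E'$ and $S_l\subt S'_l$ for all $l\in E$, while $\typedsess{F}{C}{S}$ forces $F=\vfield{l}{F_l}{l\in E''}$ with $E''\subset E$ and $\typedsess{F_l}{C}{S_l}$ for $l\in E''$. Then $E''\subset E'$, and for each $l\in E''$ the witness $S_l$ (which satisfies $S_l\subt S'_l$) places $(F_l,S'_l)$ in $\mathcal{R}$; this is exactly clause~(2) of Definition~\ref{def:sessionfield} for $\mathcal{R}$.

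The branch case is where the real work lies. If $S=\branch{\methsign{m_i}{P_i}{R_i}}{S_i}{i\in I}$, consistency gives that $F$ is not a variant and that each $m_i$ has a body $e_i$ typable from initial field typing $F$ and parameter type $P_i$, with return type $R_i$, ending in a field typing $F_i$ with $\typedsess{F_i}{C}{S_i}$. Meanwhile $S\subt S'$ gives $S'=\branch{\methsign{m_j}{P'_j}{R'_j}}{S'_j}{j\in J}$ with $J\subset I$ and, for each $j$, $\subt$-compatibility of the two signatures (Definition~\ref{def:Rcompatibility-signatures}). For each $j\in J$ I must re-derive a typing of $e_j$ from parameter type $P'_j$ and return type $R'_j$, ending in some field typing with which $S'_j$ is related by $\mathcal{R}$. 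The contravariant parameter condition gives that $P'_j$ is $\subt$-compatible with $P_j$, which is precisely what environment subtyping of $x_j$ requires, so Lemma~\ref{lem:moreweak} lets me replace $P_j$ by $P'_j$ in the initial environment. For the return type I follow the two sub-cases of signature compatibility: in the first, $R_j\subt R'_j$ and $S_j\subt S'_j$, so subsumption (\Tsub) upgrades the return type to $R'_j$ while keeping the final field typing $F_j$, and the witness $S_j$ places $(F_j,S'_j)$ in $\mathcal{R}$; in the second, $R_j$ is an enumerated type $E$, $R'_j=\linkthis$, and $\choice{l}{S_j}{l\in E}\subt S'_j$, so I apply rule \TvarF to turn the enumerated return type into $\linkthis$, producing final field typing $\vfield{l}{F_j}{l\in E}$, whereupon Lemma~\ref{lem:sessionfield}(2) gives $\typedsess{\vfield{l}{F_j}{l\in E}}{C}{\choice{l}{S_j}{l\in E}}$ and the witness $\choice{l}{S_j}{l\in E}$ places $(\vfield{l}{F_j}{l\in E},S'_j)$ in $\mathcal{R}$.

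I expect the last sub-case to be the main obstacle: applying \TvarF requires the final field typing $F_j$ to be a record rather than a variant. I plan to justify this from the fact — already noted after Definition~\ref{def:sessionfield} and in the discussion of \TvarF — that \TvarF is the only rule producing a variant field typing for the current object and that it forces the expression type to be $\linkthis$; since here the return type is the enumerated $E\neq\linkthis$, the final field typing $F_j$ must be a record, so \TvarF applies. Pinning this observation down, perhaps as a small auxiliary lemma, is the one delicate point; everything else is a routine assembly of the two compatibility sub-cases using Lemmas~\ref{lem:moreweak} and~\ref{lem:sessionfield} together with rules \Tsub and \TvarF, exactly dual to the simpler argument of Proposition~\ref{prop:subfield}.
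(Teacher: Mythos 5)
Your proof is correct and follows essentially the same route as the paper's: the identical relation $\mathcal{R}_C=\{(F,S')\mid\exists S,\ \typedsess F C S\wedge S\subt S'\}$, the same case split, and the same use of Lemma~\ref{lem:moreweak}, \Tsub, \TvarF\ and Lemma~\ref{lem:sessionfield}. Your closing worry about \TvarF's record precondition is a point the paper's proof silently elides, and your proposed justification (an enumerated return type precludes a variant final field typing, since \TvarF\ is the only rule producing one and it forces type $\linkthis$) is the right one.
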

\begin{proof}
  For any class $C$, we define the following relation:
$$\mathcal{R}_C = \{(F,S')\mid\exists S,\, \typedsess F C S \text{ and }
S\subt S'\}$$
and prove that it is a $C$-consistency relation (Definition
  \ref{def:sessionfield}). Let $(F,S')\in\mathcal{R}_C$, and let
$S$ be as given by the definition of the relation. We have two cases
depending on the form of $S'$ (branch or variant).

The first one is $S' = \branch{\methsign{m_j}{T'_j}{U'_j}}{S'_j}{j\in J}$.
  Then $S\subt S'$ means (Definition \ref{def:subsession}) that we have:
  $S = \branch{\methsign{m_i}{T_i}{U_i}}{S_i}{i\in I}$ with
  $J\subset I$ and for all $j\in J$, $\methsign{m_j}{T_j}{U_j}:S_j$ is
  $\subt$-compatible with $\methsign{m_j}{T'_j}{U'_j}:S'_j$.
  Let $j\in J$, we know from \typedsess F C S
  that $C$ contains a method declaration $\method{}{m_j}{x}{e}$
  such that the following judgement:
  $$\judgment \envref{x : T_j, \this : \objecttype C F}{\this} > e : U_j <
  \envref{\this : \objecttype{C}{F_j}}{\this} /$$
  holds, with $\typedsess {F_j}{C}{S_j}$.
  $\subt$-compatibility between the two signatures
  of $m_j$ (Definition
  \ref{def:Rcompatibility-signatures}) gives us, first,
  $T'_j\subt T_j$, which
  allows us to apply Lemma~\ref{lem:moreweak} to this judgement and
  replace $T_j$ by $T'_j$ in it, and second, either:
  \begin{enumerate}
  \item $U_j\subt U'_j$ and $S_j\subt S'_j$. 
    The former allows us to use \textsc{T-Sub} to replace $U_j$ by
    $U'_j$ in the typing judgement for $e$, fulfilling the first
    condition in the definition of $C$-consistency. The latter, together with
    \typedsess{F_j}{C}{S_j}, implies $(F_j, S'_j)\in\mathcal{R}_C$,
    fulfilling the second one.
  \item $U_j$ is an enumerated type $E$, $U'_j = \linkthis$ and
    $\choice{l}{S_j}{l\in E}\subt S'_j$. In this case we first apply
    \textsc{T-VarF} to the judgement, yielding:
    $$\judgment \envref{x : T'_j, \this : \objecttype C F}{\this} > e :
    \linkthis < \envref{\this : \objecttype{C}{\vfield{l}{F_j}{l\in
          E}}}{\this}. /$$
    From \typedsess{F_j}{C}{S_j} we deduce
    \typedsess{\vfield{l}{F_j}{l\in E}}{C}{\choice{l}{S_j}{l\in E}}
    using Lemma \ref{lem:sessionfield}, and conclude
    $(\vfield{l}{F_j}{l\in E}, S'_j) \in \mathcal{R}_C$.
  \end{enumerate}

  In the second case, where $S' = \choice{l}{S'_l}{l\in E'}$, then
  $S = \choice{l}{S_l}{l\in E}$ with $E\subset E'$ and
  $\forall l\in E, S_l\subt S'_l$. From \typedsess F C S we know that
  $F = \vfield{l}{F_l}{l\in E''}$ with $E''\subset E$ and
  $\typedsess{F_l}{C}{S_l}$ for any $l$ in $E''$.  Just notice that
  $E''\subset E'$ (by transitivity of $\subseteq$) and
  $(F_l, S'_l)\in\mathcal{R}_C$ for any $l$ in $E''$.
\end{proof}

\subsubsection{Type preservation}

\begin{theorem}[Subject Reduction]
  \label{thm:subjectreductionSeq}
  Let $\mathcal{D}$ be a set of well-typed declarations, that is, such
  that for every class declaration $D$ in $\mathcal{D}$ we have
  $\vdash D$.

  If, in a context parameterised by $\mathcal{D}$, we have $\judgment
  \envref{\Gamma}{r} > \state{h}{r}{e} : T < \envref{\Gamma'}{r'} /$,
  and if $\state{h}{r}{e}\reduces \state{h'}{r''}{e'}$, then there
  exists $\Gamma''$ such that $\judgment \envref{\Gamma''}{r''} >
  \state{h'}{r''}{e'} : T < \envref{\Gamma'}{r'} /$.
\end{theorem}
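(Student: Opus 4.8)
The plan is to argue by induction on the derivation of the reduction $\state{h}{r}{e}\reduces\state{h'}{r''}{e'}$, with a case analysis on the last rule applied. Since \Tstate\ is the only rule that types a state, I would open each case by inverting it, so that the hypothesis $\judgment\envref{\Gamma}{r}>\state{h}{r}{e}:T<\envref{\Gamma'}{r'}/$ splits into a heap typing $\typedheap{\Gamma}{h}$ and an expression typing $\judgment\envref{\Gamma}{r}>e:T<\envref{\Gamma'}{r'}/$. The goal then reduces to exhibiting some $\Gamma''$ with $\typedheap{\Gamma''}{h'}$ and $\judgment\envref{\Gamma''}{r''}>e':T<\envref{\Gamma'}{r'}/$, after which \Tstate\ reassembles the required state judgment. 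Note that the type $T$ and the right-hand side $\envref{\Gamma'}{r'}$ must be preserved unchanged, whereas the current object $r''$ and the heap environment $\Gamma''$ may legitimately differ from $r$ and $\Gamma$ (for instance $r''=r.f$ under \Rcall). For each rule I would then invert the expression typing to expose the redex and rebuild a typing for $e'$, updating the heap typing in step whenever the heap changes; this relies on generation lemmas for every expression form, which I would establish first, each absorbing the bounded amount of subsumption introduced by \Tsub\ and \TsubEnv.

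The routine base cases are \Rseq\ and \Rswitch: for \Rseq\ one drops the first premise of \Tseq, and for \Rswitch\ one selects, from the inverted $\switchterm$ rule (\Tswitch\ or \TswitchLink), the subderivation of the branch matching the scrutinee $l_0$; in both $\Gamma''=\Gamma$ and $h'=h$. For \Rswap\ I would prove a small lemma relating the heap update $\changeval{h}{r.f}{v'}$ to the environment update $\changetype{\Gamma'}{r.f}{T}$ dictated by \Tswap, and then retype the extracted value with its former type $T'$---directly when $T'$ is a base or session type, and through \TvarS\ when $T'$ is a $\linkterm$ type $\linktype{}{f}$, which simultaneously reinstalls the variant on $f$; the delicate point is preserving the single-reference invariant of heap typings when the old and new field contents are object identifiers. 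For \Rnew, the fresh object with $\nullterm$ fields is installed by \THadd\ (whose premise types exactly the sequence of swaps that puts $\nullterm$ into each field) and given its class's initial session type by \THide, using the program hypothesis $\vdash D$ and the consistency $\typedsess{\overrightarrow\nulltype\,\vec f}{C}{C.\sessterm}$ it provides; the returned identifier $o$ is then typed by \Tref.

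The delicate base case is \Rcall. Inverting \Tcall\ presents the callee's field $f$ with a branch type $B$ exposing $m_j$ with parameter type $T'_j$, return type $T_j$ and continuation $S_j$, and records that after the call $f$ has type $S_j$ while the call itself has type $T$ (namely $\linktype{}{f}$ if $T_j=\linkthis$, and $T_j$ otherwise). To type the contractum $\return{e\subs{v}{x}}{}$ at the new current object $r.f$, I would re-open the callee in the heap typing---using that \THide\ is optional, so the callee may be presented with an internal type $\objecttype{C}{F}$ satisfying $\typedsess{F}{C}{B}$ rather than with $B$ itself---and then invoke Definition~\ref{def:sessionfield} to extract a typing of the body $e$ of $m_j$ under $\this:\objecttype{C}{F}$ with parameter type $T'_j$, result $T_j$, and final field typing $F_j$ such that $\typedsess{F_j}{C}{S_j}$. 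The Substitution Lemma (Lemma~\ref{lem:substitution}) then transports this typing from the symbolic $\this$ to the location $r.f$ and replaces $x$ by $v$, and wrapping the result with \Treturn\ hides the callee's fields again, converting $\objecttype{C}{F_j}$ back to $S_j$ and producing exactly the type $T$ and the updated type of $f$ delivered by the inverted \Tcall, including the matching $\linkthis$/$\linktype{}{f}$ case. The dual rule \Rreturn, which unwraps $\return{v}{}$ and pops $f$ from the path, inverts \Treturn, re-hides the callee through its consistency premise, and retypes the returned value directly, again calling on \TvarS\ when that value is a variant tag.

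The inductive case is \Rcontext. For it I would prove a context-decomposition lemma: whenever $\judgment\envref{\Gamma}{r}>\mathcal{E}[e]:T<\envref{\Gamma'}{r'}/$, the derivation factors through a typing of the hole $\judgment\envref{\Gamma}{r}>e:T_1<\envref{\Gamma_1}{r_1}/$ together with a typing of the surrounding context carrying a hole of type $T_1$ from $\envref{\Gamma_1}{r_1}$ to the result $T$ at $\envref{\Gamma'}{r'}$. Pairing $\typedheap{\Gamma}{h}$ with the hole typing through \Tstate\ and applying the induction hypothesis yields some $\Gamma_1''$ with $\typedheap{\Gamma_1''}{h'}$ and $\judgment\envref{\Gamma_1''}{r''}>e':T_1<\envref{\Gamma_1}{r_1}/$; re-plugging into the context typing then gives the conclusion with $\Gamma''=\Gamma_1''$. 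Because the theorem fixes the middle type, the induction hypothesis returns the reduced hole at the \emph{same} type $T_1$---which is precisely why \TvarS\ must be available to keep a $\linktype{}{f}$ hole at its type after a swap. The remaining subtlety is that reduction inside the hole can shift the current object (when the redex is a call or a return), and $\return{\mathcal{E}}{}$ is the one context form whose typing rule rewrites the path, so the decomposition must track $r$ faithfully for the re-plugging to typecheck. I expect this path bookkeeping, together with the body-extraction-and-substitution of \Rcall, to be the main obstacle; the other cases are essentially mechanical inversions.
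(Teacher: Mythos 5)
Your proposal is correct and matches the paper's argument in all essentials: the same case analysis (with \textsc{R-Context} as the only inductive case, so your induction on the reduction derivation coincides with the paper's induction on the expression's context structure), the same normalization of subsumption, and the same key lemmas — heap modification for swap, opening/consistency/substitution/\textsc{T-Return} for call, closing for return, and typability-of-subterms plus replacement for the context case. The only organizational difference is that the paper obtains this theorem as the $\tau$-transition special case of the thread-wise preservation theorem for the distributed language rather than proving the sequential statement directly.
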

\begin{proofnoqed}
  This theorem is a particular case of Theorem \ref{thm:threadsr}
  which will be proved in Section~\ref{sec:results}.
\end{proofnoqed}

\subsubsection{Type safety}

\begin{theorem}[No Stuck Expressions]
  \label{thm:typesafetySeq}
  Let $\mathcal{D}$ be a set of well-typed declarations, that is, such
  that for every class declaration $D$ in $\mathcal{D}$ we have
  $\vdash D$.

  If, in a context parameterised by $\mathcal{D}$, we have \judgment
  \envref{\Gamma}{r} > \state{h}{r}{e} : T < \envref{\Gamma'}{r'} /,
  then either $e$ is a value or there exists $\state{h'}{r''}{e'}$
  such that $\state{h}{r}{e}\reduces \state{h'}{r''}{e'}$.
\end{theorem}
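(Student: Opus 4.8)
The plan is to prove the statement as a progress property by induction on the typing derivation of the expression. Since \Tstate\ is the only rule that derives a judgement for a state, the hypothesis decomposes into $\typedheap\Gamma h$ and \judgment \envref\Gamma r > e : T < \envref{\Gamma'}{r'} /, so it suffices to show that this expression judgement, together with $\typedheap\Gamma h$, forces $e$ to be a value or $\state h r e$ to reduce. First I would record two structural facts. Because $\typedheap\Gamma h$ is built only from \THempty, \THadd\ and \THide, the domain of $\Gamma$ consists solely of object identifiers; hence $e$ has no free parameter, the rules \TlinVar\ and \Tvar\ never apply at the head, and the otherwise-stuck bare-variable case cannot arise. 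Second, a value leaves the current object path unchanged, since the value rules \Tnull, \Tlabel\ and \Tref\ all keep $r$ fixed; I will use this for \Rreturn.

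I would then argue by the last rule of the derivation. The rules that leave $e$ unchanged --- \Tsub, \TsubEnv, \TvarF, \TvarS\ --- are discharged immediately by the induction hypothesis applied to their premise. The value rules \Tnull, \Tlabel, \Tref\ give the first disjunct. For \Tnew, the expression $\new C$ always reduces by \Rnew, choosing a fresh $o\notin\dom(h)$ and using that $C$ is declared so that $C.\fieldsterm$ is defined. The remaining rules each build an evaluation context around a subexpression $e_0$ whose premise is typed with the same initial pair $\envref\Gamma r$: these are \Tseq\ (context $\seq{[\_]}{e'}$), \Tswap\ ($\swap f{[\_]}$), \Tcall\ ($\methcal f m{[\_]}$), \Tswitch\ and \TswitchLink\ ($\switchterm$ on $[\_]$), and \Treturn\ ($\return{[\_]}{}$). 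In each case I apply the induction hypothesis to $e_0$: if $\state h r{e_0}$ reduces, then \Rcontext\ lifts the step to the whole expression; if $e_0$ is a value, I fire the matching head rule --- \Rseq, \Rswap, \Rcall, \Rswitch\ and \Rreturn\ respectively.

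The work lies in checking the side conditions of each head rule, for which I would use a \emph{canonical forms} lemma (a value of enumerated type is a label in that set, a value of a session type is an object identifier, and a value of type $\linktype{}{f}$ is a label) together with heap-agreement lemmas obtained by unfolding $\typedheap\Gamma h$. For \Rseq\ there is nothing to check. For \Rswap, since $e_0$ is a value we have $r'=r$, and \Tswap\ guarantees $\Gamma(r.f)$ is defined, so heap agreement makes $h(r).f$ defined and the rule applies. For \Rcall, again $r'=r$, and \Tcall\ makes $\Gamma(r.f)$ a branch type offering $m$; heap agreement identifies the object at $r.f$ with a record of some class $C$ whose session type is consistent ($\typedsess F C S$) with that branch, and Definition~\ref{def:sessionfield}(1) then yields a definition $\method{}{m}{x}{e'}$ in $C$, which is exactly the premise of \Rcall. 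For \Rreturn, the value $e_0$ preserves the path, so the initial path coincides with the final path $r'.f$ of the premise of \Treturn; thus the current object path ends in a field and \Rreturn\ applies.

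I expect the main obstacle to be the $\switchterm$ cases, and specifically \TswitchLink. There the scrutinee value is a label $l_0$ of type $\linktype{}{f}$, and for \Rswitch\ I need $l_0\in E$. Typing gives $\Gamma(r.f)=\choice{l}{S_l}{l\in E'}$ with $E'\subset E$, so it suffices to know $l_0\in E'$, that is, that a $\linktype{}{f}$-typed value is a label belonging to the label set of the variant type currently held by $f$. This link/variant agreement is not immediate from the expression rules alone, so I would establish it as an invariant of well-typed states, proved by inspecting how $\linkterm$-types are introduced --- via \TvarS\ and \Tcall, always in lockstep with a variant type over the very same label set on $f$ --- and how the heap typing threads these through \THadd. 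Once this invariant and the heap-agreement lemmas are available, every head rule's side condition is met and the induction closes.
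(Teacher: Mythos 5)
Your proposal is correct in substance but takes a genuinely different route from the paper. The paper does not prove this theorem directly: it derives it as a corollary of the combined thread-wise progress-and-preservation result (Theorem~\ref{thm:threadsr}), which is proved by induction on the structure of $e$ with respect to evaluation contexts, using the Typability-of-Subterms and Replacement lemmas to descend into the redex and a battery of heap lemmas (Opening, Closing, heap modification) to discharge the side conditions; your Call case, for instance, is exactly the paper's use of the Opening lemma to extract $\typedsess F C S$ and then clause~(1) of Definition~\ref{def:sessionfield}. Your standalone induction on the typing derivation is a perfectly viable and more elementary alternative for the sequential fragment; what the paper's organisation buys is that the same induction simultaneously yields preservation and extends to the distributed setting, where progress can genuinely fail (a send with no matching receive) and must be weakened to a labelled-transition statement. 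One point where you over-engineer: in the \TswitchLink\ case no global state invariant is needed. When the scrutinee has reduced to a label $l_0$ of type $\linktype{}{f}$, that type can only have been assigned \emph{within the expression's own typing derivation} by \Tlabel\ (giving $\{l_0\}$, possibly enlarged by \Tsub) followed by \TvarS, and \TvarS\ installs the variant on $f$ with precisely that label set; since variant subtyping via \TsubEnv\ only enlarges the label set and \TswitchLink\ requires $E'\subset E$, the condition $l_0\in E$ falls out locally, which is exactly how the paper's proof handles this case. (The global link/variant agreement you describe does hold for heap-resident links --- it is the paper's Lemma~\ref{lem:choiceconsistency} --- but it is needed for preservation of the swap case, not for progress here.) Beyond that, your structural observations (domain of $\Gamma$ excludes parameters, values preserve the current-object path, the canonical-forms and heap-agreement facts) all match what the paper's proof relies on.
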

\begin{proofnoqed}
  % By induction on the structure of $e$. If $e$ is a value then there
  % is nothing to prove. If the structure of $e$ allows it to have the
  % form $\mathcal{E}(e'')$ and $e''$ is not a value, then
  % Lemma~\ref{lem:typablesubterms} and the induction hypothesis give a
  % reduction $\state{h}{r}{e''}\reduces \state{h'}{r''}{e'''}$ which,
  % by \textsc{R-Context}, gives the desired reduction of
  % $\state{h}{r}{e}$. If $e = \mathcal{E}(v)$ for some value $v$, or if
  % $e$ is not of the form $\mathcal{E}(e'')$, then the derivation of
  % $\judgment \envref{\Gamma}{r} > \state{h}{r}{e} : T <
  % \envref{\Gamma'}{r'}/$ provides the information necessary to show
  % that the hypotheses of a suitable reduction rule for
  % $\state{h}{r}{e}$ are satisfied.
This theorem is also a consequence of Theorem \ref{thm:threadsr}, so
we postpone its proof until Section \ref{sec:results}.
\end{proofnoqed}

\subsubsection{Conformance}
\comment{NG: updated 8.4.2011}

We show that, in well-typed programs, \emph{a sequence} of method calls
(interleaved with their respective return labels) of a given class \emph{is
a path} of its session type. In order to state this property
precisely, we introduce a few definitions.

\begin{definition}[Call trace]
\label{def:calltrace}
  A \emph{call trace} is a sequence $m_1l_1m_2l_2\ldots m_nl_n$ in
  which each $m_i$ is a method name and each $l_i$ may be absent or,
  if present, is a label.
\end{definition}

\begin{definition}[LTS on session types]
  Define a labelled transition relation on class session types by the
  following rules. $\alpha$ stands for $m$ or $l$.
  \begin{align*}
    \frac{j\in
      I}{\labtrans{\branch{\methsign{m_i}{T_i}{T'_i}}{S_i}{i\in
          I}}{m_j}{S_j}} & & \frac{l_0\in
      E}{\labtrans{\choice{l}{S_l}{l\in E}}{l_0}{S_{l_0}}}
    \\~\\
    \frac{S\text{ is not a variant}}{\labtrans{S}{l}{S}} & &
    \frac{\labtrans{S\subs{\mu X.S}{X}}{\alpha}{S'}}{\labtrans{\mu
        X.S}{\alpha}{S'}}
  \end{align*}
\end{definition}

\begin{definition}[Call trace mapping]
A \emph{call trace mapping} for a heap $h$ is a function $\tr$ from
$\dom(h)$ to call traces.
\end{definition}

\begin{definition}[Validity of mappings]
  A call trace mapping $\tr$ for a heap $h$ is \emph{valid} if for
  every entry $o = \hentry{C}{\ldots}$ in $h$, we have
  $\labtransstar{C.\sessterm}{\tr(o)}{}$. An element in a call trace
  which does not allow the corresponding session type to reduce is a
  \emph{type error}. (Thus a call trace is valid if and only if it
  does not contain type errors).
\end{definition}

\begin{definition}
If $\tr$ is a call trace mapping for a heap $h$ then we define
$\tr(h,r)$ for references $r$ such that $h(r)$ is defined, as follows:
\begin{align*}
\tr(h,o) & = \tr(o) \\
\tr(h,r.f) & = \tr(h(r).f)
\end{align*}
\end{definition}

\begin{definition}[Original reduction rule]
  If $\state{h}{r}{e}\reduces\state{h'}{r'}{e'}$ then the derivation
  of this reduction consists of a number of applications of
  \textsc{R-Context}, preceded by another rule which forms a unique
  leaf node in the derivation. We say that the rule at the leaf node
  is the \emph{original reduction rule} for the reduction, or that the
  reduction \emph{originates from} this rule.
\end{definition}

\begin{definition}[Extension of call traces]
\label{def:calltraceextension}
  Suppose $\tr$ is a call trace mapping for $h$ and
  $\state{h}{r}{e}\reduces\state{h'}{r'}{e'}$. Define a call trace mapping
  $\tr'$ for $h'$ as follows: 
\begin{itemize}
\item If the reduction originates from \textsc{R-Call} with method $m$
  and field $f$ then $\tr' = \tr\{h(r.f)\mapsto\tr(h(r.f))m\}$.
\item If the reduction originates from \textsc{R-Return} with value
  $v$, and $v$ is a label $l$, then $\tr' =
  \tr\{h(r)\mapsto\tr(h(r))l\}$.
\item If the reduction originates from \textsc{R-New} and the fresh
  object is $o$ then $\tr' = \tr\{o\mapsto\varepsilon\}$.
\item Otherwise, $\tr' = \tr$.
\end{itemize}
\end{definition}

The conformance property is the following: in a sequence of reductions
starting from the initial state of a well-typed program, the call
traces built using the extension mechanism defined above are valid
throughout the sequence. We need a couple of lemmas to
properly relate call traces and typings in the case of variant types.

\begin{lemma}\label{lem:heapfield}
  If \typedheap\Gamma h, then $\Gamma$ does not contain type
  $\linkthis$ or any variant field typing.
\end{lemma}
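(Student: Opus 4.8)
The plan is to argue by induction on the derivation of $\typedheap{\Gamma}{h}$. First I would fix the reading of ``contain'', since the naive deeply-nested reading cannot be intended: class session types routinely mention $\linkthis$ as a method return type, so a session type bound to an identifier legitimately contains $\linkthis$ below the top level. The property I would actually maintain is that no identifier of $\Gamma$ is bound to $\linkthis$, and that whenever an identifier is bound to an object type $\objecttype{C}{F}$ the field typing $F$ is a record $\iset{T_i\,f_i}{1\leqslant i\leqslant n}$ (never a variant $\vfield{l}{F_l}{l\in E}$) all of whose field types $T_i$ again enjoy the same property, with the recursion deliberately stopping at session types, which count as clean leaves. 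Note also that a heap-typing environment is built only by \THempty, \THadd{} and \THide, so every binding of $\Gamma$ is for an object identifier (never a parameter).

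The base case \THempty{} is vacuous, and \THide{} is immediate: it replaces a binding $o:\objecttype{C}{F}$ by $o:S$ with $\typedsess{F}{C}{S}$, where $S$ is a session type and hence a clean leaf, the rest of $\Gamma$ being clean by the induction hypothesis. The real work is \THadd, whose premise types the swap sequence $\seq{\seq{\swap{f_1}{v_1}}{\ldots}}{\swap{f_n}{v_n}}$ starting from $\Gamma, o:\objecttype{C}{\iset{\nulltype\,f_i}{1\leqslant i\leqslant n}}$ and producing the conclusion's environment $\Gamma'$. Since $\Gamma$ is clean by the induction hypothesis and the all-$\nulltype$ record is trivially clean, it suffices to prove an auxiliary statement: typing the swap sequence sends a clean environment to a clean one.

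I would establish the auxiliary statement by a secondary induction on the swap-sequence derivation, inspecting each admissible rule. The outer shape is \Tseq{} over the individual swaps, each concluded by \Tswap; the swapped values $v_i$ are only $\nullterm$, labels, or object identifiers, so their typings are built from \Tnull, \Tlabel{} and \Tref, possibly under \Tsub, \TsubEnv{} or \TvarS. The decisive point is the side condition $T\neq\linkthis$ of \Tswap: it forbids swapping in a value of type $\linkthis$, which is exactly what \TvarF{} produces. Since $\linkthis$ admits no supertype (subtyping is defined only on session types, enumerated types and field typings) it cannot be converted away by \Tsub{} or \TsubEnv, nor consumed by \TvarS{} (which needs an enumerated input); hence \TvarF --- the unique rule that manufactures either a $\linkthis$ type or a variant field typing --- cannot occur in the typing of a swapped value. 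The remaining rules preserve cleanliness: \Tnull{} and \Tlabel{} yield leaves; \Tref{} gives the field the type $\Gamma(o')$ of the stored object, clean by the induction hypothesis on the heap typing; \Tswap{} writes into a field a type $T\neq\linkthis$, keeping $F$ a record of clean field types; and for \Tsub{}/\TsubEnv{} I would observe from Definition~\ref{def:subsession} and the rules of Figure~\ref{fig:subtyping} that subtyping never crosses kinds (records relate only to records, variant field typings only to variant field typings, session types only to session types), so it preserves cleanliness.

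The main obstacle is conceptual rather than computational. The apparently dangerous rule is \TvarS, which turns a field's branch session type $S$ into $\choice{l}{S}{l\in E}$ and types the value as $\linktype{}{f}$; I would stress that $\choice{l}{S}{l\in E}$ is a variant \emph{session} type, a clean leaf, rather than a variant \emph{field} typing $\vfield{l}{F_l}{l\in E}$, and that $\linktype{}{f}$ is distinct from $\linkthis$, so neither forbidden shape is created. Because variant session types and variant field typings share notation, the argument must track at every field whether its type is a session type or an object type, and must lean on the \Tswap{} side condition to exclude \TvarF. Once these two points are pinned down, every remaining case is routine.
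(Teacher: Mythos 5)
Your proof is correct and follows essentially the same route as the paper's: both arguments reduce to observing that \textsc{T-VarF} is the unique rule introducing $\linkthis$ or a variant field typing, that the only place it could fire in a heap-typing derivation is on a swapped value inside the \textsc{T-Hadd} premise, and that the side condition $T\neq\linkthis$ of \textsc{T-Swap} excludes it there. Your version merely makes explicit the induction, the precise reading of ``contain,'' and the session-variant versus field-variant distinction, all of which the paper's terser proof leaves implicit.
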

\begin{proof}
  It suffices to show that rule \textsc{T-VarF} cannot be used in the
  derivation of \typedheap\Gamma h, since it is the only rule that
  introduces $\linkthis$ or variant field typings.

  This rule can only be used on an expression of enumerated type,
  and the only place where such an expression can occur
  in the derivation of \typedheap\Gamma h is as the right member of a
  swap in the second premise of \textsc{T-Hadd} (the swap expression
  itself has type $\nulltype$ because of the initial environment).
  It corresponds to the first
  premise of \textsc{T-Swap}. However, the third premise of
  \textsc{T-Swap} forbids that the type of the expression be
  $\linkthis$, hence \textsc{T-VarF} cannot be used there.
\end{proof}

\begin{lemma}[Variant consistency]\label{lem:choiceconsistency}
  If \typedheap\Gamma h and $\Gamma(r) = \choice{l}{S_l}{l\in E}$,
  then:
  \begin{enumerate}
  \item $r$ is of the form $r'.f$
  \item there exists $f'$ such that $\Gamma(r'.f') = \linktype{}{f}$
    and $h(r'.f') \in E$.
  \end{enumerate}
\end{lemma}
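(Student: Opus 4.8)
The plan is to prove both parts simultaneously by induction on the derivation of \typedheap\Gamma h, whose last rule is \THempty, \THadd, or \THide. First I would record a preliminary fact that yields part~(1): a top-level object identifier never carries a variant session type in a well-typed heap environment. An identifier $o$ receives a session type in the environment only through \THide, from a premise \typedheap{\Gamma, o : \objecttype C F}{h} with \typedsess F C S; were $S$ a variant $\choice{l}{S_l}{l\in E}$, clause~(2) of Definition~\ref{def:sessionfield} would force $F$ to be a variant field typing, contradicting Lemma~\ref{lem:heapfield}. Hence any $r$ with $\Gamma(r) = \choice{l}{S_l}{l\in E}$ must descend into at least one field, i.e.\ $r = r'.f$, which is part~(1).

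For part~(2) the base case \THempty is vacuous. In an \THide step from \typedheap{\Gamma, o:\objecttype C F}{h} to \typedheap{\Gamma, o:S}{h}, a path carrying a variant session type cannot involve $o$: the length-one path $o$ is excluded by the preliminary fact, and a longer path cannot descend through $o$ since $o$ now has a session type, into which the environment lookup does not descend. The path therefore lies in the unchanged part of the environment, on which the two environments agree, and the heap is untouched, so the induction hypothesis applies verbatim.

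The substantive case is \THadd, which types the sequence $\seq{\seq{\swap{f_1}{v_1}}{\ldots}}{\swap{f_n}{v_n}}$ starting from the environment $\Gamma, o : \objecttype C{\iset{\nulltype\,f_i}{i}}$ with current object $o$, producing a final environment $\Gamma'$, and adds $o = \hentry C {\iset{f_i = v_i}{i}}$ to the heap. Suppose $\Gamma'(r'.f) = \choice{l}{S_l}{l\in E}$ with $r' = o$, so that $f$ is a \emph{direct} field of $o$. I would argue that \TvarS is the only way such a field becomes a variant: since each $v_i$ is $\nullterm$, a label, or an object identifier, a swap deposits into a field only $\nulltype$, an enumerated type, a link type, a branch session type, or an object type $\objecttype{C_i}{F_i}$ (using the preliminary fact to rule out a variant type for $\Gamma(o_i)$); moreover neither \Tsub nor \TsubEnv changes the branch/variant kind of a session type, and \TvarF is excluded exactly as in Lemma~\ref{lem:heapfield}. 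But \TvarS, when it turns $f$ into a variant with label set $E$, simultaneously retypes the label being swapped as $\linktype{}{f}$ and requires it to have enumerated type $E$; that label is deposited into the sibling field $f'$ currently being swapped. Hence $\Gamma'(o.f') = \linktype{}{f}$ and $h'(o.f') = l \in E$, which is exactly the conclusion with $r' = o$. (Should a later \TsubEnv widen the variant's label set to $E'' \supseteq E$, the stored label still lies in $E''$, as variant subtyping only enlarges the label set.)

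Finally I would treat a variant occurring \emph{deeper} in $o$, or outside it. If $r'.f$ starts at an identifier $o'\neq o$ not consumed by this step, its environment entry and heap content are inherited unchanged, so the induction hypothesis applies directly. If $r'.f$ descends into $o$ through a first field $g_1$, then $g_1$ must hold a consumed identifier $o_{g_1}$ whose internal type $\Gamma(o_{g_1}) = \objecttype{C_{g_1}}{F_{g_1}}$ is grafted unchanged under $o.g_1$ (the swaps modify only direct fields of $o$); the variant then lives inside $F_{g_1}$, the induction hypothesis supplies the matching link field and valid label within $o_{g_1}$, and, because heaps are flat so that $o_{g_1}$ remains a top-level object of $h'$ reachable via $o.g_1$, both the environment types and the heap values transfer verbatim along the path with $o_{g_1}$ replaced by $o.g_1$. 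The hard part will be exactly this \THadd analysis: pinning down \TvarS as the sole source of variant session types on direct fields, and checking that grafting a consumed object preserves both the link/variant pairing and the environment--heap correspondence along re-based paths, all while allowing subsumption to be interleaved throughout the swap-sequence derivation.
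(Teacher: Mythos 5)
Your proof is correct and follows essentially the same route as the paper's: both arguments pin down \TvarS{} in the swap sequence of \THadd{} as the only possible source of a variant session type (ruling out \THide{} via Lemma~\ref{lem:heapfield}), derive part~(1) from the fact that \TvarS{} acts on a field of the current object, and derive part~(2) from the label subsequently being swapped into a sibling field, with \TsubEnv{} only able to enlarge the label set. The paper states this as a direct structural observation about the derivation, whereas you wrap it in an explicit induction and spell out the grafting of consumed objects; the content is the same.
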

\begin{proof}
  Consider how a variant session type can be introduced in the derivation of
  \typedheap\Gamma h. Because of Lemma \ref{lem:heapfield}, it cannot
  be a consequence of \textsc{T-Hide}: indeed, \typedsess F C S where
  $S$ is a variant can only hold if $F$ is a variant as well. Thus the only
  possibility is \textsc{T-VarS}, and it can only occur when typing
  one of the values in the right premise of \textsc{T-Hadd}. (1)
  follows from the fact that \textsc{T-VarS} acts on a field of the
  current object. Then \textsc{T-SubEnv} can be applied but the
  original label \textsc{T-VarS} was applied to is still in the final $E$.
  (2) follows from the structure of the derivation: the label 
  \textsc{T-VarS} is applied to is then swapped
  into a field of the same object.
\end{proof}

\begin{definition}[Actual session type]
  Let $\Gamma$ and $h$ be such that \typedheap\Gamma h. For any $r$ in
  $\Gamma$ such that $\Gamma(r)$ is a session type $S$, we define
  $S'$, the \emph{actual session type} of $r$ in $h$ according to
  $\Gamma$, as follows:
  \begin{itemize}
  \item If $S$ is a branch then $S' = S$.
  \item If $S$ is a variant $\choice{l}{S_l}{l\in E}$, then $S' =
    S_{h(r'.f')}$, where $r'$ and $f'$ are as given by
    Lemma~\ref{lem:choiceconsistency}.
  \end{itemize}
\end{definition}

\begin{definition}[Consistency of call traces]
  Let $\tr$ be a call trace mapping for a heap $h$ and let $\Gamma$ be
  a type environment such that \typedheap\Gamma h. We say that $\tr$
  is \emph{consistent with} $\Gamma$ if for every $r$ in $\Gamma$ with
  actual session type $S$ we have
  $\labtransstar{\classterm(h(r)).\sessterm}{\tr(h,r)}{S}.$
\end{definition}

% In the distributed language, call traces can also be defined for
% channel endpoints. Because of the translation from channel session
% types to class session types, these call traces correspond to the
% sequence and type of messages. We therefore have:
%
\begin{theorem}[Conformance]
  \label{thm:conf}
  Suppose we are in a context parameterised by a set of well-typed
  declarations.

  Let $\state{h_1}{r_1}{e_1}$ be a program state together with a valid
  call trace mapping $\tr_1$, and suppose that
  $\state{h_1}{r_1}{e_1}\reduces\cdots\reduces\state{h_n}{r_n}{e_n}$
  is a reduction sequence such that $r_1$ is a prefix of all $r_i$.
  Definition~\ref{def:calltraceextension} gives a corresponding
  sequence of call traces $\tr_i$.

  If there exists $\Gamma$ such that $\tr_1$ is consistent with
  $\Gamma$ and \judgment \envref{\Gamma}{r_1} > \state{h_1}{r_1}{e_1}
  : T < \envref{\Gamma'}{r'}/ then for all $i$, $\tr_i$ is valid.
\end{theorem}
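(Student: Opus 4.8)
The plan is to strengthen the statement and prove it by induction on the length of the reduction sequence, carrying as invariant not merely the validity of $\tr_i$ but its \emph{consistency} with a type environment $\Gamma_i$ that witnesses the typability of $\state{h_i}{r_i}{e_i}$. Consistency is the right invariant because, for any reference whose type in $\Gamma_i$ is a session type, it records exactly the actual session type the trace reduces to, and hence implies validity for that reference; it is also strong enough to survive reduction, whereas bare validity is not. The point is \Rcall: the type system guarantees that the callee's session type admits the method being called, and it is consistency --- the fact that the trace has already reduced to \emph{exactly} that type --- that makes appending the method name a legal transition, something a weaker validity invariant would not deliver. The base case is the hypothesis. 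For the inductive step I would apply Subject Reduction (Theorem~\ref{thm:subjectreductionSeq}) to turn $\Gamma_i$ into an environment $\Gamma_{i+1}$ typing the successor state with the same right-hand side, and then verify that the extension $\tr_{i+1}$ of Definition~\ref{def:calltraceextension} is consistent with $\Gamma_{i+1}$.

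The core is a case analysis on the \emph{original reduction rule}, the guiding idea being that each prescribed modification of the trace mirrors a single transition of the session-type LTS, with the matching change of type in $\Gamma_{i+1}$ supplied by Subject Reduction. For \Rcall\ with method $m$ on field $f$, the callee carries a branch session type admitting $m$, so appending $m$ follows the branch transition to the subsequent type, which is what $\Gamma_{i+1}$ assigns. For \Rreturn\ returning a label $l$, the type system forces the method's subsequent session type to be a variant (the remark following Definition~\ref{def:sessionfield}), so appending $l$ follows the variant transition; Lemma~\ref{lem:choiceconsistency} locates the link field holding $l$ and thereby certifies that the \emph{actual} session type after the step is precisely the $l$-indexed component, matching the extended trace. \Rnew\ introduces a fresh object with the empty trace and its declared initial session type, which is immediately consistent. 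The rules \Rseq, \Rswitch, \Rswap\ and \Rcontext\ do not alter any trace, so here the task is only to check that consistency survives the type changes they induce --- notably that the \Rswap\ storing a returned label into a field (together with \TvarS) opens a branch type into a variant while the \Rswitch\ later collapses it, with Lemma~\ref{lem:choiceconsistency} keeping the actual-session-type reading aligned with the frozen trace throughout.

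I expect the principal obstacle to be the reconciliation of variants and $\linkterm$ types, because the trace commits to a label transition at the instant of \Rreturn, whereas the type system may keep the corresponding variant unresolved and merely carry a $\linktype{}{f}$ entry for an arbitrary stretch of subsequent reduction, resolving it only at a much later $\switchterm$. The device that makes these agree is exactly the notion of \emph{actual session type}, which reads a variant through its runtime link via Lemma~\ref{lem:choiceconsistency}, and the delicate part of the induction is showing that this reading is stable under the intervening \Rswap\ and \Rswitch\ steps and that, at \Rreturn\ itself, the link promised by Lemma~\ref{lem:choiceconsistency} already carries the returned label. A second ingredient, needed both for the validity of such objects and to recover consistency when they return, concerns objects whose method is mid-execution: these carry an internal type $\objecttype C F$ in $\Gamma_i$ and so lie outside the quantification of the consistency condition. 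I would track separately that such an object's trace was fixed by the \Rcall\ that made it current --- where it already reduced to the post-call session type --- and is untouched (only its fields change) until the matching \Rreturn, so that at \Rreturn\ the trace still reduces to that type and consistency can be re-established. The hypothesis that $r_1$ is a prefix of every $r_i$ guarantees that the frame of $r_1$ and of its ancestors is never popped within the sequence, so those traces remain frozen and valid; combining this with consistency for the session-typed references yields validity of every $\tr_i$, as required.
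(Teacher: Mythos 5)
Your proposal is correct and follows essentially the same route as the paper's proof: strengthening the statement to consistency of $\tr_i$ with a typing environment $\Gamma_i$, inducting on the reduction sequence with a case analysis on the original reduction rule, using the actual-session-type reading via Lemma~\ref{lem:choiceconsistency} to reconcile variants with $\linkterm$ types across \Rswap/\Rswitch, and handling mid-execution objects (whose internal type $\objecttype{C}{F}$ escapes the consistency quantification) by freezing their traces between the matching \Rcall\ and \Rreturn, located using the prefix condition on current-object paths. The only detail worth noting is that a returned label need not be followed by a variant session type (only $\linkthis$ return types force this); the LTS self-loop rule $\labtrans{S}{l}{S}$ for non-variant $S$ absorbs the appended label in the branch case, which your invariant accommodates without change.
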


\begin{proofnoqed}
  Postponed, again, to Section \ref{sec:results} as it makes use of
  the proof of Theorem \ref{thm:threadsr} which will be proved there.
\end{proofnoqed}

\begin{corollary}
  Given a well-typed program, starting from the initial state
  described at the end of Section~\ref{sec:reduction} with
  the initial call trace mapping $\{\mkterm{top} \mapsto m\}$,
  and given a reduction sequence from there,
  the call trace mappings obtained by
  Definition \ref{def:calltraceextension} following the reductions
  are valid throughout the sequence.
\end{corollary}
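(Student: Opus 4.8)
The plan is to obtain the corollary as a direct instance of Theorem~\ref{thm:conf}, instantiating $\state{h_1}{r_1}{e_1}$ with the initial state $\state{\mkterm{top}=\hentry{C}{C.\fieldsterm=\vec\nullterm}}{\mkterm{top}}{e}$ of Section~\ref{sec:reduction} and $\tr_1$ with $\{\mkterm{top}\mapsto m\}$. Since the conclusion of that theorem is exactly the validity of every $\tr_i$, the whole argument amounts to discharging its four hypotheses: typability of the initial state, validity of $\tr_1$, consistency of $\tr_1$ with the typing environment $\Gamma$, and the requirement that $r_1=\mkterm{top}$ be a prefix of every $r_i$ appearing in the reduction sequence.

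First I would reuse, verbatim, the construction of Section~\ref{sec:initial-typing}. It already yields $\typedheap{\mkterm{top}:\objecttype{C}{\overrightarrow\nulltype\,\vec f}}{h_1}$ and, through \Tstate, a state judgement \judgment\envref{\Gamma}{\mkterm{top}}>\state{h_1}{\mkterm{top}}{e}:T<\envref{\Gamma'}{r'}/ with $\Gamma=\mkterm{top}:\objecttype{C}{\overrightarrow\nulltype\,\vec f}$ (using the standing assumption $e\subs{\nullterm}{x}=e$, which removes the parameter). This supplies the typing hypothesis of Theorem~\ref{thm:conf} and, just as importantly, pins down the environment $\Gamma$ against which consistency is to be checked.

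Next I would dispatch the two trace conditions. For validity of $\tr_1$ it suffices, since the only heap entry is $\mkterm{top}=\hentry{C}{\ldots}$ and $\tr_1(\mkterm{top})=m$, to exhibit $\labtransstar{C.\sessterm}{m}{}$; this is immediate because the designated main method $m$ is required to be immediately available in $C.\sessterm$, so the latter unfolds to a branch offering $m$ and the LTS on session types admits a single $m$-step. For consistency of $\tr_1$ with $\Gamma$, the key observation is that $\Gamma$ contains exactly one reference, $\mkterm{top}$, whose type is the \emph{object} type $\objecttype{C}{\overrightarrow\nulltype\,\vec f}$ rather than a session type --- necessarily so, since $\envref{\Gamma}{\mkterm{top}}$ is only meaningful when $\Gamma(\mkterm{top})$ has the form $\objecttype C F$. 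As the consistency requirement quantifies only over references of $\Gamma$ that carry a session type, and there are none, it holds vacuously.

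The remaining and least automatic step is the prefix condition. I would show that $\mkterm{top}$ is a prefix of the current-object path throughout any reduction issuing from the initial state, by observing that this path behaves like a stack: among the reduction rules only \Rcall\ lengthens it (replacing $r$ by $r.f$), only \Rreturn\ shortens it (replacing $r.f$ by $r$), every other rule leaves it unchanged, and \Rreturn\ is simply inapplicable when the path is a bare object identifier. Since the path starts as the single identifier $\mkterm{top}$, no reduction can ever strip $\mkterm{top}$ itself, so it persists as a prefix of every $r_i$. With all four hypotheses in hand, Theorem~\ref{thm:conf} delivers validity of each $\tr_i$, which is precisely the statement of the corollary. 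I expect this stack-discipline observation to be the only point requiring genuine care, since it is the sole hypothesis not handed to us word-for-word by an earlier section; the other three reduce to citing Section~\ref{sec:initial-typing} and unwinding the relevant definitions.
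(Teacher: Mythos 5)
Your proposal is correct and follows essentially the same route as the paper's own proof: both discharge the hypotheses of Theorem~\ref{thm:conf} by citing the initial-state typing of Section~\ref{sec:initial-typing}, observing that validity of $\{\mkterm{top}\mapsto m\}$ follows from $m$ being available in $C.\sessterm$, that consistency holds vacuously because the initial $\Gamma$ contains no session type, and that the prefix condition holds because the initial path is a bare object identifier. Your elaboration of the prefix condition via the stack discipline of \Rcall/\Rreturn is slightly more explicit than the paper's one-line remark, but it is the same argument.
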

\begin{proofnoqed}
  We just have to see that:
  \begin{enumerate}
  \item the initial call trace mapping is
    valid, as the main method $m$ is required to
    appear in the initial session type of the main class;
  \item  it is also
    consistent
    with the initial typing given in Section \ref{sec:initial-typing},
    as the initial $\Gamma$ contains no session type;
  \item the
    initial current object path is reduced to an object identifier and,
    therefore, stays a prefix of the current object path throughout any
    reduction sequence.
\qed
  \end{enumerate}
\end{proofnoqed}

%%% Local Variables: 
%%% mode: latex
%%% TeX-master: "main"
%%% End: 

%%% Local Variables: 
%%% mode: latex
%%% TeX-master: "main"
%%% End: 

% !TEX root = main.tex

\section{Towards a Full Programming Language}
\label{sec:seq-extensions}

\comment{NG: updated 9.1.15}
\comment{SG/VV: updated 13.1.15}

In this section, we show how the core calculus presented in the
previous section can be extended towards a full programming language.
The extensions include constructs which can be considered abbreviations and
may be translated into the core calculus without changing it, and
actual extensions to the formal system.

%\subsection{Extending the Core Language with Common Idioms}
\subsection{Assignment and Field Access}
As explained in the introduction of Section~\ref{sec:core-lang}, we add to expressions:\begin{itemize}
\item the field access expression $f$ (not followed by a dot or by
  $\leftrightarrow$), which translates into the core expression $\swap
  f \nullterm$. This expression evaluates to the content of $f$ and
  has the side effect of setting $f$ to $\nullterm$;
\item the assignment expression $f = e$, which translates into the
  core expression $\seq{\swap f e}{\nullterm}$.
  This expression stores the value of $e$ in field $f$ and evaluates
  to $\nullterm$.
\end{itemize}
\subsection{Multiple Parameters}
%A method with several parameters can be translated into a method
%taking as parameter a single object with getter methods giving access
%to the different parameters. To create such an object, a class with
%fields, getter and setter methods corresponding to these parameters
%can be defined, together with a session type requiring all setter
%methods to be called first in some order.
It is straightforward to generalise the reduction and typing rules so that methods have multiple parameters. In rule $\Tcall$, the environments would be threaded through a series of parameter expressions, in the same way as in rule $\Tseq$.

\subsection{Local Variables}
Local variables can be simulated by introducing extra parameters.

%\subsection{Enumerated types}

%\subsection{Extensions to the Core Language}
\subsection{While Loops}
\label{subsec:while-loops}

The language can easily be extended to include $\whileterm$ loops, by
adding the rules in Figure~\ref{fig:while}.  The reduction rule
defines $\whileterm$ recursively in terms of $\switchterm$. There are
two typing rules, derived from $\Tswitch$ and $\TswitchLink$. The
first deals with a straightforward $\whileterm$ loop that has no
interaction with session types, and the second deals with the more
interesting case in which the condition of the loop is linked to the
session type of an object.

\begin{figure}
Top-level syntax (add to Figure~\ref{fig:syntax}):
\[
e \bnf\ \dots \alt \while{e}{e}
\]

Reduction rule (add to Figure~\ref{fig:reduction}):
\begin{center}
\axiomname{R-While}
$
\state{h}{r}{\while{e}{e'}}\reduces
\state{h}{r}{\switchterm~(e)~\{\true:\seq{e'}{\while{e}{e'}}, \false:\nullterm\}}
$
\end{center}

Top-level typing rules (add to Figure~\ref{fig:typingexpr}):
\begin{center}
\AxiomC{\judgment \envref\Gamma r > e : \{\true,\false\} < \envref{\Gamma'}{r'} /}
\AxiomC{\judgment \envref{\Gamma'}{r'} > {e'} : \nulltype < \envref{\Gamma}{r} /}
\rulename{T-While}
\BinaryInfC{\judgment \envref\Gamma r > \while{e}{e'} : \nulltype
< \envref{\Gamma'}{r'} /}
\DisplayProof\hfil
\end{center}
\begin{center}
\let\oldvskip\extraVskip
\def\extraVskip{0.5pt} % removes space between the two lines of premises
\AxiomC{\judgment \envref\Gamma r > e : \linktype{}{f} <
  \envref{\Gamma'}{r'} /}
\AxiomC{$\Gamma'(r'.f) = \langle \true:S_\true, \false:S_\false \rangle$}
\noLine
\BinaryInfC{\judgment\envref{\changetype{\Gamma'}{r'.f}{S_\true}}{r'}
  > e' : \nulltype < \envref{\Gamma}{r} /}
\let\extraVskip\oldvskip
\rulename{T-WhileLink}
\UnaryInfC{\judgment \envref\Gamma r > \while{e}{e'} : \nulltype < \envref{\changetype{\Gamma'}{r'.f}{S_\false}}{r'} /}
\DisplayProof\hfil
\end{center}
\caption{Rules for \lstinline|while|}
\label{fig:while}
\end{figure}
%%% Local Variables: 
%%% mode: latex
%%% TeX-master: "extended"
%%% End: 

\begin{figure}
Top-level syntax (add to Figure~\ref{fig:syntax}) :
\begin{align*}
M \bnf\ & \ldots\alt\annotmethod{F}{F}{T}{m}{T~x}{e} \\
e \bnf\  & \dots\alt\selfcal{m}{e}
\end{align*}

Reduction rule (add to Figure~\ref{fig:reduction}):
\begin{center}
\AxiomC{$\annotmethod{\_}{\_}{\_}{m}{\_\, x}{e} \in h(r).\classterm$}
\rulename{R-SelfCall}
\UnaryInfC{$\state{h}{r}{\selfcal{m}{v}} \reduces
  \state{h}{r}{e\subs{v}{x}}$}
\DisplayProof
\end{center}

Top-level typing rules (add or replace in Figure~\ref{fig:typingexpr}):
\begin{center}
\AxiomC{$\judgmentc \envref\Gamma r > e : T < \envref{\Gamma'}{r'} /$}
\AxiomC{$\Gamma'(r') = \objecttype C F$}
\AxiomC{$\annotmethod{F}{F'}{T'}{m}{T~x}{e'}\in C$}
\rulename{T-SelfCall}
\TrinaryInfC{\judgmentc \envref\Gamma r > \selfcal{m}{e} :
  T' < \envref{\changetype{\Gamma'}{r'}{\objecttype{C}{F'}}}{r'} /}
\DisplayProof\hfil
\end{center}
% \begin{center}
% \AxiomC{$\judgmentc F;\Gamma > e : T < F';\Gamma'/$}
% %\AxiomC{$\Gamma'(r) = \objecttype C F$}
% %\noLine
% \AxiomC{$\annotmethod{F'}{F''}{T'}{m}{T~x}{e}\in C$}
% \rulename{T-SelfCall}
% \BinaryInfC{\judgmentc F;\Gamma > \selfcal{m}{e} : T <
%   F'';\Gamma' /}
% \DisplayProof
% \end{center}
\begin{center}
  \AxiomC{$\judgmentc \envref{\this:\objecttype{C}{F}, x:T'}{\this} > e : T <
    \envref{\this:\objecttype{C}{F'},x:T''}{\this} /$}
  \AxiomC{$F'\not=\langle\_\rangle$}
  \rulename{T-AnnotMeth}
  \BinaryInfC{$\typedmeth{C}{\annotmethod{F}{F'}{T}{m}{T'~x}{e}}$}
  \DisplayProof
\end{center}
\begin{center}
\AxiomC{$\typedsess{\overrightarrow\nulltype\,\vec f}{C}{S}$}
\AxiomC{$\forall M\in\vec{M}.~(M~\text{has $\reqterm/\ensterm$} \Rightarrow~\typedmeth{C}{M})$}
\rulename{T-Class}
\BinaryInfC{$\vdash\class{C}{S}
  {\vec f}
  {\vec M}$}
\DisplayProof
\end{center}

%Internal typing rule (add to Figure~\ref{fig:typingfull}):

% Assume that parameters have already been added.

% Self-calls, including recursive calls, do not check or advance the
% session type. A method that is self-called must have a
% $\reqterm/\ensterm$ annotation. A self-called method can also appear
% in the session type, and then it can also be called from
% outside. Calls from outside check and advance the session type.

% If a method is in the session type then its body is checked by the
% first hypothesis of \textsc{T-Class}, but the annotations (if present)
% are ignored except when they are needed to check recursive calls.

% If a method has an annotation then its body is checked by the second
% hypothesis of \textsc{T-Class}.

% If both conditions apply then the body is checked twice. An
% implementation could optimize this.

% Note that the annotations are only used when checking self-calls. If
% an annotated method is also called from outside, then the $\reqterm$
% typing is not necessarily the same as the initial field typing found
% when checking the session type via the first hypothesis of
% \textsc{T-Class}.

% An annotated method cannot have a $\linkterm$ type. There may be
% several reasons for this, but certainly it would produce a $\linkterm$
% to an object with a variant field typing (not a variant session type),
% which cannot be analyzed by $\switchterm$. Maybe this can be
% generalized later.
\caption{Rules for recursive methods and other self-calls}
\label{fig:self-calls}
\end{figure}

%%% Local Variables: 
%%% mode: latex
%%% TeX-master: "main"
%%% End: 

\subsection{Self-Calls and Recursive Methods}
\label{subsec:self-calls}

The rules in Figure~\ref{fig:self-calls} extend the language to
include self-calls (method calls on $\this$). This extension also
supports recursive calls, which are necessarily self-calls. Self-calls
do not check or advance the session type, and a method that is only
self-called does not appear in the session type. A method that is
self-called and called from outside appears in the session type, and
calls from outside do check and advance the session type. The reason
why it is safe to not check the session type for self-calls is that
the effect of the self-call on the field typing is included in the
effect of the method that calls it. All of the necessary checking of
session types is done because of the original outside call that
eventually leads to the self-call.

Because they are not in the session type, self-called methods must be
explicitly annotated with their initial ($\reqterm$) and final
($\ensterm$) field typings. The annotations are used to type
self-calls ($\TselfCall$) and method definitions ($\TannotMeth$). The
result type and parameter type are also specified as part of the
method definition, again because the method is not in the session type.

If a method is in the session type then its body is checked by the
first hypothesis of \textsc{T-Class}, but the annotations (if present)
are ignored except when they are needed to check recursive calls.  If
a method has an annotation then its body is checked by the second
hypothesis of \textsc{T-Class}.  If both conditions apply then the
body is checked twice. An implementation could optimize this.

An annotated method cannot produce a variant field typing or have a
$\linkterm$ type, because $\TswitchLink$ can only analyze a variant
session type, not a variant field typing. 

% Discussed in the example - vasco
% \paragraph{Extension: Inheritance.}
% Previous work~\cite{GaySJ:dyni} includes standard class-based
% inheritance by taking advantage of session subtyping: subclasses
% present more methods, enumeration refinement restricts the number of
% constants present in the enumeration, which closely follows subtyping
% for session types.

\subsection{Shared Types and Base Types}
\label{subsec:shared-types}

The formal language described in this paper has a very strict linear
type system. It is straightforward to add non-linear classes as an
orthogonal extension: they would not have session types and their
instances would be shared objects, treated in a completely standard
way.
% Our prototype implementation, described in
% Section~\ref{sec:implementation}, includes them, but
Including them in
the formalisation, however, would only complicate the typing rules.

More interesting, and more challenging, is the possibility of
introducing a more refined approach to aliasing and ownership, for
example along the lines of the systems discussed in
Section~\ref{sec:related}. We intend to investigate this in the
future.

Base types such as \lstinline|int| are also straightforward to add,
and would be treated non-linearly. 

\subsection{Inheritance}
\label{sec:nominal}

The formal language uses a structural type system in which class
names are only used in order to obtain their session types; method
availability is determined solely by the session type, and method
signatures are also in the session type. In particular, the subtyping
relation is purely structural and makes no reference to class
names. It is straightforward to adapt the language to include features
associated with nominal subtyping, such as an explicitly declared
inheritance hierarchy for classes with inheritance and overriding of
method definitions. In this case, if class $C$ is declared to inherit
from class $D$, and both define session types (alternatively, $C$
might inherit its session type from $D$), then the condition
$C.\sessterm \subt D.\sessterm$ would be required in order for the
definition of $C$ to be accepted.

%%% Local Variables: 
%%% mode: latex
%%% TeX-master: "main"
%%% End: 

\section{A Distributed Example}
\label{sec:distributed-example}
\comment{SG: updated 28.3.2011}

We now present an example of a distributed system, illustrating the
way in which our language unifies session-typed channels and more
general typestate. Recall that our programming model is based on
communication over TCP/IP-style socket connections, which we refer to as
channels. The scenario is a file server, which clients can communicate
with via a channel. The file server uses a local file, represented by
a \lstinline|File| object as defined in Section~\ref{sec:seq-example},
and responds to requests such as \lstinline|OPEN| and
\lstinline|HASNEXT| on the channel. On the client side, the remote
file is represented by an object of class \lstinline|RemoteFile|,
whose interface is similar to \lstinline|File|. In this ``stub''
object, methods such as \lstinline|open| are implemented by
communicating with the file server.

The channel between the client and the server has a session type in
the standard sense \cite{HondaK:intblt}, which defines a communication
protocol. In our language, each endpoint of the channel is represented
by an object of class \lstinline|Chan|, with a class session type
derived from the channel session type. This class session type also
expresses the definition of the communication protocol, by specifying
when the methods \lstinline|send| and \lstinline|receive| are
available.

For the purpose of this example, we imagine that the communication
protocol (channel session type) is defined by the provider of the file
server, while the class session type of \lstinline|RemoteFile| is
defined by the implementor of a file system API. We therefore present
two versions of the example: one in which the channel session type,
and the class session type of \lstinline|RemoteFile|, have the same
structure; and one in which they have different structures.

\subsection{Distributed Example Version 1} 
\begin{figure}
\begin{lstlisting}
FileReadCh = &{OPEN: ?String.+{OK: OpenCh, ERROR: FileReadCh}, QUIT: End}
OpenCh = &{HASNEXT: +{TRUE: CanReadCh, FALSE: MustCloseCh}, CLOSE: FileReadCh}
MustCloseCh = &{CLOSE: FileReadCh}
CanReadCh = &{READ: !String.OpenCh, CLOSE: FileReadCh }
\end{lstlisting}
\caption{Remote file server version 1: channel session type (server side)}
\label{fig:remotefileserver1a}
\end{figure}

%%% Local Variables: 
%%% mode: latex
%%% TeX-master: "main"
%%% End: 

%\input{fig-remotefileserver1-c}
%\input{fig-remotefileserver1-d}
\begin{figure}
\begin{lstlisting}
FileRead_cl = {Null send({OPEN}): {Null send(String): {{OK, ERROR} receive():
                                                         <OK: Open_cl, 
                                                          ERROR: FileRead_cl>}},
               Null send({QUIT}): {}}
where
  Open_cl = {Null send({HASNEXT}): {{TRUE, FALSE} receive(): 
                                      <TRUE: CanRead_cl, FALSE: MustClose_cl>},
             Null send({CLOSE}): FileRead_cl}
  MustClose_cl = {Null send({CLOSE}): FileRead_cl}
  CanRead_cl = {Null send({READ}): {String receive(): Open_cl},
                Null send({CLOSE}): FileRead_cl}
\end{lstlisting}
~\\~\\
\begin{lstlisting}
FileRead_s = {{OPEN, QUIT} receive(): <OPEN: {String receive():
                                                {Null send({OK}): Open_s,
                                                 Null send({ERROR}): FileRead_s}, 
                                       QUIT: {}>}
where
  Open_s = {{HASNEXT, CLOSE} receive():
              <HASNEXT: {Null send({TRUE}): CanRead_s,
                         Null send({FALSE}): MustClose_s},
               CLOSE: FileRead_s>}
  MustClose_s = {{CLOSE} receive(): <CLOSE: FileRead_s>}
  CanRead_s = {{READ, CLOSE} receive(): <READ: {Null send(String): Open_s},
                                         CLOSE: FileRead_s>}
\end{lstlisting}
\caption{Remote file server version 1: client and server class session types
  generated from channel session type \lstinline|FileReadCh|}
\label{fig:remotefileserver1cd}
\end{figure}

%%% Local Variables: 
%%% mode: latex
%%% TeX-master: "main"
%%% End: 
 \begin{figure}
\begin{lstlisting}
class RemoteFile {
 session {connect: Init}
  where Init = {open: <OK: Open, ERROR: Init>}
        Open = {hasNext: <TRUE: Read, FALSE: Close>, close: Init }
        Read = {read: Open, close: Init }
        Close = {close: Init }

  channel;

  Null connect(<FileReadCh> server) {
    channel = server.request();
  }
  {OK,ERROR} open(String name) {
    channel.send(OPEN);
    channel.send(name);
    switch (channel.receive()) {
      OK: OK;
      ERROR: ERROR;
   }
  }
  {TRUE,FALSE} hasNext () {
    channel.send(HAS_NEXT);
    switch (channel.receive()) {
      TRUE: TRUE;
      FALSE: FALSE;
    }
  }
  String read() {
    channel.send(READ);
    channel.receive();
  }
  Null close() { 
    channel.send(CLOSE);
  }
}
\end{lstlisting}
\caption{Remote file server version 1: client side stub}
\label{fig:remotefileserver1b}
\end{figure}

%%% Local Variables: 
%%% mode: latex
%%% TeX-master: "main"
%%% End: 

\begin{figure}
\begin{lstlisting}
class FileServer {
  session { Null main(<FileReadCh> port): {} }

  channel; file;

  Null main(<FileRead_c> port) {
    file = new File();
    channel = port.accept();
    fileRead();
  }
  req FileRead_s channel, Init file
  ens {} channel, Init file
  Null fileRead() {
    switch (channel.receive()) {
      OPEN:
        switch (file.open(channel.receive())) {
          OK: open();
          ERROR: fileRead();
       }
      QUIT: null;
    }
  }
  req Open_s channel, Open file
  ens {} channel, Init file
  Null open() {
    switch (channel.receive()) {
      HASNEXT:
        switch (file.hasNext()) {
          TRUE: channel.send(TRUE); canRead();
          FALSE: channel.send(FALSE); mustClose();
        }
      CLOSE: file.close(); fileRead();
  }
  req MustClose_s channel, Close file
  ens {} channel, Init file
  Null mustClose() {
    switch (channel.receive()) {
      CLOSE: file.close(); fileRead();
    }
  }
  req CanRead_s channel, Read file
  ens {} channel, Init file
  Null canRead() {
    switch (channel.receive()) {
      READ: channel.send(file.read()); open();
      CLOSE: file.close(); fileRead();
    }
  }
}
\end{lstlisting}
\caption{Remote file server version 1: server code}
\label{fig:remotefileserver1e}
\end{figure}

%%% Local Variables: 
%%% mode: latex
%%% TeX-master: "main"
%%% End: 
 Figure~\ref{fig:remotefileserver1a}
defines a channel session type for interaction between a file server
and a client. The type of the server's endpoint is shown, and the type
\lstinline|FileReadCh| is the starting point of the protocol. The type
constructor \lstinline|&| means that the server offers a choice, in
this case between \lstinline|OPEN| and \lstinline|QUIT|; the client
makes a choice by sending one of these labels. If \lstinline|OPEN| is
selected, the server receives (constructor \lstinline|?|) a
\lstinline|String| and then (the \lstinline|.| constructor means
sequencing) the constructor \lstinline|+| indicates that the server
can choose either \lstinline|OK| or \lstinline|ERROR| by sending the
appropriate label. The remaining definitions are read in the same way;
\lstinline|End| means termination of the protocol. The type
of the client's endpoint is dual, meaning that receive (\lstinline|?|)
and send (\lstinline|!|) are exchanged, as are offer (\lstinline|&|)
and select (\lstinline|+|). When the server offers a choice, the
client must make a choice, and \emph{vice versa}.

The structure of the channel session type is similar to that of the
class session type of \lstinline|File| from
Section~\ref{sec:seq-example}, in the sense that \lstinline|HASNEXT|
is used to discover whether or not data can be read.

We regard each endpoint of a channel as an object with
\lstinline|send| and \lstinline|receive| methods. For every channel
session type there is a corresponding class session type that
specifies the availability and signatures of \lstinline|send| and
\lstinline|receive|. The general translation is defined in
Section~\ref{sec:distributed}, Figure~\ref{fig:channelobjects}. For
the particular case of \lstinline|FileReadCh|, the client and server
class session types are as defined in
Figure~\ref{fig:remotefileserver1cd}: \lstinline|FileRead_cl| for the client and \lstinline|FileRead_s| for the server.
% the class session type
% of the server's endpoint is defined in
% Figure~\ref{fig:remotefileserver1c} and the class session type of the
% client's endpoint is defined in Figure~\ref{fig:remotefileserver1d}.

The requirement to make a choice (\lstinline|+|) in the channel
session type corresponds to availability of \lstinline|send| with a
range of signatures, each with a parameter type representing one of
the possible labels; here we are taking advantage of overloading,
disambiguated by parameter type. The requirement to offer a choice (\lstinline|&|) in the channel session type 
corresponds to availability of \lstinline|receive|, with the subsequent session
depending on the label that is received. Sending (\lstinline|!|) and receiving (\lstinline|?|) data in the channel session type
correspond straightforwardly to \lstinline|send| and
\lstinline|receive| with appropriate signatures.

Figure~\ref{fig:remotefileserver1b} defines the class
\lstinline|RemoteFile|, which acts as a local proxy for a remote file
server. Its interface is similar to that of the class \lstinline|File|
from Section~\ref{sec:seq-example}; the only difference is that
\lstinline|RemoteFile| has an additional method \lstinline|connect|,
which must be called in order to establish a connection to the file
server. The types \lstinline|RemoteFile.Init| and
\lstinline|File.Init| are equivalent
(Definition~\ref{def:typeequivalence}): each is a subtype of the
other, and they can be used interchangeably.

The methods of \lstinline|RemoteFile| are implemented by communicating
over a channel to a file server. The \lstinline|connect| method has a
parameter of type \lstinline|<FileReadCh>|. A value of this type
represents an access point, analogous to a URL, on which a connection
can be requested by calling the \lstinline|request| method (line 11);
the resulting channel endpoint has type \lstinline|FileRead_cl|.

The remaining methods communicate on the channel, and thus advance the
type of the field \lstinline|channel|. The similarity of structure
between the channel session type \lstinline|FileReadCh| and the class
session type \lstinline|Init| is reflected in the simple definitions
of the methods, which just copy information between their parameters
and results and the channel. There is one point of interest in
relation to the \lstinline|close| method. It occurs three times in the
class session type, and according to our type system, its body is type
checked once for each occurrence. Each time, the initial type
environment in which the body is checked has a different type for the
\lstinline|channel| field: \lstinline|Open_cl|,
\lstinline|MustClose_cl| or \lstinline|CanRead_cl|. Type checking is
successful because all of these types allow \lstinline|send({CLOSE})|.

Figure~\ref{fig:remotefileserver1e} defines the class
\lstinline|FileServer|, which accesses a local file system and uses
the server endpoint of a channel of type \lstinline|FileReadCh|. The
session type of this class contains the single method
\lstinline|main|, with a parameter of type
\lstinline|<FileReadCh>|. We imagine this \lstinline|main| method to
be the top-level entry point of a stand-alone application, with the
parameter value (the access point or URL for the server) being
provided when the application is launched. The server uses
\lstinline|accept| to listen for connection requests, and when a
connection is made, it obtains a channel endpoint of type
\lstinline|FileRead_s|.

% Figures moved slightly upward so that 20-21-22 appear in section 5.
% Figure 23 is still a bit into Section 6 but it seems difficult to do better...
\begin{figure}
\begin{lstlisting}
FileChannel = &{OPEN: ?String.+{OK: CanRead, ERROR: FileChannel}, QUIT: End}
CanRead = &{READ: +{EOF: FileChannel, DATA: !String.CanRead}, CLOSE: FileChannel}
\end{lstlisting}
\caption{Remote file server version 2: channel session type (server side)}
\label{fig:file-channel-type}
\end{figure}

%%% Local Variables: 
%%% mode: latex
%%% TeX-master: "main"
%%% End: 
 % was fig-remotefileserver2-a
\begin{figure}
\begin{lstlisting}
ClientCh = {Null send({OPEN}): {Null send(String): {{OK, ERROR} receive():
                                                      <OK: CanRead_cl,
                                                       ERROR: ClientCh>}},
            Null send({QUIT}): {}}
where CanRead_cl = {Null send({READ}): {{EOF, DATA} receive():
                                          <EOF: ClientCh,
                                           DATA:{String receive(): CanRead_cl}>},
                    Null send({CLOSE}): ClientCh}
\end{lstlisting}
\begin{lstlisting}
ServerCh = {{OPEN, QUIT} receive(): <OPEN: {String receive():
                                             {Null send({OK}): CanRead_cl,
                                              Null send({ERROR}): ServerCh}},
                                     QUIT: {}>}
where CanRead_cl = {{READ, CLOSE} receive():
                      <READ: {Null send({EOF}): ServerCh,
                              Null send({DATA)}:{Null send(String): CanRead_cl}},
                       CLOSE: ServerCh>}
\end{lstlisting}
\caption{Remote file server version 2: client and server
  class session types generated from channel session
  type \lstinline|FileChannel|}
\label{fig:client-server-file-classes}
\end{figure}

%%% Local Variables: 
%%% mode: latex
%%% TeX-master: "main"
%%% End: 
 % was fig-remotefileserver2-c, fig-remotefileserver2-d
\begin{figure}
\begin{lstlisting}
class RemoteFile {
  session {connect: Init} 
  where Init  = {open: <OK: Open, ERROR: Init>}
        Open  = {hasNext: <TRUE: Read, FALSE: Close>, close: Init}
        Read  = {read: Open, close: Init}
        Close = {close: Init}
 
  channel; state;
 
  Null connect(<FileChannel> c) {
    channel = c.request();
  }
  {OK,ERROR} open(String name) {
    channel.send(OPEN);
    channel.send(name);
    switch (channel.receive()) {
      OK: state = READ; OK;
      ERROR: ERROR;
    }
  }
  {TRUE,FALSE} hasNext() {
    channel.send(READ);
    switch (channel.receive()) {
      EOF: state = EOF; FALSE;
      DATA: state = DATA; TRUE;
    }
  }
  String read() {
    state = READ;
    channel.receive();
  }
  Null close() {
    switch (state) {
      EOF: null;
      READ: channel.send(CLOSE);
      DATA: channel.receive(); channel.send(CLOSE);
    }
  }
}
\end{lstlisting}
\caption{Remote file server version 2: client side stub}
\label{fig:remote-file}
\end{figure}

%%% Local Variables: 
%%% mode: latex
%%% TeX-master: "main"
%%% End: 
 % was remotefileserver2-b
\begin{figure}
\begin{lstlisting}
class FileServer {
  session {main: {}}

  channel; file;

  Null main(<FileChannel> c) {
    file = new File();
    channel = c.accept();
    serverCh();
  }
  req ServerCh channel, Init file
  ens {} channel, Init file
  Null serverCh() {
    switch (channel.receive()) {
      OPEN:
        switch (file.open(channel.receive())) {
          OK: canRead();
          ERROR: serverCh();
       }
      QUIT: null;
    }
  }
  req CanRead_s channel, Open file
  ens {} channel, Init file
  Null canRead() {
    switch (channel.receive()) {
      READ:
        switch (file.hasNext()) {
          TRUE: channel.send(DATA); channel.send(file.read());
                canRead();
          FALSE: channel.send(EOF); file.close(); serverCh();
       }
      CLOSE: file.close(); serverCh();
  }
}
\end{lstlisting}
\caption{Remote file server version 2: server code}
\label{fig:file-server}
\end{figure}

%%% Local Variables: 
%%% mode: latex
%%% TeX-master: "main"
%%% End: 
 % was fig-remotefileserver2-e
%\input{fig-remotefileserver2-b-bis}. - not typable

The remaining methods of \lstinline|FileServer| are mutually recursive
in a pattern that matches the structure of \lstinline|FileRead_s|. The
methods are self-called, and do not appear in the class session type;
instead, they are annotated with pre- and post-conditions on the types
of the fields \lstinline|channel| and \lstinline|file|. The direct
correspondence between the structure of the channel session type and
the class session type of \lstinline|File| is again reflected in the
code, for example on lines 29 and 30 where the result of calling
\lstinline|hasNext| on \lstinline|file| directly answers the
\lstinline|HASNEXT| query on \lstinline|channel|.

Most systems of session types support delegation, which is the ability
to send a channel as a message on another channel. It is indicated by
the occurrence of a session type as the type of the message in a send
(\lstinline|!|) or receive (\lstinline|?|) constructor. In our
language, delegation is realised by sending an object representing a
channel endpoint; it corresponds to a \lstinline|send| method with a
parameter of type, for example,
\lstinline|Chan[FileRead_cl]|. Transfer of channel endpoints from one
process to another is supported by the operational semantics in
Section~\ref{sec:distributed}.

\subsection{Distributed Example Version 2} 

This version has a different channel session type,
\lstinline|FileChannel|, defined in
Figure~\ref{fig:file-channel-type}, which does not match the class
session type \lstinline|FileRead|. The difference is that there is no
\lstinline|HASNEXT| option; instead, the \lstinline|READ| option is
always available. If there is no more data then \lstinline|EOF| is
returned in response to \lstinline|READ|; alternatively,
\lstinline|DATA| is returned, followed by the desired data. The
corresponding class session types for the client and server endpoints
are defined in Figure~\ref{fig:client-server-file-classes}.

The implementation of \lstinline|RemoteFile| must now mediate between
the different structures of the class session type
\lstinline|FileRead| and the channel session type
\lstinline|FileChannel|. The new definition is in
Figure~\ref{fig:remote-file}. The main point is that the
definition of the \lstinline|close| method must depend on the state of
the channel. For example, if \lstinline|close| is called immediately
after a call of \lstinline|hasNext| that returns \lstinline|TRUE|,
then the channel session type requires data to be read before
\lstinline|CLOSE| can be sent. We therefore introduce the field
\lstinline|state|, which stores a value of the enumerated type
\lstinline|{EOF, READ, DATA}|. This field represents the state of the
channel (equivalently, the session type of the \lstinline|channel|
field): \lstinline|EOF| corresponds to \lstinline|ClientCh|,
\lstinline|READ| corresponds to \lstinline|CanRead|, and
\lstinline|DATA| corresponds to the point after the \lstinline|DATA|
label in \lstinline|CanRead|. The definition of \lstinline|close|
contains a \lstinline|switch| on \lstinline|state|, with appropriate
behaviour for each possible value. It is also possible for
\lstinline|state| to be \lstinline|null|, but this only occurs before
\lstinline|open| has been called, and at this point \lstinline|close|
is not available.

In order to type check this example we take advantage of the fact that
the body of the close method is repeatedly checked, according to its
occurrence in the class session type. The value of \lstinline|state|
always corresponds to the state of \lstinline|channel|. This
correspondence is not represented in the type system --- that would
require some form of dependent type --- but whenever the body of
\lstinline|close| is type checked, the \emph{type} of
\lstinline|channel| is compatible with the \emph{value} of
\lstinline|state|, and so typechecking succeeds. More precisely, each
possible value of \lstinline|state| corresponds to a different
singleton type for \lstinline|state| (typing rule \Tlabel), and rule
\Tswitch\ only checks the branches that correspond to possible values
in the enumerated type of the condition. So each time the body of
\lstinline|close| is type checked, only \emph{one} branch (because the
type of \lstinline|state| is a singleton) of the
\lstinline|switch| is checked, corresponding to the value of
\lstinline|state| for that occurrence of \lstinline|close|. 
 
% Note from Nils: this works because state has actually different
% (singleton) types after every method. (For example after calling
% read() it has type {R}). In the body of the switch, only cases
% corresponding to labels which appear in the type of state are
% typechecked, which here means only one case because the type is a
% singleton (but not always the same, depending of the initial field
% typing).

%%% Local Variables: 
%%% mode: latex
%%% TeX-master: "main"
%%% End: 

% !TEX root = main.tex

\section{A Core Distributed Language}\label{sec:distributed}

\comment{SG: updated 29.3.2011}
\comment{NG: updated 8.4.2011}

We now define the core of the distributed language illustrated in
Section~\ref{sec:distributed-example}. For simplicity, communication
is synchronous. Formalising asynchronous communication is
well-understood (for example, Gay and Vasconcelos \cite{GaySJ:lintta}
define a functional language with similar communication primitives but
adds the complication of message buffers to the operational semantics). 

Our integration of channel session types and the typestate system of
this paper is based on binary session types \cite{HondaK:intblt}
(actually, we adopt the now standard constructs of session types).
It should be straightforward to adapt the technique to multi-party
session types \cite{HondaK:mulast}, because that system also depends
on specifying the sequence of send and receive operations on each
channel endpoint. %\textbf{Say more about this in a separate section?}

The only additions to the top-level language are access points and
their types $\access{T}$, channel session types and their translation
to class session types, and the $\spawnterm$ primitive. However, there
are significant changes to the internal language, in order to
introduce a layer of concurrently executing components that
communicate on channels.

\subsection{Syntax}

\begin{figure}
\begin{align*}
  \textrm{Declarations}&& D \bnf\ & \ldots \alt \mkterm{access}~\langle\Sigma\rangle~n\\
  \textrm{Values}&& v \bnf\ & \ldots \alt c^+ \alt c^- \alt n\\
  \textrm{Expressions}&& e \bnf\ & \ldots \alt \spawn{C.m(e)}\\
  \textrm{Contexts}&& \mathcal{E} \bnf\ & \ldots \alt \spawn{C.m(\mathcal{E})}\\
%  \textrm{Types}&& T \bnf\ & \ldots \alt \access{\Sigma}\\
% \textrm{Message types}&& B \bnf\ & \nulltype \alt \access\Sigma \alt
% \objecttype{\mathsf{Chan}}{S} \\
  \textrm{Channel session types}&& \Sigma \bnf\ & \End \alt
    X \alt \mu X.\Sigma \alt \rcv{T}\chanseq\Sigma \alt \send{T}\chanseq\Sigma\\
     && \alt &\chanbranch{l : \Sigma_l}_{l\in E}
    \alt \chanchoice{l : \Sigma_l}_{l\in E}\\
  \textrm{States}&& s \bnf\ &
    \ldots \alt s\parcomp s \alt (\nu c)\,s
\end{align*}
\caption{Additional syntax for channels and states}
\label{fig:concsyntax}
\end{figure}

%%% Local Variables: 
%%% mode: latex
%%% TeX-master: "main"
%%% End: 

Figure~\ref{fig:concsyntax} defines the new syntax. The types of
access points are top-level declarations. Of the new values, access
points $n$ can appear in top-level programs, but channel endpoints,
$c^+$ and $c^-$, are part of the internal language. If an access point $n$ is declared with $\mkterm{access}~\langle\Sigma\rangle~n$ then we define $n.\mkterm{protocol}$ to mean $\Sigma$.
The $\spawnterm$ primitive was not used in the example in
Section~\ref{sec:distributed-example}, but its behaviour is to start a
new thread executing the specified method on a new instance of the
specified class (just like it happens in Java; other works,
e.g.~\cite{Dezani-CiancagliniM:bousto}, use similar
approaches). Although a parameter is required as in any method
call, for simplicity the type system restricts the parameter's type to
be $\nulltype$ in this case, so that there is only one form of
inter-thread communication.  The syntax of channel session types
$\Sigma$ is included so that the types of access points can be
declared.
Channels are created by the interaction of methods $\requestterm$ and
$\acceptterm$ in different threads, one thread keeps the $c^-$
endpoint whereas the other keeps $c^+$. The two threads then
communicate on channel~$c$ by reading and writing on their channel
ends.
The syntax of states is extended to include parallel composition and a
channel binder $\nu c$, which binds both endpoints $c^+$ and $c^-$ in
the style of Gay and Vasconcelos \citeN{GaySJ:lintta}. In a parallel
composition, the states are exactly states from the semantics of the
sequential language; in particular, each one has its own heap. This
means that $\spawnterm$ generates a new heap as well as a new
executing method body. Communication between parallel expressions is
only via channels.

The syntax extensions do not include $\requestterm$, $\acceptterm$,
$\sendterm$ and $\rcvterm$, as they are treated as method names.

\subsection{Semantics}
%update figure in summer09 
\begin{figure}
Structural congruence: 
\begin{gather*}
    \tag{\Ecomm,\Eassoc}
    s_1 \parallel s_2\ \equiv\  s_2 \parallel s_1
    \qquad
    s_1 \parallel (s_2 \parallel s_3)\ \equiv\  (s_1 \parallel s_2)
    \parallel s_3
    \\[\typingRuleSkip]
    \tag\Escope
    s_1 \parallel (\nu c)s_2\ \equiv\ (\nu c)(s_1\parallel s_2) 
    \;\;\text{if $c^+,c^-$ not free in $s_1$}
\end{gather*}

Reduction rules:
\begin{center}
\setlength\lineskip{\typingRuleSkip}
\AxiomC{$h(r).f = n$}
\AxiomC{$h'(r').f' = n$}
\AxiomC{$c$ fresh}
\rulename{R-Init}
\TrinaryInfC{$ \state{h}{r}{\mathcal
    E[\accept{f}]}\parcomp\state{h'}{r'}{\mathcal E'[\request {f'}]}
  \vreduces (\nu c) \left(\state{h}{r}{\mathcal
      E[c^+]}\parcomp\state{h'}{r'}{\mathcal E'[c^-]}\right)$}
\DisplayProof\hfil
\AxiomC{$h(r).f = c^p$}
\AxiomC{$h'(r').f' = c^{\overline p}$}
\AxiomC{$v\not\in\mathcal{O}$}
\rulename{R-ComBase}
\TrinaryInfC{$\state{h}{r}{\mathcal E[\methcal{f}{\sendterm}{v}]}
      \parcomp\state{h'}{r'}{\mathcal E'[\methcal{f'}{\rcvterm}{}]}
      \vreduces
      \state{h}{r}{\mathcal E[\nullterm]}\parcomp\state{h'}{r'}{\mathcal E'[v]}$}
\DisplayProof\hfil
\AxiomC{$h(r).f = c^p$}
\AxiomC{$h'(r').f' = c^{\overline p}$}
\AxiomC{$\varphi\in\mkterm{Inj}(\dom(h\downarrow o),\mathcal{O}\setminus\dom(h'))$}
\rulename{R-ComObj}
\TrinaryInfC{$\begin{array}{l}\state{h}{r}{\mathcal E[\methcal{f}{\sendterm}{o}]}
      \parcomp\state{h'}{r'}{\mathcal E'[\methcal{f'}{\rcvterm}{}]}
      \vreduces\\
      \hspace{10mm}\state{h\uparrow o}{r}{\mathcal E[\nullterm]}\parcomp\state{h'+\varphi(h\downarrow o)}{r'}{\mathcal E'[\varphi(o)]}\end{array}$}
\DisplayProof\hfil
\AxiomC{$o$ fresh}
\AxiomC{$C.\fieldsterm = \vec{f}$}
\AxiomC{$\method{}{m}{x}{e}\in C$}
\rulename{R-Spawn}
\TrinaryInfC{$\state{h}{r}{\mathcal{E}[\spawn{C.m(v)}]}\vreduces
  \state{h}{r}{\mathcal{E}[\nullterm]}\parcomp\state{\smash{o =
      \hentry{C}{\vec{f}=\vec\nullterm}}}{o}{e\subs{\nullterm}{x}}$}
\DisplayProof\hfil
%
% \axiomname{T-Name}\judgment \Gamma,n:T > n : T < \Gamma, n:T /
% \hfil
%
% \axiomname{T-Chan}\judgment \Gamma,c^p: T > c^p : T < \Gamma /
% \hfil
%
\AxiomC{$s\vreduces s'$}
\rulename{R-Par}
\UnaryInfC{$s\parcomp s''\vreduces s'\parcomp s''$}
\DisplayProof\hfil
\AxiomC{$s\equiv s' \quad s'\vreduces s'' \quad s''\equiv s'''$}
\rulename{R-Str}
\UnaryInfC{$s\vreduces s'''$}
\DisplayProof\hfil
\AxiomC{$s\vreduces s'$}
\rulename{R-NewChan}
\UnaryInfC{$(\nu c)\,s\vreduces(\nu c)\,s'$}
\DisplayProof\hfil
\end{center}

Top-level typing rules:
\begin{center}\setlength\lineskip{\typingRuleSkip}\setlength\baselineskip{0pt}
  \AxiomC{\judgment\envref{\Gamma}{r} > e : \nulltype < \envref{\Gamma'}{r'} /}
  \AxiomC{$C.\sessterm = \{\methsign{m}{\nulltype}{\_} : \_; \ldots\}$}
  \rulename{T-Spawn}
  \BinaryInfC{\judgment\envref{\Gamma}{r} > \spawn{C.m(e)} : \nulltype < \envref{\Gamma'}{r'} /}
\DisplayProof\hfil
%
%\AxiomC{$n.\mkterm{protocol} = \Sigma$}
\AxiomC{$\mkterm{access}~\langle\Sigma\rangle~n$}
\rulename{T-Name}
\UnaryInfC{\judgment\envref{\Gamma}{r} > n : \sem{\access\Sigma} < \envref{\Gamma}{r} /}
\DisplayProof\hfil
%
% \AxiomC{$\Gamma(r.f) = \access\Sigma$}
% \rulename{T-Accept}
% \UnaryInfC{\judgment \Gamma > \accept{r.f} : \chantype\Sigma < \Gamma
%   /}
% \DisplayProof\hfil
% %
% \AxiomC{$\Gamma(r.f) = \access\Sigma$}
% \rulename{T-Request}
% \UnaryInfC{\judgment \Gamma > \request{r.f} :
%   \chantype{\overline\Sigma} < \Gamma /}
% \DisplayProof
\end{center}
\caption{Reduction and top-level typing rules for concurrency and channels} 
\label{fig:concurrency}
\end{figure}

%%% Local Variables: 
%%% mode: latex
%%% TeX-master: "main"
%%% End: 

Figure~\ref{fig:concurrency} defines the reduction rules for the
distributed language, as well as the top-level typing rules. The
reduction rules make use of a pi-calculus style structural congruence
relation, again following Gay and Vasconcelos \citeN{GaySJ:lintta}.  It is the smallest
congruence (with respect to parallel and binding) that is also closed
under the given rules.

Rule \Rinit\ defines interaction between $\acceptterm$ and
$\requestterm$, which creates a fresh channel $c$ and substitutes one
endpoint into each expression.

There are two rules for communication, involving interaction between
$\sendterm$ and $\rcvterm$. Rule \RcomBase\ is for communication of
non-objects and rule \RcomObj\ is for communication of objects. Let
$\mathcal{O}$ be the set of all object identifiers. \RcomBase\
expresses a straightforward transfer of a value, while \RcomObj\ also
transfers part of the heap corresponding to the contents of a
transferred object. In \RcomObj, $\varphi$ is an arbitrary renaming
function which associates to every identifier in $\dom(h)$ an
identifier not in $\dom(h')$.  This rule can easily be made
deterministic in practice by using a total ordering on identifiers and
a mechanism to generate fresh ones.

\Rspawn\ creates a new parallel state whose heap contains a single
instance of the specified class. As discussed above, communication
between threads is only through channels in order to keep the formal
system a reasonable size; therefore, no data is transmitted to the new
thread and the body of the method being spawned always has its
parameter replaced by the literal $\nullterm$. The type system will
ensure that $v=\nullterm$, so that this semantics makes sense.
The remaining rules are standard. 

Returning to \RcomObj, there is some additional notation associated
with identifying the part of the heap that must be transferred; we now
define it.

\begin{definition}
  Let $h$ be a heap.  For any entry $o = \hentry{C}{\iset{f_i =
      v_i}{i\in I}}$ in $h$, we define the \emph{children} of $o$ in
  $h$ to be the set of all $v_i$ which are object identifiers:
  $\children{h}(o) = \{v_i\mid i\in I\}\cap\mathcal{O}$.

  We say that an object identifier $o$ in $\dom(h)$ is a \emph{root}
  in $h$ if there is no $o'$ in $\dom(h)$ such that
  $o\in\children{h}(o')$. We note $\roots(h)$ the set of roots in $h$.

  We say that $h$ is \emph{complete} if for any $o$ in $\dom(h)$ we have
  $\children{h}(o) \subset \dom(h)$.

  If $h$ is complete, we define the descendants of $o$ in $h$ to be
  the smallest set containing $o$ and the children of any object it
  contains.  Formally, let $\children{h}^0(o) = \{o\}$ and for
  $i\geqslant 1$, $\children{h}^i(o) =\!\!\!\!\!\!\!\!\!\!\!\!\!\!\!\!
  \bigcup\limits_{\null\qquad\omega\in\children{h}^{i-1}(o)}
  \!\!\!\!\!\!\!\!\!\!\!\!\!\!\!\!\children{h}(\omega)$. Then
  $\desc{h}(o) = \bigcup\limits_{i\in\mathbb{N}}\children{h}^i(o)$.
\end{definition}

\begin{definition}[Heap separation]
  Let $h$ be a complete heap and $o$ a root of $h$. We define
  $h\downarrow o$ to be the sub-heap obtained by restricting $h$ to
  the descendants of $o$, and $h\uparrow o$ to be the sub-heap obtained by
  removing from $h$ the descendants of $o$. Note that $h\downarrow o$
  is complete and has the property that $o$ is its only root.
\end{definition}

\begin{definition}[Additional notation]\
\begin{itemize}
\item Let $h$ be a heap and let $\varphi$ be a function from
  $\mathcal{O}$ to $\mathcal{O}$. We denote by $\varphi(h)$ the result
  of applying $\varphi$ to all object identifiers in $h$, including
  inside object records.
\item We denote by $\mkterm{Inj}(A,B)$ the set of injective functions
  from $A$ to $B$.
\item  We denote by $+$ the disjoint union of heaps or environments, \ie the
  operation $h+h'$ is defined by merging $h$ and $h'$ if their domains
  are disjoint and undefined otherwise.
\end{itemize}
\end{definition}

\subsection{Type System}
\begin{figure}
Given a channel session type $\Sigma$, define a class session type
$\sem\Sigma$ as follows.
\begin{align*}
\sem{\End} & = \{\}\\
\sem{X} & = X \\
\sem{\mu X.\Sigma} & = \mu X.\sem\Sigma \\
\sem{\rcv{T}\chanseq\Sigma} & = \branch{\methsign{\rcvterm}{\nulltype}{T}}{\sem\Sigma}{}\\
\sem{\send{T}\chanseq\Sigma} & = \branch{\methsign{\sendterm}{T}{\nulltype}}{\sem\Sigma}{}\\
\sem{\chanbranch{l : \Sigma_l}_{l\in E}} & = \branch{\methsign{\rcvterm}{\nulltype}{\linkthis}}{\choice{l}{\sem{\Sigma_l}}{l\in E}}{}\\
\sem{\chanchoice{l :\Sigma_l}_{l\in E}} & = \branch{\methsign{\sendterm}{\{l\}}{\nulltype}}{\sem{\Sigma_l}}{l\in E}
\end{align*}

% This uses a separate method $\sendterm_T$ for each $T$. Similarly
% $\rcvterm_T$. An implementation can use the session type and/or the
% parameter type to disambiguate, if the names are not different. The
% semantics ignores the subscript $T$ or $E$. Calls $\sendterm_l()$
% could be written as $\sendterm(l)$, as if the method has signature
% $\nulltype~\sendterm(E~x)$, but the parameter must be a literal
% value. Anyway, the semantics treats $\sendterm_l()$ as $\sendterm(l)$.

In the type system, a channel endpoint with session type $\Sigma$ is treated as an object with type $\sem\Sigma$. Calls of $\mkterm{send}$ and $\mkterm{receive}$ are typed as standard method calls.
\\~\\
Given an access type $\access\Sigma$, define a class session type
$\sem{\access\Sigma}$ by
\begin{align*}
\sem{\access\Sigma} & = \mu X.\{\methsign{\requestterm}{\nulltype}{\sem{\overline\Sigma}}:X,\methsign{\acceptterm}{\nulltype}{\sem\Sigma}:X\}.
\end{align*}
In the type system, an access point with type $\access\Sigma$ is
treated as an object with type $\sem{\access\Sigma}$. Calls of
$\mkterm{request}$ and $\mkterm{accept}$ are typed as standard method
calls. 

\caption{Object types for channels and access points}
\label{fig:channelobjects}
\end{figure}

\begin{figure}
These rules add to or replace the rules in
Figure~\ref{fig:typingstates}.
~\\

\centering\setlength\lineskip{\typingRuleSkip}
% \AxiomC{$T$ is simple}
% \rulename{T-Ref}
% \UnaryInfC{\judgment \Gamma, o : T > o : T < \Gamma /}
% \DisplayProof\hfil
%
\axiomname{T-Chan}
\judgment\envref{\Gamma, c^p : T}{r} > c^p : T < \envref{\Gamma}{r} /
\hfil
%
% \AxiomC{$\Gamma(r) = \objecttype C {S_{l_0}}$}
% \AxiomC{$l_0\in E$}
% \AxiomC{All $S_l$ are branches}
% \rulename{T-InjS}
% \TrinaryInfC{\judgment \Gamma > l_0 : \linktype{E}{r} <
% \changetype{\Gamma}{r}{\objecttype C {\choice{l}{S_l}{l\in E}}} /}
% \DisplayProof\hfil
%
% \AxiomC{$\judgment \Gamma > e : T < \Gamma' /\quad$
% If $T = \linktype{E}{r'}$ then $r'=r$}
% \AxiomC{$\Gamma'(r) = \objecttype{C}{F\subs{r}{\this}}$}
% \AxiomC{$\typedsess F C S$}
% \rulename{T-Return}
% \TrinaryInfC{\judgment \Gamma > \return e r : T <
%   \changetype{\Gamma'}{r}{\objecttype C S} /}
% \DisplayProof\hfil
%
\axiomname{T-Hempty}
\typedheapchan{\Theta}{\sem\Theta}{\varepsilon}
\hfil
%
% \AxiomC{\typedheap{\Gamma_0}{h}}
% \AxiomC{\(
%   \forall i\in\{1\dots n\}, \begin{cases}
%     \judgment\Gamma_{i-1} > v_i : T_i < \Gamma_i /
%     & \text{if $T_i$ is simple, or} \\
%     \Gamma_{i-1} = \Gamma_i, v_i : T_i & \text{if it is not}
%   \end{cases}
%   \)}
% \AxiomC{$C.\fieldsterm =
%   (f_i)_{1\leqslant i\leqslant n}$} \rulename{T-Hadd}
% \TrinaryInfC{\typedheap{ (\Gamma_n, o :
%     \objecttype{C}{\iset{T_i\,f_i}{1\leqslant i\leqslant n}})
%     \subs{\linktype{}{o.f_i.\vec\phi}}{\linktype{}{v_i.\vec\phi}}}
%   {\hadd{h}{o = \hentry{C}{\iset{f_i = v_i}{1\leqslant i\leqslant
%           n}}}}} \DisplayProof\hfil
%
% \AxiomC{\typedheap{\Gamma, o : T}{h}}
% \rulename{T-Hweak}
% \UnaryInfC{\typedheap{\Gamma}{h}}
% \DisplayProof\hfil
%
\AxiomC{\typedheapchan{\Theta}{\Gamma}{h}}
\AxiomC{\judgment \envref{\Gamma, o : \objecttype{C}{\iset{\nulltype\,f_i}{1\leqslant i\leqslant n}}}{o} >
  \seq{\seq{\swap{f_1}{v_1}}{\ldots}}{\swap{f_n}{v_n}} : \nulltype
  < \envref{\Gamma'}{o} /}
\rulename{T-Hadd}
\BinaryInfC{\typedheapchan{\Theta}{\Gamma'}
  {\hadd{h}{o = \hentry{C}{\iset{f_i = v_i}{1\leqslant i\leqslant n}}}}
}
\DisplayProof\hfil
\AxiomC{\typedheapchan{\Theta}{\Gamma, o : \objecttype C F}{h}}
\AxiomC{\typedsess{F}{C}{S}}
\rulename{T-Hide}
\BinaryInfC{\typedheapchan{\Theta}{\Gamma, o : \objecttype C S}{h}}
\DisplayProof\hfil
\AxiomC{\typedheapchan{\Theta}{\Gamma}{h}}
\AxiomC{\judgment\envref{\Gamma}{r} > e : T < \envref{\Gamma'}{r'} /}
\rulename{T-State}
\BinaryInfC{\judgment\Theta;\Gamma > \state{h}{r}{e} : T < \envref{\Gamma'}{r'} /}
\DisplayProof
\AxiomC{\judgment \Theta;\Gamma > \state{h}{r}{e} : T < \envref{\Gamma'}{r'} /}
\rulename{T-Thread}
\UnaryInfC{\typedconf{\Theta}{\state{h}{r}{e}}}
\DisplayProof\hfil
\AxiomC{\typedconf\Theta s}
\AxiomC{\typedconf{\Theta'}{s'}}
\rulename{T-Par}
\BinaryInfC{\typedconf{\Theta+\Theta'}{s\parcomp s'}}
\DisplayProof\hfil
\AxiomC{\typedconf{\Theta,
    c^+:\Sigma,
    c^-:\overline\Sigma}{s}}
\rulename{T-NewChan}
\UnaryInfC{\typedconf{\Theta}{(\nu c)\,s}}
\DisplayProof
\caption{Internal typing rules for the distributed language}
\label{fig:typingproofs}
\end{figure}

%%% Local Variables: 
%%% mode: latex
%%% TeX-master: "main"
%%% End: 

The type system treats $\sendterm$, $\rcvterm$, $\requestterm$ and
$\acceptterm$ as method calls on objects whose session types are
defined by the translations in Figure~\ref{fig:channelobjects}. A
channel endpoint with (channel) session type $\Sigma$ is treated as an
object with (class) session type $\sem\Sigma$. The type constructor
$\&$ (offer) is translated into a $\rcvterm$ method with return
type $\linkthis$ in order to capture the relationship between the
received label and the subsequent type. The type constructor $\oplus$
(select) is translated into a collection of $\sendterm$ methods with
different parameter types, each being a singleton type for the
corresponding label.

In a similar but much simpler way, an access type $\access\Sigma$ is
translated into a (class) session type that allows both $\requestterm$
and $\acceptterm$ to be called repeatedly and at any time. These two
methods need to return dual channel endpoints, which requires the
following definition.

\begin{definition}[Dual channel type]
  The dual type $\overline\Sigma$ of a channel session type $\Sigma$
  is defined by
\[
\overline{\Sigma} = \dl(\Sigma,\iota)
\]
where $\iota$ is the identity substitution, $\sigma\subs{\Sigma'}{X}$
denotes the extension of substitution $\sigma$ by the mapping
$X \mapsto\Sigma'$, and $\Sigma\sigma$ denotes the application of the
substitution $\sigma$ to the session type $\Sigma$; the auxiliary
function $\dl(\Sigma,\sigma)$ is defined on session types $\Sigma$ and
substitutions $\sigma$ by:

  \begin{align*}
    \dl(\End,\sigma) &= \End\\
    \dl(X,\sigma) &= X\\
    \dl(\mu X.\Sigma,\sigma) &= \mu X.\dl(\Sigma,\sigma\subs{\Sigma\sigma}{X})\\
    \dl(\send T\chanseq\Sigma,\sigma) &= \rcv{T\sigma}\chanseq\dl(\Sigma,\sigma)\\
    \dl(\rcv T\chanseq\Sigma,\sigma) &= \send{T\sigma}\chanseq\dl(\Sigma,\sigma)\\
    \dl(\chanchoice{l : \Sigma_l}_{l\in E},\sigma) &= \chanbranch{l :
      \dl(\Sigma_l,\sigma)}_{l\in E}\\
    \dl(\chanbranch{l : \Sigma_l}_{l\in E},\sigma) &= \chanchoice{l :
      \dl(\Sigma_l,\sigma)}_{l\in E}
  \end{align*}
With this definition, duality commutes with unfolding \cite{BernardiG:private}; this property is essential in order to use the equi-recursive convention (Definition~\ref{def:typeequivalence}).
\end{definition}
By convention, $\requestterm$ returns a channel endpoint of type
$\sem{\overline\Sigma}$ and $\acceptterm$ returns an endpoint of type
$\sem\Sigma$.

Because access points $n$ are global constants, they can be used
repeatedly even though their session types are linear; there is no
restriction to a single occurrence of a given name.  

The only new typing rules for the top-level language are in
Figure~\ref{fig:concurrency}. \Tspawn\ allows a method to be used in a
$\spawnterm$ expression if it is available in the initial session type
of the specified class. \Tname\ obtains the type of an access point
from its declaration, and assigns an object type according to the
translation described above.

Figure~\ref{fig:typingproofs} contains typing rules for the internal
language with the concurrency extensions. Rules with the same names as
rules in Figure~\ref{fig:typingstates} are replacements.

Rule \Tchan\ takes the type of a channel endpoint from the typing
environment. The remaining rules involve a new typing environment
$\Theta$, which maps channel endpoints to channel session types
$\Sigma$; these are indeed \emph{channel} session types, not their
translations into class session types. \THadd, \THide\ and \Tstate\
are just the corresponding rules from Figure~\ref{fig:typingstates}
with $\Theta$ added. In \THempty\ the notation $\sem\Theta$ means that
the translation from channel session types to class session
types is applied to the type of each channel endpoint. In combination
with \Tstate, this means that the typing of expressions uses class
session types for channel endpoints; the $T$ in \Tchan\ is a class
session type.

\Tthread\ lifts a typed state to a typed concurrent component,
preserving only the channel typing $\Theta$, which is used in \Tpar\
and \TnewChan. In \Tpar, $\Theta+\Theta'$ means union, with
the assumption that $\Theta$ and $\Theta'$ have disjoint
domains. \TnewChan\ requires the complementary endpoints of each
channel to have dual session types.

% explain the rules and $\Theta$, as well as the translation from channel session types to object session types.
% \comment{Nils : Contrary to what the POPL paper does, here $\Theta$
% contains untranslated channel session types. This allows
% \textsc{T-NewChan} to be more straightforward, and is also more
% logical as $\Theta$ is a distributed channel environment. The
% translation to object session types is now done in
% \textsc{T-Hempty}.}

\subsection{Subtyping}

We have two subtyping relations between channel session types:
$\Sigma\subt\Sigma'$ as defined by Gay and Hole \citeN{GaySJ:substp}, and
$\sem{\Sigma}\subt\sem{\Sigma'}$ as defined in this paper. To avoid a
detour into the definition of $\Sigma\subt\Sigma'$, we state the
following result without proof.

\begin{proposition}
$\Sigma\subt\Sigma' \Rightarrow \sem{\Sigma}\subt\sem{\Sigma'}$.
\end{proposition}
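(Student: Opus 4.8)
The plan is to produce a sub-session relation (Definition~\ref{def:subsession}) that contains every pair $(\sem\Sigma,\sem{\Sigma'})$ with $\Sigma\subt\Sigma'$, and then appeal to the fact that $\subt$ on class session types is the \emph{largest} sub-session relation. Concretely, I would take
\begin{align*}
\mathcal{R} ={}& \{(\sem\Sigma,\sem{\Sigma'})\mid \Sigma\subt\Sigma'\}\\
&{}\cup \{(\choice{l}{\sem{\Sigma_l}}{l\in E},\choice{l}{\sem{\Sigma'_l}}{l\in E'})\mid E\subseteq E',\ \Sigma_l\subt\Sigma'_l\text{ for }l\in E\}\\
&{}\cup \subt.
\end{align*}
The second clause is forced on us: the translation of an offer type $\chanbranch{\cdots}$ produces a branch whose single $\rcvterm$-continuation is a \emph{variant}, and that variant is not itself $\sem{\Sigma''}$ for any channel type $\Sigma''$, so the relevant variant pairs must be added to the relation by hand. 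Including $\subt$ itself is a harmless convenience needed for the delegation case below; since $\subt$ is already a sub-session relation, Lemma~\ref{lem:union} lets me treat the whole union uniformly, and it suffices to check the sub-session conditions for the pairs coming from the first two clauses.

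Two preliminaries make the case analysis clean. First, $\sem{\cdot}$ commutes with unfolding, since $\sem{\mu X.\Sigma}=\mu X.\sem\Sigma$ and $\sem X = X$; hence $\mkterm{unfold}(\sem\Sigma)=\sem{\mkterm{unfold}(\Sigma)}$, and I may adopt the convention stated after Definition~\ref{def:subsession} and assume $\Sigma,\Sigma'$ are not top-level $\mu$-types. Second, for each remaining top-level constructor of $\Sigma$, the hypothesis $\Sigma\subt\Sigma'$ forces $\Sigma'$ to have the \emph{same} constructor and supplies the component conditions; these I read off directly from the subtyping rules of \cite{GaySJ:substp}.

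I would then verify Definition~\ref{def:subsession} constructor by constructor. For $\Sigma=\End$ both sides translate to the empty branch $\{\}$ and the condition is vacuous. For $\Sigma=\rcv T\chanseq\Sigma_1$, channel subtyping yields $\Sigma'=\rcv{T'}\chanseq\Sigma'_1$ with a carried-type condition and $\Sigma_1\subt\Sigma'_1$; both sides translate to single-$\rcvterm$ branches, so by Definition~\ref{def:Rcompatibility-signatures} I only need $T$ to be $\mathcal{R}$-compatible with $T'$ (immediate from Definition~\ref{def:Rcompatibility-types}: either $T=T'$, or both enumerated with $T\subseteq T'$, or both in $\mathcal{S}$ with $T\subt T'$, whence $(T,T')\in\mathcal{R}$ because $\subt\subseteq\mathcal{R}$) together with $(\sem{\Sigma_1},\sem{\Sigma'_1})\in\mathcal{R}$. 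The case $\send T\chanseq\Sigma_1$ is identical except that $T$ is now the \emph{parameter} of $\sendterm$, so the contravariance built into signature compatibility is exactly what the output carried-type rule must supply. For $\chanchoice{l:\Sigma_l}_{l\in E}$ the image is a branch of $\sendterm$ methods indexed by $E$, and the contravariance of branch types in their index set aligns with the label-set condition for select ($\oplus$), each continuation pair again landing in $\mathcal{R}$. Finally $\chanbranch{l:\Sigma_l}_{l\in E}$ translates to a single-$\rcvterm$ branch returning $\linkthis$ whose continuation is a variant; signature compatibility applies via case~1 of Definition~\ref{def:Rcompatibility-signatures} (both return $\linkthis$), which demands the two variant continuations be related --- precisely the pairs supplied by the second clause of $\mathcal{R}$, and those pairs satisfy case~2 of Definition~\ref{def:subsession} by construction. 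Membership $\mathcal{R}\subseteq\subt$ then gives the proposition.

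The main obstacle is bookkeeping around variance together with the non-closure of the translation's image. One must (i) align the receive/send carried-type variances and the offer/select label-set variances of \cite{GaySJ:substp} with the parameter-contravariant, return-covariant, and index-contravariant conventions of Definitions~\ref{def:Rcompatibility-signatures} and~\ref{def:subsession}; and (ii) close the candidate relation both under the variant continuations generated by offer types and under $\subt$ for delegated (object- or channel-valued) carried types. Since the paper deliberately omits the definition of $\Sigma\subt\Sigma'$, the only genuinely external ingredient is the exact statement of the channel-subtyping rules; granting those, every case reduces to a direct match of conditions.
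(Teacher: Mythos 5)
The paper offers no proof to compare against: it explicitly states this proposition ``without proof'' to avoid detouring into the definition of $\Sigma\subt\Sigma'$, so your attempt must be judged on its own terms rather than against the authors' argument. On those terms it is sound and essentially complete as an outline. The candidate relation is the right one, and you have correctly spotted the two non-obvious closure requirements: the variant continuations $\choice{l}{\sem{\Sigma_l}}{l\in E}$ produced by translating offer types are not themselves in the image of $\sem\cdot$ and must be added as explicit pairs, and carried types that are session types force you to throw $\subt$ itself into the relation so that clause~3 of the type-compatibility definition can fire. The constructor-by-constructor variance matching (input covariant $\leftrightarrow$ return-type covariance of $\rcvterm$; output contravariant $\leftrightarrow$ parameter contravariance of $\sendterm$; select's label-set contravariance $\leftrightarrow$ branch-index contravariance in Definition~\ref{def:subsession}; offer's covariance $\leftrightarrow$ variant covariance) is exactly what makes the proof go through, and your observation that $\sem\cdot$ commutes with unfolding disposes of the $\mu$ case. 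One small presentational caveat: Lemma~\ref{lem:union} only applies to unions of relations each already known to be sub-session relations, and only the $\subt$ component of your $\mathcal{R}$ qualifies a priori; what you actually need (and evidently intend) is the monotonicity argument from that lemma's proof --- conditions verified relative to the sub-relation $\subt$ remain verified relative to the larger $\mathcal{R}$ --- combined with a direct check of the pairs from your first two clauses. You are also right that the only genuinely external dependency is the precise statement of the Gay--Hole channel subtyping rules, which the paper deliberately does not reproduce.
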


Interestingly, the converse is not true, as subtyping between
translations of channel session types is a larger relation. For
example:
\begin{itemize}
\item for all $\Sigma$, $\sem\Sigma \subt \sem\End$
\item if $E$ is an enumeration then $\sem{\rcv{E}\chanseq\Sigma} \subt
  \sem{\chanbranch{l : \Sigma}_{l\in E}}$
\item $\sem{\chanchoice{l : \Sigma}} =
  \sem{\send{\{l\}}\chanseq\Sigma}$ and therefore by transitivity any
  translated $\oplus$ type is a subtype of all the corresponding
  individual translated send
  types.
\end{itemize}

\noindent This suggests the possibility, in the context of the work of Gay and Hole \citeN{GaySJ:substp},
of generalizing the subtyping relation between channel session types
by considering branch/select labels as values in an enumerated
type. We do not explore this idea further in the present paper.

%%% Local Variables:
%%% mode: latex
%%% TeX-master: "main"
%%% End:

\section{results}\label{sec:results}

\comment{SG: updated text 31.3.2011.} 
\comment{NG: updated proofs 7.4.2011, everything is there.
  Moved those of Section 7.1 to an
  appendix. techlemmas.tex contains section 7.1 with the proofs and
  explanatory text, proofs are between \textbackslash ifproofs and
  \textbackslash fi and explanatory text (of which there is currently
  little) should be put between \textbackslash ifexplanatorytext and
  \textbackslash fi, so that the file can be used both for typesetting
  7.1 and the appendix.}

%\input{fig-typingproofs}
% The rules in Figure~\ref{fig:typingproofs} are used to type
% expressions and states that arise during execution.
%
The key results concerning the distributed language supporting
self-calls are, again, Subject-Reduction, Type Safety, and
Conformance. Notice that we can no longer guarantee the
absence of stuck states for all well-typed programs, as one endpoint of a
channel may try to send when the other endpoint is not available to receive.

\subsection{Properties of typing derivations}\label{sec:techlemmas}
This subsection is mostly a collection of lemmas which will be used to
prove the main theorems in the following subsections. They draw various
useful consequences from the fact that a program state is well-typed.
Their proofs can be found in Appendix \ref{app:proofs}.

We define $\chans(h)$ as the set of channel endpoints appearing in
object records in $h$. We define $\chans(\Gamma)$ and $\objs(\Gamma)$
as the sets of, respectively, channel endpoints and object identifiers
in $\dom(\Gamma)$. We have $\dom(\Gamma) = \chans(\Gamma)\cup
\objs(\Gamma)$.

\newif\ifinlineproofs
\newif\ifexplanatorytext
\inlineproofsfalse
\explanatorytexttrue

\begin{lemma}\label{lem:heapproperties}
  Suppose $\typedheapchan\Theta\Gamma h$. Then \emph{(a)} $h$ is
  complete,
  \emph{(b)} $\chans(\Gamma)\subset\dom(\Theta)\setminus \chans(h)$ and
  \emph{(c)} $\objs(\Gamma)\subset\roots(h)$.
\end{lemma}
\ifinlineproofs\begin{proof}
  By induction on the derivation of $\typedheapchan\Theta\Gamma h$.
  The only axiom is \textsc{T-Hempty} for which the properties are true.
  Then \textsc{T-Hide} does not change either $h$ or
  $\dom(\Gamma)$ so it preserves all three properties. The other case is
  \textsc{T-Hadd}. Let $h'$ be the heap in the conclusion. Then
  $\children{h'}(o)$ is the set of $v_i$ which are object
  identifiers. Let $K$ be the set of $v_i$ which are channel endpoints.
  The typing derivation for the sequence
  of swaps in the right premise must include an occurrence of
  \textsc{T-Ref} for each object identifier, and of \textsc{T-Chan}
  for each channel endpoint, 
  each followed by \textsc{T-Swap} and a number of occurrences of
  \textsc{T-Seq}. Looking at these rules, we can see that this
  implies:\begin{enumerate}
  \item $\children{h'}(o)\cup K\subset\dom(\Gamma)$ and
  \item $\dom(\Gamma')\subset(\dom(\Gamma)\setminus
  (\children{h'}(o)\cup K))\cup\{o\}$. (Note that $o$ cannot be one of the
  $v_i$ because it is the current object in the judgement: the premise
  of \textsc{T-Ref} forbids it.)
  \end{enumerate}

  From (1) and induction hypothesis (c) we get
  $\children{h'}(o)\subset\roots(h)$. We have
  $\roots(h)\subset\dom(h)\subset\dom(h')$ and $o$ is the only new
  object in $h'$, so $h'$ is complete.

  If we project (2) onto just channel endpoints, we get
  $\chans(\Gamma')\subset \chans(\Gamma)\setminus K$. From the
  definition of $h'$, $\chans(h')$ is
  equal to $\chans(h)\cup K$. Hence induction hypothesis (b) yields (b)
  again for $h'$.

  If we project (2)
  onto just object identifiers, we get
  $\objs(\Gamma')\subset(\objs(\Gamma)\setminus\children{h'}(o))\cup\{o\}$.
  From induction hypothesis (a) and the fact that $o\not\in\dom(h)$ we
  get that $o$ is a root in $h'$. Furthermore, all roots of $h$ which
  are not children of $o$ are also roots of $h'$.
  Thus induction
  hypothesis (c) allows us to conclude $\objs(\Gamma')\subset\roots(h')$.
\end{proof}\fi

\begin{lemma}[Rearrangement of typing derivations for expressions]\hfil
  \label{lem:rearrange-expr}
  Suppose we have \judgment \envref\Gamma r > e : T <
  \envref{\Gamma'}{r'} /. Then there exists a typing derivation for
  this judgement in which:
\begin{enumerate}
  \item \textsc{T-Sub} only occurs at the
    very end, just before \textsc{T-Switch} or \textsc{T-SwitchLink}
    as the last rule in the derivation for each of the branches, or 
    just before \textsc{T-Call} as the last rule in the
    derivation for the parameter; 
  \item \textsc{T-SubEnv} only occurs immediately before
    \textsc{T-Sub} in the first three cases and does not occur at all
    in the fourth, \ie \textsc{T-Call}.
\end{enumerate}
\end{lemma}
\ifinlineproofs\begin{proof}
  First note that \textsc{T-Sub} and \textsc{T-SubEnv} commute and
  that any consecutive sequence of occurrences of one of these rules
  can collapse into a single occurrence using transitivity.
  What
  remains to be shown is that these rules can be pushed down in all
  cases but those mentioned in the statement. We enumerate the cases below.
  \begin{itemize}
  \item \textsc{T-Swap}. \textsc{T-Sub} before the premise can be replaced
    with \textsc{T-SubEnv} after the conclusion as $T$ has been
    transferred to the environment. If \textsc{T-SubEnv} was used before the premise, it means the initial derivation looks like:

\smallskip
\AxiomC{$\judgment\envref\Gamma r > e : T < \envref{\Gamma'}{r'}/$}
\rrulename{T-SubEnv}
\UnaryInfC{$\judgment\envref\Gamma r > e : T < \envref{\Gamma''}{r'}/$}
\AxiomC{$\Gamma''(r'.f) = T'$}
\AxiomC{$\ldots$}
\rrulename{T-Swap}
\TrinaryInfC{$\judgment\envref\Gamma r > \swap f e : T' < \envref{\changetype{\Gamma''}{r'.f}{T}}{r'}/$}
\DisplayProof

\smallskip
with $\Gamma' \subt \Gamma''$. Let $T_0 = \Gamma'(r'.f)$; we have $T_0\subt T'$ since $T'=\Gamma''(r'.f)$. Because the type of $r'.f$ moves from the environment to the expression, when we push the subsumption step down, we have to use both \textsc{T-SubEnv} and \textsc{T-Sub}; we can transform the derivation into:

\smallskip
\AxiomC{$\judgment\envref\Gamma r > e : T < \envref{\Gamma'}{r'}/$}
\AxiomC{$\Gamma'(r'.f) = T_0$}
\AxiomC{$\ldots$}
\rrulename{T-Swap}
\TrinaryInfC{$\judgment\envref\Gamma r > \swap f e : T_0 < \envref{\changetype{\Gamma'}{r'.f}{T}}{r'}/$}
\rrulename{T-SubEnv}
\UnaryInfC{$\judgment\envref\Gamma r > \swap f e : T_0 < \envref{\changetype{\Gamma''}{r'.f}{T}}{r'}/$}
\rrulename{T-Sub}
\UnaryInfC{$\judgment\envref\Gamma r > \swap f e : T' < \envref{\changetype{\Gamma''}{r'.f}{T}}{r'}/$}
\DisplayProof

\smallskip
  \item \textsc{T-Call}. If \textsc{T-SubEnv} is used on the premise
    to increase the type of something else than $r'.f$ it can be moved to
    the conclusion. If the type of $r'.f$ is changed, first note that
    the only relevant part is the signature of $m_j$. Suppose the
    subsumption step
    changes it from $\methsign{m_j}{U'_j}{U_j} : S'_j$ to
    $\methsign{m_j}{T'_j}{T_j} : S_j$. For the parameter type we have
    $T'_j\subt U'_j$ so we can
    use \textsc{T-Sub} on the premise to increase the type of $e$ from
    $T'_j$ to $U'_j$ instead. For the session and result types, we
    have two cases:
    \begin{itemize}
    \item if $U_j\subt T_j$ and $S'_j\subt S_j$ it can just be moved
      to a \textsc{T-SubEnv} step on the conclusion.
    \item if $U_j = E$, $T_j = \linkthis$ and $\choice{l}{S'_j}{l\in
        E}\subt S_j$, then the original conclusion of the rule (with
      \textsc{T-SubEnv} on the premise) was:
$$\judgement \envref\Gamma r
      > \methcal{f}{m_j}{e} : \linktype{}f <
      \envref{\changetype{\Gamma'}{r'.f}{S_j}}{r'} /$$
      and the new one with the subsumption step removed is:
$$\judgement \envref\Gamma r
      > \methcal{f}{m_j}{e} : E <
      \envref{\changetype{\Gamma'}{r'.f}{S'_j}}{r'} /.$$
      So in that case the original judgement can be obtained back from
      this new conclusion using \textsc{T-VarS} followed by
      \textsc{T-SubEnv}.
    \end{itemize}
    \item \textsc{T-Seq}. \textsc{T-Sub} on the first premise is
      irrelevant and \textsc{T-SubEnv} on the same premise can be
      removed using Lemma \ref{lem:moreweak}. Subsumption on the
      second premise straightforwardly commutes to the conclusion.
    \item \textsc{T-Switch}. Lemma
      \ref{lem:moreweak} allows us to remove \textsc{T-SubEnv} on the
      first premise. Straightforwardly \textsc{T-Sub} can be removed
      as well as it just makes $E'$ smaller.
    \item \textsc{T-SwitchLink}. \textsc{T-Sub} is irrelevant;
      removing \textsc{T-SubEnv} can only make $E'$ and the initial
      typing environments for the branches smaller and we can use
      Lemma \ref{lem:moreweak}.
    \item \textsc{T-VarF} and \textsc{T-VarS}.
      \textsc{T-Sub} can increase $E$ which
      becomes the indexing set of the variant in the conclusion. By
      definition of subtyping for variants it is possible to increase
      it afterwards using \textsc{T-SubEnv}. \textsc{T-SubEnv}
      straightforwardly commutes.
    \item \textsc{T-Return}. \textsc{T-Sub} straightforwardly
      commutes, as well as the part of \textsc{T-SubEnv} not
      concerning $r'.f$. Subsumption on $\Gamma'(r'.f)$ can be removed
      using Proposition \ref{prop:subfield}.
  \end{itemize}
\end{proof}\fi

\begin{lemma}[Rearrangement of typing derivations for heaps]\hfil
\label{lem:rearrange-heap}
  Suppose $\typedheapchan\Theta\Gamma h$ holds. Let $o$ be an
  arbitrary root of $h$. Then there exists a
  typing derivation for it such that:
  \begin{enumerate}
  \item \textsc{T-Sub} is never used;
  \item \textsc{T-SubEnv} is used at most once, as the last rule leading
    to the right premise of the last occurrence of \textsc{T-Hadd};
  \item every occurrence of \textsc{T-Hide} follows immediately the
    occurrence of \textsc{T-Hadd} concerning the same object
    identifier;
  \item the occurrence of \textsc{T-Hadd} concerning an identifier $o'$
    is always immediately preceded (on the left premise)
    by the occurrences of \textsc{T-Hadd}/\textsc{T-Hide} concerning
    the descendants of $o'$;
  \item the first root added is $o$.
  \end{enumerate}  
\end{lemma}
\ifinlineproofs\begin{proof}
  The first two points are a consequence of Lemma
  \ref{lem:rearrange-expr}: the only expressions which appear in the
  typing derivation are sequences of swaps, not containing any switch
  or method call; furthermore their type is always $\nulltype$, making
  \textsc{T-Sub} at the end irrelevant. What remains to be checked is then
  just that \textsc{T-SubEnv} at the end of the derivation for one
  sequence of swaps can be pushed down to the next occurrence of
  \textsc{T-Hadd} whenever there is one. This is just a matter of
  using Proposition \ref{prop:subfield} in the case of \textsc{T-Hide}
  and Lemma \ref{lem:moreweak} in the case of \textsc{T-Hadd}.

  Note that these points imply in particular that in all applications of
  \textsc{T-Hadd} but the last one, any element in $\dom(\Gamma)$ which
  is not one of the $v_i$ also occurs in $\Gamma'$ with exactly the same type.
  
  For the third point, first notice that the premise of
  \textsc{T-Hide} implies $o$ is a root of $h$ because of Lemma
  \ref{lem:heapproperties}. This implies that the rule immediately above
  \textsc{T-Hide} either is a \textsc{T-Hadd} introducing $o$ or does
  not concern $o$ at all (in particular, $o$ cannot be a $v_i$, otherwise
  it would not be a root in the conclusion). In the second case,
  \textsc{T-Hide} can be pushed upwards.
  
  The fourth and fifth points are a consequence of the remark we made about the
  first two: if $o'$ is not a descendant of $o$ nor vice-versa,
  then the occurrences of
  \textsc{T-Hadd} and \textsc{T-Hide} concerning $o$ and its descendants
  commute with those concerning $o'$ and its descendants as they affect
  completely disjoint parts of the environment. In the case of the last
  occurrence of \textsc{T-Hadd} there may be a subsumption step but it
  is still possible to commute with it by pushing this subsumption step down again.
\end{proof}\fi

\begin{lemma}[Splitting of the heap]\label{lem:heapsplitting}
  Suppose $\typedheapchan\Theta{\Gamma, o : T}h$. Let $\Theta_1 =
  \Theta\setminus\chans(h\downarrow o)$ and let $\Theta_2$ be $\Theta$
  restricted to $\chans(h\downarrow o)$. Then we have:
  $\typedheapchan{\Theta_1}\Gamma{(h\uparrow o)}$ and
  $\typedheapchan{\Theta_2}{o : T}{(h\downarrow o)}$.
\end{lemma}
\ifinlineproofs\begin{proof}
  We know from Lemma \ref{lem:heapproperties} that $o$ is a root in
  $h$. We consider the particular derivation given by Lemma
  \ref{lem:rearrange-heap} where $o$ is the first root added to the
  heap. Now if we look at the conclusion of the last rule
  concerning $o$ (\textsc{T-Hadd} or \textsc{T-Hide} depending whether
  $T$ is a field or session type),
  we know that at this point the heap is $h\downarrow o$,
  and therefore the only object identifier in
  the environment is its only root: $o$.
  Furthermore, this part of the derivation is
  still true if we replace the initial $\Theta$ with $\Theta_2$, with
  the only difference that then the final $\Gamma$ contains no
  channels, and thus is of the form $o : T'$. We also know that the
  type of $o$ is not changed in the rest of the derivation except
  possibly by the subsumption step at the end; therefore
  $T'$ is a subtype of $T$. If they are session types, using
  Proposition \ref{prop:subsess} we
  can change the last occurrence of \textsc{T-Hide} to use $T$ instead
  of $T'$ and get $\typedheapchan{\Theta_2}{o : T}{(h\downarrow o)}$.
  Otherwise, we can add a subsumption step to the derivation for the
  sequence of swaps on the right of \textsc{T-Hadd} to get the same result.

  For the rest of the derivation, we know that $o$ is not used,
  therefore it can be removed from the initial environment without
  affecting the derivation except by the fact that it will not be in
  the final environment either. Furthermore, we know from Lemma
  \ref{lem:heapproperties} that the initial environment minus its only
  object identifier $o$ is included in
  $\Theta\setminus\chans(h\downarrow o) = \Theta_1$. More precisely, the lemma
  gives us inclusion of domains, but because subsumption is not used
  in the first part of the derivation we also know that the types are
  the same. Thus we can replace the first part of the derivation by an
  instance of \textsc{T-Hempty} using $\Theta_1$ and the second part
  is still valid (with all the descendants of $o$ removed from the
  heap), yielding $\typedheapchan{\Theta_1}\Gamma{(h\uparrow
    o)}$ at the bottom.
\end{proof}\fi

\begin{lemma}[Merging of heaps]\label{lem:heapmerging}
  Suppose $\typedheapchan\Theta\Gamma h$ and
  $\typedheapchan{\Theta'}{\Gamma'}{h'}$ with $\dom(h)\cap\dom(h') =
  \emptyset$ and $\dom(\Theta)\cap\dom(\Theta')=\emptyset$. Then we
  have $\typedheapchan{\Theta+\Theta'}{\Gamma+\Gamma'}{h+h'}$.
\end{lemma}
\ifinlineproofs\begin{proof}
  Since $\Theta$ and $\Theta'$ are disjoint, the
  channels in $\Theta'$ cannot appear anywhere in the typing
  derivation for $h$. Thus, it is possible to add $\Theta'$ to every
  typing environment occurring in the derivation for $h$ without
  altering its validity, yielding
  $\typedheapchan{\Theta+\Theta'}{\Gamma+\Theta'}h$. Looking now at
  the derivation for $h'$, since the domains of the heaps are disjoint and
  $\objs(\Gamma)\subset\roots(h)$, none of the identifiers in $\Gamma$
  can appear anywhere in it. Thus we can add $\Gamma$ to every typing
  environment and $h$ to every heap occurring in
  the derivation for $h'$, replacing the \textsc{T-Hempty} at the top
  with the conclusion of the other derivation, which yields the result
  we want.
\end{proof}\fi

\ifexplanatorytext
These two lemmas show, if we apply them repeatedly, that a typing
derivation for a heap can be considered as a set of separate typing
derivations leading to each root of the heap. This will allow us in
particular to
show results for particular cases where a heap has only one root and
generalize them.
\fi

% We will in the following
% use this fact directly without always explaining that we refer to
% these lemmas.

\begin{lemma}\label{lem:heaprenaming}
  Suppose $\typedheapchan\Theta{o : S}h$. Let $\varphi$ be an injective
  function from $\dom(h)$ to $\mathcal{O}$. Then we have
  $\typedheapchan\Theta{\varphi(o) : S}{\varphi(h)}$.
\end{lemma}
\ifinlineproofs\begin{proof}
  Straightforward. Changing the names does not affect the typing derivation
  in any way.
\end{proof}\fi

\begin{lemma}[Opening]\label{lem:opening}
  If $\typedheapchan{\Theta}{\Gamma}{h}$, if $\Gamma(r)$ is a branch
  session type $S$ and if $h(r)$ is an object
  identifier $o$, then
  we know from Lemma \ref{lem:heapproperties} that $h$ contains an
  entry for $o$. Let $C$ be the class of this entry, then
  there exists a field typing $F$ for $C$ such that
  $\typedheapchan\Theta{\changetype{\Gamma}{r}{\objecttype{C}{F}}}{h}$
  and $\typedsess F C S$.
\end{lemma}
\ifinlineproofs\begin{proof}
  We prove this by induction on the depth of $r$. The base case is $r
  = o$. Using Lemmas \ref{lem:heapsplitting} and
  \ref{lem:heapmerging}, we can restrict ourselves to the case where
  $o$ is the only root of $h$. In that case we know that the last rule
  used in the typing derivation for $\typedheapchan\Theta{o : S}h$
  must be \textsc{T-Hide}. The result we want is constituted precisely
  by the premises of that rule.

  For the inductive case, $r$ is of the form $o'.f.\vec f$.
  We consider the case where $o'$ is the only root. The typing
  derivation then ends with \textsc{T-Hadd} and $f$ gets populated in
  the sequence of swaps by some object identifier\footnote{We know
    $o''$ is an object identifier and not a channel
    endpoint because, according to the hypotheses, either it is $o$
    itself or $\vec f$ is nonempty, implying $o''$ has fields.} $o''$.
  Let $r' = o''.\vec f$, and consider what $\Gamma(r')$ can be, knowing
  that in the conclusion $r$ has a branch session type: the only way
  the type can be modified in the sequence of swaps is by subsumption.
  Indeed, \textsc{T-VarS}, the other possibility, introduces a variant
  type. Therefore $\Gamma(r') = S'$ with $S'\subt S$. We can thus
  use the induction hypothesis to replace $\Gamma$ with
  $\changetype\Gamma{r}{\objecttype C F}$ on the left premise, with
  \typedsess F C {S'}. Then just use Proposition \ref{prop:subsess}
  to see that we also have \typedsess F C S and see that the type
  yielded in the conclusion by this new premise is what we want.
\end{proof}\fi

\begin{lemma}[Closing]\label{lem:closing}
  If $\typedheapchan\Theta{\Gamma}{h}$ and $\Gamma(r) = \objecttype{C}{F}$
  and $\typedsess F C S$, then
  $\typedheapchan\Theta{\changetype{\Gamma}{r}{S}}{h}$.
\end{lemma}
\ifinlineproofs\begin{proof}
  Again we prove this by induction on the depth of $r$ and the base
  case is $r = o$. In that case the lemma is nothing more than
  \textsc{T-Hide}. The inductive case is very similar to the above: we
  look at the type of $r'$ (defined as above) in the $\Gamma$ on
  the left premise of
  the last \textsc{T-Hadd}, noticing that the type of $r$ in the
  conclusion can only differ from it by subsumption,
  this time because we know from Lemma
  \ref{lem:heapfield} that \textsc{T-VarF} is never used. Hence the
  original type is $\objecttype C{F'}$ with $F'\subt F$. Proposition
  \ref{prop:subfield} gives us $\typedsess{F'}CS$ and thus we can use
  the induction hypothesis to change the type of $r'$ in this premise,
  which propagates to the type of $r$ in the conclusion.
\end{proof}\fi

\begin{lemma}[modification of the heap]\label{lem:heapchange}
Suppose that we have $\typedheapchan\Theta{\Gamma}{h}$ and
\judgment\envref\Gamma r > v' : T' < \envref{\Gamma'}{r} /,
and that $\Gamma'(r.f) = T$ where $T$ is not a variant. Let $v =
h(r).f$. The modified
heap $\changeval{h}{r.f}{v'}$ can be typed as follows:
\begin{enumerate}
\item if $v$ is an object identifier or a channel endpoint, then:
$$\typedheapchan\Theta{\changetype{\Gamma'}{r.f}{T'}, v : T}
{\changeval{h}{r.f}{v'}}$$
\item if $v$ is not an object or channel and $T$
  is not a link type, then:
$$\typedheapchan\Theta{\changetype{\Gamma'}{r.f}{T'}}
{\changeval{h}{r.f}{v'}}$$
\item if $v = l_0$ and $T = \linktype{}{f'}$,
  then:
  \begin{itemize}
  \item $\Gamma'(r.f') = \choice{l}{S_l}{l\in E}$ for some $E$
    such that $l_0\in E$ and
    some set of branch session types $S_l$. Note that this implies
    $f\neq f'$.
  \item $\typedheapchan\Theta
    {\changetype{\changetype{\Gamma'}{r.f}{T'}}{r.f'}{S_{l_0}}}
    {\changeval{h}{r.f}{v'}}$
  \end{itemize}
\end{enumerate}
\end{lemma}
\ifinlineproofs\begin{proof}
  First of all, note that the hypothesis that $\Gamma'(r.f)$ is
  defined implies that
  $\Gamma'(r)$ is not a variant, hence $T'$ is not $\linkthis$.
  In other words, the judgement cannot be derived from \textsc{T-VarF}.
  Furthermore, the fact that $T$ is not a variant either means that the
  judgement is not derived from an instance of \textsc{T-VarS} referring
  to field $f$. As this rule is the only possibility (beside subsumption) for a
  judgement typing a value to depend on, and modify, the type of a
  field, this implies that $\Gamma(r.f)$ is a subtype of $T$ and also
  that the judgement would still hold with another type for $f$. In
  particular we have
  \judgement\envref{\changetype{\Gamma}{r.f}{\nulltype}}{r} > v' : T' <
  \envref{\changetype{\Gamma'}{r.f}{\nulltype}}{r} /~\textbf{(a)}. We will in the
  following use this judgement (a) rather than the one in the hypothesis.

  We prove the lemma by induction on the depth of $r$, but the inductive case is
  straightforward (just apply the induction hypothesis to the left premise
  of \textsc{T-Hadd}). In the base case, $r$ is an object identifier
  $o$. We use Lemma \ref{lem:heapsplitting} to consider a typing
  derivation for the
  sub-heap $h\downarrow o$. Let $\Gamma_o = o : T_o$ and $\Theta_o$ be the
  environments corresponding to that part of the heap.
  % 
  % If $v'$ is not an
  % object identifier we can consider that $\Gamma_o = \Gamma$ and
  % $\Theta_o = \Theta$. If $v'$ is an identifier, we know that it must
  % appear in $\Gamma$ and is therefore a root in $h$; we
  % can consider that $\Gamma = \Gamma_o, v' : T'$ and $\Theta = \Theta_o
  % + \Theta_{o'}$.
  % 
  We look at the application of \textsc{T-Hadd} which ends the
  derivation for $\typedheapchan{\Theta_o}{\Gamma_o}{h_o}$. As $T$ is not
  a variant, it is possible to consider that $f$ is the last field to get
  populated in the swap sequence. We thus have something of the form:
  \begin{center}\footnotesize\makebox[0pt][c]{
      \AxiomC{\typedheapchan{\Theta_o}{\Gamma_1}{(h\downarrow o)\setminus o}\hskip -3em\null}
      \AxiomC{\ldots}
      \AxiomC{\ldots}
      \rulename{1}
      \UnaryInfC{\judgement\envref{\Gamma_2}{o} > v : T_v <
        \envref{\Gamma_3}{o} /}
      \rulename{T-Swap}
      \UnaryInfC{\judgement\envref{\Gamma_2}{o} > \swap f v : \nulltype <
        \envref{\changetype{\Gamma_3}{o.f}{T_v}}{o} /}
      \rulename{T-Seq}
      \BinaryInfC{\judgement\envref{\Gamma_1}{o} > \seq{\ldots}{\swap f v} :
        \nulltype < \envref{\changetype{\Gamma_3}{o.f}{T_v}}{o} /}
      \rulename{T-SubEnv}
      \UnaryInfC{\judgement\envref{\Gamma_1}{o} > \seq{\ldots}{\swap f v} :
        \nulltype < \envref{\Gamma_o}{o} /}
      \rrulename{T-Hadd}
      \BinaryInfC{\typedheapchan{\Theta_o}{\Gamma_o}{h\downarrow o}}
      \DisplayProof
    }
  \end{center}
  with $T_v\subt T$ and $\changetype{\Gamma_3}{o.f}{T_v} \subt
  \Gamma_o$. If we change the type of $o.f$, this last relation becomes
  $\Gamma_3\subt\changetype{\Gamma_o}{o.f}{\nulltype}$ \textbf{(b)}.
  
  What we want to do is to replace the judgement on the top right, which
  is an application of some rule (1), by a judgement typing $v'$. For
  this, we need the rest of the environment. We
  consider the judgement for the rest of the heap,
  \typedheapchan{\Theta\setminus\Theta_o}{\Gamma\setminus o}{h\uparrow o}. Since
  the domains are disjoint, we can apply Lemma \ref{lem:heapmerging} to
  this and the leftmost premise of \textsc{T-Hadd}, yielding
  \typedheapchan{\Theta}{\Gamma\setminus o + \Gamma_1}{h\setminus o}. If we
  replace our left premise with this, the initial environment we get on
  the top right is now $\Gamma_4 = \Gamma\setminus o + \Gamma_2$,
  as the additional
  part is unaffected by the sequence of swaps. This environment is almost
  $\changetype\Gamma{o.f}{\nulltype}$, but not quite. We now have three cases
  depending on what rule (1) is, which correspond to the three cases of
  the lemma.
  \begin{enumerate}
  \item If (1) is \textsc{T-Ref} or \textsc{T-Chan}, meaning $v$ is an
    object identifier or a channel endpoint, then $\Gamma_2 = \Gamma_3,
    v : T_v$. Using (b), this yields $\Gamma_2 \subt
    \changetype{\Gamma_o}{o.f}{\nulltype}, v : T_v$.
    Adding $\Gamma\setminus o$ to both sides, we get
    $\Gamma_4\subt\changetype{\Gamma}{o.f}{\nulltype},
    v : T_v$.
    If we replace the initial environment in (a) with
    this one, we get the $v : T_v$ back in the final environment. We
    then use Lemma \ref{lem:moreweak} to replace this initial
    environment with $\Gamma_4$, and \textsc{T-SubEnv} to change $T_v$
    into $T$ in the final one:
    \judgement\envref{\Gamma_4}{o} > v' : T' <
    \envref{\changetype{\Gamma'}{o.f}{\nulltype}, v : T}{o} /. Just see that it
    yields what we want at the bottom of the derivation.
  \item If (1) is \textsc{T-Label} or \textsc{T-Null} or
    \textsc{T-Name}, \ie if $v$ is a literal value of non-link,
    non-linear type, then $\Gamma_2$ is identical to $\Gamma_3$ and we
    have, using (b) and adding $\Gamma\setminus o$ to both sides,
    $\Gamma_4\subt\changetype{\Gamma}{o.f}{\nulltype}$, so we
    can directly (with Lemma \ref{lem:moreweak}) use judgement (a).
  \item If (1) is \textsc{T-VarS}, the last possibility, then $v$ is a
    label $l_0$ and $T_v$ is $\linktype{}f'$ for some $f'$. As it has no strict
    supertype, we have $T = \linktype{}f'$ as well. We also have
    $\Gamma_3(o.f') = \choice{l_0}{S}{}$. From (b) we have that
    $\Gamma_o(o.f') = \Gamma(o.f')$ is a supertype of this variant type, thus also a
    variant. This implies that (a) cannot come from a \textsc{T-VarS}
    concerning $f'$; therefore,
    $\Gamma'(o.f')$ is a supertype of $\Gamma(o.f')$ and hence, by
    transitivity, of $\choice{l_0}{S}{}$, which gives us the first item
    of the conclusion, with $S\subt S_{l_0}$. We now just have to notice
    that $\Gamma_2 = \changetype{\Gamma_3}{o.f'}{S}$ and that (a) is
    independent of the type of $f'$ just like it is of the type of $f$,
    and we can conclude similarly to the two previous cases.
  \end{enumerate}
\end{proof}\fi

\begin{lemma}[Substitution]
\label{lem:substitution}
If $\judgment \envref{\this:\objecttype C F, x : T'}{\this} > e : T <
\envref{\this:\objecttype C {F'}}{\this} /$, and if $\Gamma(r) =
\objecttype C F$, then:
\begin{enumerate}
\item if $T'$ is a base type (\ie neither an object type nor a link)
  and $v$ is a literal value of that type, or if $v$ is an access
  point name declared with type $\access{\Sigma}$ and
  $\sem{\access\Sigma}\subt T'$, we have:
  $$\judgment \envref\Gamma r > e\subs{v}{x} : T <
  \envref{\changetype{\Gamma}{r}{\objecttype C {F'}}}{r} /.$$ 
\item if $T'$ is an object type and $v$ is an object identifier or a
  channel endpoint, we
  have: $$\judgment \envref{\Gamma, v : T'}{r} > e\subs{v}{x} : T <
  \envref{\changetype{\Gamma}{r}{\objecttype C {F'}}}{r} /.$$
\end{enumerate}
\end{lemma}
\ifinlineproofs\begin{proof}
  In order to do an induction, we add the following case where $x$ is
  still present in the final environment : if we have $\judgment
  \envref{\this:\objecttype C F, x : T'}{\this} > e : T <
  \envref{\this:\objecttype C {F'}, x : T''}{\this} /$, and if $T'$ is
  an object type and $v$ is an object identifier, then we have
  \judgment \envref{\Gamma, v : T'}{r} > e\subs{v}{x} : T <
  \envref{\changetype{\Gamma}{r}{\objecttype C {F'}}, v : T''}{r} /.

  We prove this by induction on the derivation of
  \judgment\envref{\this:\objecttype C F, x : T'}{\this} > e : T <
  \envref{\this:\objecttype C {F'}, V}{\this} / (where $V$ is either
  empty or $v : T''$ depending on the case).  For most toplevel rules,
  the result is immediate. The only ones for which it is not are
  \textsc{T-Var} and \textsc{T-LinVar}. For \textsc{T-Var} the result
  is obtained using either \textsc{T-Null} if $T'$ is $\nulltype$ or
  \textsc{T-Label} and \textsc{T-Sub} if it is an enumerated type. In
  the case of an extension adding new base types, we assume there is a similar
  rule to type the corresponding literal values.
  For \textsc{T-LinVar}, if $v$ is an access point name the result is
  obtained using \textsc{T-Name} and \textsc{T-Sub}. Otherwise, $v$ is
  an object identifier and
  the result is obtained using \textsc{T-Ref}, noticing that because
  $\Gamma(r)$ is defined and $v$ is not in $\Gamma$, the path $r$ does
  not start with $v$ and the premise is satisfied.
\end{proof}\fi

\begin{lemma}[Typability of Subterms]
\label{lem:typablesubterms}
  If $\mathcal{D}$ is a derivation of $\judgment \envref\Gamma r >
  \mathcal{E}(e) : T < \envref{\Gamma'}{r'} /$ then there exist
  $\Gamma_1$, $r_1$ and $U$ such that $\mathcal{D}$ has a
  subderivation $\mathcal{D}'$ concluding $\judgment \envref\Gamma r>
  e : U < \envref{\Gamma_1}{r_1} /$ and the position of $\mathcal{D}'$
  in $\mathcal{D}$ corresponds to the position of the hole in
  $\mathcal{E}$.
\end{lemma}
\ifinlineproofs\begin{proof}
  A straightforward induction on the structure of $\mathcal{E}$; the
  expression $e$ is always at the extreme left of the typing
  derivation for $\mathcal{E}(e)$.
\end{proof}\fi

\begin{lemma}[Replacement]
\label{lem:replacement}
If
\begin{enumerate}
\item $\mathcal{D}$ is a derivation of $\judgment \envref\Gamma r >
  \mathcal{E}(e) : T < \envref{\Gamma'}{r'} /$
\item $\mathcal{D}'$ is a subderivation of $\mathcal{D}$ concluding
  $\judgment \envref\Gamma r > e : U < \envref{\Gamma_1}{r_1}/$
\item the position of $\mathcal{D}'$ in $\mathcal{D}$ corresponds to
  the position of the hole in $\mathcal{E}$
\item $\judgment \envref{\Gamma''}{r''} > e' : U < \envref{\Gamma_1}{r_1}/$
\end{enumerate}
then $\judgment \envref{\Gamma''}{r''} > \mathcal{E}(e') : T < \envref{\Gamma'}{r'}/$.
\end{lemma}
\ifinlineproofs\begin{proof}
Replace $\mathcal{D}'$ in $\mathcal{D}$ by the derivation of
$\judgment \envref{\Gamma''}{r''} > e' : U < \envref{\Gamma_1}{r_1}/$.
\end{proof}\fi

\subsection{Type preservation}

We use $\vdash s$ as an abbreviation for $\emptyset\vdash s$;
this represents well-typedness of a closed configuration.
We have the following result:
\begin{theorem}[Subject Reduction]
  \label{thm:subjectreduction}
  If, in a context parameterised by a set of well-typed declarations,
  we have $~\vdash s$ and $s\vreduces s'$, then $~\vdash s'$.
\end{theorem}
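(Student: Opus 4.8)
The plan is to prove a slightly generalized statement by induction on the derivation of $s\vreduces s'$, so that the channel environment can be tracked through the structural rules: for every channel typing $\Theta$, if $\typedconf{\Theta}{s}$ and $s\vreduces s'$, then $\typedconf{\Theta'}{s'}$ for some $\Theta'$ that coincides with $\Theta$ except possibly on the two endpoints of a single channel, which advance by one message while remaining dual. The theorem is the instance $\Theta=\emptyset$: a closed configuration has no free channel endpoints, so no $\Theta'$ can differ from $\Theta$ and we obtain $\vdash s'$ directly. The induction splits into three groups of cases, according to which reduction rule sits at the leaf of the derivation (Figures~\ref{fig:reduction} and~\ref{fig:concurrency}): the lifted sequential rules, the concurrency base rules, and the structural rules.

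For a single-thread $s=\state{h}{r}{e}$ reducing by a sequential rule, the typing comes via \Tthread\ and \Tstate\ from $\typedheapchan{\Theta}{\Gamma}{h}$ together with an expression judgement $\typedexp{\envref\Gamma r}{e}{T}{\envref{\Gamma'}{r'}}$, and $\Theta$ is unchanged (no communication). For \textsc{R-Context} I extract the typing of the redex in the hole by Lemma~\ref{lem:typablesubterms}, apply the relevant base case, and reinsert with Lemma~\ref{lem:replacement}. For \textsc{R-Call} and \textsc{R-SelfCall} I use Substitution (Lemma~\ref{lem:substitution}) for the parameter, and Opening (Lemma~\ref{lem:opening}) to convert the callee field's session type into the internal view $\typedheapchan{\Theta}{\changetype\Gamma r{\objecttype C F}}{h}$ that \Treturn\ requires, the body's typability following from consistency (Definition~\ref{def:sessionfield}) as checked by \Tclass. \textsc{R-Return} dually re-hides the fields using Closing (Lemma~\ref{lem:closing}). \textsc{R-Swap} is handled by heap modification (Lemma~\ref{lem:heapchange}), distinguishing its three cases (object/channel value, base value, and a label carrying a link type that re-establishes the recorded variant session type). \textsc{R-New} extends the heap with an all-$\nulltype$ record typed by \THide\ through the initial consistency of the class, and \textsc{R-Seq}, \textsc{R-Switch} are immediate from the shapes of \Tseq\ and \Tswitch/\TswitchLink. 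This group is exactly the content of Theorem~\ref{thm:subjectreductionSeq}, now recovered as the single-thread instance.

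For the concurrency base rules I use \Tpar\ to split the typing. \textsc{R-Spawn} creates a one-object thread running the method body, typable because \Tspawn\ guarantees the method is available in the class's initial session type and forces the parameter to $\nulltype$, so $e\subs{\nullterm}{x}$ is typable. \textsc{R-Init} uses \Tname\ to give the access point type $\sem{\access\Sigma}$, whence \requestterm\ and \acceptterm\ yield endpoints of types $\sem{\overline\Sigma}$ and $\sem\Sigma$; the fresh dual pair is typed by \Tchan\ and re-closed with \TnewChan. In \textsc{R-ComBase} the sender holds $\sem{\send{T}\chanseq\Sigma}$ and the receiver $\sem{\rcv{T}\chanseq\overline\Sigma}$; after the transfer they advance to $\sem\Sigma$ and $\sem{\overline\Sigma}$, which stay dual, and the base value is typed from the parameter type of \sendterm. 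The delicate case is \textsc{R-ComObj}, where the sub-heap $h\downarrow o$ must migrate between threads: I detach it with its own channel typing using Heap Splitting (Lemma~\ref{lem:heapsplitting}), rename it with the injection $\varphi$ via Heap Renaming (Lemma~\ref{lem:heaprenaming}), and graft $\varphi(h\downarrow o)$ onto the receiver with Heap Merging (Lemma~\ref{lem:heapmerging}); the object's session type travels with it, the channel advances as in the base case, and completeness and single-rootedness are preserved by those lemmas.

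The structural rules close the induction. \textsc{R-Par} is the induction hypothesis on the active component followed by \Tpar, using disjointness of the two channel typings. \textsc{R-NewChan} applies the induction hypothesis under the binder; because the hypothesis preserves duality of the bound pair $c^+,c^-$, rule \TnewChan\ re-applies to the reduct. \textsc{R-Str} needs a subject-congruence lemma—invariance of typing under $\equiv$—which I prove by induction on the congruence, using commutativity and associativity of $+$ on disjoint channel typings for \Ecomm\ and \Eassoc, and a commutation of \TnewChan\ past \Tpar\ for scope extrusion \Escope\ (sound since $c^{\pm}$ is not free in $s_1$). I expect \textsc{R-ComObj} to be the main obstacle, since it is the only rule that simultaneously rearranges the heap across threads and advances a channel: combining splitting, renaming, and merging so that linearity (one root, no shared references) and channel duality are both restored requires the heap lemmas to be applied in exactly the right order. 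The second delicate point is calibrating the generalized induction statement so that it records just enough about how $\Theta$ may change—only on one dual pair, duality preserved—which is precisely what makes the \textsc{R-NewChan} and \textsc{R-Str} cases go through.
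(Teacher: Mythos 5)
Your proposal is correct in outline, but it is organised quite differently from the paper's proof. The paper does not induct on the derivation of $s\vreduces s'$ with a strengthened invariant on $\Theta$; instead it factors the whole argument through a labelled transition system on individual threads (\textsc{Tr-Send}, \textsc{Tr-RcvObj}, etc.) together with a matching LTS on channel environments (Definition~\ref{def:thetatrans}), proves a single thread-wise progress-plus-preservation theorem (Theorem~\ref{thm:threadsr}) by induction on the \emph{structure of the expression}, and then uses a decomposition lemma (Lemma~\ref{lem:confreduction}) to rewrite any configuration reduction as one or two thread transitions with matching labels, recombining via \Tpar\ and \TnewChan. Your ``$\Theta'$ coincides with $\Theta$ except on one dual pair, which advances and stays dual'' invariant is essentially an inlined, ad hoc version of what the paper packages once as the $\Theta$-LTS; both routes lean on the same toolbox (Substitution, Opening, Closing, heap modification/splitting/renaming/merging) for the hard cases, and your treatment of \textsc{R-ComObj} and of how the invariant pushes through \TnewChan\ matches the paper's reasoning. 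What the paper's factoring buys is reuse: Theorem~\ref{thm:threadsr} simultaneously yields the sequential subject reduction (Theorem~\ref{thm:subjectreductionSeq}), the no-stuck-expressions result (Theorem~\ref{thm:typesafetySeq} needs the \emph{progress} half, which your preservation-only induction does not give), and the per-thread judgements used in the conformance proof (Theorem~\ref{thm:conf}); your direct induction is more self-contained but proves only preservation. One ingredient you omit and would need in practice: the paper normalises typing derivations so that \textsc{T-Sub}/\textsc{T-SubEnv} occur only at prescribed positions (Lemma~\ref{lem:rearrange-expr}, plus Lemma~\ref{lem:moreweak}); without that, the case analyses on ``the initial derivation is of the form \dots'' in your base cases do not go through as stated, because subsumption steps can be interleaved anywhere.
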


This global result is a consequence of a subject reduction theorem for
a single thread, which is similar but not identical to what we stated
as Theorem \ref{thm:subjectreductionSeq} (which will be a particular case).
The
reason it is not identical is that we need to prove that the type of
an expression is preserved not only when this expression reduces on
its own but also when it communicates with another thread. In order to
state precisely this thread-wise type preservation theorem, we introduce a labelled
transition system for threads. Transition labels can be: $\tau$
indicating internal reduction, $c^p\send{v}$ or $c^p\rcv{v}$ indicating that
the non-object value $v$ is sent or received on channel $c^p$,
$c^p\send{h}$ or $c^p\rcv{h}$, where $h$ is a heap with a single root
$o$, indicating that the object $o$ (together with its content) is
sent or received on channel $c^p$, $n[c^p]$ indicating that the
channel endpoint $c^p$ is received from access point $n$, or, finally,
$\methcal{C}{m}{}$ indicating that the thread spawns another one using
method $m$ of class $C$.
\begin{definition}[Labelled transition system]
  We define a labelled transition system for threads by the
  following rules:
\begin{center}\setlength\lineskip\typingRuleSkip
\AxiomC{$\state{h}{r}{e}\reduces\state{h'}{r'}{e'}$}
\rulename{Tr-Red}
\UnaryInfC{$\labtrans{\state{h}{r}{e}}{\tau}{\state{h'}{r'}{e'}}$}
\DisplayProof\hfil
\AxiomC{$h(r).f = c^p$}
\AxiomC{$v\not\in\mathcal{O}$}
\rulename{Tr-Send}
\BinaryInfC{$\labtrans{\state{h}{r}{\mathcal{E}[\methcal{f}{\sendterm}{v}]}}
  {c^p\send{v}}
  {\state{h}{r}{\mathcal{E}[\nullterm]}}$
}
\DisplayProof\hfil
\AxiomC{$h(r).f = c^p$}
\rulename{Tr-SendObj}
\UnaryInfC{
$\labtrans{\state{h}{r}{\mathcal{E}[\methcal{f}{\sendterm}{o}]}}
  {c^p\send{h\downarrow o}}
  {\state{h\uparrow o}{r}{\mathcal{E}[\nullterm]}}$
}
\DisplayProof\hfil
\AxiomC{$h(r).f = c^p$}
\AxiomC{$v\not\in\mathcal{O}$}
\rulename{Tr-Receive}
\BinaryInfC{$\labtrans{\state{h}{r}{\mathcal{E}[\methcal{f}{\rcvterm}{}]}}
  {c^p\rcv{v}}
  {\state{h}{r}{\mathcal{E}[v]}}$
}
\DisplayProof\hfil
\AxiomC{$h(r).f = c^p$}
\AxiomC{$\roots(h') = \{o\}$}
\AxiomC{$\dom(h)\cap\dom(h') = \emptyset$}
\rulename{Tr-RcvObj}
\TrinaryInfC{$\labtrans{\state{h}{r}{\mathcal{E}[\methcal{f}{\rcvterm}{}]}}
  {c^p\rcv{h'}}
  {\state{h+h'}{r}{\mathcal{E}[o]}}$
}
\DisplayProof\hfil
\AxiomC{$h(r).f = n$}
\rulename{Tr-Accept}
\UnaryInfC{$\labtrans{\state{h}{r}{\mathcal{E}[\methcal{f}{\acceptterm}{}]}}
  {n[c^+]}
  {\state{h}{r}{\mathcal{E}[c^+]}}$
}
\DisplayProof\hfil
\AxiomC{$h(r).f = n$}
\rulename{Tr-Request}
\UnaryInfC{$\labtrans{\state{h}{r}{\mathcal{E}[\methcal{f}{\requestterm}{}]}}
  {n[c^-]}
  {\state{h}{r}{\mathcal{E}[c^-]}}$
}
\DisplayProof\hfil
\axiomname{Tr-Spawn}
$\labtrans{\state{h}{r}{\mathcal{E}[\spawn{\methcal{C}{m}{v}}]}}
  {\methcal{C}{m}{}}
  {\state{h}{r}{\mathcal{E}[\nullterm]}}$
\end{center}
Note that \emph{both} $\tau$ and $\methcal{C}{m}{}$ correspond to the
thread being able to reduce on its own. An important feature of this
transition relation is that, for all rules, the right-hand state is
fully determined by the left-hand one and the transition
label. Moreover, the only case where several different transitions are
possible from a given state is when applying the rule
$\mkterm{receive}$, as the right-hand side depends on the value
received.
% transition relation is that the only case where several different
% transitions are possible from a given state is $\mkterm{receive}$.  In
% all cases, including $\mkterm{receive}$, the right-hand state is fully
% determined by the left-hand one and the transition label.
\end{definition}

\begin{definition}\label{def:thetatrans}
A similar transition relation, with the same set of labels,
is defined on channel environments $\Theta$ as follows:
\begin{center}\setlength\lineskip\typingRuleSkip
%
% TAU
%
$\labtrans\Theta\tau\Theta$
\hfil
%
% SPAWN
%
$\labtrans\Theta{\methcal{C}{m}{}}\Theta$
\hfil
%
% ACCEPT
%
\AxiomC{$n.\mkterm{protocol} = \access{\Sigma}$}
\AxiomC{$\forall p,c^p\not\in\dom(\Theta)$}
\BinaryInfC{
$\labtrans{\Theta}{n[c^+]}{\Theta, c^+ : \Sigma}$
}
\DisplayProof\hfil
%
% REQUEST
%
\AxiomC{$n.\mkterm{protocol} = \access{\Sigma}$}
\AxiomC{$\forall p,c^p\not\in\dom(\Theta)$}
\BinaryInfC{
$\labtrans{\Theta}{n[c^-]}{\Theta, c^- : \overline\Sigma}$
}
\DisplayProof\hfil
%
% SEND CHANNEL ENDPOINT
%
\AxiomC{$\sem{\Sigma'}\subt T$}
\UnaryInfC{
$\labtrans{\Theta, c^p : \send{T}.\Sigma, c'^{p'} : \Sigma'}
{c^p\send{c'^{p'}}}
{\Theta, c^p : \Sigma}$
}
\DisplayProof\hfil
%
% RECEIVE CHANNEL ENDPOINT
%
\AxiomC{$\sem{\Sigma'}\subt T$}
\UnaryInfC{
$\labtrans{\Theta, c^p : \rcv{T}.\Sigma}
{c^p\rcv{c'^{p'}}}
{\Theta, c^p : \Sigma, c'^{p'} : \Sigma'}$
}
\DisplayProof\\
%
% SEND LITERAL VALUE (label, null, or access point)
%
\AxiomC{\judgment\emptyset > v : T < \emptyset /}
\UnaryInfC{
$\labtrans{\Theta, c^p : \send{T}.\Sigma}
{c^p\send{v}}
{\Theta, c^p : \Sigma}$
}
\DisplayProof\hfil
%
% RECEIVE LITERAL VALUE
%
\AxiomC{\judgment\emptyset > v : T < \emptyset /}
\UnaryInfC{
$\labtrans{\Theta, c^p : \rcv{T}.\Sigma}
{c^p\rcv{v}}
{\Theta, c^p : \Sigma}$
}
\DisplayProof\hfil
%
% SELECT
%
\AxiomC{$l_0\in E$}
\UnaryInfC{
$\labtrans{\Theta, c^p : \chanchoice{l : \Sigma_l}_{l\in E}}
{c^p\send{l_0}}
{\Theta, c^p : \Sigma_{l_0}}$
}
\DisplayProof\hfil
%
% BRANCH
%
\AxiomC{$l_0\in E$}
\UnaryInfC{
$\labtrans{\Theta, c^p : \chanbranch{l : \Sigma_l}_{l\in E}}
{c^p\rcv{l_0}}
{\Theta, c^p : \Sigma_{l_0}}$
}
\DisplayProof\hfil
%
% SEND OBJECT
%
\AxiomC{\typedheapchan{\Theta_1}{o : S}{h}}
\AxiomC{$\dom(\Theta_1)\subset\chans(h)$}
\BinaryInfC{
$\labtrans{\Theta_1 + \Theta_2, c^p : \send{S}.\Sigma}
{c^p\send{h}}
{\Theta_2, c^p : \Sigma}$
}
\DisplayProof\hfil
%
% RECEIVE OBJECT
%
\AxiomC{\typedheapchan{\Theta'}{o : S}{h}}
\AxiomC{$\dom(\Theta')\subset\chans(h)$}
\AxiomC{$\dom(\Theta)\cap\dom(\Theta') = \emptyset$}
\TrinaryInfC{
$\labtrans{\Theta, c^p : \rcv{S}.\Sigma}
{c^p\rcv{h}}
{\Theta + \Theta', c^p : \Sigma}$
}
\DisplayProof\hfil
\end{center}

Where we use \judgment\emptyset > v : T < \emptyset / as an
abbreviation for \judgment\envref{\mkterm{dummy} : \objecttype C
  {}}{\mkterm{dummy}} > v : T < \envref{\mkterm{dummy} : \objecttype C
  {}}{\mkterm{dummy}} / --- meaning that $v$ is a literal value (or
access point name) of type $T$.
\end{definition}

We can now state our thread-wise type preservation theorem.
% The reason
% why the evolution of the initial environment is described precisely in
% the statement is because it will be used in the proof of conformance.
\begin{theorem}[Thread-wise progress and type preservation]\label{thm:threadsr}
  Let $\mathcal{D}$ be a set of well-typed declarations, that is, such
  that for every class declaration $D$ in $\mathcal{D}$ we have $\vdash
  D$.  In a context parameterised by $\mathcal{D}$, suppose we have
  $\judgment \Theta;\Gamma > \state{h}{r}{e} : T < \envref{\Gamma'}{r'} /$.

  Then either $e$ is a value or
  there exists a transition label $\lambda$ such that we have
  $\labtrans{\state{h}{r}{e}}{\lambda}{\state{h'}{r''}{e'}}$ for some
  $h'$, $r''$ and $e'$.

  Furthermore, if $\lambda$ is such that
  $\labtrans{\Theta}{\lambda}{\Theta'}$  for some  $\Theta'$, then
  there exists $\Gamma''$ such that
  \judgment\Theta';\Gamma'' > \state{h'}{r''}{e'} : T <
  \envref{\Gamma'}{r'}/ holds.
\end{theorem}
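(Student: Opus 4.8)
The plan is to prove Theorem~\ref{thm:threadsr} by a single induction on the typing derivation of $\judgment \Theta;\Gamma > \state{h}{r}{e} : T < \envref{\Gamma'}{r'} /$, treating progress and preservation together. The outermost rule must be \Tstate, whose premises give $\typedheapchan\Theta\Gamma h$ and $\judgment \envref\Gamma r > e : T < \envref{\Gamma'}{r'} /$. I would first dispose of the progress part: unless $e$ is already a value, I locate the active redex by decomposing $e = \mathcal{E}[e_0]$ where $e_0$ is the leftmost reducible subexpression, using Lemma~\ref{lem:typablesubterms} to obtain a typing for $e_0$. A case analysis on the shape of $e_0$ (a method call, a swap, a switch, a return, an accept/request, a send/receive, or a spawn) shows that in each case the side-conditions needed for the corresponding reduction or transition rule are guaranteed by well-typedness --- e.g.\ the field accessed actually exists and holds a value of the right shape, which follows from the heap-consistency lemmas. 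Crucially, every non-value $e_0$ admits at least one transition label $\lambda$, establishing progress.

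Next, for preservation, I would assume $\labtrans{\state{h}{r}{e}}{\lambda}{\state{h'}{r''}{e'}}$ with $\labtrans{\Theta}{\lambda}{\Theta'}$, and produce $\Gamma''$ with $\typedheapchan{\Theta'}{\Gamma''}{h'}$ and $\judgment \envref{\Gamma''}{r''} > e' : T < \envref{\Gamma'}{r'} /$, from which \Tstate\ reassembles the state typing. The engine here is the Replacement Lemma~\ref{lem:replacement}: I rearrange the derivation via Lemma~\ref{lem:rearrange-expr} so that subsumption is under control, extract the typing of the redex $e_0$, build a typing of its contractum $e_0'$ with the \emph{same} final environment $\envref{\Gamma_1}{r_1}$ (possibly updating the initial environment), and plug it back in. The substitution-driven cases (\Rcall, \Rspawn) invoke Lemma~\ref{lem:substitution}; the field-update case (\Rswap) invokes Lemma~\ref{lem:heapchange}, which is exactly tailored to re-typing $\changeval{h}{r.f}{v'}$ and handles the three sub-cases (object/channel value, base value, link-to-variant). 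The $\linkthis$/variant interplay --- the scenario walked through in Figure~\ref{fig:reduction-ex} --- is where I expect the most delicate bookkeeping: I must track how a $\linktype{}{f}$ type migrates into the field it was swapped into, and how \TvarS\ reconstructs the variant session type after \Rreturn, matching the \textsc{T-VarF}/\TswitchLink\ machinery.

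For the channel transitions I would lean on the heap-manipulation lemmas proved in Section~\ref{sec:techlemmas}. The object-receive case \textsc{Tr-RcvObj} uses Merging (Lemma~\ref{lem:heapmerging}) together with Opening/Closing (Lemmas~\ref{lem:opening},~\ref{lem:closing}) to graft the received sub-heap $h'$ into the current heap while converting its root's session type, and symmetrically \textsc{Tr-SendObj} uses Splitting (Lemma~\ref{lem:heapsplitting}) and renaming (Lemma~\ref{lem:heaprenaming}); the side-conditions $\dom(\Theta')\subset\chans(h)$ in Definition~\ref{def:thetatrans} are precisely what make these lemmas applicable. The base send/receive and the branch/select label transitions follow the same pattern but without heap surgery, matching the $\Theta$-transition to the advance of the channel's class session type $\sem\Sigma$ under \Tcall.

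\textbf{The main obstacle} I anticipate is the coordination between the $\Theta$-transition system of Definition~\ref{def:thetatrans} and the thread transition system: the theorem only demands preservation \emph{when} $\lambda$ is a valid $\Theta$-transition, so for each communication label I must verify that the type the expression's own derivation assigns to the transmitted value agrees with the channel session type recorded in $\Theta$, up to the subtyping slack $\sem{\Sigma'}\subt T$ permitted in the rules. Reconciling this subtyping with the linear discipline --- ensuring the transmitted object is genuinely removed from the sender's environment (via \Tref) and appears afresh in the receiver's, with no aliasing --- is the crux, and it is exactly what the heap-completeness and root properties of Lemma~\ref{lem:heapproperties} are there to guarantee.
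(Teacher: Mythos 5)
Your plan follows essentially the same route as the paper's proof: decompose $e$ via evaluation contexts using Lemma~\ref{lem:typablesubterms} and Lemma~\ref{lem:replacement}, normalize subsumption with Lemma~\ref{lem:rearrange-expr}, and then case-analyse the redex, invoking Substitution for calls/spawn, Lemma~\ref{lem:heapchange} (with its three sub-cases) for swap, Opening/Closing for call/return, and Splitting/Merging for object transfer, with the $\Theta$-transition side-conditions checked against the type of the transmitted value exactly as the paper does. The only cosmetic difference is that the paper phrases the induction as being on the structure of $e$ with respect to contexts (peeling one context layer per step) rather than on the typing derivation, and it additionally records the small observation that $e$ cannot be a variable since $\dom(\Gamma)$ contains only identifiers and endpoints.
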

Theorem \ref{thm:subjectreductionSeq} is the particular case where
$\lambda = \tau$.

\begin{corollary}[Theorem \ref{thm:typesafetySeq}]
  If $\mathcal{D}$ contains no name declaration and $\Theta$ is empty,
  then there exists $s'$ such that $\state{h}{r}{e}\reduces s'$.
\end{corollary}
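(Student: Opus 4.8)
The plan is to obtain this from Theorem~\ref{thm:threadsr} by showing that, under the two extra hypotheses, the only transition labels that can arise are the two that already correspond to an internal reduction. Applying Theorem~\ref{thm:threadsr} to the typing hypothesis $\judgment\Theta;\Gamma > \state{h}{r}{e} : T < \envref{\Gamma'}{r'}/$ yields the dichotomy: either $e$ is a value, in which case the conclusion of Theorem~\ref{thm:typesafetySeq} holds with no reduction required, or there is a label $\lambda$ with a transition $\labtrans{\state{h}{r}{e}}{\lambda}{\state{h'}{r''}{e'}}$. The remaining work is a case analysis on the rule of the labelled transition system that justifies $\lambda$.

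Two structural observations are needed to discard the communication and access-point labels. First, since $\Theta$ is empty I claim $\chans(h) = \emptyset$. This is proved by a short induction on the derivation of $\typedheapchan{\Theta}{\Gamma}{h}$, in the style of Lemma~\ref{lem:heapproperties}: at \textsc{T-Hempty} both $h$ and the initial environment $\sem\Theta$ are channel-free, and at each \textsc{T-Hadd} the endpoints placed into the new record are exactly those field values typed by \textsc{T-Chan}, which draws them from $\chans(\Gamma)$; since channel endpoints only ever leave $\Gamma$ (never enter it when $\sem\Theta$ is empty), $\chans(\Gamma)$ stays empty and no endpoint is ever put into the heap. Second, because $\mathcal{D}$ contains no \textsf{access} declaration, rule \textsc{T-Name}—the sole rule able to type a bare access-point name—can never be applied; hence no field value in a \textsc{T-Hadd} step can be a name, and $h$ contains no names.

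With these facts the case analysis is immediate. The labels $c^p\send{\cdot}$ and $c^p\rcv{\cdot}$ of rules \textsc{Tr-Send}, \textsc{Tr-SendObj}, \textsc{Tr-Receive} and \textsc{Tr-RcvObj} all carry the side condition $h(r).f = c^p$ for a channel endpoint, which is impossible as $\chans(h)=\emptyset$; the labels $n[c^\pm]$ of \textsc{Tr-Accept} and \textsc{Tr-Request} require $h(r).f = n$ for a name, which is impossible as $h$ has no names. The only survivors are $\lambda = \tau$, which by rule \textsc{Tr-Red} means exactly $\state{h}{r}{e}\reduces\state{h'}{r''}{e'}$, and $\lambda = \methcal{C}{m}{}$, in which case $e = \mathcal{E}[\spawn{\methcal{C}{m}{v}}]$ and rule \textsc{R-Spawn} provides an internal reduction of $\state{h}{r}{e}$ to a parallel composition. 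In either case a state $s'$ with $\state{h}{r}{e}\reduces s'$ exists, as required.

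I expect the only slightly delicate point to be the claim $\chans(h)=\emptyset$: Lemma~\ref{lem:heapproperties} bounds $\chans(\Gamma)$ rather than $\chans(h)$ directly, so I would re-run its induction just far enough to track that every channel endpoint appearing in the heap originates, via \textsc{T-Chan}, from a $\Theta$ that is empty here. The parallel observation for names is completely routine, and the concluding case analysis is pure rule inspection, so no further obstacle arises.
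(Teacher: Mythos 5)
Your proof is correct and follows essentially the same route as the paper's: the paper likewise observes that $\typedheapchan\Theta\Gamma h$ with $\Theta$ empty and no \textsf{access} declarations forces the heap to contain no $n$ or $c^p$, so the label $\lambda$ supplied by Theorem~\ref{thm:threadsr} can only be $\tau$ or $\methcal{C}{m}{}$, both of which are internal reductions. The only difference is that you spell out the auxiliary induction establishing $\chans(h)=\emptyset$, which the paper leaves implicit.
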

\begin{proof}[(Corollary)]
  In that particular case, \typedheapchan\Theta\Gamma h implies that the heap
  cannot contain any $n$ or $c^p$, hence $\lambda$ can only be $\tau$
  or of the form $\methcal{C}{m}{}$.
\end{proof}

\begin{proof}[(Theorem)]
  We always use typing derivations where subsumption steps only occur
  at the positions described in Lemma \ref{lem:rearrange-expr}.
  Furthermore, it is sufficient to consider only cases where
  subsumption does not occur at the end: indeed, if it does occur,
  then we can add a similar subsumption step to the new judgement.
  The hypothesis in the theorem that $\judgment
  \Theta;\Gamma > \state{h}{r}{e} : T < \envref{\Gamma'}{r'} /$ holds is necessarily a
  result of \textsc{T-State} and therefore is equivalent to the two
  hypotheses $\typedheapchan\Theta\Gamma h$ and \judgment\envref{\Gamma}{r} > e : T <
  \envref{\Gamma'}{r'} /, which we will sometimes refer to directly.

  We prove the theorem by induction on the structure of $e$ with
  respect to contexts, and present the inductive case first:

  If $e$ is of the form $\mathcal{E}[e_1]$ where
  $e_1$ is not a value and $\mathcal{E}$ is not just $[\_]$, then Lemma
  \ref{lem:typablesubterms} tells us that $\judgment\envref{\Gamma}{r} > e_1 : U <
  \envref{\Gamma_1}{r_1}/$ appears in the typing derivation of
  \judgment\envref\Gamma r > e :
  T < \envref{\Gamma'}{r'}/ for some $U$, $r_1$ and $\Gamma_1$. From there we can apply
  \textsc{T-State} and derive
  $\judgment \Theta;\Gamma > \state{h}{r}{e_1} : U <
  \envref{\Gamma_1}{r_1}/$. This allows us 
  to use the induction hypothesis and get $\lambda, e_2, r''$ and $h'$
  such that
  $\labtrans{\state{h}{r}{e_1}}{\lambda}{\state{h'}{r''}{e_2}}$.
  Then we straightforwardly have
  $\labtrans{e}{\lambda}{\mathcal{E}[e_2]}$, either by applying
  \textsc{R-Context} if $\lambda$ is $\tau$ or by replacing the
  context in the transition rule if it is something else.
  Now if $\lambda$ is such that $\labtrans\Theta\lambda{\Theta'}$, then
  the induction hypothesis\footnote{clearly there is no $\lambda$ such that we
  would have $\labtrans{\mathcal{E}[e_1]}{\lambda}{}$ but not
  $\labtrans{e_1}{\lambda}{}$, hence it is legitimate to use the
  induction hypothesis here.} also gives us $\Gamma''$ such that
  \judgment\Theta';\Gamma'' >
  \state{h'}{r''}{e_2} : U < \envref{\Gamma_1}{r_1} / holds.
   From this we get, by reading \textsc{T-State} upwards,
  \typedheapchan{\Theta'}{\Gamma''}{h'} and \judgment\envref{\Gamma''}{r''} > e_2 :
  U < \envref{\Gamma_1}{r_1} /. We use Lemma~\ref{lem:replacement} with the latter
  in order to obtain \judgment \envref{\Gamma''}{r''} > \mathcal{E}[e_2] : T <
  \envref{\Gamma'}{r'} / and conclude with \textsc{T-State}.
  % Note that this
  % inductive step does not change $\Gamma''$ nor the original reduction
  % rule.

  The base cases are if $e$ is of the form $\mathcal{E}[v]$
  with $\mathcal{E}$ elementary (i.e.\ not of the form
  $\mathcal{E'}[\mathcal{E''}]$ with $\mathcal{E''}\neq [\_]$) and if it
  is not of the form $\mathcal{E}[e_1]$ at all. We list them below.
  \begin{itemize}
  \item If $e$ is a value, there is nothing to prove.

  \item $e$ cannot be a variable. Indeed, \typedheapchan\Theta\Gamma h implies
    that $\dom(\Gamma)$ contains only object identifiers and channel
    endpoints. Therefore, 
    \judgment\envref{\Gamma}{r} > e : T < \envref{\Gamma'}{r'}/
    cannot be a conclusion of
    \textsc{T-Var} or \textsc{T-LinVar}, thus $e$ is not a variable.

  \item $e = \seq{v}{e'}$. Then the expression reduces by
    \textsc{R-Seq} and the initial derivation is as follows:
    \begin{center}\footnotesize
      \AxiomC{\typedheapchan\Theta\Gamma h \textbf{(a)}}
      \AxiomC{\dots}
      \rulename{1}
      \UnaryInfC{\judgment \envref{\Gamma}{r} > v : T' < \envref{\Gamma_1}{r} /}
      \AxiomC{\judgment \envref{\Gamma_1}{r} > e' : T < \envref{\Gamma'}{r'} / \textbf{(b)}}
      \rrulename{T-Seq}
      \BinaryInfC{\judgment\envref\Gamma r > \seq v {e'} : T <
        \envref{\Gamma'}{r'} /}
      \rrulename{T-State}
      \BinaryInfC{\judgment\Theta;\Gamma > \state{h}{r}{\seq v {e'}} : T < \envref{\Gamma'}{r'} /}
      \DisplayProof
    \end{center}
    Furthermore, $T'$ is not a link type. Therefore, (1) cannot be
    \textsc{T-VarF} or \textsc{T-VarS}
    and it is either \textsc{T-Ref}, \textsc{T-Chan}, \textsc{T-Name},
    \textsc{T-Label} or
    \textsc{T-Null}, since these are the only rules for typing values. If it is
    \textsc{T-Null}, \textsc{T-Label} or \textsc{T-Name},
    then $\Gamma = \Gamma_1$; if it is \textsc{T-Ref} or
    \textsc{T-Chan}, then $\Gamma \subt \Gamma_1$ and
    we can use Lemma \ref{lem:moreweak} to get
    \judgment\envref{\Gamma}{r} > e' : T < \envref{\Gamma'}{r'} /
    from (b) in both cases. We conclude from this
    using (a) and \textsc{T-State}.

  \item $e = \new C$. Then the expression reduces by \textsc{R-New} and
    the initial reduction is as follows:
    \begin{center}\footnotesize
      \AxiomC{\typedheapchan\Theta\Gamma h \textbf{(a)}}
      \AxiomC{}
      \rrulename{T-New}
      \UnaryInfC{\judgment\envref{\Gamma}{r} > \new C : C.\sessterm <
        \envref{\Gamma} r /}
      \rulename{T-State}
      \BinaryInfC{\judgment \Theta;\Gamma > \state{h}{r}{\new C} : C.\sessterm <
        \envref{\Gamma}{r} /}
      \DisplayProof
    \end{center}
    Let $S = C.\sessterm$. From the hypothesis that $\mathcal{D}$ is well-typed,
    we have $\vdash\class{C}{S}{\vec f}{\vec M}$. This must come from
    \textsc{T-Class}, therefore we have
    $\typedsess{\overrightarrow\nulltype\,\vec f}{C}{S}$ \textbf{(b)}.

    We build the following derivation:
    \begin{center}\footnotesize
      \AxiomC{(a)}
      \AxiomC{\textsc{T-Null, T-Swap, T-Seq}}
%      \rrulename{T-Null, T-Swap, T-Seq}
      \UnaryInfC{\judgment\envref{\Gamma, o : \objecttype C
          {\overrightarrow\nulltype\,\vec f}}{o} >
        \swap{\vec f}{\overrightarrow\nulltype} : \nulltype <
        \envref{\Gamma, o : \objecttype C
          {\overrightarrow\nulltype\,\vec f}}{o} /}
      \rulename{T-Hadd}
      \BinaryInfC{\typedheapchan\Theta{\Gamma, o : \objecttype C
          {\overrightarrow\nulltype\,\vec
            f}}{\hadd{h}{o=\hentry{C}{\vec f =
              \overrightarrow\nullterm}}}}
      \AxiomC{(b)}
      \rulename{T-Hide}
      \BinaryInfC{\typedheapchan\Theta{\Gamma, o : S}{\hadd{h}{o=\hentry{C}
            {\vec f = \overrightarrow\nullterm}}}}
      % \AxiomC{}
      % \rulename{T-Ref}
      % \UnaryInfC{\judgment\envref{\Gamma, o : S}{r} > o : S \subt T
      %   < \envref{\Gamma\subt\Gamma'}{r} /}
      % \rulename{T-State}
      % \BinaryInfC{\judgment\Theta;\Gamma, o : S > \state{\hadd{h}{o=\hentry{C}{\vec f = \overrightarrow\nullterm}}}{r}{o} :
      %   T < \envref{\Gamma'}{r} /}
      \DisplayProof
    \end{center}
    then conclude \judgment\Theta;\Gamma, o : S >
    \state{\hadd{h}{o=\hentry{C}{\vec f =
          \overrightarrow\nullterm}}}{r}{o} : S < \envref{\Gamma}{r} /
    using \textsc{T-Ref} (it is not possible that $r$ starts with $o$
    since $o$ is fresh) and \textsc{T-State}.

  \item $e = \switch{v}{l}{e_l}{l\in E}$. Then we have two cases.
    The slightly more complex one is if the initial derivation is as follows:
    \begin{center}\footnotesize
%      \AxiomC{\typedheapchan\Theta\Gamma h \textbf{(a)}}
      \AxiomC{$v$ is a label}
%      \LeftLabel{\tiny\textsc{T-Label}}
      \UnaryInfC{\judgment\envref{\Gamma}{r} > v : \{v\} <
        \envref{\Gamma}{r} /}
      \AxiomC{$\Gamma(r.f) = S$}
%      \LeftLabel{\tiny\textsc{T-VarS}}
      \BinaryInfC{\judgment\envref{\Gamma}{r} > v : \linktype{}f <
        \envref{\changetype\Gamma{r.f}{\choice{v}{S}{}}}{r} /}
      \AxiomC{$v\in E$ \textbf{(b)}}
      \AxiomC{\judgment\envref\Gamma r > e_v :
        T < \envref{\Gamma'}{r} / \textbf{(c)}}
%      \LeftLabel{\tiny\textsc{T-SwLnk}}
      \TrinaryInfC{\judgment\envref\Gamma r > \switch{v}{l}{e_l}{l\in E} : T
        < \envref{\Gamma'}{r}/}
      % \rrulename{T-State}
      % \BinaryInfC{\judgment\Theta;\Gamma >
      %   \state{h}{r}{\switch{v}{l}{e_l}{l\in E}} : T <
      %   \envref{\Gamma'}{r} /}
      \DisplayProof
    \end{center}(using \textsc{T-Label}, \textsc{T-VarS},
    \textsc{T-SwitchLink} top to bottom).
    As usual we also have \typedheapchan\Theta\Gamma h \textbf{(a)}
    as the other premise of \textsc{T-State} (omitted for lack of
    space). The reason why the initial environment of judgement (c) is
    $\Gamma$ is because it is obtained from the version of
    $\Gamma$ with the type of $r.f$ modified by modifying this type
    again, putting back $S$ instead of the variant.
    (b) implies that the expression reduces by \textsc{R-Switch}. As
    regards type preservation, we can
    conclude $\judgment \Theta;\Gamma > \state{h}{r}{e_v} : T < \envref{\Gamma'}{r}/$
    directly from (a), (c), and \textsc{T-State}.
    The other case is when the \textsc{T-VarS} step is absent and the
    following rule is \textsc{T-Switch} instead of \textsc{T-SwitchLink}; the
    argument is the same.
    
  \item $e = \swap f v$. Then the initial derivation is as follows:
    \begin{center}\footnotesize
%      \AxiomC{\typedheapchan\Theta\Gamma h \textbf{(a)}}
      \AxiomC{$\ldots$}
%      \rrulename{1}
      \UnaryInfC{\judgment\envref\Gamma r > v' : T' <
        \envref{\Gamma_1}{r} / \textbf{(b)}}
      \AxiomC{$\Gamma_1(r.f) = T$ \textbf{(c)}}
      \AxiomC{$T$ is not a variant \textbf{(d)}}
      \rrulename{T-Swap}
      \TrinaryInfC{\judgment\envref\Gamma r > \swap f {v'} : T <
        \envref{\changetype{\Gamma_1}{r.f}{T'}}{r} /}
      % \rrulename{T-State}
      % \BinaryInfC{\judgment\Theta;\Gamma > \state{h}{r}{\swap f {v'}} : T <
      %   \envref{\Gamma'}{r} /}
      \DisplayProof
    \end{center}
    and we also have, as usual, \typedheapchan\Theta\Gamma h
    \textbf{(a)}. The fact that
    $\Gamma_1(r.f)$ is defined implies that $\Gamma(r.f)$ is also
    defined, indeed the effect of typing $v$ can only remove from the
    environment or create a variant type, so it can only decrease the
    set of valid field references. Thus $h(r).f$ is defined as
    well, and the expression reduces by \textsc{R-Swap}.
    Let $v = h(r).f$. From (a), (b), (c) and (d),
    we use Lemma \ref{lem:heapchange} to get $\Gamma''$
    such that $\typedheapchan\Theta{\Gamma''}{\changeval{h}{r.f}{v'}}$.
    We then notice that
    in each of the three cases of the lemma we have
    $\judgment\envref{\Gamma''}{r} > v : T <
    \envref{\changetype{\Gamma_1}{r.f}{T'}}{r} /$:
    \begin{enumerate}
    \item If $v$ is an object identifier or channel endpoint, then $\Gamma'' =
      \changetype{\Gamma_1}{r.f}{T'}, v : T$. We use \textsc{T-Ref}
      or \textsc{T-Chan}.
    \item If $v$ is not an object or channel and $T$ is not a link type, then $v$
      is either $\nullterm$, an access point name or a label. We use
      \textsc{T-Null}, \textsc{T-Name} or \textsc{T-Label}.
    \item If $T = \linktype{}{f'}$, then $\Gamma_1(r.f') =
      \choice{l}{S_l}{l\in E}$ with $v\in E$ and we have
      $\Gamma'' =
      \changetype{\changetype{\Gamma_1}{r.f}{T'}}{r.f'}{S_v}$.
      We use \textsc{T-Label}, \textsc{T-VarS} and \textsc{T-SubEnv}.
    \end{enumerate} 
    Finally we conclude with \textsc{T-State}.

  \item $e = \return{v}{r}$. Then the expression reduces by
    \textsc{R-Return}. The initial derivation is as follows:
    \begin{center}\footnotesize
%      \AxiomC{\typedheapchan\Theta\Gamma h \textbf{(a)}}
      \AxiomC{\dots}
      \rulename{1}
      \UnaryInfC{\judgment \envref\Gamma {r.f} > v : T_1 <
        \envref{\Gamma_1}{r.f} /}
      % \changetype{\Gamma'}{r}{\objecttype C {F\subs{r}{\this}}} /}
      \AxiomC{$\Gamma_1(r.f) = \objecttype C F$}
      \AxiomC{\typedsess F C {S} \textbf{(b)}}
      \rrulename{T-Return}
      \TrinaryInfC{\judgment \envref\Gamma{r.f} >
        \return v {} : T <
        \envref{\changetype{\Gamma_1}{r.f}{S}}{r} /}
      % \rrulename{T-State}
      % \BinaryInfC{\judgment \Gamma > \state{h}{r.f}{\return v {}} : T <
      %   \envref{\Gamma'}{r} /}
      \DisplayProof
    \end{center}
    with also \typedheapchan\Theta\Gamma h \textbf{(a)}.
    We distinguish cases depending on what rule (1) is:
    \begin{itemize}
    \item If (1) is \textsc{T-Null}, \textsc{T-Name} or \textsc{T-Label},
      then $\Gamma = \Gamma_1$, and if it is
      \textsc{T-Ref} or \textsc{T-Chan}, then
      $\Gamma = \Gamma_1, v : T_1$. In both cases we have
      $\Gamma(r.f) = \Gamma_1(r.f)$ and $T_1 = T$. From (a) and (b) we deduce
      \typedheapchan\Theta{\changetype{\Gamma}{r.f}{S}}{h} using the
      closing lemma (Lemma \ref{lem:closing}).
      We then
      use \textsc{T-Null}, \textsc{T-Name}, \textsc{T-Label},
      \textsc{T-Chan} or \textsc{T-Ref},
      as appropriate, to get
      \judgment\envref{\changetype{\Gamma}{r.f}{S}}{r} > v :
      T_1 <
      \envref{\changetype{\Gamma_1}{r.f}{S}}{r} /,
      and we conclude with \textsc{T-State}.
    \item (1) cannot be \textsc{T-VarS} because \textsc{T-Return} forbids
      that $T_1$ be of the form $\linktype{}{f'}$.
    \item If (1) is \textsc{T-VarF}, then
      $T_1 = \linkthis$ and $T = \linktype{}{f}$.
      Furthermore, $v$ is a label, $F = \vfield{v}{F'}{}$ with $F'$
      not a variant, and $\Gamma = \changetype{\Gamma_1}{r.f}{\objecttype{C}{F'}}$.
      (b) then implies that
      $S$ is of the form $\choice{l}{S_l}{l\in E}$ with $v\in E$
      and $\typedsess{F'}{C}{S_v}$. Note that because
      $F'$ is not a variant, $S_v$ must be a branch. Now,
      from that judgement and (a), we use the
      closing lemma to get
      \typedheapchan\Theta{\changetype{\Gamma}{r.f}{S_v}}{h}.
      Let $\Gamma'' = \changetype{\Gamma}{r.f}{S_v}$.
      Since $\Gamma$ only differs from $\Gamma_1$ by the type of
      $r.f$, it is also the case of $\Gamma''$, and as
      $\choice{v}{S_v}{}$ is a subtype of $S$,
      we have $\changetype{\Gamma''}{r.f}{\choice{v}{S_v}{}} \subt
      \changetype{\Gamma_1}{r.f}{S}$.
  % Recall that $\Gamma$ only differs from $\Gamma_1$ by
  % the type of $r.f$ and that $\Gamma_1\subt\Gamma_2$. Thus, since
  % $\choice{v}{S_v}{}\subt S$, we have
  % $\changetype\Gamma{r.f}{\choice{v}{S_v}{}} \subt
  % \changetype{\Gamma_2}{r.f}{S} \subt \Gamma'$ \textbf{(e)}.
      From all this, we build the following derivation:
      \begin{center}\footnotesize\null\hskip -2.5em\makebox[0pt][c]{
        \AxiomC{\typedheapchan\Theta{\changetype{\Gamma}{r.f}{S_v}}{h}}
        \AxiomC{$v$ is a label}
        \rulename{T-Label}
        \UnaryInfC{\judgment\envref{\Gamma''}{r} > v :
          \{v\} < \envref{\Gamma''}{r} /}
        \AxiomC{$S_v$ branch}
        \rulename{T-VarS}
        \BinaryInfC{\judgment\envref{\Gamma''}{r} >
          v : T <
          \envref{\changetype{\Gamma''}{r.f}{\choice{v}{S_v}{}}}{r} /}
        \rulename{T-SubEnv}
        \UnaryInfC{\judgment\envref{\Gamma''}{r} >
          v : T <
          \envref{\Gamma'}{r} /}
        \rulename{T-State}
        \BinaryInfC{\judgment\Theta;\Gamma'' >
          \state{h}{r}{v} : T < \envref{\Gamma'}{r} /}
        \DisplayProof
      }
      \end{center}
    \end{itemize}

  \item $e = \spawn{\methcal{C}{m}{v}}$. The initial derivation involves
    \textsc{T-Spawn}, and $v$ is $\nullterm$.
    The premise that the method exists implies that
    the state can reduce by \textsc{R-Spawn}, which corresponds to a
    $\methcal{C}{m}{}$ transition. The new derivation is obtained
    replacing \textsc{T-Spawn} with \textsc{T-Null}.

  \item $e = \methcal{f}{m}{v}$.  The initial derivation is
    as follows, with $m = m_j$ and $j\in I$:
    \begin{center}\footnotesize
%        \AxiomC{\typedheapchan\Theta\Gamma h \textbf{(a)}}
        \AxiomC{\ldots}
        \rulename{1}
        \UnaryInfC{\judgment\envref\Gamma r > v : T' <
          \envref{\Gamma_1}{r} /}
        \rulename{T-Sub}
        \UnaryInfC{\judgment\envref\Gamma r > v : T'_j <
          \envref{\Gamma_1}{r} /}
        \AxiomC{$\Gamma_1(r.f) =
          \branch{\methsign{m_i}{T'_i}{T_i}}{S_i}{i\in I}$ \textbf{(b)}}
        \rulename{T-Call}
        \BinaryInfC{\judgment \envref\Gamma r> \methcal{f}{m_j}{v} :
          T <
          \envref{\changetype{\Gamma_1}{r.f}{S_j}}{r} /}
        % \rrulename{T-State}
        % \BinaryInfC{\judgment\Gamma >
        %   \state{h}{r}{\methcal{f}{m_j}{v}} : T < \envref{\Gamma'}{r} /}
        \DisplayProof
    \end{center}
    and we also have \typedheapchan\Theta\Gamma h \textbf{(a)}. $T$
    is obtained from $T_j$ as specified in \textsc{T-Call}, \ie
    replacing $\linkthis$ with $\linktype{}f$ if necessary. Let $S
    = \branch{\methsign{m_i}{T'_i}{T_i}}{S_i}{i\in I}$.
    First note that $T'_j$ is a part of a method signature and that only a
    restricted set of types is allowed there: it cannot be of the form
    $\linktype{}{f'}$. Furthermore, (1) cannot be \textsc{T-VarF} because
    of (b), thus $T'$ is not $\linkthis$ either.
    Indeed, if $\Gamma_1(r)$ were a variant, $\Gamma_1(r.f)$ would
    not be defined. Therefore (1) is either \textsc{T-Null},
    \textsc{T-Label}, \textsc{T-Chan}, \textsc{T-Name} or
    \textsc{T-Ref} and in all cases we have $\Gamma(r.f) = \Gamma_1(r.f)$.
    As it is a session type, it implies because of (a)
    that $h(r).f$ exists and is either an object
    identifier, an access point name or a channel endpoint. We
    distinguish these three cases:
    \begin{itemize}
    \item $h(r).f$ is an object identifier $o$.
      We use (a) and the
      opening lemma (Lemma \ref{lem:opening}) to get a field typing
      $\objecttype C F$ such that
      $\typedheapchan\Theta{\changetype{\Gamma}{r.f}{\objecttype C {F}}}{h}$ 
      and $\typedsess{F}{C}{S}$. This last judgement implies, by
      definition, that $F$
      is not a variant; that, among others, method $m_j$ appears in
      the declaration of class $C$;  and that, if $e_j$ is its body and $x$
      its parameter, we have
      $\judgment \envref{x:T'_j, \this:\objecttype C F}{\this} > e_j : T_j <
      \envref{\this:\objecttype{C}{F_j}}{\this} /$ and
      $\typedsess{F_j}{C}{S_j}$. The fact that the method is declared
      implies
      $\state{h}{r}{e}\reduces\state{h}{r.f}{\return{e_j\subs{v}{x}}{}}$;
      we now have to type this resulting state. For this, 
      we apply the substitution lemma
      (Lemma \ref{lem:substitution}) to the typing judgement for $e_j$, using
      $\changetype{\Gamma_1}{r.f}{\objecttype C F}$
      as the $\Gamma$ of the lemma and $r.f$ as the $r$ of
      the lemma. The first case of the lemma
      corresponds to (1) being \textsc{T-Null}, \textsc{T-Label} or
      \textsc{T-Name}; the second one corresponds to (1) being
      \textsc{T-Ref} or \textsc{T-Chan}. In both cases, the resulting
      judgement is:
      $$\judgement\envref{\changetype{\Gamma}{r.f}{\objecttype C
          F}}{r.f} > e_j\subs{v}{x} : T_j <
      \envref{\changetype{\Gamma_1}{r.f}{\objecttype{C}{F_j}}}{r.f}
      /$$
      Indeed, the difference between $\Gamma$ and $\Gamma_1$ depends
      on (1) in the same way as the lemma's result.
      From this and \typedsess{F_j}{C}{S_j} we can now apply
      \textsc{T-Return} and get:
      $$\judgement\envref{\changetype{\Gamma}{r.f}{\objecttype C
          F}}{r.f} > \return{e_j\subs{v}{x}}{} : T <
      \envref{\changetype{\Gamma_1}{r.f}{S_j}}{r} /$$
      where $T$ is the same as in the initial derivation. We then
      conclude, using the heap typing that was
      provided by the opening lemma, with \textsc{T-State}.
    \item $h(r).f$ is an access point name $n$. Then $\Gamma(r.f)$
      must come, in the derivation of $\typedheapchan\Theta\Gamma h$, from
      \textsc{T-Name},
      which implies that $n$ is declared, that $m_j$ is either
      $\acceptterm$ or $\requestterm$, and that $T_j \supt \sem\Sigma$
      where $\Sigma$ is either the declared type or its dual depending
      on which one $m_j$ is. All this implies that
      the state does a
      $n[c^p]$ transition where $c$ is fresh and $p$ depends, again,
      on $m_j$, and that $\labtrans{\Theta}{n[c^p]}{\Theta, c^p :
        \Sigma}$. The resulting state is typed using $\Gamma'' =
      \Gamma, c^p : \sem\Sigma$ and \textsc{T-Chan}.
    \item $h(r).f$ is a channel endpoint $c^p$. Then
      \typedheapchan\Theta\Gamma h implies that $c^p\in\dom(\Theta)$
      and $S \supt \sem{\Theta(c^p)}$. Hence $m_j$ is either
      $\sendterm$ or $\rcvterm$. We distinguish the two cases.
      In the first case,  the fact that $S$ contains $\sendterm$
      implies that $\Theta(c^p)$ is either of the form
      $\send{T''_j}.\Sigma$ with $T'_j\subt T''_j$ or $\chanchoice{l :
      \Sigma_l}_{l\in E}$ and then $T'_j = \{v\}$ and $v\in E$.
      If $v$ is not an
      object identifier, then the state does a $c^p\send{v}$
      transition. We can see that in both cases (send and select),
      $\Theta$ is able to follow that transition and evolves in such a
      way that \typedheapchan{\Theta'}{\Gamma'}{h} holds: the session
      type of $c^p$ is advanced and if $v$ was a channel it is removed
      from the environment, which corresponds to the difference
      between $\Gamma$ and $\Gamma'$, thus it suffices to change the
      instance of \textsc{T-Hempty} at the root of the derivation
      leading to (a) to get this new typing. Then the new state is
      typed using \textsc{T-Null} and \textsc{T-State}.
      If $v$ is an object identifier, then (1) is \textsc{T-Ref} and thus
      $v\in\dom(\Gamma)$, which implies (using (a)) that $v$ is a root of
      $h$, so the state does a $c^p\send{h\downarrow v}$ transition.
      We use the splitting lemma (Lemma \ref{lem:heapsplitting}) to
      see that $\Theta$ is able to follow this transition and yields a
      $\Theta'$ such that we have
      \typedheapchan{\Theta'}{\Gamma'}{h\uparrow v}. We can then again
      conclude using \textsc{T-Null} and \textsc{T-State}.

      In the case where $m_j$ is $\rcvterm$, the state can
      straightforwardly do a transition, which will be a receive on
      channel $c^p$,  however the transition label
      is not completely determined by the original state as we do not
      know what will be received. So we have to prove type
      preservation in all cases where the transition label $\lambda$ is such that
      $\labtrans\Theta\lambda{\Theta'}$ for some $\Theta'$. If
      $\lambda$ is of the form $c^p\rcv{v'}$, then this hypothesis
      tells us that $\Theta(c^p)$ is either of the form
      $\rcv{T_0}.\Sigma$, and then $v'$ must be
      a literal value of type $T_0$ or a channel endpoint which gets
      added to the environment with a type smaller that $T_0$, or of the form
      $\chanbranch{l:\Sigma_l}_{l\in E}$, and then $v'\in E$. In the
      first case we must have $T_0\subt T_j$, thus the resulting
      expression, which is $v'$,
      can be typed using the appropriate literal value rule, or \textsc{T-Chan},
      and subsumption. In the second one, $T_j = \linkthis$ so that $T
      = \linktype{}f$; the resulting expression can be typed using
      \textsc{T-Label} and \textsc{T-VarS}. As for the new initial
      environment, it is obtained, as in the case of send, by replacing
      the instance of \textsc{T-Hempty} at the top of the derivation
      for (a) with one using $\Theta'$ instead of $\Theta$, so that $v'$ gets
      added to the initial environment if it is a channel and that
      the session type of $r.f$ is correctly advanced, meaning, in the
      case of a branch, that it is advanced to the particular session
      corresponding to $v'$, the variant type being reconstituted in
      the final environment by \textsc{T-VarS}.
      Finally, if $\lambda$ is of the form $c^p\rcv{h'}$, then we have
      $\Theta' = \Theta + \Theta''$ with
      $\typedheapchan{\Theta''}{o : T_j}{h'}$, where $o$ is the only
      root of $h'$. The merging lemma (Lemma \ref{lem:heapmerging})
      gives us a typing for the new heap and, as in the other cases, advancing the
      session type of $c^p$ yields a session type change
      in $r.f$, corresponding to the difference between $\Gamma$ and
      $\Gamma'$. We conclude using \textsc{T-Ref} and \textsc{T-State}.\qedhere
    \end{itemize}
  \end{itemize}
\end{proof}

\noindent The following two lemmas will allow us to deduce from this theorem the
proof of subject reduction for configurations.

\begin{lemma}
\label{lem:subjectcongruence}
If $\typedconf{\Theta}{s}$ and $s\equiv s'$, then $\typedconf{\Theta}{s'}$.
\end{lemma}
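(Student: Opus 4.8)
The plan is to proceed by induction on the derivation of $s \equiv s'$, recalling that $\equiv$ is the smallest congruence (with respect to parallel composition and channel binding) closed under \Ecomm, \Eassoc\ and \Escope. Because $\equiv$ is an equivalence relation, I will establish type preservation for each of the three generating axioms in \emph{both} directions; the reflexive and transitive closure then preserves the property trivially, and the two congruence contexts are routine. For the context $[\,\cdot\,]\parcomp s''$: given $\typedconf{\Theta}{s\parcomp s''}$, inversion of \Tpar\ yields $\Theta = \Theta_1 + \Theta_2$ with $\typedconf{\Theta_1}{s}$ and $\typedconf{\Theta_2}{s''}$; the induction hypothesis replaces $s$ by $s'$ in the first judgement, and reapplying \Tpar\ gives $\typedconf{\Theta}{s'\parcomp s''}$. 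The context $(\nu c)[\,\cdot\,]$ is analogous, inverting and reapplying \TnewChan.

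For \Ecomm\ and \Eassoc\ the argument is purely structural on the channel environment. Inverting \Tpar\ (twice, for \Eassoc) decomposes $\Theta$ as a disjoint union of the environments of the components; since disjoint union $+$ of channel environments is commutative and associative, one simply regroups the $\Theta_i$ and reapplies \Tpar\ in the appropriate order.

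The delicate case is \Escope, $s_1 \parcomp (\nu c)s_2 \equiv (\nu c)(s_1 \parcomp s_2)$ under the side condition that $c^+,c^-$ are not free in $s_1$. Since $(\nu c)$ is a binder I first use alpha-conversion to assume $c$ is fresh, in particular $c^+,c^-\notin\dom(\Theta)$. For the extrusion direction (left to right), inverting \Tpar\ gives $\Theta = \Theta_1 + \Theta_2'$ with $\typedconf{\Theta_1}{s_1}$ and $\typedconf{\Theta_2'}{(\nu c)s_2}$, and inverting \TnewChan\ gives $\typedconf{\Theta_2', c^+:\Sigma, c^-:\overline\Sigma}{s_2}$. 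As $\dom(\Theta_1)\subseteq\dom(\Theta)$ and $c^\pm\notin\dom(\Theta)$, the environments $\Theta_1$ and $\Theta_2', c^+:\Sigma, c^-:\overline\Sigma$ still have disjoint domains, so \Tpar\ yields $\typedconf{\Theta, c^+:\Sigma, c^-:\overline\Sigma}{s_1 \parcomp s_2}$ and then \TnewChan\ re-binds $c$, giving $\typedconf{\Theta}{(\nu c)(s_1\parcomp s_2)}$.

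The intrusion direction (right to left) is where the main obstacle lies. Inverting \TnewChan\ and then \Tpar\ gives $\Theta, c^+:\Sigma, c^-:\overline\Sigma = \Theta_a + \Theta_b$ with $\typedconf{\Theta_a}{s_1}$ and $\typedconf{\Theta_b}{s_2}$, and I must show that both endpoints may be taken to lie in $\Theta_b$ rather than $\Theta_a$, so that \TnewChan\ can re-bind $c$ around $s_2$ alone. The side condition that $c^\pm$ is not free in $s_1$ is exactly what justifies this: an endpoint recorded in $\Theta_a$ but not occurring in $s_1$ is an unused linear resource, and such a resource can be shifted from the environment of $s_1$ to that of $s_2$. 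I therefore expect to need, as a routine auxiliary fact proved by induction on thread typings, that the channel environment of a well-typed thread may be \emph{strengthened} by dropping an endpoint not free in the thread and correspondingly \emph{weakened} on another component (the unused endpoint being simply carried through the heap typing, via \Tthread\ and a weakening analogue of Lemma~\ref{lem:moreweak} that keeps it in the final environment). Granting this, I normalise so that $c^\pm\notin\dom(\Theta_a)$, reapply \TnewChan\ to the $s_2$ component and then \Tpar, checking that the reconstructed environment is again $\Theta$ because $(\Theta_a + \Theta_b)\setminus\{c^+,c^-\} = \Theta$. The linear bookkeeping that tracks which parallel component owns the bound endpoints, rather than any deep argument, is the only non-mechanical part of the proof.
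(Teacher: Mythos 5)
Your proof is correct and follows the same route as the paper, whose own proof of Lemma~\ref{lem:subjectcongruence} is simply the one-line ``by induction on the derivation of $s\equiv s'$''. You in fact supply the only genuinely non-mechanical detail the paper leaves implicit --- that in the scope-intrusion direction of \Escope\ the bound endpoints may have been allotted to the environment of $s_1$ as unused linear resources and must be relocated to $s_2$'s environment by a strengthening/weakening argument on thread typings --- and that auxiliary fact does hold, by essentially the same reasoning as in Lemma~\ref{lem:heapmerging}.
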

\begin{proof}
By induction on the derivation of $s\equiv s'$.
\end{proof}
\begin{lemma}\label{lem:confreduction}
  If $s\reduces s'$, then either:
  \begin{enumerate}
  \item
      $s \equiv (\nu\vec c)\,(\thread{h}{r}{e}\parcomp s'')$,\\
      $s' \equiv (\nu\vec c)\,(\thread{h'}{r'}{e'}\parcomp s'')$\\
      and $\labtrans{\state{h}{r}{e}}{\tau}{\state{h'}{r'}{e'}}$, or
  \item
      $s \equiv (\nu\vec
      c)\,(\thread{h_1}{r_1}{e_1}\parcomp\thread{h_2}{r_2}{e_2}\parcomp s'')$,\\
      $s' \equiv (\nu\vec c)(\nu
      d)\,(\thread{h_1}{r_1}{e'_1}\parcomp\thread{h_2}{r_2}{e'_2}\parcomp s'')$,
      \\
      $\labtrans{\state{h_1}{r_1}{e_1}}{n[d^+]}{\state{h_1}{r_1}{e'_1}}
      \quad \text{and}\quad
      \labtrans{\state{h_2}{r_2}{e_2}}{n[d^-]}{\state{h_2}{r_2}{e'_2}}$, or
  \item
      $s \equiv (\nu\vec
      c)\,(\thread{h_1}{r_1}{e_1}\parcomp\thread{h_2}{r_2}{e_2}\parcomp s'')$,\\
      $s' \equiv (\nu\vec
      c)\,(\thread{h_1}{r_1}{e'_1}\parcomp\thread{h_2}{r_2}{e'_2}\parcomp
      s'')$, \\
      $\labtrans{\state{h_1}{r_1}{e_1}}{c^p\send{v}}{\state{h_1}{r_1}{e'_1}}
      \quad \text{and} \quad
      \labtrans{\state{h_2}{r_2}{e_2}}{c^p\rcv{v}}{\state{h_2}{r_2}{e'_2}}$, or
  \item
      $s \equiv (\nu\vec
      c)\,(\thread{h_1}{r_1}{e_1}\parcomp\thread{h_2}{r_2}{e_2}\parcomp s'')$,\\
      $s' \equiv (\nu\vec c)\,
      (\thread{h'_1}{r_1}{e'_1}\parcomp\thread{h'_2}{r_2}{e'_2}\parcomp s'')$,
      \\
      $\labtrans{\state{h_1}{r_1}{e_1}}{c^p\send{h'}}{\state{h'_1}{r_1}{e'_1}}
      \quad \text{and}\quad
      \labtrans{\state{h_2}{r_2}{e_2}}{c^p\rcv{\phi(h')}}{\state{h'_2}{r_2}{e'_2}}$\\
      with $h' = h_1\downarrow o$, $h'_1 = h_1\uparrow o$, and $h'_2 =
      h_2 + \phi(h')$, or
  \item
      $s \equiv (\nu\vec c)\,(\thread{h}{r}{e}\parcomp s'')$,\\
      $s' \equiv (\nu \vec c)\,(\thread{h}{r}{e'} \parcomp
      \thread{o = \hentry{C}{\vec f =
          \overrightarrow\nullterm}}{o}{e''\subs{\nullterm}{x}}\parcomp s'')$\\
      and 
      $\labtrans{\state{h}{r}{e}}{\methcal{C}{m}{}}{\state{h}{r}{e'}}$,
      where $C.\fieldsterm = \vec f$, $o$ is fresh
      and $\method{}{m}{x}{e''} \in C$.
  \end{enumerate}
\end{lemma}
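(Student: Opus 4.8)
The plan is to argue by induction on the derivation of $s\vreduces s'$, performing a case analysis on the final rule used and exhibiting, in each case, which of the five alternatives holds together with the required thread-wise transitions. The labelled transition system was deliberately set up so that each ``leaf'' reduction rule (one whose premises contain no smaller configuration reduction) mirrors exactly one shape of transition; the inductive rules \Rpar, \RnewChan\ and \textsc{R-Str} merely propagate a decomposition through a surrounding context.

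For the leaf rules the correspondence is direct. A purely sequential step — \Rseq, \Rreturn, \Rcall, \Rswitch, \Rswap, \Rnew, \Rcontext, and the extension rules \textsc{R-While} and \textsc{R-SelfCall} — acts on a single thread $\state{h}{r}{e}$ and yields case~(1) via \textsc{Tr-Red}, taking $\vec c$ empty and the tail $s''$ absent. Rule \Rspawn\ gives case~(5) via \textsc{Tr-Spawn}; \Rinit\ gives case~(2) via \textsc{Tr-Accept} and \textsc{Tr-Request}, with $d$ instantiated to the fresh channel $c$; \RcomBase\ gives case~(3) via \textsc{Tr-Send} and \textsc{Tr-Receive}; and \RcomObj\ gives case~(4) via \textsc{Tr-SendObj} and \textsc{Tr-RcvObj}, the renaming $\phi$ being the $\varphi$ of that rule. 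In each of these the displayed left- and right-hand configurations coincide syntactically with $s$ and $s'$, so the structural-congruence requirements hold by reflexivity of $\equiv$.

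The inductive rules then carry a decomposition outward. For \RnewChan\ the induction hypothesis decomposes $s_1\vreduces s_1'$ with a common prefix $(\nu\vec c)$ on both sides, and I simply prepend the extra binder, replacing $\vec c$ by $c,\vec c$ (and, in case~(2), keeping $d$ innermost). For \textsc{R-Str} I appeal to transitivity of $\equiv$: if the inner reduction places $s_1\equiv(\nu\vec c)(\cdots)$ and $s_2\equiv(\nu\vec c)(\cdots)$, then composing with the outer congruences $s\equiv s_1$ and $s_2\equiv s'$ yields the same shapes for $s$ and $s'$. The delicate rule is \Rpar: from a decomposition of $s_1\vreduces s_1'$ I must frame by an arbitrary $s''_0$, deriving a decomposition of $s_1\parcomp s''_0\vreduces s_1'\parcomp s''_0$. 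The hypothesis has the form $(\nu\vec c)(P\parcomp s'')$, and I must absorb $s''_0$ into the tail $s''$ while keeping the binders $\vec c$ (and the fresh $d$ in case~(2)) outermost, which is precisely scope extrusion.

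The main obstacle is therefore the name bookkeeping in the \Rpar\ case: the restrictions can be pulled to the front of the enlarged configuration only via \Escope\ (together with \Ecomm\ and \Eassoc\ to reassociate the parallel components), and the freshness side-condition of \Escope\ is met by first alpha-renaming $\vec c$ and $d$ so as not to capture free names of $s''_0$. Once the bound names are chosen fresh for $s''_0$, each of the five shapes is preserved under framing — the active thread(s) $P$ are untouched and only the passive remainder grows from $s''$ to $s''\parcomp s''_0$ — so the thread-wise transitions and their labels carry over unchanged, completing the induction.
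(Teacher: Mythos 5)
Your proposal is correct and follows essentially the same route as the paper: the paper's (much terser) proof likewise observes that the derivation is a stack of \textsc{R-Par}, \textsc{R-Str} and \textsc{R-NewChan} around a single "leaf" rule, and reads off the five cases from that leaf rule exactly as you do. The only difference is that you spell out the scope-extrusion and freshness bookkeeping for the \Rpar{} case, which the paper leaves implicit under "straightforward".
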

\begin{proof}
This is nothing more than a reformulation of the reduction rules in
terms of labelled transitions: the derivation for $s\reduces s'$ can
contain any number of instances of \textsc{R-Par}, \textsc{R-Str} or
\textsc{R-NewChan} but must have one of the other rules at the top. It
is straightforward to see that depending on that top rule we are in
one of the five cases listed: (1) for any of the single-thread rules in Figure
\ref{fig:reduction}, 
(2) for \textsc{R-Init}, (3) for
\textsc{R-ComBase}, (4) for \textsc{R-ComObj}, and (5) for \textsc{R-Spawn}.
\end{proof}

We can now prove Theorem \ref{thm:subjectreduction}.
\begin{proof}[(Theorem \ref{thm:subjectreduction})]
  Because of Lemma \ref{lem:subjectcongruence} we only need to look at
  the different cases described in Lemma \ref{lem:confreduction}.

  In cases (1) and (5), the initial derivation is as follows:
  \begin{center}\footnotesize
    \AxiomC{\judgment \Theta_1;\Gamma > \state{h}{r}{e} : T < \envref{\Gamma'}{r''} /}
    \rulename{T-Thread}
    \UnaryInfC{\typedconf{\Theta_1}{\state{h}{r}{e}}}
    \AxiomC{\typedconf{\Theta_2}{s''}}
    \rulename{T-Par}
    \BinaryInfC{\typedconf{\Theta_1+\Theta_2}{\state{h}{r}{e}\parcomp
        s''}}
    \rulename{T-NewChan}
    \UnaryInfC{$\vdash s$}
    \DisplayProof
  \end{center}
  In case (1), Theorem \ref{thm:subjectreduction} gives us
  \judgment\Theta_1;\Gamma'' > \state{h'}{r'}{e'} : T <
  \envref{\Gamma'}{r''} /; from there
  the final derivation is the same.
  
  In case (5), the theorem gives us the same result, but the final
  derivation is more complicated as there is one more parallel
  component. The $\methcal{C}{m}{}$ transition
  tells us that $e$ must be of the form
  $\mathcal{E}[\spawn{\methcal{C}{m}{v}}]$. From Lemma
  \ref{lem:typablesubterms}, this implies that the subexpression
  $\spawn{\methcal{C}{m}{v}}$ is typable, which must be a consequence of
  \textsc{T-Spawn}, implying that $m$ appears in the initial session
  type $S$ of $C$ with a $\nulltype$ argument type. As, by hypothesis,
  the declaration of class $C$ is well-typed, this implies (from
  \textsc{T-Class}) \judgement\envref{x : \nulltype, \this :
    \objecttype C {\overrightarrow\nulltype\,\vec f}}{\this} > e'' : T <
  \envref{\this : \objecttype C F}{\this} /. We apply the substitution
  lemma (\ref{lem:substitution}) to this judgement to replace $\this$
  with $o$ and $x$ with $\nullterm$, and we build the heap typing
  \typedheapchan\emptyset{o :\objecttype C
    {\overrightarrow\nulltype\,\vec f}}{o = \hentry{C}{\vec f =
      \overrightarrow\nullterm}} from \textsc{T-Hempty} and
  \textsc{T-Hadd}. This gives a typing for the new thread, with an empty
  $\Theta$, using \textsc{T-State} and \textsc{T-Thread} and we can
  conclude with \textsc{T-Par}.
  
  In cases (2), (3), and (4), the initial derivation is:
  \begin{center}\footnotesize
    \AxiomC{\judgment \Theta_1;\Gamma_1 > \state{h_1}{r_1}{e_1} : T_1
      < \envref{\Gamma'_1}{r'_1} /}
%    \rulename{T-Thread}
    \UnaryInfC{\typedconf{\Theta_1}{\state{h_1}{r_1}{e_1}}}
    \AxiomC{\judgment \Theta_2;\Gamma_2 > \state{h_2}{r_2}{e_2} : T_2
      < \envref{\Gamma'_2}{r'_2} /}
%    \rrulename{T-Thread}
    \UnaryInfC{\typedconf{\Theta_2}{\state{h_2}{r_2}{e_2}}}
    \rulename{T-Par}
    \BinaryInfC{\typedconf{\Theta_1+\Theta_2}{\state{h_1}{r_1}{e_1}\parcomp
      \state{h_2}{r_2}{e_2}}}
    \AxiomC{\typedconf{\Theta}{s''}}
    \rulename{T-Par}
    \BinaryInfC{\typedconf{\Theta_1+\Theta_2+\Theta}{\state{h_1}{r_1}{e_1}\parcomp
      \state{h_2}{r_2}{e_2}\parcomp s''}}
    \rulename{T-NewChan}
    \UnaryInfC{\typedconf~ s}
    \DisplayProof
  \end{center}
  Furthermore, we can deduce from the transition labels that the
  expressions in the two topmost premises are of the form
  $\mathcal{E}_1[\methcal{f_1}{m_1}{v_1}]$ and
    $\mathcal{E}_2[\methcal{f_2}{m_2}{v_2}]$ 
  with $h_1(r_1).f_1$ and $h_2(r_2).f_2$ being, in case (2), $n$,
  and in cases (3) and (4),
  respectively $c^p$ and $c^{\overline p}$. These two topmost premises
  must come from \textsc{T-State}, which implies
  \typedheapchan{\Theta_1}{\Gamma_1}{h_1} and
  \typedheapchan{\Theta_2}{\Gamma_2}{h_2}, from which we deduce, in
  case (2), that $n$ is a
  declared access point name and in cases (3) and (4) that
  $\sem{\Theta_1(c^p)} \subt \Gamma_1(r_1.f_1)$ and
  $\sem{\Theta_2(c^{\overline p})} \subt \Gamma_2(r_2.f_2)$.
  We use Theorem \ref{thm:subjectreduction} on these two topmost
  premises and distinguish cases.

  In case (2), $\Theta_1$ and
  $\Theta_2$ make transitions which introduce two dual types for $d^+$
  and $d^-$, which are fresh so that the disjoint unions are still
  possible, and we just need to add an additional step of
  \textsc{T-NewChan} before the last one.

  In cases (3) and (4), we first remark that because \textsc{T-NewChan}
  in the derivation leads to an empty environment, $c$
  must be one of the channels in $(\nu \vec c)$ and we must have
  $\Theta_1(c^p) = \Sigma$ and $\Theta_2(c^{\overline p}) =
  {\overline\Sigma}$ for some $\Sigma$.
  Then we use Lemma \ref{lem:typablesubterms} to get a typing
  judgement for the method call subexpression on the sending side
  (thread 1). This judgement has $\Gamma_1$ as an initial typing
  environment and comes from \textsc{T-Call}; as we have
  $\Sigma\subt\Gamma_1(r_1.f_1)$, this
  implies
  that $\Sigma$ is either of the form $\send{T}.\Sigma'$ with $v$ (in
  case (3)) or $o$ (in case (4)) of type $T$, or (only in case (3)) of the form
  $\chanchoice{l : \Sigma_l}_{l\in E}$ with $v\in E$. The simplest
  case is (3): then this typing
  information, together with the duality of the two endpoint types,
  shows that $\Theta_2$ follows the transition with the new type of
  $c^{\overline p}$ still dual to the new type of $c^p$. In the case
  where $v$ is a channel endpoint, its typing goes from $\Theta_1$ to
  $\Theta_2$ but stays the same, so that it is unchanged in the sum
  environment yielded by \textsc{T-Par}. Thus we can
  still apply \textsc{T-NewChan}.

  Case (4) is similar but, additionally, a renaming function is
  applied to the transmitted heap. We use Lemma \ref{lem:heaprenaming}
  to see that the type of its only root, which is all we need, stays
  the same, so that again $\Theta_2$ can follow the transition. We
  also have that a whole part of the channel environment can go from
  $\Theta_1$ to $\Theta_2$ but the effect is the same as with just one
  channel: it does not affect the sum environment resulting from
  \textsc{T-Par}. So again we can still apply \textsc{T-NewChan}.
\end{proof}

%\begin{center}\footnotesize\makebox[0pt][c]{

\subsection{Type safety}
We now have the following safety result, ensuring not only
race-freedom (no two sends or receives in parallel on the same
endpoint of a channel) but also that the communication is successful.
\begin{theorem}[No Communication Errors]
  % \begin{sloppypar}
  %   Suppose that $s \equiv (\nu \vec c) (s'\parcomp
  %   \thread{h}{\mathcal{E}[r.f.m(v)]}\parcomp\thread{h'}{\mathcal{E}[r'.f'.m'(v')]})$.
  %   with $h(r).f = c^+$ and $h'(r').f' = c^-$.

  %   If $~\vdash s$, then there exists $s''$ such that $s\vreduces
  %   s''$.
  % \end{sloppypar}
  Let 
\[
s \equiv (\nu \vec c) (s'\parcomp
  \thread{h}{r}{\mathcal{E}[r.f.m(v)]}\parcomp\thread{h'}{r'}{\mathcal{E}[r'.f'.m'(v')]})
\]
  and suppose that $\,\vdash s$ holds.
  If $h(r).f = c^p$ and $h'(r').f' = c^q$ then:
  \begin{enumerate}
  \item $q=\overline p$,
  \item channel $c$ does not occur in $s'$, and
  \item there exists $s''$ such that $s \vreduces s''$.
  \end{enumerate}
\end{theorem}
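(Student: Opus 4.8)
The plan is to read the typing derivation of $\vdash s$ top-down, isolate the channel environment governing each of the two active threads, and then combine duality of endpoint types (from \TnewChan) with disjointness of the parallel splitting (from \Tpar) to obtain the three conclusions. By Lemma~\ref{lem:subjectcongruence} I may assume the derivation of $\vdash s$ has exactly the shape dictated by the displayed congruent form: a block of \TnewChan\ applications peeling off $(\nu\vec c)$, below which two \Tpar\ steps split the judgement for $s'\parcomp\thread{h}{r}{\dots}\parcomp\thread{h'}{r'}{\dots}$. Reading \TnewChan\ upwards from the empty environment shows that the resulting channel environment is exactly $\Theta=\{c_i^+:\Sigma_i,\,c_i^-:\overline{\Sigma_i}\mid c_i\in\vec c\}$, so for every $c_i$ the two endpoints carry dual types and no endpoint outside $\vec c$ is typed. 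The \Tpar\ steps give a disjoint decomposition $\Theta=\Theta_{s'}+\Theta_1+\Theta_2$, where $\Theta_1$ types the first thread and $\Theta_2$ the second, each through \Tthread\ and \Tstate; in particular $\typedheapchan{\Theta_1}{\Gamma_1}{h}$ and $\typedheapchan{\Theta_2}{\Gamma_2}{h'}$ hold for suitable $\Gamma_1,\Gamma_2$.

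Next I would locate the endpoints inside their environments. Since $h(r).f=c^p$, the endpoint $c^p$ occurs in $h$; as in the proof of Lemma~\ref{lem:heapproperties}, every channel value placed into a field by \THadd\ is introduced via \Tchan\ from the initial environment $\sem{\Theta_1}$ of \THempty, so $\chans(h)\subset\dom(\Theta_1)$ and moreover $\sem{\Theta_1(c^p)}\subt\Gamma_1(r.f)$ (the field type may only have been weakened afterwards by \TsubEnv). Hence $c^p\in\dom(\Theta_1)$ and, symmetrically, $c^q\in\dom(\Theta_2)$. Because the domains of $\Theta_1$ and $\Theta_2$ are disjoint, $c^p\neq c^q$; as the two share the base name $c$ and polarities range over $\{+,-\}$, this forces $q=\overline p$, which is conclusion~(1). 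For conclusion~(2), note that $c^p\in\dom(\Theta_1)$ and $c^q=c^{\overline p}\in\dom(\Theta_2)$ already account for both endpoints of $c$, so disjointness gives $c^+,c^-\notin\dom(\Theta_{s'})$. Any free occurrence of an endpoint of $c$ in $s'$ would, by the same heap-typing fact applied to the threads of $s'$, put that endpoint into $\dom(\Theta_{s'})$, a contradiction; bound occurrences are renamed away by the usual convention, so $c$ does not occur in $s'$.

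Finally, for conclusion~(3) I set $\Sigma=\Theta_1(c^p)$, so that $\Theta_2(c^{\overline p})=\overline\Sigma$ by the duality established in the first step. By Lemma~\ref{lem:typablesubterms} the subterm $\methcal{f}{m}{v}$ is typable, necessarily by \Tcall, so $m$ is a method offered by $\Gamma_1(r.f)$ and hence, since $\sem{\Sigma}\subt\Gamma_1(r.f)$ and a supertype offers no extra methods, by $\sem{\Sigma}$ itself; typability therefore rules out $\Sigma=\End$, and inspecting the translation gives $m\in\{\sendterm,\rcvterm\}$, with $m=\sendterm$ exactly when $\Sigma$ is a send type $\send{T}\chanseq\Sigma'$ or a select type $\chanchoice{l:\Sigma_l}_{l\in E}$, and $m=\rcvterm$ exactly when $\Sigma$ is a receive type $\rcv{T}\chanseq\Sigma'$ or a branch type $\chanbranch{l:\Sigma_l}_{l\in E}$. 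The same argument applies to $m'$ with $\overline\Sigma$, and the definition of duality guarantees that precisely one of $\Sigma,\overline\Sigma$ is a send/select type and the other a receive/branch type; thus $\{m,m'\}=\{\sendterm,\rcvterm\}$. Taking the sending thread to be the first without loss of generality, the redex $\methcal{f}{\sendterm}{v}$ faces $\methcal{f'}{\rcvterm}{}$ on the dual endpoint, so \RcomBase\ fires if $v\notin\mathcal{O}$ and \RcomObj\ fires if $v$ is an object identifier (an injective renaming into $\mathcal{O}\setminus\dom(h')$ always exists, as $\mathcal{O}$ is infinite). Wrapping this communication step back through \Rpar, \RnewChan\ and structural congruence (\Ecomm, \Eassoc) yields $s\vreduces s''$.

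The main obstacle I anticipate is the bookkeeping of the second paragraph: extracting, from the heap-typing rules \THempty\ and \THadd, both the inclusion $\chans(h)\subset\dom(\Theta_1)$ and the precise relation $\sem{\Theta_1(c^p)}\subt\Gamma_1(r.f)$ between the channel environment and the field type actually used to type the method call. Everything else reduces to disjointness (\Tpar), duality (\TnewChan), and a finite case analysis on the four shapes of $\Sigma$ under the translation $\sem{\cdot}$.
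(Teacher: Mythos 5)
Your proposal is correct and follows essentially the same route as the paper's (much terser) proof: read off the disjoint decomposition of the channel environment from \textsc{T-Par} under the \textsc{T-NewChan} block, use the heap typing to get $\sem{\Theta_1(c^p)}\subt\Gamma_1(r.f)$ and hence $c^p\in\dom(\Theta_1)$, $c^q\in\dom(\Theta_2)$ for (1) and (2), and then use duality plus \textsc{T-Call} on the translations $\sem\cdot$ to force $\{m,m'\}=\{\sendterm,\rcvterm\}$ for (3). The extra detail you supply (the four-way case analysis on $\Sigma$, the choice between \RcomBase\ and \RcomObj, and the re-wrapping through \Rpar/\RnewChan/congruence) is all consistent with, and merely expands on, what the paper leaves implicit.
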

As the statement is true in particular when $s'$ is empty, it implies
that communication between the two threads is possible.
\begin{proof}
  This is an essentially straightforward consequence of $\,\vdash s$.
  The typing derivation is similar to the one shown for cases 2/3/4 in
  Theorem \ref{thm:subjectreduction} above; the two top premises must be
  consequences of \textsc{T-State} and the heap typing necessary to
  apply this rule implies, respectively,
  $\Gamma_1(r.f) \supt \sem{\Theta_1(c^p)}$ and $\Gamma_2(r'.f')
  \supt \sem{\Theta_2(c^q)}$.
  Because of the disjoint unions in \textsc{T-Par},
  $c^p\in\dom(\Theta_1)$ and $c^q\in\dom(\Theta_2)$ immediately
  imply (1) and (2); (3) is then a consequence of the duality
  constraint imposed by \textsc{T-NewChan}: looking at the translations of
  dual channel types, and because the method call subexpressions must be
  typed by \textsc{T-Call}, if $m$ is
  $\sendterm$ then $m'$ must be $\rcvterm$ and vice-versa.
\end{proof}

This theorem, together with the progress aspect of Theorem
\ref{thm:threadsr}, restricts the set of blocked configurations to the
following: if $\,\vdash s$ and $s\not\reduces$, then all parallel
components in $s$ are either terminated (reduced to values), unmatched
$\acceptterm$s or $\requestterm$s, or method calls on \emph{pairwise
distinct} channels --- this last case corresponding to a deadlock.

\subsection{Conformance}

We now have the technical material necessary to prove Theorem
\ref{thm:conf} (conformance). Note that we do not formally extend this result
to the distributed setting, as stating a similar property in that case
would require more complex definitions describing, among other things,
how call traces are moved around
between threads; however we can see informally that, because objects
keep their content and session type when transmitted, all necessary
information is kept such that we still have a conformance property.

\begin{proof}
We first prove, by strong induction on $n$, a slightly different result,
namely the following: for each $i$ there is $\Gamma_i$
such that $\judgment \Gamma_i > \state{h_i}{r_i}{e_i} : T <
\envref{\Gamma'}{r'}/$ and $\tr_i$ is consistent with
$\Gamma_i$.

We suppose that this property is true for any reduction sequence of
length $n$ or less whose initial state satisfies the hypotheses and
prove that it is true also for length $n+1$. The base case $n=1$ is trivial.

If the $n$th reduction step
$\state{h_n}{r_n}{e_n}\reduces\state{h_{n+1}}{r_{n+1}}{e_{n+1}}$ does not originate
from \textsc{R-Return}, we use the induction hypothesis on the
beginning of the sequence; we refer to the cases in the proof
of Theorem \ref{thm:threadsr} to show that the $\Gamma_{n+1}$ it allows to construct
from $\Gamma_n$ indeed is consistent with $\tr_{n+1}$. Because we are only
interested in $\Gamma_{n+1}$ and not $\Gamma'$, in most cases we can use
Lemmas~\ref{lem:typablesubterms} and~\ref{lem:replacement} to
ignore any context $\mathcal{E}$ and proceed as if the reduction is
exactly an instance of its original rule.

If the rule is \textsc{R-Seq}, \textsc{R-Switch} or \textsc{R-Swap}
then $\tr_{n+1} = \tr_n$.

If the rule is \textsc{R-Seq} or \textsc{R-Switch} then the proof of
Theorem~\ref{thm:threadsr} shows that we can choose
$\Gamma_{n+1}=\Gamma_n$, so there is nothing more to prove.

If the rule is \textsc{R-Swap} then the proof of
Theorem~\ref{thm:threadsr} indicates that $\Gamma_{n+1}$
(called $\Gamma''$ in subject reduction) can
be defined using Lemma~\ref{lem:heapchange} from the $\Gamma'''$ (called
$\Gamma_1$ in subject reduction) obtained after typing $v'$, the value
that gets swapped into the field. First of all note that most objects,
notably all those which
are not $v'$ and not in a field of $r$, have the same type and
position in the heap in $\Gamma_{n+1}$ as they have in $\Gamma$. For
all them the result is straightforward: we only concentrate on those
objects that move or change type. Depending on the nature of $T$ and
$T'$ (object, link, or base type), there may be one or two of them.
Recall that neither type can be $\linkthis$ as else the
expression would not be typable. We distinguish cases separately for
$T$ and $T'$, knowing that any combination is possible (except both
linking to the same field). Cases for $T'$:
\begin{itemize}
\item If $T'$ is an object type (thus $v'$ is an object name $o'$),
  then $\Gamma_{n+1}(r.f) = \Gamma_n(o')$ (the rule used for $v'$ is
  \textsc{T-Ref)}. We also have
  $\tr_{n+1}(h_{n+1}(r.f)) = \tr_{n+1}(o') = \tr_n(o')$, so
  $\tr_{n+1}$ is indeed consistent with $\Gamma_{n+1}$ with respect to
  reference $r.f$.
\item If $T'$ is $\linktype{}{f'}$, the rule used for $v'$ is
  \textsc{T-VarS}, and $\Gamma_{n+1}(r.f') = \choice{v'}{S_{v'}}{}$.
  We have $\Gamma_{n+1}(r.f) = \linktype{}{f'}$ and
  $h_{n+1}(r.f) = v'$, hence the actual session type of $r.f'$ in
  $h_{n+1}$ according to $\Gamma_{n+1}$ is $S_{v'}$. Thus consistency is
  preserved for $r.f'$.
\end{itemize}
Cases for $T$ (corresponding respectively to cases 1 and 3 of
Lemma~\ref{lem:heapchange}):
\begin{itemize}
\item If $T$ is an object type (thus $h_n(r.f)$ is an object name $o$),
  then $\Gamma_{n+1}$ contains a new entry for $o$, with type
  $\Gamma_n(r.f)$. Consistency for this new entry comes from
  consistency for $r.f$ at the previous step.
\item If $T$ is $\linktype{}{f''}$, then $\Gamma_n(r.f') =
  \choice{l}{S_l}{l\in E}$ and $h_n(r.f) = l_0$ is in $E$. Thus the
  actual session type of $r.f''$ in $h_n$ according to $\Gamma_n$ is
  $S_{l_0}$. Lemma \ref{lem:heapchange} also gives us
  $\Gamma_{n+1}(r.f') = S_{l_0}$, hence the actual session type of
  $r.f''$ has not changed, and consistency is preserved.
\end{itemize}

\noindent If the rule is \textsc{R-New} then the proof of
Theorem~\ref{thm:threadsr} shows that a suitable
$\Gamma_{n+1}$ is of the form
$\Gamma_n, o:C.\sessterm$ where $o$ is the fresh
object name introduced by the reduction.
Definition~\ref{def:calltraceextension} states that
$\tr_{n+1}$ extends $\tr_n$ by assigning an empty call trace to $o$;
clearly $\tr_{n+1}$ is consistent with $\Gamma_{n+1}$.

If the rule is \textsc{R-Call} then the proof of
Theorem~\ref{thm:threadsr} shows that a suitable $\Gamma_{n+1}$ is
$\Gamma_n$ with the type of $r.f$ replaced by a type which is not a
session. So there is no consistency requirement in $\Gamma_{n+1}$
for $r.f$, and every other reference is given the same call trace by
$\tr_{n+1}$ as by $\tr_n$. Therefore $\tr_{n+1}$ is consistent with
$\Gamma_{n+1}$. 

Now if the $n$th step originates from \textsc{R-Return}, we reason
slightly differently. We know by hypothesis that $r_1$ is a prefix of
$r_{n+1}$. Furthermore, since the $n$th step is \textsc{R-Return}, 
$r_n$ is of the form $r_{n+1}.f$. Reduction rules can only alter the
current object by removing or adding one single field reference at
once, therefore there must be a previous reduction step in the
sequence, say the $i$th, that last went from $r_{n+1}$ to $r_{n+1}.f$.
That is, we chose $i$ such that $r_{n+1}.f$ is a prefix of all $r_j$
for $j$ between $i+1$ and $n$ and that $r_i = r_{n+1}$. That step
must originate from \textsc{R-Call} as it is the only rule which
adds a field specification to the current object. Thus, it is of the form
$\state{h_i}{r_{n+1}}{\mathcal{E}(\methcal{f}{m}{v'})}\vreduces\state{h_{i+1}}{r_{n+1}.f}{\mathcal{E}(\return{e}{})}$,
where $e$ is the method body of $m$ with the parameter substituted.
Then it is straightforward to see that the whole
reduction sequence from $i+1$ to $n$ consists of reductions of $e$
inside the context $\mathcal{E}(\return{[\_]}{})$.

We first use the induction hypothesis on the first part of the
reduction ($1$ to $i$) so as to get judgments up to
\judgment\Gamma_i >
\state{h_i}{r_{n+1}}{\mathcal{E}(\methcal{f}{m}{v'})} : T <
\envref{\Gamma'}{r'} /.
We then use Lemma~\ref{lem:typablesubterms} to get
\judgment\Gamma_{i} > \state{h_{i}}{r_{n+1}}{\methcal{f}{m}{v'}} : T' <
\envref{\Gamma''}{r''} / and note that this judgment must come from
\textsc{T-Call}, which implies that $r''=r_{n+1}$, that 
$\Gamma_i(r_{n+1}.f)$ is of the form
$\branch{\methsign{m}{\ldots}{T'}}{S, \ldots}{}$ and that
$\Gamma''(r_{n+1}.f) = S$. Furthermore, $T'$ is either a base type if
$S$ is a branch or $\linktype{}{f}$ if it is a variant.
We know that $\tr_i$ is consistent with
$\Gamma_i$, therefore we have
$\labtransstar{\classterm(o).\sessterm}{\tr_i(o)}
{\labtrans{\branch{\methsign{m}{\ldots}{T'}}{S,\ldots}{}}{m}{S}}$.

We now
use the induction hypothesis
again on the reduction sequence from $i$ to $n$ for this particular call
subexpression, recalling that $i$ has been defined such that the
hypothesis on the current object
is indeed satisfied by this sequence.
We can also use Lemma \ref{lem:replacement} at each step in order to lift the
judgements thus obtained to the whole expression. To summarise, this
means that for any $j$ between $i+1$ and $n$ we have:
$e_j=\mathcal{E}(\return{e'_j}{})$ for some $e'_j$, \judgment\Gamma_j
> \state{h_j}{r_j}{\return{e'_j}{}} : T' < \envref{\Gamma''}{r_{n+1}} / and
\judgment\Gamma_j > \state{h_j}{r_j}{e_j} : T < \envref{\Gamma'}{r'}
/, and that $tr_j$ is consistent with $\Gamma_j$.

For the last reduction step, \textsc{R-Return},
the proof of Theorem
\ref{thm:threadsr} tells us that we can choose a
$\Gamma_{n+1}$ which is identical to $\Gamma_n$ except for the type of
$r_{n+1}.f$, and as the call trace for other references is not
modified, consistency is preserved for them. For $r_{n+1}.f$ we have
to look back at the initial subexpression on step $i$.
First note that \textsc{R-Swap} can only act on a field of
the current object, therefore since $r_{n+1}.f$ is a prefix of the current object
during the whole subsequence, its content cannot change and is the
same object $o$ throughout. Similarly, there is no other
\textsc{R-Call} or \textsc{R-Return} acting on that particular object,
hence $\tr_n(o) = \tr_{i+1}(o) = \tr_i(o)m$. We saw above that this
call trace leads the initial session of $o$ to $S$. 
Then the judgement for the final subexpression, at
step $n+1$, is of the form $\judgment\Gamma_{n+1} >
\state{h_{n+1}}{r_{n+1}}{v} : T' < \envref{\Gamma''}{r_{n+1}} /$.
There are two cases, as in the proof of Theorem
\ref{thm:threadsr}.
If $T'$ is a base type then $S$ is a branch and
it is possible to
decide that $\Gamma_{n+1}(r_{n+1}.f)$ is equal to $S$. In that case
the call trace either does not change or has a label appended, but as
$S$ is a branch it can do a transition to itself with any label,
therefore $\tr_{n+1}(o)$ is consistent with $\Gamma_{n+1}(r_{n+1}.f)$
in both cases. If $T'$ is $\linktype{}{f}$, then $v$ is a label, $S$
is a variant $\choice{l}{S_l}{l\in E}$ and $\Gamma_{n+1}(r_{n+1}.f)$
can be chosen equal to $S_v$. We have $\tr_{n+1}(o) = \tr_n(o)v$ and
$\labtrans{S}{v}{S_v}$, so consistency is preserved.

This completes the inductive proof that for every step $i$ in the
reduction sequence there is $\Gamma_i$ such that $\judgment \Gamma_i >
\state{h_i}{r_i}{e_i} : T < \envref{\Gamma'}{r'}/$ and $\tr_i$ is
consistent with $\Gamma_i$. This fact obviously implies that $\tr_i$
is valid for all the objects which have a session type in $\Gamma_i$;
we now argue that it is also the case for the other objects, namely
those which either are not at all in $\Gamma_i$ or do not have a session
type. We know by hypothesis that it is the case for $\tr_1$ and show
by a very simple induction that it cannot change from $i$ to $i+1$.
The $i$th step can only change the call trace for an object $o$ if it
originates from \textsc{R-Call} or \textsc{R-Return} concerning that object.
\textsc{R-Call} can only occur if the reducible part of the expression
is indeed a method call on a field which contains $o$, and that is
only typable if $\Gamma_i$ contains a session type for that field
which is a branch containing the method, and thus allows the
appropriate transition: therefore validity of the call trace for $o$ is
preserved in that case. \textsc{R-Return} on the other
hand can only occur if the reducible part of the expression is a
return and if the current object is (the address of) $o$, and we saw
that in that case the $\Gamma_{i+1}$ constructed in our proof contains
a session type for $o$, so this case is covered by the consistency result.
\end{proof}

%%% Local Variables: 
%%% mode: latex
%%% TeX-master: "main"
%%% End: 

% !TEX root = main.tex

\section{Type Checking Algorithm}
\label{sec:algorithm}
\comment{SG: updated 28.3.2011} 
\begin{figure}
\begin{tabbing}
$\mathcal{W}(C) = \alga{C}{C.\sessterm}{C.\fieldsterm}{\emptyset}$\\
~~~~\= if for every $\annotmethod{F}{F'}{T'}{m}{T~x}{e} \in C$\\
\>~~~~\=$F'\not=\langle\_\rangle$ and $\algb{e}{F}{x : T} = \triplet{T'}{F'}{\_}$
\end{tabbing}
\begin{tabbing}
$\alga{C}{S}{F}{\Delta} = \Delta$ if $(F, S)\in \Delta$\\
$\alga{C}{\mu X.S}{F}{\Delta} =
  \alga{C}{S\subs{\mu X.S}{X}}{F}{\Delta\cup\{(F, \mu X.S)\}}$\\[\algskip]
$\alga{C}{\branch{\methsign{m_i}{U_i}{T_i}}{S_i}{1\leqslant
    i\leqslant n}}{F}{\Delta_0} = \Delta_n$\\
~~~~\= where for $i = 1$ to $n$\\
\>~~~~\= let $\triplet{T'_i}{F'_i}{\_} = \algb{e_i}{F}{x_i : U_i}$
where $\method{}{m_i}{x_i}{e_i} \in C$\\
\>\> if $T'_i\subt T_i$ then let $\Delta_i =
\alga{C}{S_i}{F'_i}{\Delta_{i-1}}$\\
\>\> else \= if $T'_i$ is an enumeration $E$ and $T_i = \linkthis$\\
\>\>\> then let $\Delta_i = \alga{C}{S_i}{\choice{l}{F'_i}{l\in
    E}}{\Delta_{i-1}}$\\
\>\> else \> if $T'_i = \linkthis$ and $T_i$ is an enumeration $E$ and
$F'_i = \choice{l}{F_l}{l\in E'}$ and $E'\subseteq E$\\
\>\>\> then let $\Delta_i = \alga{C}{S_i}{\bigvee_{l\in E'}F_l}{\Delta_{i-1}}$\\[\algskip]
$\alga{C}{\choice{l}{S_l}{l \in E}}{\choice{l}{F_l}{l \in E'}}{\Delta_0} = \Delta_n$\\
\> where $E' = \{l_1\ldots l_n\} \subset E$ and for $i = 1$ to $n$,
$\Delta_i = \alga{C}{S_{l_i}}{F_{l_i}}{\Delta_{i-1}}$
\end{tabbing}
%
%
%
%

% moved to main text - vasco
% The algorithm for checking subtyping is not described here but is similar to the one defined for channel session types in \cite{GaySJ:substp}.

\textbf{Combining variants}
\begin{tabbing}
  $\iset{T_i\,f_i}{i\in I} \vee \iset{T'_i\,f_i}{i\in I} = \iset{(T_i\vee T'_i)\,f_i}{i\in I}$\\[\algskip]
$\vfield{l}{F_l}{l \in I} \vee \vfield{l}{F'_l}{l\in J} = \vfield{l}{F''_l}{l \in I \cup J}$\\
~~~~\= where $F''_l = F_l\vee F'_l$ if $l\in I\cap J$, $F_l$ if $l\not\in J$, $F'_l$ if $l\not\in I$%[\algskip]
%$\completion F = F$ if $F$ is not a partial variant\\[\algskip]
%$\completion{\vfield{l}{F_l}{l\in I}} = \vfield{l}{F_l}{l\in E}$ if $I\subset E$, where $F_l = \bot$ for $l\not\in I$
\end{tabbing}
% % coming from the fig on subtyping
% We write $\mkterm{sup}(S,S')$ for the least upper bound of
% $S$ and $S'$ with respect to subtyping, and extend it to
% $\mkterm{sup}(\objecttype{C}{S},\objecttype{C}{S'})$, requiring the
% same $C$ in both types. It is defined by taking the intersection of
% sets of methods and the least upper bound of their
% continuations. Details of a similar definition (greatest lower bound
% of channel session types) can be found in \cite{MezzinaLG:typs}.
\caption{Typechecking: algorithms $\mathcal{W}$ and $\mathcal{A}$.}
\label{fig:fig-alga}
\end{figure}

%%% Local Variables: 
%%% mode: latex
%%% TeX-master: "main"
%%% End: 
 
\begin{figure}
\begin{tabbing}
$\algb{\nullterm}{F}{V} = \triplet{\nulltype}{F}{V}$\\[\algskip]
$\algb{n}{F}{V} = \triplet{\sem{\access{n.\mkterm{protocol}}}}{F}{V}$\\[\algskip]
$\algb{x}{F}{y : T} = \triplet{T}{F}{V}$ if $x=y$, where  $V = \emptyset$ if $T$ is linear or $y : T$ otherwise\\[\algskip]
$\algb{\swap{f}{e}}{F}{V} =  \triplet{U}{F''}{V'}$\\
~~~~\= where $\triplet{T}{F'}{V'} = \algb{e}{F}{V}$
and $F'(f)$ is not a variant and\\
\> if $T=\linkthis$ then $F' = \vfield{l}{F_l}{l\in E}$ and $U = (\bigvee_{l\in E}F_l)(f)$ and $F'' =
\changetype{(\bigvee_{l\in E}F_l)}{f}{E}$\\
\> else $U = F'(f)$ and $F'' = \changetype{F'}{f}{T}$\\[\algskip]
$\algb{l}{F}{V} = \triplet{\linkthis}{\langle l : F \rangle}{V}$\\[\algskip]
$\algb{\new{C'}}{F}{V} = \triplet{C'.\sessterm}{F}{V}$\\[\algskip]
$\algb{\methcal{f}{m_j}{e}}{F}{V} =  \triplet{T}{F''}{V'}$\\
\> where  $\triplet{T'}{F'}{V'} = \algb{e}{F}{V}$ and\\
\>~~~~\= if $T'=\linkthis$ then $F' = \vfield{l}{F_l}{l\in E}$ and
$(\bigvee_{l\in
  E}F_l)(f)=\branch{\methsign{m_i}{T'_i}{T_i}}{S_i}{i\in I}$ and $j
\in I$\\
\>\>~~~~\=and $T'_j$ is an enumeration $E'$ and $E\subset E'$ and $F'' =
\changetype{(\bigvee_{l\in E}F_l)}{f}{S'_j}$ and\\
\>\>\>$T = \linktype{}{f}$ if $T_j = \linkthis$,
  $T = T_j$ otherwise\\
\>\> else $F'(f) = \branch{\methsign{m_i}{T'_i}{T_i}}{S_i}{i\in I}$
and $j \in I$ and $T'\subt T_j'$ and\\
\>\>\>$F'' = \changetype{F'}{f}{S'_j}$ and $T = \linktype{}{f}$ if $T_j = \linkthis$,
  $T = T_j$ otherwise\\[\algskip]
$\algb{\selfcal{m}{e}}{F}{V} = \triplet{T'}{F'''}{V'}$\\
\> where  $\triplet{T}{F'}{V'} = \algb{e}{F}{V}$ and $\annotmethod{F''}{F''}{T'}{m}{T''~x}{e}\in C$ and\\
\>~~~~\=if $T=\linkthis$ then $F' = \vfield{l}{F_l}{l\in E}$ and $T''$
is an enumeration $E'$ and\\
\>\>~~~~\= $E\subset E'$ and $\bigvee_{l\in
  E}F_l \subt F''$\\
\>\>else $T\subt T''$ and $F'\subt F''$\\[\algskip]
$\algb{\switch{e}{l}{e_l}{l \in E}}{F}{V} =
\triplet{T}{\bigvee_{l\in E}F''_l}{V''}$ \\
\> where $\triplet{U}{F'}{V'} = \algb{e}{F}{V}$ and\\
\>\> if $U = E'$ then $E'\subseteq E$ and $\forall l \in
E'. \triplet{T}{F''_l}{V''}=\algb{e_l}{F'}{V'}$\\
\>\> else if $U = \linkthis$ then $F' = \vfield{m}{G_m}{m\in E'}$ and
$E'\subset E$ and\\
\>\>~~~~\=$\forall l \in
E'. \triplet{T}{F''_l}{V''}=\algb{e_l}{\bigvee_{m\in E'}G_m}{V'}$\\
\>\> else if $U = \linktype{}{f}$ then\\
\>\>~~~~\=$F'(f) = \vfield{l}{S_l}{l\in E'}$ and $E'\subseteq E$ and
$\forall l \in E'. \triplet{T}{F''_l}{V''} =
\algb{e_l}{\changetype{F'}{f}{S_l}}{V'}$\\[\algskip]
$\algb{\while{e}{e'}}{F}{V} = \triplet{\nulltype}{F''}{V'}$\\
\> where $\triplet{U}{F'}{V'} = \algb{e}{F}{V}$ and \\
\>\> if $U = E'$ then $E'\subseteq \{\true,\false\}$ and
$\algb{e'}{F'}{V'}=\triplet{\nulltype}{F}{V}$ and $F''=F'$\\
\>\> else if $U = \linkthis$ then $F' = \vfield{l}{F_l}{l\in E}$ and
$E\subset \{\true,\false\}$ and\\
\>\>~~~~\=$\algb{e'}{\bigvee_{l\in
    E}F_l}{V'}=\triplet{\nulltype}{F}{V}$ and $F''=\bigvee_{l\in E}F_l$\\
\>\> else if $U = \linktype{}{f}$ then $F'(f) = \langle \true:S_\true, \false:S_\false \rangle$ and\\
\>\>\>$\algb{e'}{\changetype{F'}{f}{S_\true}}{V'}=\triplet{\nulltype}{F}{V}$
and $F''=\changetype{F'}{f}{S_\false}$\\[\algskip]
$\algb{\seq{e}{e'}}{F}{V} = \algb{e'}{F''}{V'}$\\
\> where $\triplet{T}{F'}{V'} = \algb{e}{F}{V}$ and $T \neq
\linktype{}{\_}$ and\\
\>~~~~\= if $T=\linkthis$ then $F' = \vfield{l}{F_l}{l\in E}$ and
$F''=\bigvee_{l\in E}F_l$\\
\>\> else $F''=F'$ \\[\algskip]
%
% $\algb{\accept{\this.f}}{F}{\Gamma} = \triplet{\chantype\Sigma}{F}{\Gamma}$\\
% ~~~~\= where $F(f) = \access\Sigma$\\[\algskip]
%
% $\algb{\request{\this.f}}{F}{\Gamma} = \triplet{\chantype{\overline\Sigma}}{F}{\Gamma}$\\
% ~~~~\= where $F(f) = \access\Sigma$\\[\algskip]
%
% $\algb{\spawn{C'.m_j(e)}}{F}{V} = \triplet{\nulltype}{F'}{V'}$\\
% \> where $\triplet{T}{F'}{V'} = \algb{e}{F}{V}$ and\\
% \>\> $C'.\sessterm = \branch{\methsign{m_i}{T'_i}{T_i}}{S_i}{i\in I}$ and $j\in I$ and\\
% \>\> $T\subt T'_j$ and $T$ is of syntactic category $B$
$\algb{\spawn{C'.m(e)}}{F}{V} = \triplet{\nulltype}{F'}{V'}$\\
\> where
$(\nulltype,F',V') = \algb{e}{F}{V}$ and $\methsign{m}{\nulltype}{\nulltype}\in C'.\sessterm$
\end{tabbing}

\caption{Typechecking: algorithm $\mathcal{B}$.}
\label{fig:fig-algb}
\end{figure}

%%% Local Variables: 
%%% mode: latex
%%% TeX-master: "main"
%%% End: 
 
\begin{figure}
\begin{lstlisting}
class C {
  session { linkthis m(int): < FALSE: SCf, TRUE: SCt > }
  where SCf = ..., SCt = ...
  ...
} 

class D {
  session { linkthis a(int): < FALSE: SDf, TRUE: SDt >,
            linkthis b(int): < FALSE: SDf, TRUE: SDt >,
            { FALSE, TRUE } c(int): SD1,
            { FALSE, TRUE, UNKNOWN } d(int): SD2 }
  where SDf = ..., SDt = ..., SD1 = ..., SD2 = ...

  f;

  a(x) { // Not allowed, because return type is link f
    f <-> new C();
    f.m(x); }

  aa(x) { // Allowed, because body type is linkthis
    f <-> new C();
    switch (f.m(x)) {
      case FALSE: FALSE;
      case TRUE: TRUE; } }

  b(x) { // Allowed, by creating a uniform variant
    even(x); }

  bb(x) { // Allowed, because body type is linkthis
    switch (even(x)) {
      case FALSE: FALSE;
      case TRUE: TRUE; } }

  c(x) { // Allowed, by taking a join of field typings
    f <-> new C();
    switch (f.m(x)) {
      case FALSE: FALSE;
      case TRUE: TRUE; } }

  cc(x) { // Allowed, by taking a join of equal field typings
    switch (even(x)) {
      case FALSE: FALSE;
      case TRUE: TRUE; } }

  d(x) { // Allowed, because of subtyping between enumerations
    even(x); }
}
\end{lstlisting}
\caption{Example for type checking.}
\label{fig:algexample}
\end{figure}

%%% Local Variables: 
%%% mode: latex
%%% TeX-master: "main"
%%% End: 

This section introduces a type checking algorithm, sound and complete
with respect to the type system in Section~\ref{sec:distributed}, and
describes a prototype implementation of a programming language based
on the ideas of the paper.

\subsection{The Algorithm}

Figures~\ref{fig:fig-alga} and~\ref{fig:fig-algb} define a type
checking algorithm for the distributed language, including the
sequential extensions from Section~\ref{sec:seq-extensions}. The
algorithm is applied to each component of a distributed system, and in
order to ensure type safety of the complete system there must be some
separate mechanism to check that each access point $n$ is given the same
type everywhere. A program is type checked by calling algorithm
$\mathcal{W}$ on each class definition and checking that no call
generates an error. The definition of algorithm $\mathcal{W}$ follows
the typing rule \mkTrule{Class} in Figure~\ref{fig:self-calls}. It
calls algorithm $\mathcal{A}$ to check the relation $\typedsess F C S$
and algorithm $\mathcal{B}$ to type check the bodies of the methods
that have $\reqterm/\ensterm$ annotations. Algorithm $\mathcal{A}$
also calls algorithm $\mathcal{B}$ to typecheck the bodies of the
methods that appear in the session type.

In both $\mathcal{A}$ and $\mathcal{B}$ there are several ``if'' and
``where'' clauses; they should be interpreted as conditions which, if
not satisfied, cause termination with a typing error.

Because of the coinductive definition of $\typedsess F C S$, algorithm
$\mathcal{A}$ uses a set $\Delta$ of assumed relationships between
field typings $F$ and session types $S$. If there is no error then the
algorithm returns $\Delta$, but at the top level we are only
interested in success or failure, not in the returned value.

Algorithm $\mathcal{B}$ checks the typing judgement for expressions,
defined in Figure~\ref{fig:typingexpr}, specialized to the top-level
form $\judgment \envref{\this:\objecttype{C}{F}, V}{\this} > e:T <
\envref{\this:\objecttype{C}{F'},V'}{\this} /$ as explained in
Section~\ref{sec:typingexpressions}.  The definition of $\mathcal{B}$
follows the typing rules (Figure~\ref{fig:typingexpr}) except for one
point: \textsc{T-VarF} means that the rules are not syntax-directed,
as any expression with type $E$ can also be given type
$\linkthis$. For this reason, clause $l$ of $\mathcal{B}$ produces
type $\linkthis$ and a variant field typing with the single label $l$.
More general variant field typings are produced when typing
$\switchterm$ expressions, as the $\vee$ operator is used to combine
the field typings arising from the branches. This is the typical
situation when typing the body of a method whose return type is
$\linkthis$: the body contains a $\switchterm$ whose branches return
different labels with different associated field typings.

It is possible, however, that giving type $\linkthis$ to $l$ is
incorrect. It might turn out that the expression needs to have an
enumerated type $E$, for example in order to be passed as a method
parameter or returned as a method result of type $E$. An expression
that has been inappropiately typed with $\linkthis$ can, in general, be
associated with any variant field typing, for example if it contains a
$\switchterm$ whose branches yield different field typings. In this
case, the algorithm uses $\vee$ to combine the branches of the variant
field typing into a single field typing; the join is always over all
of the labels in the variant. This happens in several places in
algorithm $\mathcal{B}$, indicated by conditions of the form ``if $T =
\linkthis$'', and in the final ``else'' branch of the third clause of
algorithm $\mathcal{A}$.

The algorithm for checking subtyping is not described here but is
similar to the one defined for channel session types by Gay and Hole
\citeN{GaySJ:substp}.  
%
% \textbf{Combining partial variants}
% \begin{tabbing}
%   $\iset{T_i\,f_i}{i\in I} \uplus \iset{T'_i\,f_i}{i\in I} = \iset{\mkterm{sup}(T_i, T'_i)\,f_i}{i\in I}$\\[\algskip]
% $\vfield{l}{F_l}{l \in I} \uplus \vfield{l}{F'_l}{l\in J} = \vfield{l}{F''_l}{l \in I \cup J}$\\
% ~~~~\= where $F''_l = F_l\uplus F'_l$ if $l\in I\cap J$, $F_l$ if $l\not\in J$, $F'_l$ if $l\not\in I$\\[\algskip]
% $\overline F = F$ if $F$ is not a partial variant\\[\algskip]
% $\overline{\vfield{l}{F_l}{l\in I}} = \vfield{l}{F_l}{l\in E}$ if $I\subset E$, where $F_l = \bot$ for $l\not\in I$
% \end{tabbing}
% coming from the fig on subtyping
We write $S \vee S'$ for the least upper bound of
$S$ and $S'$ with respect to subtyping.
%, and extend it to
%$\objecttype{C}{S}\vee\objecttype{C}{S'}$, requiring the
%same $C$ in both types. 
It is defined by taking the intersection of
sets of methods and the least upper bound of their
continuations. Details of a similar definition (greatest lower bound
of channel session types) can be found in the work of Mezzina \citeN{MezzinaLG:typs}.

The type checking algorithm is modular in the sense that to check class
$C$ we only need to know the session types of other classes, not their
method definitions.

We have not yet investigated type inference, but there are two ways in
which it might be beneficial. One would be to infer the
$\reqterm/\ensterm$ annotations. The other would be to support some
form of polymorphism over field typings, along the lines that if
method $m$ does not use field $f$ then it should be callable
independently of the type of $f$. This might reduce the need to type
check the definition of $m$ every time it occurs in the session type.

\subsection{Examples of Type Checking}

Figure~\ref{fig:algexample} defines classes \lstinline|C| and
\lstinline|D|. In class \lstinline|C|, only the outer layer of the
session type is of interest; the example uses an object of class
\lstinline|C| but does not need the definition of method
\lstinline|m|. Class \lstinline|D|, as well as the outer layer of the
session type, contains a field \lstinline|f| and one or two candidate
definitions for each of the methods \lstinline|a|, \lstinline|b|,
\lstinline|c| and \lstinline|d|. The definitions of \lstinline|a| and
\lstinline|aa| are alternatives for the method \lstinline|a| specified
in the session type, and so on.

The definition of \lstinline|a| is not typable because the type of the
returned expression is $\linktype{}{f}$. Allowing this would let the
caller of \lstinline|a| have access to field \lstinline|f|. Instead,
the result of \lstinline|f.m(x)| must be analyzed with a
\lstinline|switch|, as in the definition of \lstinline|aa|, which is
typable. The \lstinline|linkthis| type required by the signature of
\lstinline|a| is introduced by the enumeration labels
\lstinline|FALSE| and \lstinline|TRUE| in the branches of the
\lstinline|switch|. A compiler could insert \lstinline|switch|es of
this kind automatically, allowing the definition of \lstinline|a| as
syntactic sugar.

The remaining method definitions are all typable and illustrate
different features of the type system and the algorithm. In the
definition of \lstinline|b|, the method \lstinline|even| is supposed
to be the obvious function for testing parity of an integer, returning
\lstinline|TRUE| or \lstinline|FALSE|. This definition is typable even
though the body of \lstinline|b| does not introduce a
\lstinline|linkthis| type, because algorithm $\mathcal{A}$ constructs
a variant field typing over \lstinline|{TRUE,FALSE}| in which both
options are the same. This is seen in the first \emph{else} clause of
$\mathcal{A}$. The definition of \lstinline|bb| achieves the same
effect by using the labels \lstinline|FALSE| and \lstinline|TRUE| to
introduce the type \lstinline|linkthis|. Each label corresponds to a
partial variant field typing, and checking the \lstinline|switch|
combines them by means of the $\vee$ operator. Because the field
\lstinline|f| is not involved in the method body, the field typing is
the same in both options of the variant.

Method \lstinline|c| has the same definition as \lstinline|a|, but
this time the signature in the session type specifies a simple
enumeration as the return type. This is allowed, by using the $\vee$
operator to construct the join of the field typings, in the second
\emph{else} clause of $\mathcal{A}$. This means that when the
algorithm proceeds to type check method definitions in the session
type \lstinline|SD1|, the type of \lstinline|f| is taken to be the
join of \lstinline|SCf| and \lstinline|SCt|. Whether or not this loss
of information causes a problem will depend on the particular
definitions of those types, which we have not shown. Method
\lstinline|cc| is handled in the same way, but this time there is no
loss of information because the types being joined are identical; this
in turn is because \lstinline|f| is not involved in the method body.

Finally, method \lstinline|d| illustrates straightforward subtyping
between enumerations, defined as set inclusion.

\subsection{Correctness of the Algorithm}

The following sequence of results outlines the proof of soundness and
completeness of the algorithm. The detailed proofs are routine and are
omitted.

\begin{theorem}
  Algorithm $\mathcal{A}$ always terminates, either with an error (and
  then the function $\mathcal{A}$ is undefined) or with a result.
\end{theorem}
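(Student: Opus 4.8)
The plan is to decouple the two procedures that $\mathcal{A}$ rests on. Algorithm $\mathcal{A}$ calls $\mathcal{B}$ (in its branch clause), but $\mathcal{B}$ never calls $\mathcal{A}$, so I would first dispatch $\mathcal{B}$. Each invocation $\algb{e}{F}{V}$ recurses only on \emph{proper} subexpressions of $e$ and otherwise merely consults the fixed declared session types $C'.\sessterm$; hence a straightforward structural induction on $e$ shows that every call to $\mathcal{B}$ returns in finitely many steps, either aborting at a violated ``if''/``where'' side-condition or yielding a triple. Consequently each call to $\mathcal{B}$ made inside $\mathcal{A}$ terminates, and it remains only to bound the recursion of $\mathcal{A}$ itself.

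The heart of the argument is a finiteness property: only finitely many pairs $(F,S)$ can ever appear as the (field-typing, session-type) arguments of $\mathcal{A}$. For the session-type component, the second argument is always obtained from the class session types by $\mu$-unfolding and by descending into branch continuations $S_i$ or variant components $S_l$; these are exactly the states reachable under the labelled transition relation on session types of Section~\ref{sec:properties}, and contractiveness makes each session type a regular tree with only finitely many such states. Taking the union over the finitely many class declarations yields a finite pool $\mathcal{T}$. For the field-typing component, every field typing is either a map from the fixed finite set $C.\fieldsterm$ into the finite set of admissible field types --- namely $\nulltype$, enumerated types (finite subsets of the finite label alphabet of the program), session types drawn from $\mathcal{T}$, and link types $\linktype{}{f}$ with $f\in C.\fieldsterm$ --- or a variant field typing $\vfield{l}{F_l}{l\in E}$ of depth one over a finite label set, nested variants being forbidden. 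I would verify that $\mathcal{B}$ keeps field types inside this pool: a session type is written into a field only by $\new{C'}$ (producing the declared $C'.\sessterm$) or by advancing an existing field type along a method signature (producing a continuation, hence a member of $\mathcal{T}$). Therefore the pool $P$ of admissible pairs $(F,S)$ is finite.

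With $P$ finite, termination follows from the lexicographic measure $(\,|P\setminus\Delta|,\ \text{size}(S)\,)$ on each call $\alga{C}{S}{F}{\Delta}$, using that $\mathcal{A}$ returns a superset of its $\Delta$ argument so that the threaded $\Delta$ never shrinks and always satisfies $\Delta\subseteq P$. In the clause for $\rectype{X}{S}$, the first clause would already have returned if $(F,\rectype{X}{S})\in\Delta$, so the recursive call adds a genuinely new pair to $\Delta$ and strictly decreases $|P\setminus\Delta|$; since $P$ is finite this can occur only finitely often, so unfolding cannot diverge even though substitution enlarges the syntactic size. In the branch and variant clauses each recursive call is on a continuation ($S_i$, resp. $S_{l_i}$) of strictly smaller structural size, while $\Delta_{i-1}\supseteq\Delta$ leaves $|P\setminus\Delta|$ smaller or unchanged; in either case the measure drops lexicographically. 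Thus every recursive call strictly decreases the measure, the call tree is finite, and $\mathcal{A}$ halts. For the stated dichotomy: any violated ``if''/``where'' side-condition causes immediate termination with $\mathcal{A}$ undefined (an error), and if no side-condition is ever violated the recursion reaches its base cases and returns the accumulated $\Delta$.

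I expect the delicate step to be the finiteness of the reachable field typings: one must confirm both that $\mathcal{B}$ never manufactures a session type outside $\mathcal{T}$ and that variant field typings never nest, so that the field-typing space stays finite. The session-type finiteness is the standard regular-tree property, and the measure bookkeeping becomes routine once $P$ is known to be finite.
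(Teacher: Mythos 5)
Your proof is correct and is essentially the argument the paper intends: the paper's entire proof is a citation to the standard termination technique for coinductively-defined subtyping algorithms (Pierce, Chapter 16), which is exactly your combination of (i) finiteness of the pool of reachable $(F,S)$ pairs via contractiveness/regularity and (ii) a lexicographic measure in which the accumulated assumption set $\Delta$ can only grow. You have simply written out in full the details the paper leaves implicit (termination of $\mathcal{B}$ and finiteness of the field-typing space), and you correctly identify the latter as the only point requiring real care.
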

\begin{proof}
  Similar to proofs about algorithms for coinductively-defined
  subtyping relations \cite[Chapter 16]{PierceBC:typpl}.
\end{proof}

\begin{lemma}
  If $\judgment \envref{\this:\objecttype{C}{F}, V}{\this} >
  e:\linkthis < \envref{\this:\objecttype{C}{F'},V'}{\this} /$ then
  for some $E$ and $\{F_l\}_{l\in E}$, $F' = \vfield{l}{F_l}{l\in
    E}$ and $\judgment \envref{\this:\objecttype{C}{F}, V}{\this} >
  e:E < \envref{\this:\objecttype{C}{\bigvee_{l\in E}F_l},V'}{\this} /$.
\end{lemma}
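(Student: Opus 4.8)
The plan is to argue by induction on the derivation of \judgment \envref{\this:\objecttype C F, V}{\this} > e : \linkthis < \envref{\this:\objecttype C{F'}, V'}{\this} /, by cases on its last rule, and the first step is to pin down which rules can conclude with the type $\linkthis$. Since $e$ is a top-level expression it contains no $\returnterm$, so the internal rules \Tref, \Tchan, \TvarS\ and \Treturn\ do not occur; of these only the last two could produce a $\linkterm$ type at all, and they yield $\linktype{}{f}$, never $\linkthis$. Among the top-level rules, \Tcall\ also yields $\linktype{}{f}$ rather than $\linkthis$, the value, creation, spawn and name rules yield $\nulltype$, a singleton, an enumerated or a session type, and \Tvar/\TlinVar\ cannot apply because a method parameter is never given type $\linkthis$. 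Hence the only rules that can conclude $e:\linkthis$ are the introduction rule \TvarF\ and the rules that copy the type of a subexpression upward, namely \Tseq, \Tswitch\ and \TswitchLink, together with the subsumption rules \Tsub\ and \TsubEnv. Moreover $\linkthis$ is neither a session type, an enumerated type nor a field typing, so it is related by subtyping only to itself; an occurrence of \Tsub\ concluding $e:\linkthis$ is therefore a no-op whose premise already has type $\linkthis$, and the induction hypothesis applies to it unchanged.

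The base case is \TvarF. Its premise reads \judgment \envref{\this:\objecttype C F, V}{\this} > e : E < \envref{\this:\objecttype C{F_0}, V'}{\this} / with $F_0$ a record, and its conclusion equips $\this$ with the uniform variant $F' = \vfield{l}{F_0}{l\in E}$; taking $F_l = F_0$ for each $l$, the join $\bigvee_{l\in E}F_l$ is $F_0$, so the premise is already exactly the required $E$-typing. For the propagation cases I would apply the induction hypothesis to the subderivation(s) that actually carry the type $\linkthis$: the second component in \Tseq, and every branch in \Tswitch\ and \TswitchLink\ (all of which share the one final environment, hence the same variant $F'$). The induction hypothesis rewrites each such subderivation into an $E$-typing whose final type of $\this$ is $\objecttype C{\bigvee_{l\in E}F_l}$; re-applying the same rule, with its side conditions untouched because only the central type and the final field typing have changed, delivers the $E$-typing of $e$. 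In all these cases the existence of the join is inherited directly from the induction hypothesis.

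The remaining, and genuinely delicate, case is \TsubEnv, the only rule that can create a variant not already present in a premise. Here the premise gives $\this:\objecttype C{\vfield{l}{F^0_l}{l\in E_0}}$ and \mkSrule{Variant} (via \mkSrule{Field}) forces $F' = \vfield{l}{F'_l}{l\in E'}$ with $E_0\subset E'$ and $F^0_l\subt F'_l$ for $l\in E_0$. The induction hypothesis supplies $e:E_0$ with final type $\objecttype C{\bigvee_{l\in E_0}F^0_l}$, which I would lift to the goal with \Tsub\ (using $E_0\subt E'$) and then \TsubEnv, for which it remains to show $\bigvee_{l\in E_0}F^0_l\subt\bigvee_{l\in E'}F'_l$. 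This is where the main obstacle lies: I must control the least upper bound $\vee$ both under pointwise subtyping of components and under enlargement of the index set. The inequality itself follows from the defining universal property of $\vee$ --- monotonicity gives $\bigvee_{l\in E_0}F^0_l\subt\bigvee_{l\in E_0}F'_l$ and adding indices gives $\bigvee_{l\in E_0}F'_l\subt\bigvee_{l\in E'}F'_l$ --- but only once the larger join is known to be defined, and \mkSrule{Variant} imposes no joinability constraint on the fresh components $F'_l$ for $l\in E'\setminus E_0$. I expect the real work of the proof to be exactly this definedness: I would discharge it by first applying Lemma~\ref{lem:rearrange-expr} to confine subsumption to canonical positions, so that the variant whose cases are being joined is one genuinely built by \TvarF\ and the branch-combining steps --- where the components arise from joins and are joinable by construction --- rather than one obtained by a spurious widening to incompatible cases.
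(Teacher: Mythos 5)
Your overall strategy — induction on the typing derivation, with a preliminary classification of which rules can conclude $e:\linkthis$ — is exactly the approach the paper takes (its proof is the single line ``by induction on the typing derivation''), and your treatment of the base case \TvarF{} and of the propagation cases \Tseq, \Tswitch, \TswitchLink{} and \Tsub{} is correct. You have also put your finger on precisely the right pressure point: the only rule that can manufacture a variant final field typing not already justified by a premise is \TsubEnv, and the whole difficulty is whether $\bigvee_{l\in E}F_l$ is defined after such a widening.

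The gap is that your proposed discharge of that difficulty does not work, and indeed cannot: Lemma~\ref{lem:rearrange-expr} still permits (in fact requires) \TsubEnv{} as the last step of each branch of a \switchterm, and that canonical occurrence is exactly the one that widens a singleton variant $\vfield{l}{F_l}{}$ to the combined variant over all branches, whose components are \emph{not} joinable in general. The paper's own worked example after \TvarF{} produces the final field typing $\langle\mkterm{OK}:\{S\,f\},\,\mkterm{ERROR}:\{\nulltype\,f\}\rangle$ for the branch expression $\seq{\swap{f}{\new C}}{\mkterm{OK}}$; here $S\vee\nulltype$ does not exist (subtyping relates $\nulltype$ only to itself), so the judgement the lemma asserts — type $\{\mkterm{OK},\mkterm{ERROR}\}$ with final field typing $\{S\,f\}\vee\{\nulltype\,f\}$ — is not even well formed, and no rearrangement of the derivation can change the final environment it must produce. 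So the obstacle you identified is a genuine defect of the statement, not something a cleverer choice of derivation can route around. The repair is to add the hypothesis that the components $F_l$ of $F'$ have a common upper bound (equivalently, that $\bigvee_{l\in E}F_l$ is defined) — which holds in the situations where the lemma is actually invoked in the algorithm-correctness proofs, since there the join is compared against a record $F''$ with $F_l\subt F''$ for all $l$. Under that hypothesis your \TsubEnv{} case closes as you sketched: the inner join over $E_0$ exists because its components are bounded by the outer one, and monotonicity of $\vee$ in both the components and the index set gives $\bigvee_{l\in E_0}F^0_l\subt\bigvee_{l\in E'}F'_l$, after which \Tsub{} and \TsubEnv{} finish the case.
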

\begin{proof}
By induction on the typing derivation.
\end{proof}

\begin{lemma}
  \begin{sloppypar}
    If $\algb{e}{F}{V} = (T,F',V')$ then
    $\judgment \envref{\this:\objecttype{C}{F}, V}{\this} > e:T <
    \envref{\this:\objecttype{C}{F'},V'}{\this} /$.
  \end{sloppypar}
\end{lemma}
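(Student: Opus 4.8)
The plan is to prove this by structural induction on the expression $e$, which coincides with the recursion structure of algorithm $\mathcal{B}$: every recursive call in a defining clause of $\mathcal{B}$ is on a strict subexpression, so I may assume the statement for each such call as an induction hypothesis. For each syntactic form of $e$ I would inspect the corresponding clause of $\mathcal{B}$ (Figure~\ref{fig:fig-algb}) and build the matching typing derivation from the rules of Figure~\ref{fig:typingexpr}, together with the extension rules \Twhile/\mkTrule{WhileLink}, \Tspawn and \TselfCall. Since $\mathcal{B}$ operates only on top-level expressions, with current object fixed at $\this$ and with no occurrence of $\returnterm$, the internal rules \Tref, \Tchan, \TvarS and \Treturn never arise, and the current-object path stays $\this$ on both sides of every judgement.

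The base cases are immediate: clause $\nullterm$ matches \Tnull, clause $n$ matches \Tname, clause $\new{C'}$ matches \Tnew, and clause $x$ matches \TlinVar or \Tvar according to whether $T$ is linear. The clause for a label $l$, which returns $(\linkthis,\langle l:F\rangle,V)$, is realised by \Tlabel (giving $l:\{l\}$ with the field typing unchanged) followed by \TvarF with $E=\{l\}$ (promoting the type to $\linkthis$ and the field typing to $\langle l:F\rangle$). For the inductive cases --- $\swap{f}{e}$, $\methcal{f}{m_j}{e}$, $\selfcal{m}{e}$, $\seq{e}{e'}$, $\switch{e}{l}{e_l}{l\in E}$, $\while{e}{e'}$ and $\spawn{C'.m(e)}$ --- I would apply the induction hypothesis to each recursive call and then assemble the derivation with the corresponding rule (\Tswap, \Tcall, \TselfCall, \Tseq, \Tswitch/\TswitchLink, \Twhile/\mkTrule{WhileLink}, \Tspawn respectively), inserting \Tsub to reconcile the computed type of a parameter with the declared parameter type (subtyping on enumerations being set inclusion), and inserting \TsubEnv to raise the final field typings of the branches of a $\switchterm$ or $\whileterm$ to their join $\bigvee$, which is legitimate because each operand is a subtype of the least upper bound.

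The single genuinely delicate point --- and the reason the rules are not syntax-directed --- is the interaction between $\linkthis$ and enumerated types. Algorithm $\mathcal{B}$ always assigns type $\linkthis$ to a label, and whenever a clause instead needs an enumerated type (each place marked ``if $T=\linkthis$'', e.g.\ in the clauses for $\swap{f}{e}$, $\methcal{f}{m_j}{e}$, $\seq{e}{e'}$ and the $U=\linkthis$ sub-cases of $\switchterm$ and $\whileterm$) it requires $F'=\vfield{l}{F_l}{l\in E}$ and replaces the field typing by $\bigvee_{l\in E}F_l$. Here I would invoke the preceding lemma: from the induction hypothesis $e:\linkthis$ with final field typing $\vfield{l}{F_l}{l\in E}$, it yields the alternative judgement $e:E$ with final field typing $\bigvee_{l\in E}F_l$, which is exactly the premise the clause needs before applying \Tswap, \Tcall, \Tseq or \Tswitch. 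I therefore expect the whole argument to be a routine clause-by-clause verification whose only non-mechanical ingredient is this uniform appeal to the preceding lemma to discharge the $\linkthis$/enumeration conversions; the remaining work is bookkeeping of the parameter environment $V\rightsquigarrow V'$ and of where \Tsub and \TsubEnv must be inserted.
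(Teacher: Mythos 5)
Your proposal is correct and follows the same route as the paper, whose proof is simply stated as induction on the structure of $e$; your clause-by-clause matching of $\mathcal{B}$'s cases with the typing rules, and in particular the appeal to the immediately preceding lemma to discharge the $\linkthis$/enumeration conversions at the points marked ``if $T=\linkthis$'', is exactly the intended argument.
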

\begin{proof}
By induction on the structure of $e$.
\end{proof}

\begin{lemma}
\label{lem:algb}
If $\judgment \envref{\this:\objecttype{C}{F}, V}{\this} > e:T <
\envref{\this:\objecttype{C}{F'},V'}{\this} /$ and $\algb{e}{F}{V}
= (T',F'',V'')$  then $V''\subt V'$ and either
\begin{enumerate}
\item $T'\subt T$ and $F''\subt F'$, or
\item $T = \linkthis$, $T'$ is an enumeration $E$, $F' =
  \vfield{l}{F_l}{l\in E'}$, $E\subset E'$ and $\forall l\in
  E.~F''\subt F_l$, or 
\item $T$ is an enumeration $E$, $T' = \linkthis$, $F'' =
  \vfield{l}{F_l}{l\in E'}$, $E'\subset E$ and $\forall l\in
  E'.~F_l\subt F'$.
\end{enumerate}
\end{lemma}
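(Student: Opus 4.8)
The plan is to prove the statement by induction on the structure of $e$, following exactly the case analysis by which $\mathcal{B}$ is defined, and inverting the given typing derivation at each step. Since the typing rules are not syntax-directed, I would first invoke Lemma~\ref{lem:rearrange-expr} to assume, without loss of generality, that \textsc{T-Sub} and \textsc{T-SubEnv} occur only in the controlled positions described there; after this normalisation the only remaining source of non-determinism is \textsc{T-VarF}, which converts an enumerated type into $\linkthis$ while wrapping the current record field typing in a singleton variant. Consequently, for each syntactic form of $e$ the derivation ends either with the unique structural rule for that form or with a single application of \textsc{T-VarF} capping such a rule (it cannot be iterated, since \textsc{T-VarF} requires an enumerated premise and yields $\linkthis$). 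This dichotomy is precisely what produces the three alternatives in the conclusion.

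The three cases should be read as bookkeeping for the mismatch between $\linkthis$ and enumerated types. In the principal situation the structural rule used in the derivation matches the branch of $\mathcal{B}$ taken, the induction hypothesis applied to the subexpressions yields subtyping inequalities between the algorithmic and the derivable component types, and case~(1) follows by reflexivity and transitivity of $\subt$ together with Propositions~\ref{prop:subfield} and~\ref{prop:subsess}. When the derivation caps a structural rule with \textsc{T-VarF} (so $T = \linkthis$) but $\mathcal{B}$ returns an enumeration, I would use the first of the two lemmas immediately preceding this statement to rewrite the $\linkthis$ judgement as an enumerated judgement with final field typing $\bigvee_{l\in E'}F_l$, where $F' = \vfield{l}{F_l}{l\in E'}$; the induction hypothesis then applies to this normalised judgement, and the least-upper-bound property of $\vee$ (used by $\mathcal{B}$ to merge the variant branches of a $\switchterm$) delivers $F''\subt F_l$ for every $l$, which is case~(2). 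Case~(3) is the symmetric situation where $\mathcal{B}$ introduces $\linkthis$ (clause $l$, producing a singleton variant) while the derivation keeps an enumerated type; here $F''$ is itself a variant and each of its components is shown to be a subtype of $F'$.

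The routine-but-delicate work is in the compound clauses---$\swap{f}{e}$, $\methcal{f}{m_j}{}$, $\switchterm$, $\whileterm$ and $\seq{e}{e'}$---where a subexpression typed $\linkthis$ must be reconciled with operations on the field typing. In each, $\mathcal{B}$ branches on whether the recursively computed type is $\linkthis$, an enumeration, or a link type, and the plan is to feed the appropriate one of the three inductive alternatives for the subexpression into the matching guard of the clause, propagating the subtyping inequalities through the join $\vee$ and through the field-typing updates $\changetype{F}{f}{\cdot}$. I expect the main obstacle to be the $\switchterm$ and $\whileterm$ cases: there the final field typing returned by $\mathcal{B}$ is a join over the branch results, and one must check that the per-branch inequalities supplied by the induction hypothesis are preserved by $\vee$ (monotonicity of the least upper bound) and that the discrepancy class---(1), (2) or (3)---is uniform across branches, so that a single alternative can be reported for the whole expression. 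Monotonicity of $\vee$ with respect to $\subt$, the preorder property of $\subt$, and the soundness-of-subtyping propositions are the facts that close each of these cases.
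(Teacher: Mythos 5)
Your overall architecture---normalise the derivation with Lemma~\ref{lem:rearrange-expr}, induct following the clauses of $\mathcal{B}$, and read the three alternatives as bookkeeping for the $\linkthis$/enumeration mismatch introduced by \textsc{T-VarF}---is the same induction on the typing derivation that the paper relies on, and most of the details you sketch are sound. In particular, your worry about the $\switchterm$/$\whileterm$ clauses resolves itself: the algorithm forces a single $T'$ across branches and the typing rule a single $T$, so the discrepancy class is uniform, and from $F''_l\subt F'$ for every branch the least-upper-bound property of $\vee$ gives $\bigvee_l F''_l\subt F'$, which is the direction actually needed there.

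There is, however, a genuine error in your handling of case~(2). You propose to convert the judgement giving $e$ type $\linkthis$ with final field typing $F'=\vfield{l}{F_l}{l\in E'}$ into an enumerated judgement with final field typing $\bigvee_{l\in E'}F_l$, apply the induction hypothesis to obtain $F''\subt\bigvee_{l\in E'}F_l$, and then conclude $F''\subt F_l$ for each $l$ ``by the least-upper-bound property of $\vee$''. That inference runs in the wrong direction: $\bigvee_{l}F_l$ is a \emph{supertype} of each $F_l$, so $F''\subt\bigvee_{l}F_l$ implies nothing about $F''\subt F_l$ (take two incomparable $F_{l}$'s and $F''$ equal to one of them). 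Moreover, the converted judgement concerns the same expression $e$, so under structural induction on $e$ it is not a legitimate target for the induction hypothesis. The step that actually works is to invert the derivation rather than convert it: after normalisation, $T=\linkthis$ while $\mathcal{B}$ returns an enumeration can only arise from a final application of \textsc{T-VarF} (possibly followed by \textsc{T-SubEnv}), whose premise is a strictly smaller derivation of $e$ at some enumerated type $E_0$ with a final \emph{record} typing $F_0$ satisfying $E_0\subset E'$ and $F_0\subt F_l$ for all $l\in E_0$. The induction hypothesis applied to that premise gives $E\subset E_0$ and $F''\subt F_0$, whence $E\subset E'$ and $F''\subt F_0\subt F_l$ for every $l\in E$, which is exactly case~(2).
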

\begin{proof}
By induction on the typing derivation.
\end{proof}

\begin{theorem}
\label{thm:alga-complete}
If $\typedsess{F}{C}{S}$ then $\alga{C}{S}{F}{\emptyset}$ is defined.
\end{theorem}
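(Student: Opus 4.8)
The plan is to prove the statement in a slightly stronger form that accommodates the accumulator: for every $\Delta$ and every $F,S$ with \typedsess{F}{C}{S}, the call $\alga{C}{S}{F}{\Delta}$ is defined; the theorem is the instance $\Delta=\emptyset$. I would argue by induction along the well-founded order that underlies the preceding termination theorem, so that recursive calls may appeal to the induction hypothesis. The feature that makes the accumulator harmless is that the base clause $(F,S)\in\Delta$ returns $\Delta$ with no possibility of error; hence I never need $\Delta$ itself to consist of consistent pairs, only that the pair passed to each recursive call is again related by \typedsess{\cdot}{C}{\cdot}.

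Three cases are routine. If $(F,S)\in\Delta$ the algorithm succeeds immediately. If $S=\mu X.S'$, the equi-recursive convention (Definition~\ref{def:typeequivalence}) identifies $S$ with $S'\subs{\mu X.S'}{X}$, so \typedsess{F}{C}{S'\subs{\mu X.S'}{X}} holds and the single recursive call is on a consistent pair, discharged by the induction hypothesis. If $S=\choice{l}{S_l}{l\in E}$ is a variant, clause~2 of Definition~\ref{def:sessionfield} forces $F=\vfield{l}{F_l}{l\in E'}$ with $E'\subset E$ and \typedsess{F_l}{C}{S_l} for each $l\in E'$; this is exactly the shape the variant clause of $\mathcal{A}$ matches, so the pattern does not fail and every recursive call $\alga{C}{S_{l_i}}{F_{l_i}}{\Delta_{i-1}}$ is again on a consistent pair.

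The substantive case is $S=\branch{\methsign{m_i}{U_i}{T_i}}{S_i}{i}$. Clause~1 of Definition~\ref{def:sessionfield} then guarantees that $F$ is not a variant and supplies, for each $i$, the body $e_i$ of $m_i$ together with \judgment\envref{\this:\objecttype C F,x_i:U_i}{\this}>e_i:T_i<\envref{\this:\objecttype C{F_i}}{\this}/ and \typedsess{F_i}{C}{S_i}. Since $e_i$ is typable, the completeness of $\mathcal{B}$ (a companion of Lemma~\ref{lem:algb}, proved by the same induction on the typing derivation) ensures $\algb{e_i}{F}{x_i:U_i}$ returns $\triplet{T'_i}{F'_i}{\_}$, so $\mathcal{B}$ does not error; Lemma~\ref{lem:algb} then pins its output to one of exactly three alternatives, which I claim select the three guarded branches of $\mathcal{A}$ in turn. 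If $T'_i\subt T_i$ (Lemma case~1) then also $F'_i\subt F_i$, and Proposition~\ref{prop:subfield} promotes \typedsess{F_i}{C}{S_i} to \typedsess{F'_i}{C}{S_i}, so the first guard's recursive call is consistent. If $T_i=\linkthis$ and $T'_i$ is an enumeration $E$ (Lemma case~2), the first guard $T'_i\subt T_i$ fails, since no enumeration is $\subt$-compatible with $\linkthis$; $\mathcal{A}$ therefore takes its second guard, and here $F_i$ is a variant with $F'_i$ below each of its cases, while \typedsess{F_i}{C}{S_i} forces $S_i$ to be a variant whose cases are consistent with those of $F_i$, so Lemma~\ref{lem:sessionfield} and Proposition~\ref{prop:subfield} yield \typedsess{\choice{l}{F'_i}{l\in E}}{C}{S_i}. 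If $T'_i=\linkthis$ and $T_i$ is an enumeration (Lemma case~3), both earlier guards fail and $F'_i=\choice{l}{F_l}{l\in E'}$ has exactly the variant form the third guard requires; since each $F_l\subt F_i$, the join satisfies $\bigvee_{l\in E'}F_l\subt F_i$, and Proposition~\ref{prop:subfield} gives \typedsess{\bigvee_{l\in E'}F_l}{C}{S_i}. In every sub-case the selected guard holds and the recursive call is on a consistent pair, so the induction hypothesis discharges it.

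I expect the branch case to be the main obstacle, and within it the delicate point is to verify that the ordered if/else guards of $\mathcal{A}$ select precisely the alternative produced by Lemma~\ref{lem:algb} --- in particular that the $\linkthis$/enumeration mismatches make the earlier guards \emph{fail} rather than accidentally match, and that the field typing built in each branch ($\choice{l}{F'_i}{l\in E}$ in case~2, $\bigvee_{l\in E'}F_l$ in case~3) is genuinely consistent with the continuation $S_i$. This is exactly where Lemma~\ref{lem:sessionfield}, Proposition~\ref{prop:subfield}, and the least-upper-bound property of $\vee$ carry the argument; the remaining bookkeeping --- the implicit unfolding of $\mu$ via Definition~\ref{def:typeequivalence} and the fact that the base clause tolerates an arbitrary $\Delta$ --- is routine.
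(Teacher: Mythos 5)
Your proposal is correct and takes essentially the same route as the paper's (very terse) proof: both strengthen the statement to accommodate the accumulator, induct on the recursion structure made well-founded by the termination theorem, and use Definition~\ref{def:sessionfield}, Lemma~\ref{lem:algb} and Proposition~\ref{prop:subfield} to show that every recursive call is again on a pair in the consistency relation. The only divergence is that the paper additionally threads the invariant $\Delta\subset(\typedsess{\bullet}{C}{\bullet})$ through the induction and asserts the returned set is again contained in the consistency relation, whereas you correctly observe that for definedness alone this is dispensable, since membership of a pair in $\Delta$ can only trigger immediate success and never an error.
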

\begin{proof} 
  Consider the execution of $\alga{C}{S}{F}{\emptyset}$. It terminates
  and has various calls of the form $\alga{C}{S'}{F'}{\Delta}$,
  including the top-level call. We prove the following statement, by
  induction on the number of recursive calls in the execution of
  $\alga{C}{S'}{F'}{\Delta}$: if $\Delta\subset
  \typedsess{\bullet}{C}{\bullet}$ and $\typedsess{F'}{C}{S'}$ then
  $\alga{C}{S'}{F'}{\Delta}$ is defined and
  $\alga{C}{S'}{F'}{\Delta}\subset \typedsess{\bullet}{C}{\bullet}$.
\end{proof}

\begin{lemma}
\label{lem:alga2}
If $\alga{C}{S}{F}{\Delta}$ is defined then for any $\Delta'$,
$\alga{C}{S}{F}{\Delta\cup\Delta'} = \alga{C}{S}{F}{\Delta}\cup\Delta'$.
\end{lemma}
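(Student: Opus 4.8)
The plan is to argue by induction on the finite recursion tree of the computation $\alga{C}{S}{F}{\Delta}$, whose finiteness is guaranteed by the termination result for $\mathcal{A}$ together with the hypothesis that $\alga{C}{S}{F}{\Delta}$ is defined. I would take as induction measure the number of recursive $\mathcal{A}$-calls performed, and case-split on which defining clause fires.

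First I would observe that the control flow of $\mathcal{A}$ \emph{apart from} the membership test in the first clause is completely independent of the accumulator: the field typings $F'_i$, the choice between the three sub-cases of the branch clause, and the index set $E'$ of the variant clause are all determined by $C$, $S$, $F$ and by the calls to $\mathcal{B}$, which never inspect $\Delta$. Hence $\alga{C}{S}{F}{\Delta}$ and $\alga{C}{S}{F}{\Delta\cup\Delta'}$ descend through the \emph{same} shape of recursion, and it suffices to track how the accumulator is threaded. For the branch clause I would run an inner induction on $i$ establishing $\Delta_i^{*}=\Delta_i\cup\Delta'$, where $\Delta_i$ and $\Delta_i^{*}$ are the accumulators of the plain and augmented runs; each step is an instance of the outer induction hypothesis applied to the recursive call $\alga{C}{S_i}{F'_i}{\cdot}$, which has strictly fewer recursive calls, carrying $\Delta'$ along. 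The variant clause is identical with $S_{l_i}$ and $F_{l_i}$ in place of $S_i$ and $F'_i$, and the $\mu$-clause is a single recursive call to which the outer hypothesis applies directly once $(F,\mu X.S)$ has been added to both accumulators.

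The delicate point — and the one I expect to be the real obstacle — is the interaction with the first clause, $\alga{C}{S}{F}{\Delta}=\Delta$ when $(F,S)\in\Delta$. If $(F,S)\in\Delta$ then both runs short-circuit and the equality is immediate; if $(F,S)\notin\Delta\cup\Delta'$ then neither run short-circuits and we are in the clauses handled above. The remaining case, $(F,S)\in\Delta'\setminus\Delta$, is where care is required: the augmented run returns $\Delta\cup\Delta'$ immediately, whereas the plain run continues to explore $S$ and may add further pairs (for instance when $S$ contains nested $\mu$-subexpressions). To close this case I would establish the auxiliary invariant that whenever $\mathcal{A}$ records a pair $(F,S)$ in its accumulator it has fully explored it, so that the accumulators $\Delta'$ actually arising in the intended applications are \emph{closed} under $\mathcal{A}$-exploration, i.e.\ satisfy $\alga{C}{S}{F}{\Delta'}=\Delta'$; under this condition the early short-circuit discards nothing and the equality is restored. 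I would therefore carry this closure property through the induction (or, equivalently, restrict $\Delta'$ so that no pair of $\Delta'$ labels a node the plain computation would expand). Everything else is routine accumulator bookkeeping.
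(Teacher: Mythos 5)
Your induction skeleton --- induct on the number of recursive calls in the terminating computation of $\alga{C}{S}{F}{\Delta}$, note that the control flow is determined by $C$, $S$, $F$ and the $\mathcal{B}$-calls (none of which inspect the accumulator), and thread $\Delta'$ through the $\Delta_i$ chain of the branch and variant clauses --- is exactly the paper's route; its entire proof is the sentence ``similar to the proof of Theorem~\ref{thm:alga-complete}, by induction on the recursive calls within a given top-level call.'' What you add, and what the paper silently skips, is the case $(F,S)\in\Delta'\setminus\Delta$, and you are right that this is where the statement breaks: as literally written, for \emph{arbitrary} $\Delta'$ the lemma is false. Take $\Delta=\emptyset$ and $\Delta'=\{(F,S)\}$ with $S=\rectype{X}{\branch{\methsign{m}{\nulltype}{\nulltype}}{\rectype{Y}{B}}{}}$ chosen so that exploring $S$ from $F$ also records a second pair $(F',\rectype{Y}{B})$: then $\alga{C}{S}{F}{\Delta\cup\Delta'}=\{(F,S)\}$ by the memoization clause, while $\alga{C}{S}{F}{\emptyset}\cup\Delta'$ additionally contains $(F',\rectype{Y}{B})$. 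So some restriction on $\Delta'$ is genuinely needed, not merely convenient.

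One caution about your proposed repair. The invariant ``whenever $\mathcal{A}$ records a pair it has fully explored it'' does not hold as stated: the $\mu$-clause inserts $(F,\rectype{X}{S})$ into the accumulator \emph{before} descending into the unfolding --- that is precisely what makes the memoization terminate on recursive types --- so mid-run accumulators always contain pairs whose exploration is still in progress. The condition you actually want is the fixed-point form you give in passing: every $(F',S')\in\Delta'$ satisfies that exploring the unfolding of $S'$ from $F'$ with accumulator $\Delta'$ stays inside $\Delta'$ (equivalently, restrict $\Delta'$ to the accumulators that arise in the lemma's intended uses, e.g.\ Lemma~\ref{lem:alga3}, where the extra pair is $(F_0,\rectype{X}{S_0})$ with $\alga{C}{\rectype{X}{S_0}}{F_0}{\emptyset}$ defined). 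With that phrasing the short-circuit case closes and the rest of your argument is routine bookkeeping, as you say. In short: same proof as the paper where the paper has one, plus a correct diagnosis of a hole in the statement, marred only by an invariant that needs to be re-stated coinductively rather than as ``already fully explored.''
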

\begin{proof}
Similar to the proof of Theorem~\ref{thm:alga-complete}, by induction on the
recursive calls within a given top-level call.
\end{proof}

\begin{lemma}
\label{lem:alga3}
Suppose $\alga{C}{\rectype{X}{S_0}}{F_0}{\emptyset}$ is defined and
$(F_0,\rectype{X}{S_0})\not\in\Delta$. Then for all $S$ and $F$, if
$\alga{C}{S}{F}{\Delta\cup\{(F_0,\rectype{X}{S_0})\}}$ is defined then
$\alga{C}{S}{F}{\Delta}$ is defined.
\end{lemma}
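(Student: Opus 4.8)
The plan is to prove a slightly more general statement by induction, from which the lemma is immediate. Namely, I would show: for all $S$, $F$ and $\Delta$ with $(F_0,\rectype{X}{S_0})\notin\Delta$, if $\alga{C}{S}{F}{\Delta\cup\{(F_0,\rectype{X}{S_0})\}}$ is defined then so is $\alga{C}{S}{F}{\Delta}$. Since $C$, $F_0$ and $\rectype{X}{S_0}$ are fixed by the hypothesis, the induction is on the number $N$ of recursive $\mathcal{A}$-calls in the (terminating, hence finite) execution of the ``big'' call $\alga{C}{S}{F}{\Delta\cup\{(F_0,\rectype{X}{S_0})\}}$, with the statement quantified over all $S$, $F$ and $\Delta$. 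The intuition is that deleting the assumed pair $(F_0,\rectype{X}{S_0})$ from $\Delta$ affects the algorithm only at nodes where that exact pair is reached at a $\mu$-clause: there the big call short-circuits while the small call re-expands, but the re-expansion is precisely $\alga{C}{S_0\subs{(\rectype{X}{S_0})}{X}}{F_0}{\cdots}$, which the global hypothesis guarantees to be defined.

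First I would dispatch the base-type cases. If $(F,S)\in\Delta$ the small call returns $\Delta$ at once. The interesting base case is $(F,S)=(F_0,\rectype{X}{S_0})\notin\Delta$: the big call returns by membership with no recursive calls, whereas the small call takes the $\mu$-clause, reducing to $\alga{C}{S_0\subs{(\rectype{X}{S_0})}{X}}{F_0}{\Delta\cup\{(F_0,\rectype{X}{S_0})\}}$. By the $\mu$-clause, the global hypothesis says $\alga{C}{\rectype{X}{S_0}}{F_0}{\emptyset}=\alga{C}{S_0\subs{(\rectype{X}{S_0})}{X}}{F_0}{\{(F_0,\rectype{X}{S_0})\}}$ is defined, and Lemma~\ref{lem:alga2} then yields definedness for the larger assumption set $\Delta\cup\{(F_0,\rectype{X}{S_0})\}$. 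This is the one case using the global hypothesis rather than the induction hypothesis, and it carries the real content of the lemma.

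For the remaining cases both calls take the same clause, so I would match them and push the induction hypothesis through. In the $\mu$-case with $(F,S)\neq(F_0,\rectype{X}{S_0})$, both sides unfold by adding $(F,\rectype{X'}{S'})$ to $\Delta$; writing $\Delta^\dagger=\Delta\cup\{(F,\rectype{X'}{S'})\}$ one checks $(F_0,\rectype{X}{S_0})\notin\Delta^\dagger$ and applies the induction hypothesis to the shorter unfolded big call. For the branch and variant clauses, which thread a state $\Delta_0\to\Delta_1\to\cdots\to\Delta_n$ through $\mathcal{A}$-calls on components $\hat S_i,\hat F_i$ computed from $\mathcal{B}$ independently of $\Delta$ (hence identical on both sides), I would maintain the invariant that at every index $i$ the small threaded value $\Delta_i^{\mathrm{small}}$ is defined and $\Delta_i^{\mathrm{big}}=\Delta_i^{\mathrm{small}}\cup\{(F_0,\rectype{X}{S_0})\}$. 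The step splits on whether $(F_0,\rectype{X}{S_0})\in\Delta_i^{\mathrm{small}}$: if so, then $\Delta_i^{\mathrm{big}}=\Delta_i^{\mathrm{small}}$, the next small and big calls coincide and are trivially defined (using the easy monotonicity fact $\alga{C}{S'}{F'}{\Delta'}\supseteq\Delta'$); if not, the induction hypothesis applies to the next big call and Lemma~\ref{lem:alga2} re-establishes the invariant. At $i=n$ the invariant gives that $\alga{C}{S}{F}{\Delta}=\Delta_n^{\mathrm{small}}$ is defined.

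The main obstacle to anticipate is exactly the structural divergence between the two executions once the assumed pair is removed: one cannot literally mirror the big execution tree, because the small one does strictly more work at the re-expansion nodes. The device that resolves this is to let Lemma~\ref{lem:alga2} carry the output equation $\Delta_i^{\mathrm{big}}=\Delta_i^{\mathrm{small}}\cup\{(F_0,\rectype{X}{S_0})\}$ across the divergence---it relates only results, not execution shapes---while the one place where genuine re-expansion happens is bounded not by the induction measure but by the global hypothesis together with Lemma~\ref{lem:alga2}.
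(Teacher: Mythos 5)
Your proof is correct and follows essentially the same route as the paper, which disposes of this lemma with a one-line appeal to induction on the recursive calls within a top-level call. You have simply filled in the details that the paper omits: the re-expansion case handled via the global hypothesis plus Lemma~\ref{lem:alga2}, and the threaded invariant $\Delta_i^{\mathrm{big}}=\Delta_i^{\mathrm{small}}\cup\{(F_0,\rectype{X}{S_0})\}$ for the branch and variant clauses.
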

\begin{proof}
Similar to the proof of Theorem~\ref{thm:alga-complete}, by induction on the
recursive calls within a given top-level call.
\end{proof}

\begin{lemma}
\label{lem:alga4}
If $\alga{C}{\rectype{X}{S}}{F}{\emptyset}$ is defined then
$\alga{C}{S\subs{\rectype{X}{S}}{X}}{F}{\emptyset}$ is defined.
\end{lemma}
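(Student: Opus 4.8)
The plan is to reduce this to a direct application of the unfolding clause of algorithm $\mathcal{A}$ together with Lemma~\ref{lem:alga3}. The only real content is bookkeeping about the assumption set $\Delta$, so I do not expect any genuine difficulty; the care needed is in matching the orientation of Lemma~\ref{lem:alga3}.

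First I would observe that, since $(F,\rectype{X}{S})\notin\emptyset$, the hypothesis forces the first clause of $\mathcal{A}$ that applies to a $\mu$-type to be the unfolding clause, giving
\[
\alga{C}{\rectype{X}{S}}{F}{\emptyset} = \alga{C}{S\subs{\rectype{X}{S}}{X}}{F}{\{(F,\rectype{X}{S})\}}.
\]
Because the left-hand side is defined by hypothesis, the right-hand side $\alga{C}{S\subs{\rectype{X}{S}}{X}}{F}{\{(F,\rectype{X}{S})\}}$ is defined as well. This already gives definedness of the call on the unfolded type, but with the extra assumption $(F,\rectype{X}{S})$ present in $\Delta$; what remains is to remove that assumption.

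Next I would invoke Lemma~\ref{lem:alga3} with the instantiation $S_0 := S$, $F_0 := F$, and $\Delta := \emptyset$. Its first hypothesis, that $\alga{C}{\rectype{X}{S}}{F}{\emptyset}$ is defined, is exactly the hypothesis of the present lemma, and its second hypothesis, $(F,\rectype{X}{S})\notin\emptyset$, is trivial. The lemma then asserts that for all $S'$ and $F'$, definedness of $\alga{C}{S'}{F'}{\emptyset\cup\{(F,\rectype{X}{S})\}}$ entails definedness of $\alga{C}{S'}{F'}{\emptyset}$. Taking $S' := S\subs{\rectype{X}{S}}{X}$ and $F' := F$, the antecedent is precisely what we established in the previous step, so we conclude that $\alga{C}{S\subs{\rectype{X}{S}}{X}}{F}{\emptyset}$ is defined, as required.

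The step I would watch most carefully is the application of Lemma~\ref{lem:alga3}, since its statement goes in the superficially counterintuitive direction (dropping an assumption from $\Delta$ preserves definedness). This is sound precisely because the dropped pair is $(F,\rectype{X}{S})$, whose recomputation via $\alga{C}{\rectype{X}{S}}{F}{\emptyset}$ is known to terminate successfully; hence no branch of the execution that previously closed off against this assumption can now fail to terminate. Apart from confirming that the instantiation lines up, the argument is entirely mechanical and requires no induction of its own, all the inductive work having been discharged in Lemma~\ref{lem:alga3}.
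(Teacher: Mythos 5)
Your proposal is correct and follows exactly the paper's own proof: unfold via the $\mu$-clause of $\mathcal{A}$ to get definedness of $\alga{C}{S\subs{\rectype{X}{S}}{X}}{F}{\{(F,\rectype{X}{S})\}}$, then discharge the assumption with Lemma~\ref{lem:alga3}. The instantiation of Lemma~\ref{lem:alga3} lines up as you describe, so nothing further is needed.
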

\begin{proof}
By the definition of $\mathcal{A}$,
$\alga{C}{\rectype{X}{S}}{F}{\emptyset} =
\alga{C}{S\subs{\rectype{X}{S}}{X}}{F}{\{(F,\rectype{X}{S})\}}$, which
is therefore defined. By Lemma~\ref{lem:alga3}, $\alga{C}{S\subs{\rectype{X}{S}}{X}}{F}{\emptyset}$ is defined.
\end{proof}

\begin{corollary}
\label{cor:unfold}
If $\alga{C}{S}{F}{\emptyset}$ is defined then
$\alga{C}{\mkterm{unfold}(S)}{F}{\emptyset}$ is defined.
\end{corollary}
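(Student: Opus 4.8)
The plan is to prove the corollary by induction following the recursive structure of the $\mkterm{unfold}$ operator itself, which is well-founded precisely because the session types in $\mathcal{S}$ are contractive — the very fact that justified the well-definedness of $\mkterm{unfold}$ in its defining Definition. The single substantive ingredient will be Lemma~\ref{lem:alga4}, which establishes exactly the preservation of definedness of $\mathcal{A}$ under one layer of unfolding; the corollary then follows by iterating it.

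First I would split on whether $S$ has the form $\rectype{X}{S_0}$. If $S$ is not a $\mu$-type, then by definition $\mkterm{unfold}(S) = S$, so the conclusion coincides with the hypothesis and there is nothing to prove. If $S = \rectype{X}{S_0}$, then by definition $\mkterm{unfold}(S) = \mkterm{unfold}(S_0\subs{(\rectype{X}{S_0})}{X})$. Applying Lemma~\ref{lem:alga4} to the hypothesis that $\alga{C}{\rectype{X}{S_0}}{F}{\emptyset}$ is defined yields that $\alga{C}{S_0\subs{(\rectype{X}{S_0})}{X}}{F}{\emptyset}$ is defined. I would then invoke the induction hypothesis on $S_0\subs{(\rectype{X}{S_0})}{X}$, which lies strictly lower in the well-founded order underlying $\mkterm{unfold}$, to conclude that $\alga{C}{\mkterm{unfold}(S_0\subs{(\rectype{X}{S_0})}{X})}{F}{\emptyset}$ is defined. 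Since $\mkterm{unfold}(S_0\subs{(\rectype{X}{S_0})}{X}) = \mkterm{unfold}(S)$ by definition, this is exactly the desired statement.

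The only delicate point — and hence the step I would treat most carefully — is making the induction rigorous, i.e.\ exhibiting a valid decreasing measure so that the argument is not circular. This is furnished by contractivity: a contractive type contains no subexpression of the form $\rectype{X_1}{\cdots\rectype{X_n}{X_1}}$, so repeatedly stripping and substituting the top-level $\mu$ binder must after finitely many steps expose a branch or variant constructor rather than another $\mu$. I would therefore phrase the induction on the number of iterated top-level unfoldings performed by $\mkterm{unfold}$ (equivalently, on the finite derivation tree of $\mkterm{unfold}(S)$), which keeps the argument self-contained and mirrors the proofs of the preceding lemmas in this section.
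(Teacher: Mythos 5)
Your argument is correct and is exactly the intended one: the paper states this as an immediate corollary of Lemma~\ref{lem:alga4}, obtained by iterating that lemma along the finitely many top-level unfolding steps that contractivity guarantees. Your explicit induction on the well-founded recursion underlying $\mkterm{unfold}$ just makes precise what the paper leaves implicit.
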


\begin{theorem}
\label{thm:alga-sound}
If $\alga{C}{S}{F}{\emptyset}$ is defined then $\typedsess{F}{C}{S}$.
\end{theorem}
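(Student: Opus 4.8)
The plan is to exhibit a $C$-consistency relation (Definition~\ref{def:sessionfield}) that contains the pair $(F,S)$, and then appeal to the fact that $\typedsess{\bullet}{C}{\bullet}$ is the \emph{largest} such relation. Concretely, I would set
$$\mathcal{R} = \{(F',S') \mid \alga{C}{S'}{F'}{\emptyset}\text{ is defined}\},$$
so that the hypothesis of the theorem gives $(F,S)\in\mathcal{R}$, and the whole burden becomes showing that $\mathcal{R}$ satisfies the two closure conditions of Definition~\ref{def:sessionfield}. Fix $(F',S')\in\mathcal{R}$. By Corollary~\ref{cor:unfold}, $\alga{C}{\mkterm{unfold}(S')}{F'}{\emptyset}$ is also defined, and since types are identified up to the equi-recursive convention (Definition~\ref{def:typeequivalence}), the consistency conditions for $(F',S')$ coincide with those for $(F',\mkterm{unfold}(S'))$; thus I may assume $S'$ is already a branch or a variant and inspect the corresponding clause of $\mathcal{A}$.

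If $S' = \branch{\methsign{m_i}{U_i}{T_i}}{S_i}{i\in I}$, then for each $i$ the algorithm computes $(T'_i,F'_i,\_)=\algb{e_i}{F'}{x_i:U_i}$ for the body $e_i$ of $m_i$ and succeeds in one of three ways before recursing on the continuation $S_i$. By the soundness lemma for $\mathcal{B}$ (if $\algb{e}{F}{V}=(T,F',V')$ then the matching typing judgement holds), the body $e_i$ has type $T'_i$ ending in field typing $F'_i$. I would then match each of the three algorithmic cases to a derivation of the judgement required by clause~(1): if $T'_i\subt T_i$, use \Tsub\ with final field typing $F'_i$; if $T'_i$ is an enumeration $E$ and $T_i=\linkthis$, use \TvarF\ to promote the body to type $\linkthis$ with final field typing $\choice{l}{F'_i}{l\in E}$; and if $T'_i=\linkthis$ and $T_i=E$, use the lemma relating a $\linkthis$-typing with variant final field typing to the corresponding enumerated typing with joined final field typing, followed by \Tsub. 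In every case the final field typing is exactly the one $\mathcal{A}$ feeds to its recursive call, so it suffices to know that this recursive call places the relevant pair in $\mathcal{R}$. The variant case $S'=\choice{l}{S_l}{l\in E}$ is immediate: $\mathcal{A}$ forces $F'=\choice{l}{F_l}{l\in E'}$ with $E'\subset E$ and recurses on each $(F_l,S_l)$, matching clause~(2) directly. One also checks the structural side-conditions ($F'$ a record in the branch case, a variant in the variant case), which hold as an invariant of the field typings $\mathcal{A}$ produces.

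The main obstacle is the bookkeeping of the assumption set. The recursive calls made by $\mathcal{A}$ on a branch or variant carry a non-empty $\Delta$ (the pairs memoised while unfolding $\mu$-types), whereas membership in $\mathcal{R}$ is phrased with the empty assumption set. To close the argument I must show that each recursive target $\alga{C}{S_i}{G_i}{\Delta}$ being defined implies $\alga{C}{S_i}{G_i}{\emptyset}$ is defined, i.e.\ that $\Delta$ can be stripped. This is where Lemma~\ref{lem:alga3} (removal of a memoised $\mu$-pair) and Lemma~\ref{lem:alga2} are used; but Lemma~\ref{lem:alga3} requires each stripped pair $(F_0,\mu X.S_0)$ to satisfy that $\alga{C}{\mu X.S_0}{F_0}{\emptyset}$ is defined. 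I would discharge this by an auxiliary induction on the order in which pairs are added to $\Delta$ during the run: the first pair is added with an empty assumption set and is therefore $\emptyset$-defined outright, and each later pair is added under an assumption set consisting only of earlier pairs, which the induction hypothesis has already shown to be $\emptyset$-strippable. Getting this memoisation invariant precisely right --- so that the coinductive ``already seen'' short-circuit of $\mathcal{A}$ is always underwritten by a genuine, assumption-free re-derivation --- is the delicate point; the branch/variant case analysis and the three $\linkthis$ sub-cases are then routine given the $\mathcal{B}$-soundness and $\linkthis$ lemmas.
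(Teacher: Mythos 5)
Your proof is correct, and its core is the same as the paper's: reduce to guarded types via Corollary~\ref{cor:unfold}, exhibit a $C$-consistency relation containing $(F,S)$, and match the three cases of the branch clause of $\mathcal{A}$ against \Tsub, \TvarF, and the $\linkthis$-to-enumeration lemma using the soundness lemma for $\mathcal{B}$. Where you genuinely diverge is in the choice of witness relation, and this changes how the memoisation set is handled. The paper takes $\mathcal{R} = \{(F,S) \mid \alga{C}{S}{F'}{\Delta}\text{ is called for some }\Delta\text{ and }F'\text{ with }F\subt F'\}$, quantifying over \emph{all calls occurring in the one top-level run}, whatever assumption set they carry. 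With that definition a recursive call $\alga{C}{S_i}{G_i}{\Delta}$ immediately witnesses $(G_i,S_i)\in\mathcal{R}$, and a call that short-circuits on a memoised pair is harmless because that pair already entered $\mathcal{R}$ at its first, non-short-circuited occurrence; Lemmas~\ref{lem:alga2} and~\ref{lem:alga3} are not needed here at all (they serve Lemma~\ref{lem:alga4}/Corollary~\ref{cor:unfold}). Your relation $\{(F',S') \mid \alga{C}{S'}{F'}{\emptyset}\text{ is defined}\}$ is cleaner to state but forces you to prove that every recursive target reached under a non-empty $\Delta$ is also $\emptyset$-defined, which is exactly the stripping argument you sketch: an induction on the order in which pairs enter $\Delta$, using Lemma~\ref{lem:alga3} repeatedly, with the base case that the first memoised pair is added under the empty set. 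That argument is sound (pairs are only ever added by the $\mu$-clause, each under an assumption set of strictly earlier pairs, and every sub-call of a defined run is defined), so your proof goes through; it simply pays for the tidier relation with an extra layer of bookkeeping that the paper's ``calls in this run'' formulation avoids.
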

\begin{proof}
By Corollary~\ref{cor:unfold} and the fact that 
$\typedsess{F}{C}{S}$ is defined in terms of the unfolded structure of
session types, it is sufficient to consider the case in which $S$ is
guarded. 

Similarly to the proof of Theorem~\ref{thm:alga-complete}, consider the
recursive calls in the execution of
$\alga{C}{S_0}{F_0}{\emptyset}$. We show that the following relation is a
$C$-consistency relation:
\[
\mathcal{R} = \{ (F,S) \mid \text{$\alga{C}{S}{F'}{\Delta}$ is called
  for some $\Delta$ and $F'$ with $F\subt F'$} \}. 
\]
This is easily checked, using the three cases of Lemma~\ref{lem:algb}
to correspond to the three cases in the third clause of the definition
of $\mathcal{A}$. 
\end{proof}

\subsection{Implementation}
\label{sec:implementation}

The ideas introduced in this paper can be used to extend a conventional Java compiler, by including
\lstinline{@session} annotations in classes and in method parameters,
as well as \lstinline{@req} and \lstinline{@ens} annotations for
recursive methods (cf.\ Section~\ref{subsec:self-calls}). The extension only concerns type checking; there is no need to touch the back-end of the
compiler.

To keep in line with the expectations of Java programmers, annotations
follow the first style in Figure~\ref{fig:file},
page~\pageref{fig:file}. Also, the type system is nominal (cf.\
Section~\ref{sec:nominal}); label sets (cf.\ Figure~\ref{fig:syntax})
are explicitly introduced via %Java 1.5 
\lstinline{enum} declarations.
The concepts contained in our core language can then be extended towards the whole of Java.
In particular:
\begin{itemize}
\item The \lstinline|while| loop technique described in
  Section~\ref{subsec:while-loops} can be extended to handle
  \lstinline{for} and \lstinline{do-while} loops.
\item The same idea can be used to type the various goto instructions
  present in Java: exceptions, \lstinline{break}, \lstinline{continue}
  and \lstinline{return}, labelled versions included.
\item All control flow instructions (including \lstinline{if-then}, not
  discussed in the paper) can be used with conventional or with
  session-related \lstinline{boolean}/\lstinline{enum} values.
\item Classes not featuring a \lstinline{@session} annotation are
  considered shared rather than linear. Their objects can be treated very
  much like the \lstinline{null} value (cf.\
  Section~\ref{subsec:shared-types}). We do not allow a shared class
  to contain a linear field, even though it is perfectly acceptable
  for a method of a shared class to have a linear parameter.
\item The same technique used for ``top-level'' classes can be used for
  inner, nested, local (defined within methods) and anonymous classes.
\item In order to mention overloaded methods in \lstinline{@session}
  annotations, alias names for these methods can be introduced via extra
  annotations.
\item Static fields are always shared.
\item Class inheritance (cf.\ Section~\ref{sec:nominal}) can be supported.
%  interface or multiple inheritance is not. 
%\lstinline{enum} inheritance
%  (contravariant subtyping) is supported via annotations.
%\item The distributed part of the language is not implemented at the
%  time of this writing.
\end{itemize}

We have used the Polyglot~\cite{NystromN:polecf} system for an initial prototype extension of Java, but a more thorough design and implementation are left for future work.
%to implement
%the ideas of this paper as a prototype extension of Java 1.4 (extended
%with \lstinline{enum} declarations), which we call Bica.
%%
%The compiler accepts conventional Java code, possibly
%
%A prototype implementation is available from
%\texttt{http://gloss.di.fc.ul.pt/bica/}.

%%% Local Variables: 
%%% mode: latex
%%% TeX-master: "main"
%%% End: 

\section{Related Work}
\label{sec:related}

\comment{SG: updated 11.5.11}

There is a large amount of related work, originating from several
different approaches. Our discussion of related work is organised
according to these approaches.

\paragraph{Previous work on session types for object-oriented
  languages.}
Dezani-Ciancaglini, Yoshida et al.\
\cite{CapecchiS:amasmo,Dezani-CiancagliniM:bousto,Dezani-CiancagliniMetal:ost,Dezani-CiancagliniM:disool}
have taken an approach in which a class define sessions \emph{instead
  of} methods. Invoking a session on an object creates a channel which
is used for communication between two blocks of code: the body of the
session, and a \emph{co-body} defined by the invoker of the session. A
session is therefore a generalization of a method, in which there can
be an extended dialogue between caller and callee instead of a single
exchange of parameters and result. The structure of this dialogue is
defined by a session type. This approach proposes a new paradigm for
concurrent object-oriented programming, and as far as we know it has
not yet been implemented. In contrast, our approach maintains the
standard execution model of method calls.

The SJ (Session Java) language, developed by Hu~\cite{HuR:sesbdp}, is
a less radical extension of the object-oriented paradigm. Channels,
described by session types, are essentially the same as those in the
original work based on process calculus. Program code is located in
methods, as usual, and can create channels, communicate on them, and
pass them as messages. SJ has a well-developed implementation and has
been applied to a range of situations. However, SJ has one notable
restriction: a channel cannot be stored in a field of an object. This
means that a channel, once created, must be either completely used, or
else delegated (sent along another channel), within the same
method. It is possible for a channel to be passed as a parameter to
another method, but it is not possible for a session to be split into
methods that can be called separately, each implementing part of the
session type of a channel that is stored in a field.  A distinctive
feature of our work is that we can store a channel in a field of an
object and allow several methods to use it. This is illustrated in
Figures~\ref{fig:remotefileserver1b}~ and~\ref{fig:remote-file}.
%We are also able to
%interleave sessions on different channels.

% Several recent papers by Dezani-Ciancaglini, Yoshida
% \ea~\cite{CapecchiS:amasmo,Dezani-CiancagliniM:disool,Dezani-CiancagliniM:bousto,Dezani-CiancagliniMetal:ost,HuR:sesbdp}
% have combined session types, as specifications of protocols on
% communication channels, with the object-oriented paradigm. A
% characteristic of all of this work is that a channel is always
% created and used within a single method call. It is possible for a
% method to delegate a channel by passing it to another method, but it
% is not possible to modularize session implementations as we do, by
% storing a channel in a field of an object and allowing several methods
% to use it. We are also able to interleave sessions on different
% channels.
% The most recent work in this line \cite{CapecchiS:amasmo}
% unifies sessions and methods, and continues the idea that a session is
% a complete entity.
% Mostrous and Yoshida \cite{MostrousD:sesocs} add
% sessions to Abadi and Cardelli's object calculus.

Hu et al.~\citeN{HuRetal:tsesj,Ng.ea:safeSJ} have also extended SJ to
support event-driven programming, with a session type discipline to
ensure safe event handling and progress. We have not considered
event-driven programming in our setting.

Campos and Vasconcelos
\cite{CamposJ:linear-and-shared,campos.vasconcelos:channels-as-objects}
developed MOOL, a simple class-based object-oriented language, to
study object usage and access. The novelties are that class usage
types are attached to class definitions, and the communication
mechanism is based on method call instead of being channel-based. The
latter feature is the main difference with respect to our work.

\paragraph{Non-uniform concurrent objects/active objects.}
Another related line of research, started by
Nierstrasz~\citeN{nierstrasz:regular-types}, aimed at describing the
behaviour of non-uniform \emph{active} objects in concurrent systems,
whose behaviour (including the set of available methods) may change
dynamically. He defined subtyping for active objects, but did not
formally define a language semantics or a type system. The topic has
been continued, in the context of process calculi, by
%Cola\c{c}o \ea~\cite{colaco:cap,colaco.ea:safety-analysis,colaco.ea:set-constraint},
%\ea~\cite{colaco.ea:safety-analysis,colaco.ea:set-constraint},
%Najm \ea~\cite{najm.nimour:cob,najm.ea:liveness,najm.ea:infinite-types}, Ravara
%\ea~\cite{najm.ea:liveness,najm.ea:infinite-types}, 
several
authors \cite{CairesL:spabtc,CairesVieira:ct,colaco.ea:safety-analysis,colaco.ea:set-constraint,najm.ea:liveness,najm.ea:infinite-types,puntigam:stateinference,puntigam.peter:deadlock,ravara.vasconcelos:typco}.
%ravara.lopes:prog-implem-nutyco,
%
The work by Caires~\citeN{CairesL:spabtc} is the most relevant work;
it uses an approach based on spatial types to give very fine-grained
control of resources, and Milit\~ao~\citeN{MilitaoF:yak} has
implemented a Java prototype based on this idea.
Damiani et al.~\citeN{DamianiF:typssa} define a concurrent Java-like
language incorporating inheritance and subtyping and equipped with a
type-and-effect system, in which method availability is made dependent
on the state of objects.

The distinctive feature of our approach to non-uniform objects, in
comparison with all of the above work, is that we allow an object's
abstract state to depend on the result of a method call. This gives a
very nice integration with the branching structure of channel session
types, and with subtyping.

Specifically related to the notion of subtyping between session types,
the work of Rossie~\citeN{RossieJ:LOEs} is worth mentioning. He
proposes a type-based approach to ensure that both component objects
and their clients have compatible protocols. The typing discipline
specifies not only how to use the component's methods, but also the
notifications it sends to its clients. Rossie calls this enhanced
specification a Logical Observable Entity (LOE), which is a
finite-state machine equipped with a subtyping notion. An LOE is a
high-level description of an object, specifying which transitions
(method executions) change its state, providing for each state both
the available methods and notifications to be sent to the
clients. LOEs support behavioural subtyping, in its afferent aspects
(how clients may affect the LOE) --- a subtype must allow at least the
traces of its supertype, and in its efferent aspects (how a LOE
processing a method request has effects on clients) --- the subtype
must not send more notifications than the supertype. This behavioural
subtyping notion on finite-state machines, which is in its spirit very
similar to the one of session types --- "more offers, less requests",
is defined as a simulation relation. Rossie shows that this relation
ensures safe substitutability.

\paragraph{Typestate.}
Based on the fact that method availability depends on an object's
internal state (the situation identified by Nierstrasz, as mentioned
above), Strom and Yemini \citeyear{typestates} proposed
\emph{typestate}. The concept consists of
identifying the possible states of an object and defining pre- and
post-conditions that specify in which state an object should be so
that a given method would be available, and in which state the
method execution would leave the object.

\emph{Vault}~\cite{DeLineR:enfhpl,FahndrichM:adofpl} % is much closer to our system, %allowing abstract states to be defined for resources, with pre- and post-conditions for %each operation, and checking statically that operations occur in the correct sequence.
follows the typestate approach. It uses linear types to control
aliasing, and uses the \emph{adoption and focus}
mechanism~\cite{FahndrichM:adofpl} to re-introduce aliasing in limited
situations. \emph{Fugue} \cite{DeLineR:fugpcy,FahndrichM:typo} extends
similar ideas to an object-oriented language, and uses explicit pre-
and post-conditions.
% that are somewhat similar to our $\requiresterm/\ensuresterm$ annotations.

Bierhoff and Aldrich~\citeN{BierhoffAldrich:lightweightobject} also work on a typestate
approach in an object-oriented language, defining a sound modular
automated static protocol-checking setting. They define a state and
method refinement relation achieving a behavioural subtyping
relation. The work is extended with access permissions, that combine
typestate with aliasing information about
objects~\cite{BierhoffAldrich:aliasedobjects}, and with concurrency,
via the atomic block synchronization primitive used in transactional
memory systems~\cite{Beckman.etal:atomicblockstypestates}. Like us,
they allow the typestate to depend on the result of a method
call. \emph{Plural} is a prototype language implementation that
embodies this approach, providing automated static analysis in a
concurrent object-oriented language~\cite{BierhoffAldrich:plural}. To
evaluate their approach they annotated and verified several standard
Java APIs~\cite{Bierhoff.etal:praticalapichecking}.

Milit\~ao et al.~\citeN{MilitaoFetal:aliasing-control} develop a new
aliasing control mechanism, finer and more expressive than previous
proposals, based on defining object views according to specific access
constraints. The discipline is implemented in a type system combining
views and a typestate approach, checking user defined aliasing
patterns.

\emph{Sing\#}~\cite{FahndrichM:lansfr} is an extension of C\#
which has been used to implement Singularity, an operating system
based on message-passing. It incorporates session types to specify
protocols for communication channels, and introduces typestate-like
\emph{contracts}.
%which are analogous to our $\requiresterm$ and
%$\ensuresterm$ clauses. 
%The published paper~\cite{FahndrichM:lansfr} does not discuss the
%relationship between channel contracts and typestate, and does not
%define a formal language; however, 
Bono et al.~\cite{BonoV:typcmp} have formalised a core calculus based
on \emph{Sing\#} and proved type safety. A technical point is that
\emph{Sing\#} uses a single construct $\mathsf{switch~receive}$ to
combine receiving an enumeration value and doing a case-analysis,
whereas our system allows a $\switchterm$ on an enumeration value to
be separated from the method call that produces it.

Aldrich et al.~\cite{AldrichJ:tysop} have proposed
\emph{typestate-oriented programming}. The aim is to integrate
typestate into language design from the beginning, instead of adding
typestate constraints to an existing language. Their prototype
language is called Plaid. Instead of class definitions, a program
consists of state definitions; each state has methods which cause
transitions to other states when they are called. Like classes, states
are organised into an inheritance hierarchy. The specifications of
state transitions caused by methods are similar to the pre- and
post-conditions of Plural. Aliasing is managed by a system of access
permissions \cite{BierhoffAldrich:aliasedobjects}. More recent
work~\cite{GarciaR:foutop,WolffR:grat} combines gradual typing and
typestate, to integrate static and dynamic typestate checking.

Session types and typestate are related approaches, but there are
stylistic and technical differences. With respect to the former,
session types are like labelled transition systems or finite-state
automata, capturing the behaviour of an object. When developing an
application, one may start from session types and then implement the
classes. Typestates take each transition of a session type and attach
it to a method as pre- and post-conditions. Because typestate systems
allow pre- and post-conditions to be specified arbitrarily, the
possible sequences of method calls are less explicit. With respect to
technical differences, the main ones are: (a) session types unify
types and typestates in a single class type as a global behavioural
specification; (b) our subtyping relation is structural, while the
typestates refinement relation is nominal; (c) \emph{Plural} uses a
software transactional model as concurrency control mechanism (thus,
shared memory), which is lighter and easier than locks, but one has to
mark atomic blocks in the code, whereas our communication-centric
model (using channels) is simpler and allows us to use the same type
abstraction (session types) instead of a new programming construct;
moreover, channel-based communication also allows us to specify the
client-server communication protocol as the channel session type, and
to implement it modularly, in several methods which may even be in
different classes;
%  (4) the dependency between the result of a method
% and the subsequent abstract state of the object that session types
% provide allow for a more precise type safety result than that of
% typestates: session types ensure no stuck states, whereas typestates
% only says that no method is called in a state not allowed by the
% pre-condition and no method puts an object in a state not allowed by
% the post-condition;
(d) typestate approaches allow flexible aliasing control, whereas our
approach uses only linear objects (to add better alias/access control
is simple and an orthogonal issue).

% The main novelties of our work are the integration of session-typed
% channels, the use of the session type of a class as a global
% specification, the dependency between the result of a method and the
% subsequent abstract state of the object, and the characterization of
% the (structural) subtyping relation.

\paragraph{Affine types.}  Tov and Pucella \cite{TovJ:praat}
have developed Alms, a language in the style of OCaml with an affine
type system as a generalisation of linear typing. Alms is a
general-purpose programming language, in which the affine type system
provides an infrastructure suitable for defining a variety of
type-based resource control patterns including alias control, session
types and typestate. It has been implemented, and type safety has been
proved for a formal calculus. Representing a particular approach to
typestate, such as our specifications of allowed sequences of method
calls, would require an encoding; in contrast, our language aims to
provide a convenient high-level programming style.

\paragraph{Static verification of protocols.}
\emph{Cyclone}~\cite{GrossmanD:regbmm} and
\emph{CQual}~\cite{foster.etal:flow-sensitive-type-qualifiers} are
systems based on the C programming language that allow protocols to be
statically enforced by a compiler. \emph{Cyclone} adds many benefits
to C, but its support for protocols is limited to enforcing locking of
resources. Between acquiring and releasing a lock, there are no
restrictions on how a thread may use a resource. In contrast, our
system uses types both to enforce locking of objects (via linearity)
and to enforce the correct sequence of method calls.
\emph{CQual} expects users to annotate programs with type qualifiers;
its type system, simpler and less expressive than the above, provides
for type inference.

\paragraph{Unique ownership of objects.}
In order to demonstrate the key idea of modularizing session
implementations by integrating session-typed channels and non-uniform
objects, we have taken the simplest possible approach to ownership
control: strict linearity of non-uniform objects. This idea goes back
at least to the work of Baker~\citeN{BakerHG:usevlo} and has been
applied many times.  However, linearity causes problems of its own:
linear objects cannot be stored in shared data structures, and this
tends to restrict expressivity. There is a large literature on less
extreme techniques for static control of aliasing: Hogg's
\emph{Islands}~\cite{HoggJ:islapo}, Almeida's \emph{balloon
  types}~\cite{AlmeidaPS:baltcs},
%Noble \ea's \emph{flexible alias protection}~\cite{NobleJ:fleap}, 
Clarke \ea's \emph{ownership types}~\cite{ClarkeDG:owntfa}, F\"{a}hndrich and DeLine's
\emph{adoption and focus}~\cite{FahndrichM:adofpl}, \"{O}stlund \ea's
$\mathsf{Joe}_3$~\cite{Ostlund.etal:ownership-uniqueness-immutability}
among others. In future work we intend to use an off-the-shelf
technique for more sophisticated alias analysis. The property we
need is that when changing the type of an object (by calling a method
on it or by performing a $\switchterm$ or a $\whileterm$ on an
enumeration constant returned from a method call) there must be a
unique reference to it.

\paragraph{Resource usage analysis.}
Igarashi and Kobayashi~\citeN{IgarashiA:resua} define a general
resource usage analysis problem for an extended $\lambda$-calculus,
including a type inference system, that statically checks the order of
resource usage.
Although quite expressive, their system only analyzes the sequence of
method \emph{calls} and does not consider branching on method
\emph{results} as we do.

\paragraph{Analysis of concurrent systems using pi-calculus.}
Some work on static analysis of concurrent systems expressed in
pi-calculus is also relevant, in the sense that it addresses the
question (among others) of whether attempted uses of a resource are
consistent with its state.  Igarashi and Kobayashi have developed a
generic framework~\cite{igarashi.kobayashi:generic-type} including a
verification tool \cite{KobayashiN:typifa} in which to define type
systems for analyzing various behavioural properties including
sequences of resource uses \cite{KobayashiN:resuap}. In some of this
work, types are themselves abstract processes, and therefore in some
situations resemble our session types.  Chaki at
al.~\citeN{chakietal:types-as-models} use CCS to describe properties
of pi-calculus programs, and verify the validity of temporal formulae
via a combination of type-checking and model-checking techniques,
thereby going beyond static analysis.

All of this pi-calculus-based work follows the approach of modelling
systems in a relatively low-level language which is then analyzed. In
contrast, we work directly with the high-level abstractions of session
types and objects.
% we take the view that session types are a high-level
% abstraction for structuring communication, which we have integrated
% with the high-level object-oriented abstractions for structuring
% computation. Our analysis operates directly at this higher level,
% making it easier to integrate with existing implementations of
% object-oriented languages.

%%% Local Variables: 
%%% mode: latex
%%% TeX-master: "main"
%%% End: 

\section{Conclusion}
\label{sec:conclusion}

\comment{SG: updated 11.5.2011}

We have extended existing work on session types for object-oriented
languages by allowing the implementation of a session to be divided
between several methods which can be called
independently. This supports a modular approach which is absent from
previous work. Technically, it is achieved by integrating session
types for communication channels and a static type system for
non-uniform objects. A session-typed channel is one kind of
non-uniform object, but objects whose fields are non-uniform are also,
in general, non-uniform. Typing guarantees that the sequence of
messages on every channel, and the sequence of method calls on every
non-uniform object, satisfy specifications expressed as session types.

We have formalized the syntax, operational semantics and static type
system of a core distributed class-based object-oriented language
incorporating these ideas. Soundness of the type system is expressed
by type preservation, conformance and correct communication
theorems. The type system includes a form of typestate and uses
simple linear type theory to guarantee unique ownership of non-uniform
objects. It allows the typestate of an object after a
method call to depend on the result of the call, if this is of an
enumerated type, and in this situation, the necessary case-analysis of
the method result does not need to be done immediately after the call. 

We have illustrated our ideas with an example based on a remote file
server, and described a prototype implementation. By incorporating
further standard ideas from the related literature, it should be
straightforward to extend the implementation to a larger and more
practical language.

%% FUTURE

In the future we intend to work on the following topics.
(1) More
flexible control of aliasing.  The mechanism for controlling aliasing
should be orthogonal to the theory of how operations affect
uniquely-referenced objects. We intend to adapt existing work to relax
our strictly linear control and obtain a more flexible language.
(2) In Section~\ref{sec:nominal} we outlined an adaptation of our
structural type system to a nominal type system as found in
languages such as Java. We would also like to account for Java's
distinction and relationship between classes and interfaces.
% If class $C$ \lstinline|implements| interface $I$ then we should have
% $\session{C}\subt\session{I}$, interpreting the interface as a
% specification of minimum method availability.
%
(3)
Specifications involving several objects.
Multi-party session types~\cite{BonelliE:mulstd,HondaK:mulast} and
conversation types~\cite{CairesVieira:ct} specify
protocols with more than two participants. It would be
interesting to adapt those theories into type systems for more
complex patterns of object usage.

%%% Local Variables: 
%%% mode: latex
%%% TeX-master: "main"
%%% End: 

%%% Local Variables: 
%%% mode: latex
%%% TeX-master: "main"
%%% End: 

\paragraph{Acknowledgements}

We thank Jonathan Aldrich and Lu\'{\i}s Caires for helpful
discussions. 
%Bica (Section~\ref{sec:implementation}) was implemented
%y Alexandre Z.\ Caldeira. 
Gay was partially supported by the UK EPSRC (EP/E065708/1
``Engineering Foundations of Web Services'', EP/F037368/1 ``Behavioural Types for Object-Oriented Languages'', EP/K034413/1 ``From Data Types to Session Types: A Basis for Concurrency and Distribution'' and EP/L00058X/1 ``Exploiting Parallelism through Type Transformations for Hybrid Manycore Systems''). He
thanks the University of Glasgow for the sabbatical leave during which
part of this research was done.
Gay and Ravara were partially supported by the Security and Quantum
Information Group at Instituto de Telecomunica\c{c}\~{o}es, Portugal.
%
% Ravara and Vasconcelos were partially supported by the EU
% IST proactive initiative FET-Global Computing (project Sensoria,
% IST--2005--16004).
%
Vasconcelos was partially supported by the Large-Scale Informatics
Systems Laboratory, Portugal.
Ravara was partially supported the Portuguese
Fun\-da\-\c{c}\~{a}o para a Ci\^{e}ncia e a Tecnologia 
FCT (SFRH/BSAB/757/2007), and  by the UK EPSRC (EP/F037368/1).
Gesbert was supported by the UK EPSRC (EP/E065708/1) and by the French ANR (project ANR-08-EMER-004 ``CODEX''). All of the authors have received support from COST Action IC1201 ``Behavioural Types for Reliable Large-Scale Software Systems''.
%
%%% Local Variables: 
%%% mode: latex
%%% TeX-master: "main"
%%% End: 

\bibliographystyle{simonplain}
\bibliography{main}

\appendix

\section*{Appendix: Proofs of lemmas from Section \ref{sec:techlemmas}}\label{app:proofs}

{
% numbering hacks
\renewcommand\thetheorem{7.\arabic{theorem}}
\renewcommand\label[1]{}

\inlineproofstrue
\explanatorytextfalse

\begin{lemma}\label{lem:heapproperties}
  Suppose $\typedheapchan\Theta\Gamma h$. Then \emph{(a)} $h$ is
  complete,
  \emph{(b)} $\chans(\Gamma)\subset\dom(\Theta)\setminus \chans(h)$ and
  \emph{(c)} $\objs(\Gamma)\subset\roots(h)$.
\end{lemma}
\ifinlineproofs\begin{proof}
  By induction on the derivation of $\typedheapchan\Theta\Gamma h$.
  The only axiom is \textsc{T-Hempty} for which the properties are true.
  Then \textsc{T-Hide} does not change either $h$ or
  $\dom(\Gamma)$ so it preserves all three properties. The other case is
  \textsc{T-Hadd}. Let $h'$ be the heap in the conclusion. Then
  $\children{h'}(o)$ is the set of $v_i$ which are object
  identifiers. Let $K$ be the set of $v_i$ which are channel endpoints.
  The typing derivation for the sequence
  of swaps in the right premise must include an occurrence of
  \textsc{T-Ref} for each object identifier, and of \textsc{T-Chan}
  for each channel endpoint, 
  each followed by \textsc{T-Swap} and a number of occurrences of
  \textsc{T-Seq}. Looking at these rules, we can see that this
  implies:\begin{enumerate}
  \item $\children{h'}(o)\cup K\subset\dom(\Gamma)$ and
  \item $\dom(\Gamma')\subset(\dom(\Gamma)\setminus
  (\children{h'}(o)\cup K))\cup\{o\}$. (Note that $o$ cannot be one of the
  $v_i$ because it is the current object in the judgement: the premise
  of \textsc{T-Ref} forbids it.)
  \end{enumerate}

  \noindent From (1) and induction hypothesis (c) we get
  $\children{h'}(o)\subset\roots(h)$. We have
  $\roots(h)\subset\dom(h)\subset\dom(h')$ and $o$ is the only new
  object in $h'$, so $h'$ is complete.

  If we project (2) onto just channel endpoints, we get
  $\chans(\Gamma')\subset \chans(\Gamma)\setminus K$. From the
  definition of $h'$, $\chans(h')$ is
  equal to $\chans(h)\cup K$. Hence induction hypothesis (b) yields (b)
  again for $h'$.

  If we project (2)
  onto just object identifiers, we get
  $\objs(\Gamma')\subset(\objs(\Gamma)\setminus\children{h'}(o))\cup\{o\}$.
  From induction hypothesis (a) and the fact that $o\not\in\dom(h)$ we
  get that $o$ is a root in $h'$. Furthermore, all roots of $h$ which
  are not children of $o$ are also roots of $h'$.
  Thus induction
  hypothesis (c) allows us to conclude $\objs(\Gamma')\subset\roots(h')$.
\end{proof}\fi

\begin{lemma}[Rearrangement of typing derivations for expressions]\hfil
  \label{lem:rearrange-expr}
  Suppose we have \judgment \envref\Gamma r > e : T <
  \envref{\Gamma'}{r'} /. Then there exists a typing derivation for
  this judgement in which:
\begin{enumerate}
  \item \textsc{T-Sub} only occurs at the
    very end, just before \textsc{T-Switch} or \textsc{T-SwitchLink}
    as the last rule in the derivation for each of the branches, or 
    just before \textsc{T-Call} as the last rule in the
    derivation for the parameter; 
  \item \textsc{T-SubEnv} only occurs immediately before
    \textsc{T-Sub} in the first three cases and does not occur at all
    in the fourth, \ie \textsc{T-Call}.
\end{enumerate}
\end{lemma}
\ifinlineproofs\begin{proof}
  First note that \textsc{T-Sub} and \textsc{T-SubEnv} commute and
  that any consecutive sequence of occurrences of one of these rules
  can collapse into a single occurrence using transitivity.
  What
  remains to be shown is that these rules can be pushed down in all
  cases but those mentioned in the statement. We enumerate the cases below.
  \begin{itemize}
  \item \textsc{T-Swap}. \textsc{T-Sub} before the premise can be replaced
    with \textsc{T-SubEnv} after the conclusion as $T$ has been
    transferred to the environment. If \textsc{T-SubEnv} was used before the premise, it means the initial derivation looks like:

\smallskip
\AxiomC{$\judgment\envref\Gamma r > e : T < \envref{\Gamma'}{r'}/$}
\rrulename{T-SubEnv}
\UnaryInfC{$\judgment\envref\Gamma r > e : T < \envref{\Gamma''}{r'}/$}
\AxiomC{$\Gamma''(r'.f) = T'$}
\AxiomC{$\ldots$}
\rrulename{T-Swap}
\TrinaryInfC{$\judgment\envref\Gamma r > \swap f e : T' < \envref{\changetype{\Gamma''}{r'.f}{T}}{r'}/$}
\DisplayProof

\smallskip
with $\Gamma' \subt \Gamma''$. Let $T_0 = \Gamma'(r'.f)$; we have $T_0\subt T'$ since $T'=\Gamma''(r'.f)$. Because the type of $r'.f$ moves from the environment to the expression, when we push the subsumption step down, we have to use both \textsc{T-SubEnv} and \textsc{T-Sub}; we can transform the derivation into:

\smallskip
\AxiomC{$\judgment\envref\Gamma r > e : T < \envref{\Gamma'}{r'}/$}
\AxiomC{$\Gamma'(r'.f) = T_0$}
\AxiomC{$\ldots$}
\rrulename{T-Swap}
\TrinaryInfC{$\judgment\envref\Gamma r > \swap f e : T_0 < \envref{\changetype{\Gamma'}{r'.f}{T}}{r'}/$}
\rrulename{T-SubEnv}
\UnaryInfC{$\judgment\envref\Gamma r > \swap f e : T_0 < \envref{\changetype{\Gamma''}{r'.f}{T}}{r'}/$}
\rrulename{T-Sub}
\UnaryInfC{$\judgment\envref\Gamma r > \swap f e : T' < \envref{\changetype{\Gamma''}{r'.f}{T}}{r'}/$}
\DisplayProof

\smallskip
  \item \textsc{T-Call}. If \textsc{T-SubEnv} is used on the premise
    to increase the type of something else than $r'.f$ it can be moved to
    the conclusion. If the type of $r'.f$ is changed, first note that
    the only relevant part is the signature of $m_j$. Suppose the
    subsumption step
    changes it from $\methsign{m_j}{U'_j}{U_j} : S'_j$ to
    $\methsign{m_j}{T'_j}{T_j} : S_j$. For the parameter type we have
    $T'_j\subt U'_j$ so we can
    use \textsc{T-Sub} on the premise to increase the type of $e$ from
    $T'_j$ to $U'_j$ instead. For the session and result types, we
    have two cases:
    \begin{itemize}
    \item if $U_j\subt T_j$ and $S'_j\subt S_j$ it can just be moved
      to a \textsc{T-SubEnv} step on the conclusion.
    \item if $U_j = E$, $T_j = \linkthis$ and $\choice{l}{S'_j}{l\in
        E}\subt S_j$, then the original conclusion of the rule (with
      \textsc{T-SubEnv} on the premise) was:
$$\judgement \envref\Gamma r
      > \methcal{f}{m_j}{e} : \linktype{}f <
      \envref{\changetype{\Gamma'}{r'.f}{S_j}}{r'} /$$
      and the new one with the subsumption step removed is:
$$\judgement \envref\Gamma r
      > \methcal{f}{m_j}{e} : E <
      \envref{\changetype{\Gamma'}{r'.f}{S'_j}}{r'} /.$$
      So in that case the original judgement can be obtained back from
      this new conclusion using \textsc{T-VarS} followed by
      \textsc{T-SubEnv}.
    \end{itemize}
    \item \textsc{T-Seq}. \textsc{T-Sub} on the first premise is
      irrelevant and \textsc{T-SubEnv} on the same premise can be
      removed using Lemma \ref{lem:moreweak}. Subsumption on the
      second premise straightforwardly commutes to the conclusion.
    \item \textsc{T-Switch}. Lemma
      \ref{lem:moreweak} allows us to remove \textsc{T-SubEnv} on the
      first premise. Straightforwardly \textsc{T-Sub} can be removed
      as well as it just makes $E'$ smaller.
    \item \textsc{T-SwitchLink}. \textsc{T-Sub} is irrelevant;
      removing \textsc{T-SubEnv} can only make $E'$ and the initial
      typing environments for the branches smaller and we can use
      Lemma \ref{lem:moreweak}.
    \item \textsc{T-VarF} and \textsc{T-VarS}.
      \textsc{T-Sub} can increase $E$ which
      becomes the indexing set of the variant in the conclusion. By
      definition of subtyping for variants it is possible to increase
      it afterwards using \textsc{T-SubEnv}. \textsc{T-SubEnv}
      straightforwardly commutes.
    \item \textsc{T-Return}. \textsc{T-Sub} straightforwardly
      commutes, as well as the part of \textsc{T-SubEnv} not
      concerning $r'.f$. Subsumption on $\Gamma'(r'.f)$ can be removed
      using Proposition \ref{prop:subfield}.
  \end{itemize}
\end{proof}\fi

\begin{lemma}[Rearrangement of typing derivations for heaps]\hfil
\label{lem:rearrange-heap}
  Suppose $\typedheapchan\Theta\Gamma h$ holds. Let $o$ be an
  arbitrary root of $h$. Then there exists a
  typing derivation for it such that:
  \begin{enumerate}
  \item \textsc{T-Sub} is never used;
  \item \textsc{T-SubEnv} is used at most once, as the last rule leading
    to the right premise of the last occurrence of \textsc{T-Hadd};
  \item every occurrence of \textsc{T-Hide} follows immediately the
    occurrence of \textsc{T-Hadd} concerning the same object
    identifier;
  \item the occurrence of \textsc{T-Hadd} concerning an identifier $o'$
    is always immediately preceded (on the left premise)
    by the occurrences of \textsc{T-Hadd}/\textsc{T-Hide} concerning
    the descendants of $o'$;
  \item the first root added is $o$.
  \end{enumerate}  
\end{lemma}
\ifinlineproofs\begin{proof}
  The first two points are a consequence of Lemma
  \ref{lem:rearrange-expr}: the only expressions which appear in the
  typing derivation are sequences of swaps, not containing any switch
  or method call; furthermore their type is always $\nulltype$, making
  \textsc{T-Sub} at the end irrelevant. What remains to be checked is then
  just that \textsc{T-SubEnv} at the end of the derivation for one
  sequence of swaps can be pushed down to the next occurrence of
  \textsc{T-Hadd} whenever there is one. This is just a matter of
  using Proposition \ref{prop:subfield} in the case of \textsc{T-Hide}
  and Lemma \ref{lem:moreweak} in the case of \textsc{T-Hadd}.

  Note that these points imply in particular that in all applications of
  \textsc{T-Hadd} but the last one, any element in $\dom(\Gamma)$ which
  is not one of the $v_i$ also occurs in $\Gamma'$ with exactly the same type.
  
  For the third point, first notice that the premise of
  \textsc{T-Hide} implies $o$ is a root of $h$ because of Lemma
  \ref{lem:heapproperties}. This implies that the rule immediately above
  \textsc{T-Hide} either is a \textsc{T-Hadd} introducing $o$ or does
  not concern $o$ at all (in particular, $o$ cannot be a $v_i$, otherwise
  it would not be a root in the conclusion). In the second case,
  \textsc{T-Hide} can be pushed upwards.
  
  The fourth and fifth points are a consequence of the remark we made about the
  first two: if $o'$ is not a descendant of $o$ nor vice-versa,
  then the occurrences of
  \textsc{T-Hadd} and \textsc{T-Hide} concerning $o$ and its descendants
  commute with those concerning $o'$ and its descendants as they affect
  completely disjoint parts of the environment. In the case of the last
  occurrence of \textsc{T-Hadd} there may be a subsumption step but it
  is still possible to commute with it by pushing this subsumption step down again.
\end{proof}\fi

\begin{lemma}[Splitting of the heap]\label{lem:heapsplitting}
  Suppose $\typedheapchan\Theta{\Gamma, o : T}h$. Let $\Theta_1 =
  \Theta\setminus\chans(h\downarrow o)$ and let $\Theta_2$ be $\Theta$
  restricted to $\chans(h\downarrow o)$. Then we have:
  $\typedheapchan{\Theta_1}\Gamma{(h\uparrow o)}$ and
  $\typedheapchan{\Theta_2}{o : T}{(h\downarrow o)}$.
\end{lemma}
\ifinlineproofs\begin{proof}
  We know from Lemma \ref{lem:heapproperties} that $o$ is a root in
  $h$. We consider the particular derivation given by Lemma
  \ref{lem:rearrange-heap} where $o$ is the first root added to the
  heap. Now if we look at the conclusion of the last rule
  concerning $o$ (\textsc{T-Hadd} or \textsc{T-Hide} depending whether
  $T$ is a field or session type),
  we know that at this point the heap is $h\downarrow o$,
  and therefore the only object identifier in
  the environment is its only root: $o$.
  Furthermore, this part of the derivation is
  still true if we replace the initial $\Theta$ with $\Theta_2$, with
  the only difference that then the final $\Gamma$ contains no
  channels, and thus is of the form $o : T'$. We also know that the
  type of $o$ is not changed in the rest of the derivation except
  possibly by the subsumption step at the end; therefore
  $T'$ is a subtype of $T$. If they are session types, using
  Proposition \ref{prop:subsess} we
  can change the last occurrence of \textsc{T-Hide} to use $T$ instead
  of $T'$ and get $\typedheapchan{\Theta_2}{o : T}{(h\downarrow o)}$.
  Otherwise, we can add a subsumption step to the derivation for the
  sequence of swaps on the right of \textsc{T-Hadd} to get the same result.

  For the rest of the derivation, we know that $o$ is not used,
  therefore it can be removed from the initial environment without
  affecting the derivation except by the fact that it will not be in
  the final environment either. Furthermore, we know from Lemma
  \ref{lem:heapproperties} that the initial environment minus its only
  object identifier $o$ is included in
  $\Theta\setminus\chans(h\downarrow o) = \Theta_1$. More precisely, the lemma
  gives us inclusion of domains, but because subsumption is not used
  in the first part of the derivation we also know that the types are
  the same. Thus we can replace the first part of the derivation by an
  instance of \textsc{T-Hempty} using $\Theta_1$ and the second part
  is still valid (with all the descendants of $o$ removed from the
  heap), yielding $\typedheapchan{\Theta_1}\Gamma{(h\uparrow
    o)}$ at the bottom.
\end{proof}\fi

\begin{lemma}[Merging of heaps]\label{lem:heapmerging}
  Suppose $\typedheapchan\Theta\Gamma h$ and
  $\typedheapchan{\Theta'}{\Gamma'}{h'}$ with $\dom(h)\cap\dom(h') =
  \emptyset$ and $\dom(\Theta)\cap\dom(\Theta')=\emptyset$. Then we
  have $\typedheapchan{\Theta+\Theta'}{\Gamma+\Gamma'}{h+h'}$.
\end{lemma}
\ifinlineproofs\begin{proof}
  Since $\Theta$ and $\Theta'$ are disjoint, the
  channels in $\Theta'$ cannot appear anywhere in the typing
  derivation for $h$. Thus, it is possible to add $\Theta'$ to every
  typing environment occurring in the derivation for $h$ without
  altering its validity, yielding
  $\typedheapchan{\Theta+\Theta'}{\Gamma+\Theta'}h$. Looking now at
  the derivation for $h'$, since the domains of the heaps are disjoint and
  $\objs(\Gamma)\subset\roots(h)$, none of the identifiers in $\Gamma$
  can appear anywhere in it. Thus we can add $\Gamma$ to every typing
  environment and $h$ to every heap occurring in
  the derivation for $h'$, replacing the \textsc{T-Hempty} at the top
  with the conclusion of the other derivation, which yields the result
  we want.
\end{proof}\fi

\ifexplanatorytext
These two lemmas show, if we apply them repeatedly, that a typing
derivation for a heap can be considered as a set of separate typing
derivations leading to each root of the heap. This will allow us in
particular to
show results for particular cases where a heap has only one root and
generalize them.
\fi

% We will in the following
% use this fact directly without always explaining that we refer to
% these lemmas.

\begin{lemma}\label{lem:heaprenaming}
  Suppose $\typedheapchan\Theta{o : S}h$. Let $\varphi$ be an injective
  function from $\dom(h)$ to $\mathcal{O}$. Then we have
  $\typedheapchan\Theta{\varphi(o) : S}{\varphi(h)}$.
\end{lemma}
\ifinlineproofs\begin{proof}
  Straightforward. Changing the names does not affect the typing derivation
  in any way.
\end{proof}\fi

\begin{lemma}[Opening]\label{lem:opening}
  If $\typedheapchan{\Theta}{\Gamma}{h}$, if $\Gamma(r)$ is a branch
  session type $S$ and if $h(r)$ is an object
  identifier $o$, then
  we know from Lemma \ref{lem:heapproperties} that $h$ contains an
  entry for $o$. Let $C$ be the class of this entry, then
  there exists a field typing $F$ for $C$ such that
  $\typedheapchan\Theta{\changetype{\Gamma}{r}{\objecttype{C}{F}}}{h}$
  and $\typedsess F C S$.
\end{lemma}
\ifinlineproofs\begin{proof}
  We prove this by induction on the depth of $r$. The base case is $r
  = o$. Using Lemmas \ref{lem:heapsplitting} and
  \ref{lem:heapmerging}, we can restrict ourselves to the case where
  $o$ is the only root of $h$. In that case we know that the last rule
  used in the typing derivation for $\typedheapchan\Theta{o : S}h$
  must be \textsc{T-Hide}. The result we want is constituted precisely
  by the premises of that rule.

  For the inductive case, $r$ is of the form $o'.f.\vec f$.
  We consider the case where $o'$ is the only root. The typing
  derivation then ends with \textsc{T-Hadd} and $f$ gets populated in
  the sequence of swaps by some object identifier\footnote{We know
    $o''$ is an object identifier and not a channel
    endpoint because, according to the hypotheses, either it is $o$
    itself or $\vec f$ is nonempty, implying $o''$ has fields.} $o''$.
  Let $r' = o''.\vec f$, and consider what $\Gamma(r')$ can be, knowing
  that in the conclusion $r$ has a branch session type: the only way
  the type can be modified in the sequence of swaps is by subsumption.
  Indeed, \textsc{T-VarS}, the other possibility, introduces a variant
  type. Therefore $\Gamma(r') = S'$ with $S'\subt S$. We can thus
  use the induction hypothesis to replace $\Gamma$ with
  $\changetype\Gamma{r}{\objecttype C F}$ on the left premise, with
  \typedsess F C {S'}. Then just use Proposition \ref{prop:subsess}
  to see that we also have \typedsess F C S and see that the type
  yielded in the conclusion by this new premise is what we want.
\end{proof}\fi

\begin{lemma}[Closing]\label{lem:closing}
  If $\typedheapchan\Theta{\Gamma}{h}$ and $\Gamma(r) = \objecttype{C}{F}$
  and $\typedsess F C S$, then
  $\typedheapchan\Theta{\changetype{\Gamma}{r}{S}}{h}$.
\end{lemma}
\ifinlineproofs\begin{proof}
  Again we prove this by induction on the depth of $r$ and the base
  case is $r = o$. In that case the lemma is nothing more than
  \textsc{T-Hide}. The inductive case is very similar to the above: we
  look at the type of $r'$ (defined as above) in the $\Gamma$ on
  the left premise of
  the last \textsc{T-Hadd}, noticing that the type of $r$ in the
  conclusion can only differ from it by subsumption,
  this time because we know from Lemma
  \ref{lem:heapfield} that \textsc{T-VarF} is never used. Hence the
  original type is $\objecttype C{F'}$ with $F'\subt F$. Proposition
  \ref{prop:subfield} gives us $\typedsess{F'}CS$ and thus we can use
  the induction hypothesis to change the type of $r'$ in this premise,
  which propagates to the type of $r$ in the conclusion.
\end{proof}\fi

\begin{lemma}[modification of the heap]\label{lem:heapchange}
Suppose that we have $\typedheapchan\Theta{\Gamma}{h}$ and
\judgment\envref\Gamma r > v' : T' < \envref{\Gamma'}{r} /,
and that $\Gamma'(r.f) = T$ where $T$ is not a variant. Let $v =
h(r).f$. The modified
heap $\changeval{h}{r.f}{v'}$ can be typed as follows:
\begin{enumerate}
\item if $v$ is an object identifier or a channel endpoint, then:
$$\typedheapchan\Theta{\changetype{\Gamma'}{r.f}{T'}, v : T}
{\changeval{h}{r.f}{v'}}$$
\item if $v$ is not an object or channel and $T$
  is not a link type, then:
$$\typedheapchan\Theta{\changetype{\Gamma'}{r.f}{T'}}
{\changeval{h}{r.f}{v'}}$$
\item if $v = l_0$ and $T = \linktype{}{f'}$,
  then:
  \begin{itemize}
  \item $\Gamma'(r.f') = \choice{l}{S_l}{l\in E}$ for some $E$
    such that $l_0\in E$ and
    some set of branch session types $S_l$. Note that this implies
    $f\neq f'$.
  \item $\typedheapchan\Theta
    {\changetype{\changetype{\Gamma'}{r.f}{T'}}{r.f'}{S_{l_0}}}
    {\changeval{h}{r.f}{v'}}$
  \end{itemize}
\end{enumerate}
\end{lemma}
\ifinlineproofs\begin{proof}
  First of all, note that the hypothesis that $\Gamma'(r.f)$ is
  defined implies that
  $\Gamma'(r)$ is not a variant, hence $T'$ is not $\linkthis$.
  In other words, the judgement cannot be derived from \textsc{T-VarF}.
  Furthermore, the fact that $T$ is not a variant either means that the
  judgement is not derived from an instance of \textsc{T-VarS} referring
  to field $f$. As this rule is the only possibility (beside subsumption) for a
  judgement typing a value to depend on, and modify, the type of a
  field, this implies that $\Gamma(r.f)$ is a subtype of $T$ and also
  that the judgement would still hold with another type for $f$. In
  particular we have
  \judgement\envref{\changetype{\Gamma}{r.f}{\nulltype}}{r} > v' : T' <
  \envref{\changetype{\Gamma'}{r.f}{\nulltype}}{r} /~\textbf{(a)}. We will in the
  following use this judgement (a) rather than the one in the hypothesis.

  We prove the lemma by induction on the depth of $r$, but the inductive case is
  straightforward (just apply the induction hypothesis to the left premise
  of \textsc{T-Hadd}). In the base case, $r$ is an object identifier
  $o$. We use Lemma \ref{lem:heapsplitting} to consider a typing
  derivation for the
  sub-heap $h\downarrow o$. Let $\Gamma_o = o : T_o$ and $\Theta_o$ be the
  environments corresponding to that part of the heap.
  % 
  % If $v'$ is not an
  % object identifier we can consider that $\Gamma_o = \Gamma$ and
  % $\Theta_o = \Theta$. If $v'$ is an identifier, we know that it must
  % appear in $\Gamma$ and is therefore a root in $h$; we
  % can consider that $\Gamma = \Gamma_o, v' : T'$ and $\Theta = \Theta_o
  % + \Theta_{o'}$.
  % 
  We look at the application of \textsc{T-Hadd} which ends the
  derivation for $\typedheapchan{\Theta_o}{\Gamma_o}{h_o}$. As $T$ is not
  a variant, it is possible to consider that $f$ is the last field to get
  populated in the swap sequence. We thus have something of the form:
  \begin{center}\footnotesize\makebox[0pt][c]{
      \AxiomC{\typedheapchan{\Theta_o}{\Gamma_1}{(h\downarrow o)\setminus o}\hskip -3em\null}
      \AxiomC{\ldots}
      \AxiomC{\ldots}
      \rulename{1}
      \UnaryInfC{\judgement\envref{\Gamma_2}{o} > v : T_v <
        \envref{\Gamma_3}{o} /}
      \rulename{T-Swap}
      \UnaryInfC{\judgement\envref{\Gamma_2}{o} > \swap f v : \nulltype <
        \envref{\changetype{\Gamma_3}{o.f}{T_v}}{o} /}
      \rulename{T-Seq}
      \BinaryInfC{\judgement\envref{\Gamma_1}{o} > \seq{\ldots}{\swap f v} :
        \nulltype < \envref{\changetype{\Gamma_3}{o.f}{T_v}}{o} /}
      \rulename{T-SubEnv}
      \UnaryInfC{\judgement\envref{\Gamma_1}{o} > \seq{\ldots}{\swap f v} :
        \nulltype < \envref{\Gamma_o}{o} /}
      \rrulename{T-Hadd}
      \BinaryInfC{\typedheapchan{\Theta_o}{\Gamma_o}{h\downarrow o}}
      \DisplayProof
    }
  \end{center}
  with $T_v\subt T$ and $\changetype{\Gamma_3}{o.f}{T_v} \subt
  \Gamma_o$. If we change the type of $o.f$, this last relation becomes
  $\Gamma_3\subt\changetype{\Gamma_o}{o.f}{\nulltype}$ \textbf{(b)}.
  
  What we want to do is to replace the judgement on the top right, which
  is an application of some rule (1), by a judgement typing $v'$. For
  this, we need the rest of the environment. We
  consider the judgement for the rest of the heap,
  \typedheapchan{\Theta\setminus\Theta_o}{\Gamma\setminus o}{h\uparrow o}. Since
  the domains are disjoint, we can apply Lemma \ref{lem:heapmerging} to
  this and the leftmost premise of \textsc{T-Hadd}, yielding
  \typedheapchan{\Theta}{\Gamma\setminus o + \Gamma_1}{h\setminus o}. If we
  replace our left premise with this, the initial environment we get on
  the top right is now $\Gamma_4 = \Gamma\setminus o + \Gamma_2$,
  as the additional
  part is unaffected by the sequence of swaps. This environment is almost
  $\changetype\Gamma{o.f}{\nulltype}$, but not quite. We now have three cases
  depending on what rule (1) is, which correspond to the three cases of
  the lemma.
  \begin{enumerate}
  \item If (1) is \textsc{T-Ref} or \textsc{T-Chan}, meaning $v$ is an
    object identifier or a channel endpoint, then $\Gamma_2 = \Gamma_3,
    v : T_v$. Using (b), this yields $\Gamma_2 \subt
    \changetype{\Gamma_o}{o.f}{\nulltype}, v : T_v$.
    Adding $\Gamma\setminus o$ to both sides, we get
    $\Gamma_4\subt\changetype{\Gamma}{o.f}{\nulltype},
    v : T_v$.
    If we replace the initial environment in (a) with
    this one, we get the $v : T_v$ back in the final environment. We
    then use Lemma \ref{lem:moreweak} to replace this initial
    environment with $\Gamma_4$, and \textsc{T-SubEnv} to change $T_v$
    into $T$ in the final one:
    \judgement\envref{\Gamma_4}{o} > v' : T' <
    \envref{\changetype{\Gamma'}{o.f}{\nulltype}, v : T}{o} /. Just see that it
    yields what we want at the bottom of the derivation.
  \item If (1) is \textsc{T-Label} or \textsc{T-Null} or
    \textsc{T-Name}, \ie if $v$ is a literal value of non-link,
    non-linear type, then $\Gamma_2$ is identical to $\Gamma_3$ and we
    have, using (b) and adding $\Gamma\setminus o$ to both sides,
    $\Gamma_4\subt\changetype{\Gamma}{o.f}{\nulltype}$, so we
    can directly (with Lemma \ref{lem:moreweak}) use judgement (a).
  \item If (1) is \textsc{T-VarS}, the last possibility, then $v$ is a
    label $l_0$ and $T_v$ is $\linktype{}f'$ for some $f'$. As it has no strict
    supertype, we have $T = \linktype{}f'$ as well. We also have
    $\Gamma_3(o.f') = \choice{l_0}{S}{}$. From (b) we have that
    $\Gamma_o(o.f') = \Gamma(o.f')$ is a supertype of this variant type, thus also a
    variant. This implies that (a) cannot come from a \textsc{T-VarS}
    concerning $f'$; therefore,
    $\Gamma'(o.f')$ is a supertype of $\Gamma(o.f')$ and hence, by
    transitivity, of $\choice{l_0}{S}{}$, which gives us the first item
    of the conclusion, with $S\subt S_{l_0}$. We now just have to notice
    that $\Gamma_2 = \changetype{\Gamma_3}{o.f'}{S}$ and that (a) is
    independent of the type of $f'$ just like it is of the type of $f$,
    and we can conclude similarly to the two previous cases.
  \end{enumerate}
\end{proof}\fi

\begin{lemma}[Substitution]
\label{lem:substitution}
If $\judgment \envref{\this:\objecttype C F, x : T'}{\this} > e : T <
\envref{\this:\objecttype C {F'}}{\this} /$, and if $\Gamma(r) =
\objecttype C F$, then:
\begin{enumerate}
\item if $T'$ is a base type (\ie neither an object type nor a link)
  and $v$ is a literal value of that type, or if $v$ is an access
  point name declared with type $\access{\Sigma}$ and
  $\sem{\access\Sigma}\subt T'$, we have:
  $$\judgment \envref\Gamma r > e\subs{v}{x} : T <
  \envref{\changetype{\Gamma}{r}{\objecttype C {F'}}}{r} /.$$ 
\item if $T'$ is an object type and $v$ is an object identifier or a
  channel endpoint, we
  have: $$\judgment \envref{\Gamma, v : T'}{r} > e\subs{v}{x} : T <
  \envref{\changetype{\Gamma}{r}{\objecttype C {F'}}}{r} /.$$
\end{enumerate}
\end{lemma}
\ifinlineproofs\begin{proof}
  In order to do an induction, we add the following case where $x$ is
  still present in the final environment : if we have $\judgment
  \envref{\this:\objecttype C F, x : T'}{\this} > e : T <
  \envref{\this:\objecttype C {F'}, x : T''}{\this} /$, and if $T'$ is
  an object type and $v$ is an object identifier, then we have
  \judgment \envref{\Gamma, v : T'}{r} > e\subs{v}{x} : T <
  \envref{\changetype{\Gamma}{r}{\objecttype C {F'}}, v : T''}{r} /.

  We prove this by induction on the derivation of
  \judgment\envref{\this:\objecttype C F, x : T'}{\this} > e : T <
  \envref{\this:\objecttype C {F'}, V}{\this} / (where $V$ is either
  empty or $v : T''$ depending on the case).  For most toplevel rules,
  the result is immediate. The only ones for which it is not are
  \textsc{T-Var} and \textsc{T-LinVar}. For \textsc{T-Var} the result
  is obtained using either \textsc{T-Null} if $T'$ is $\nulltype$ or
  \textsc{T-Label} and \textsc{T-Sub} if it is an enumerated type. In
  the case of an extension adding new base types, we assume there is a similar
  rule to type the corresponding literal values.
  For \textsc{T-LinVar}, if $v$ is an access point name the result is
  obtained using \textsc{T-Name} and \textsc{T-Sub}. Otherwise, $v$ is
  an object identifier and
  the result is obtained using \textsc{T-Ref}, noticing that because
  $\Gamma(r)$ is defined and $v$ is not in $\Gamma$, the path $r$ does
  not start with $v$ and the premise is satisfied.
\end{proof}\fi

\begin{lemma}[Typability of Subterms]
\label{lem:typablesubterms}
  If $\mathcal{D}$ is a derivation of $\judgment \envref\Gamma r >
  \mathcal{E}(e) : T < \envref{\Gamma'}{r'} /$ then there exist
  $\Gamma_1$, $r_1$ and $U$ such that $\mathcal{D}$ has a
  subderivation $\mathcal{D}'$ concluding $\judgment \envref\Gamma r>
  e : U < \envref{\Gamma_1}{r_1} /$ and the position of $\mathcal{D}'$
  in $\mathcal{D}$ corresponds to the position of the hole in
  $\mathcal{E}$.
\end{lemma}
\ifinlineproofs\begin{proof}
  A straightforward induction on the structure of $\mathcal{E}$; the
  expression $e$ is always at the extreme left of the typing
  derivation for $\mathcal{E}(e)$.
\end{proof}\fi

\begin{lemma}[Replacement]
\label{lem:replacement}
If
\begin{enumerate}
\item $\mathcal{D}$ is a derivation of $\judgment \envref\Gamma r >
  \mathcal{E}(e) : T < \envref{\Gamma'}{r'} /$
\item $\mathcal{D}'$ is a subderivation of $\mathcal{D}$ concluding
  $\judgment \envref\Gamma r > e : U < \envref{\Gamma_1}{r_1}/$
\item the position of $\mathcal{D}'$ in $\mathcal{D}$ corresponds to
  the position of the hole in $\mathcal{E}$
\item $\judgment \envref{\Gamma''}{r''} > e' : U < \envref{\Gamma_1}{r_1}/$
\end{enumerate}
then $\judgment \envref{\Gamma''}{r''} > \mathcal{E}(e') : T < \envref{\Gamma'}{r'}/$.
\end{lemma}
\ifinlineproofs\begin{proof}
Replace $\mathcal{D}'$ in $\mathcal{D}$ by the derivation of
$\judgment \envref{\Gamma''}{r''} > e' : U < \envref{\Gamma_1}{r_1}/$.
\end{proof}\fi

} % end scope of redefinitions

\end{document}

%%% Local Variables:
%%% mode: latex
%%% TeX-master: t
%%% End: